\def\cl@chapter{}
\pgfplotsset{height=9cm,width=14cm,compat=1.9}
\renewcommand\thesection{\Roman{section}} 
\renewcommand\thesubsection{\thesection.\arabic{subsection}} 
\titleformat{\section}[block]{\vspace{1em}\large\centering\scshape\bfseries}{\thesection.}{0.5em}{} 
\titleformat{\subsection}[block]{\slshape\centering}{\thesubsection}{0.5em}{}
\theoremstyle{definition}
\numberwithin{corollary}{section} 
\newtheorem{proposition}{Proposition}
\numberwithin{proposition}{section} 
\newtheorem{definition}{Definition}
\numberwithin{definition}{section} 
\newtheorem{assumption}{Assumption}
\numberwithin{assumption}{section} 
\newtheorem{lemma}{Lemma}
\numberwithin{lemma}{section} 
\newtheorem{theorem}{Theorem}
\numberwithin{theorem}{section}
\definecolor{azul-pesc}{RGB}{0, 52, 90}
\title{Portfolio Choice In Dynamic Thin Markets:\\ Merton Meets Cournot}
\author{\textsc{Puru Gupta}\footnote{\footnotesize Department of Economics, University of Warwick, Coventry CV$4$ $7$AL, United Kingdom.}\\ \footnotesize \scalebox{0.7}{\faIcon{envelope}} puru.gupta.1@warwick.ac.uk \and \textsc{Saul D. Jacka}\footnote{\footnotesize The Alan Turing Institute, London NW$1$ $2$DB, United Kingdom \& Department of Statistics, University of Warwick, Coventry CV$4$ $7$AL, United Kingdom. \newline  \footnotesize We are grateful to Ilan Kremer for numerous detailed insightful discussions. Puru wishes to thank Peter J. Hammond for pastoral support as well as Qianxue (Zoe) Zhang for helpful comments and suggestions. We also acknowledge feedback received from participants at the Stochastic Control and Financial Engineering Workshop, Princeton University, at the Financial Mathematics Workshop, Oxford\textendash Man Institute of Quantitative Finance, University of Oxford, and at the SIAM Conference on Financial Mathematics and Engineering (FM23).}\\ \footnotesize \scalebox{0.7}{\faIcon{envelope}} s.d.jacka@warwick.ac.uk}
\date{}
\numberwithin{equation}{section}
\begin{document}

\maketitle
\begin{abstract}
\par We consider an augmented version of Merton's portfolio choice problem, where trading by \textit{large} investors influences the price of underlying financial asset leading to strategic interaction among investors, with investors deciding their trading rates independently and simultaneously at each instant, in the spirit of dynamic Cournot competition, modelled here as a non-zero sum singular stochastic differential game. We establish an equivalence result for the value functions of an investor's best-response problem, which is a singular stochastic optimal control problem, and an auxiliary classical stochastic optimal control problem by exploiting the invariance of the value functions with respect to a diffeomorphic integral flow associated with the drift coefficient of the best-response problem. Under certain regularity conditions, we show that the optimal trajectories of the two control problems coincide, which permits analytical characterization of Markov\textendash Nash equilibrium portfolios. For the special case when asset price volatility is constant, we show that the unique Nash equilibrium is deterministic, and provide a closed-form solution which illuminates the role of imperfect competition in explaining the excessive trade puzzle.
\end{abstract}

\par \textbf{Keywords} - \small{Singular Stochastic Differential Game, Imperfect Competition, Institutional Investors, \par Portfolio Choice, Liquidity Frictions}\normalsize\\
\par \textbf{JEL Classification} - \small{C73, D43, G11, L13}\normalsize
\par \textbf{MS Classification} (2020) - \small{49N70, 91A05, 91A15, 91B54}\normalsize

\newpage

\section{Introduction}
\par The canonical portfolio choice theory in continuous time, which was pioneered by Robert Merton in a series of works written during the late sixties and early seventies (see for example \cite{merton1969lifetime} and \cite{merton1971opt}) relies crucially on the assumption of investors being \textit{small} in the sense that they are assumed to be price-takers. This theoretical framework serves as a reasonable approximation of financial markets which are populated primarily by individual households who trade for liquidity purposes and are unlikely to possess insider information and/or contribute much in the way of exercising effective control over management in order to generate appreciable price movements by their trading.

\par However, in their respective presidential addresses delivered to the American Finance Association in the wake of the Great Recession of 2007-08, \cite{french2008presidential} and \cite{stambaugh2014presidential} highlighted the prominent decline witnessed in the share of corporate equity holdings owned directly by households in the United States over a span of six decades, and discussed its potential consequences for investor behaviour in financial markets. Both of them surmised that the declining pool of retail investors has a direct bearing on competition for superior gains in equity trading. \textcolor{azul-pesc}{\Cref{fig:1}} below depicts the dynamics of U.S. corporate equity ownership from $1950$ to $2020$ where the reported estimates are based primarily on data sourced from the December $9$, $2022$ release of the flow of funds account issued by the U.S. Federal Reserve Board. Details on the methodology used to obtain the reported estimates\footnote{The estimates we obtain as well as the resulting trends in the share of corporate equity holdings depicted here, are qualitatively similar to \cite{french2008presidential} who reports these through $2007$, and \cite{stambaugh2014presidential} whose analysis extends through $2012$.} can be found in the Data Appendix.

\begin{figure}[h]
\centering
\begin{tikzpicture}
\begin{axis}[
    x tick label style={/pgf/number format/1000 sep=, legend pos=north east },
    ylabel={Share Of Corporate Equity Holdings},
    xlabel={Year},
    xmin=1950, xmax=2020,
    ymin=0, ymax=1,
    xtick={1950,1960,1970,1980,1990,2000,2010,2020},
    ytick={0.2,0.4,0.6,0.8,1},
    ymajorgrids=false,
    grid style=dashed,
]
\addplot[
    color=black,
    solid,
    mark=,
    smooth
    ]
    coordinates {
    (1950,0.93)(1951,0.93)(1952,0.92)(1953,0.91)(1954,0.90)(1955,0.89)(1956,0.89)(1957,0.88)(1958,0.88)(1959,0.87)(1960,0.86)(1961,0.85)(1962,0.85)(1963,0.83)(1964,0.82)(1965,0.82)(1966,0.81)(1967,0.80)(1968,0.79)(1969,0.78)(1970,0.75)(1971,0.73)(1972,0.73)(1973,0.71)(1974,0.65)(1975,0.62)(1976,0.64)(1977,0.62)(1978,0.59)(1979,0.59)(1980,0.59)(1981,0.59)(1982,0.53)(1983,0.52)(1984,0.48)(1985,0.45)(1986,0.45)(1987,0.44)(1988,0.47)(1989,0.49)(1990,0.48)(1991,0.49)(1992,0.50)(1993,0.49)(1994,0.47)(1995,0.47)(1996,0.35)(1997,0.33)(1998,0.32)(1999,0.31)(2000,0.32)(2001,0.29)(2002,0.28)(2003,0.26)(2004,0.27)(2005,0.26)(2006,0.29)(2007,0.28)(2008,0.26)(2009,0.22)(2010,0.24)(2011,0.25)(2012,0.24)(2013,0.24)(2014,0.24)(2015,0.24)(2016,0.25)(2017,0.25)(2018,0.26)(2019,0.26)(2020,0.24)
    };
    \addlegendentry{Households (Direct)}
\addplot[
    color=azul-pesc,
    solid,
    mark=,
    smooth
    ]
    coordinates {
    (1950,0.03)(1951,0.02)(1952,0.03)(1953,0.03)(1954,0.03)(1955,0.03)(1956,0.04)(1957,0.04)(1958,0.04)(1959,0.05)(1960,0.05)(1961,0.05)(1962,0.05)(1963,0.05)(1964,0.05)(1965,0.05)(1966,0.05)(1967,0.06)(1968,0.06)(1969,0.05)(1970,0.06)(1971,0.06)(1972,0.05)(1973,0.04)(1974,0.04)(1975,0.05)(1976,0.04)(1977,0.04)(1978,0.04)(1979,0.04)(1980,0.04)(1981,0.04)(1982,0.05)(1983,0.05)(1984,0.06)(1985,0.07)(1986,0.07)(1987,0.07)(1988,0.07)(1989,0.08)(1990,0.08)(1991,0.08)(1992,0.10)(1993,0.12)(1994,0.14)(1995,0.16)(1996,0.19)(1997,0.20)(1998,0.21)(1999,0.23)(2000,0.20)(2001,0.22)(2002,0.20)(2003,0.26)(2004,0.26)(2005,0.26)(2006,0.28)(2007,0.27)(2008,0.20)(2009,0.32)(2010,0.30)(2011,0.27)(2012,0.28)(2013,0.30)(2014,0.28)(2015,0.28)(2016,0.28)(2017,0.30)(2018,0.26)(2019,0.30)(2020,0.31)
    };
    \addlegendentry{Mutual Funds}
\end{axis}
\end{tikzpicture}
\floatfoot{Note: At the end of the second world war over ninety percent of corporate equity in the United States was owned directly by households. This value has witnessed a steady decline since the post-war period and achieved an all-time low of around thirty percent during the 2007-08 Great Recession. On the other hand, the share of mutual funds which stood at a lowly five percent in 1980, today accounts for roughly thirty percent of corporate equity holdings in the United States.}
\caption{Evolution of U.S. Corporate Equity Holdings (1950-2020)}
\label{fig:1}
\end{figure}
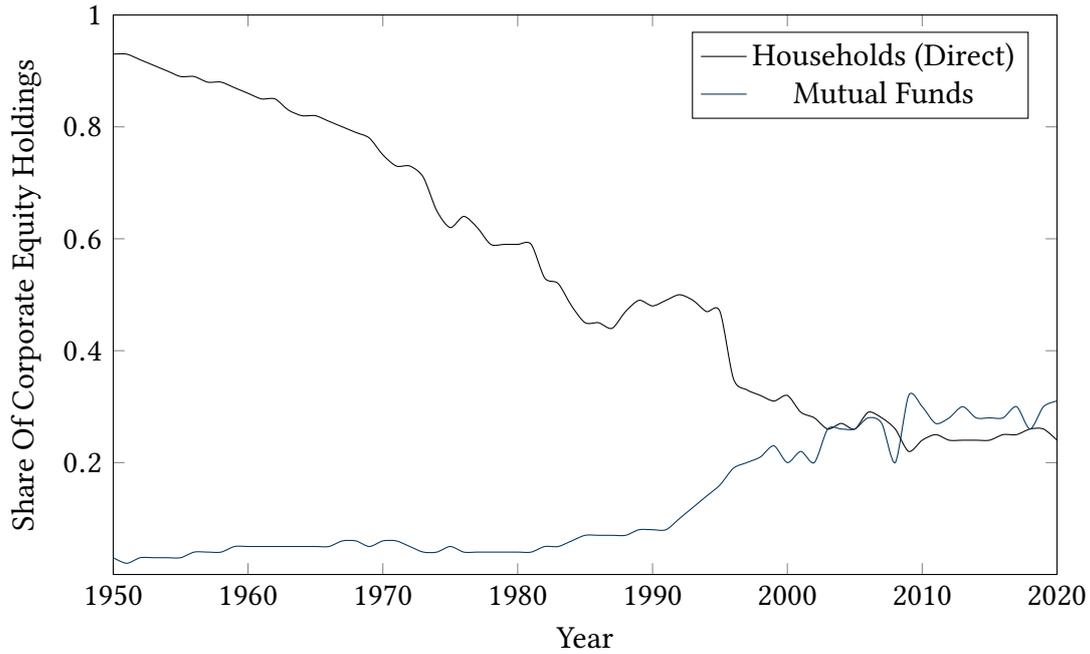

\par In their seminal work \cite{rostek2015dynamic} coined the term \textit{Dynamic Thin Markets} to describe contemporary equity markets which are characterized by the dominance of a relatively small number of institutional investors, with regard to holdings as well as trading; importantly, trades by these institutional investors have non-negligible influence on the price of the underlying traded asset. These institutional or \textit{large} investors, in turn, are aware of the impact their trades have on prices and do account for this \textit{price impact} while determining their optimum portfolio holdings. It is thus clear that the dynamic thin market paradigm represents an existential challenge to canonical portfolio choice theory, and in particular its foundational assumption of price-taking investors.

\par Our principal objective in the present work is to reconcile the empirical fact categorizing modern equity markets as dynamic thin markets, with classical portfolio choice theory which retains its appeal by virtue of its elegance and tractability. To this end, we consider an augmented version of Merton's portfolio choice problem with two large (institutional) investors who have non-zero price impact and who wish to maximize expected utility associated with their portfolio value at some finite terminal time. In line with the classical portfolio choice theory, the terminal portfolio value of an investor naturally will depend upon the price of the underlying financial asset. However, given that the two institutional investors now have non-zero price impact, the price of the asset in turn depends upon the portfolios of the two large investors which then leads to strategic interaction between them which we model and analyze as a non-cooperative stochastic differential game.

\par The augmented Merton setup we consider resembles a dynamic duopoly \textit{\`{a} la} \cite{cournot1838recherches}, in which at each instant, the investors decide the quantity of the financial asset they wish to trade simultaneously and independently of each other, and the price they pay/receive is then determined by a (stochastic) clearing rule. We thus refer to our setup as stochastic differential Merton\textendash Cournot game. It turns out that analyzing the Merton\textendash Cournot stochastic differential game is not straightforward as the best-response problem of an investor happens to be a singular optimal control problem. Heuristically, an optimal control problem is singular when its associated Hamiltonian can be extended real-valued on a subset of its domain. Interested readers are directed to \cite[Section 3.4.2]{pham2009continuous} for an accessible introduction. 

\par Singular optimal control problems are notoriously difficult to solve. The predominant approach to dealing with such problems is the equivalence approach pioneered by \cite{bather1967sequential}, which entails construction of an auxiliary classical optimal control problem that is equivalent to the original singular control problem in an appropriate sense and thus allows us to tackle the singular control problem by standard methods. In this vein, we note that the best-response problem of an investor in our setup is closely related to the class of singular optimal control problems considered in \cite{lasry2000classe} and we therefore extend and adapt their approach to our setup. For notable alternative approaches to optimal singular stochastic control, see \cite{karatzas1983class}, \cite{karatzas1984connections}, \cite{davis1994problem}, \cite{jacka2002avoiding}, \cite{dufour2004singular} and the references therein.

\par Informally, the Lasry\textendash Lions approach constructs an auxiliary control problem with the help of a diffeomorphic flow associated with the drift coefficient of the best-response control problem such that the value function of the original singular best-response problem of an investor remains invariant with respect to the flow, that is, substituting the flow value instead of the state leaves the best-response value function of an investor unchanged. 

\par This is because the construction of the auxiliary control problem employs the diffeomorphic flow to transport a given state to any desired value, instantaneously and frictionlessly, while, the flow parameter is chosen in a manner that eliminates the terms underlying the singular nature of the best-response problem of an investor from the dynamics of the transformed state process. The flow invariance property proves crucial in establishing the equivalence of the constructed auxiliary control problem and the singular best-response problem of an investor.

\par Since, the auxiliary control problem is a classical control problem by construction it permits us to derive the best-response value function of an investor by standard dynamic programming methods for the class of Markovian strategies. Moreover, under mild regularity conditions, we also prove that the optimal trajectories of the two control problems coincide, leading to an analytical characterization of investors' best-response mappings and the associated Markov\textendash Nash equilibrium as a system of coupled ordinary differential equations, which can be solved in closed-form in the special case when asset price volatility is constant, thereby leading us to the striking result that the unique Markov\textendash Nash equilibrium is deterministic.

\par The analysis of closed-form Markov\textendash Nash equilibrium has interesting economic implications, particularly with regard to the excessive trading puzzle first documented by \cite{odean1999investors}. In essence, the excessive trading puzzle refers to the fact that observed trading volumes in stock markets far exceed the predictions based upon standard asset pricing models such as the Capital Asset Pricing Model (CAPM). As is the case with other asset pricing puzzles, there have been numerous proposed explanations mostly rooted in behavioural economics for the excessive trading puzzle. However, as \citet*{liu2022taming} note there is a proliferation of behavioural explanations for asset pricing puzzles, which tend to be adhoc and lack cogency. 

\par Interestingly, \citet*{chen2019resolving} highlight the relevance of price impact in potentially explaining asset-pricing puzzles. In line with this hypothesis, we show that introduction of imperfect competition in the canonical Merton setup generates excessive trading volume compared to the classical price-taking setup, which points to the relevance of imperfect competition in explaining excessive trading puzzle, especially in view of the fact that standard asset pricing models too rely on the assumption of price-taking investors.

\par At first glance, this result represents a paradox since price impact is generally considered as one of the more significant forms of transaction costs faced by an investor in financial markets, \cite{gueant2016financial}. Thus, consideration of price impact costs should discourage investors from trading relative to a frictionless market benchmark. However, unlike a price-taking or a monopolistic framework, in a strategic framework the presence of more than one large investor imposes an externality on the investment opportunity set of a given large investor on account of the price impact of the other large investor(s), compelling an investor to optimally adjust portfolio holdings in response, generating excessive trading.

\par As a closing remark to the introduction, we note the organizational layout of the present work.
The following section discusses relation of our work to existing literature and underscores its
salient contributions. We proceed to the discussion of the specifics of the financial market model in \textcolor{azul-pesc}{\Cref{model}}. Subsequently, we overview the diffeomorphic flow approach to singular control problems in \textcolor{azul-pesc}{\Cref{intflo}}, which then leads us to \textcolor{azul-pesc}{\Cref{auxcp}} where we give the details of the construction of the auxiliary control problem. \textcolor{azul-pesc}{\Cref{vequiv}} presents the results on the equivalence
of the auxiliary and singular best-response problems, while \textcolor{azul-pesc}{\Cref{optpo}} deals with the characterization of Markov\textendash Nash equilibrium and its uniqueness for the case of constant asset price volatility case. Concluding remarks are presented in \textcolor{azul-pesc}{\Cref{conc}}, while proofs omitted in the main text are contained in the Appendix.

\section{Related Literature}

\par The present work adds to the extensive literature on trading in imperfectly competitive financial markets in which traders have non-zero price impact. Early seminal contributions include \cite{kyle1985continuous} and \cite{back1992insider} which examined optimal insider trading in a monopolistic framework where trades signal underlying asset value leading to non-zero price impact. Their setup was later generalised by \cite{collin2016insider} to the class of stochastic volatility models. For an alternative consumption based framework, see \cite{vayanos2001strategic}. 

\par Subsequently, \cite{bertsimas1998optimal} and \cite{almgren2001optimal} pioneered the literature on optimal execution which deals with optimal liquidation of a block of shares by a single large investor who seeks to minimize execution costs arising on account of price impact. Their work influenced several later generalizations such as \cite{ekren2019portfolio} which focuses on the case of transient price impact, and \cite{lions2007large} which extends the optimal execution framework to local volatility models. For a comprehensive survey of related works, see the excellent monograph by \cite{gueant2016financial}.

\par While the literature on monopolistic trading serves as an important precursor, our work shares greater affinity with the literature on \textit{execution games}, with notable contributions by \cite{brunnermeier2005predatory}, \citet*{carlin2007episodic}, and \cite{schied2019market} among others. The principal unifying feature of these works is strategic competition for liquidity among investors who wish to maximize proceeds from unwinding their respective initial portfolio positions, often to zero in the case of portfolio liquidation games, while simultaneously accounting for price impact resulting from trading by all investors. Importantly, in execution games the terminal portfolio target is exogenously specified, whereas in our setup the terminal portfolio is \textit{a priori} unconstrained and determined endogenously. 

\par Within the vast strategic trading literature, our work is closely related to the notable contribution by \citet*{back2000imperfect}, in which the authors generalize the framework considered in \cite{kyle1985continuous} to a multi-player setup. They focus attention on linear feedback Nash equilibrium by exploiting stochastic filtering techniques. Their approach does not extend to the case of symmetrically informed investors, on account of resulting singularity. This assumes importance in view of the empirical finding that institutional investors often fail to outperform a naive market benchmark, see for example \cite{vayanos1999strategic}, and hence seem unlikely to possess superior information due to which they should be considered as conceptually distinct from inside traders. Other notable contributions which consider strategic trading within an asymmetric information framework include \cite{vayanos1999strategic}, \citet*{gabaix2006institutional}, \cite{sannikov2016dynamic}, and \citet*{kyle2018smooth}. 

\par Recently, \citet*{chen2019resolving} and \citet*{micheli2021closed} have focused attention on strategic trading by institutional investors by relaxing the assumption of asymmetrically informed investors. However, the setup considered in these works is fundamentally different from ours. While, \citet*{micheli2021closed} focus on the case of \textit{joint, symmetric, temporary} price impact where investors use closed loop trading strategies which are a function of the trading rate of their opponents, \citet*{chen2019resolving} consider an equilibrium setup where an investor's trading affects only her \textit{perceived} price process, thereby abstracting from joint price impact, and the strategic interaction amongst investors arises on account of a market clearing condition. In contrast, our work examines strategic trading in a setup with \textit{joint, asymmetric, permanent} price impact where investors use Markovian trading strategies, with a particular focus on examining the resulting singular stochastic differential game at first hand.

\par The present work also relates broadly to the literature on trading in financial markets with liquidity frictions, such as \cite{brown2011dynamic}, \cite{garleanu2013dynamic}, \cite{garleanu2016dynamic}, \cite{olivares2018robust}, \cite{hautsch2019large} among others which examine the effect of transaction costs, \cite{garleanu2009portfolio}, and \citet*{ang2014portfolio} which focus on infrequent trading opportunities, \cite{cuoco1998optimal} which considers an alternate form of price impact for a single large investor, \cite{moallemi2017dynamic}, and \cite{longstaff2001optimal} which emphasize the role of trading constraints on portfolio choice, and \cite{obizhaeva2013optimal} which considers a limit-order book setup. In the present work, we focus attention on liquidity constraints arising on account of price impact of market orders in a strategic framework.

\par Equilibrium models of trading which endogenize market power and price impact have been studied in \cite{weretka2011endogenous}, and \cite{carvajal2012no} for a static and a two-period economy respectively. Continuous time equilibrium models of trading are considered in \citet*{anderson2008equilibrium}, \cite{martins2010equilibrium}, \citet*{hugonnier2012endogenous}, and \cite{ehling2015complete}. However, these tend to focus on characterization of market equilibrium and/or dynamic market completeness, whereas by limiting attention to a partial equilibrium setup, we obtain explicit characterization of optimal portfolios in the presence of market frictions. For an excellent overview of equilibrium models of imperfectly competitive financial markets, see the recent survey by \cite{rostek2020equilibrium}.

\par In addition, our paper contributes to the emerging literature on non-zero sum singular stochastic differential games. Notable contributions include \cite{kwon2015game} in which the authors combine games of optimal stopping with singular stochastic control to analyze competition for market share between two firms. This setup is generalized in \cite{kwon2022game} and \cite{kim2022investment} to allow for regular controls and mixed strategies in addition to singular controls. Further, \citet*{wangs2018maximum} and \citet*{wangw2018necessary} introduce a Malliavin calculus approach to analyze non-zero sum mixed regular-singular stochastic differential games. 

\par In contrast to the works cited in the preceding paragraph, we focus exclusively on a stochastic differential game with singular controls. This renders our work closer in spirit to the class of \textit{finite-fuel} games considered in \cite{de2018stochastic}, \cite{guo2019stochastic}, \cite{dianetti2020nonzero}, \citet*{guo2022class}. However, these works differ in their focus and methodology in important respects from ours. While not a zero-sum game, the strategic interaction in \cite{de2018stochastic} is nevertheless a tug-of-war with the motives of the two players aligned against each other, whereas in \cite{guo2019stochastic} players have rank-dependent payoffs. Likewise, \cite{dianetti2020nonzero} work under the additional assumption of submodular costs, while \citet*{guo2022class} focus on \textit{weak} solutions instead of strong solutions which is the focus of the present work.

\par Recently, \citet*{hu2017singular}, \citet*{cao2022stationary}, \citet*{dianetti2022unifying}, \citet*{campi2022mean}, and \citet*{cao2022mfgs} among others, focus on the class of \textit{mean-field} singular stochastic differential games, which exploits methodological advances from the vast literature on mean-field games which has grown extensively since the path-breaking contribution of \cite{lasry2007mean} and \citet*{huang2006large}. While a mean-field setup facilitates tractability by drastically reducing the dimensionality of the problem, it does so at the expense of strong symmetry assumptions. Our approach in contrast circumvents such symmetry considerations and allows explicit consideration of asymmetric equilibrium.

\section{Financial Market Model}\label{model}
\par In subsequent discussion, we maintain the standing assumption of an underlying complete filtered probability space, $\left(\Omega,\mathcal{F},\mathds{F},\mathds{P}\right)$ which satisfies the usual conditions; that is, the filtration $\mathds{F} = \{\mathcal{F}_{t}\}_{t\, \in\, [0,T]}$ is assumed to be right-continuous and $\mathds{P}$\textendash complete, with $\mathcal{F}_{T} = \mathcal{F}$. Additionally, we assume that the space supports a one-dimensional standard Brownian motion $B = \{B_{t}\}_{t\,\in\,[0,T]}$.

\par In this section, we extend the canonical continuous-time portfolio choice model pioneered by \cite{merton1969lifetime}, by incorporating the influence of trading by multiple large investors explicitly into the price dynamics of traded financial assets. We limit attention to permanent price impact, that is, we assume the impact of an institutional investor's trades on the market price of a financial asset is permanent. We do not allow for decay of the price impact or a (partial) reversion towards the pre-trade price following the instantaneous impact and we do not consider execution or slippage costs associated with trading, such as brokerage fees. 

\par As a result, in our framework the temporary price impact equals zero, as in \citet*{bouchard2016almost}. In view of this, our model should be interpreted as a scheduling model, that is, a model which determines the optimal trading schedule of a large investor in a market with multiple large investors. We direct the interested reader to \cite{gueant2016financial} for consideration of execution models, which focus on optimal routing and micromanagement of a given trading schedule, by examining optimal trading behaviour under execution costs.

\par We consider an environment with two institutional investors, indexed by $i \in \mathcal{I} = \left\{1,2\right\}$. The investors share a common investment horizon, $T \in \left(0,\infty\right)$. Each institutional investor can potentially invest in a risk-free asset (cash) and a single risky asset (stock), where the rate of return associated with the risk-free asset is denoted by $r_{t}$. We let $\pi^{i}_{t}$ denote the portfolio of investor $i$ at time $t$. Specifically, $\pi^{i}_{t}$ represents the number of stocks of the risky asset held by the $i$th investor. We assume that the portfolio dynamics of the $i$th investor are characterized as
\begin{equation}\label{control}
	d\pi^{i}_{t} \,=\, -x^{i}_{t}\,dt
\end{equation}

\par Note that $x^{i}_{t}$ above denotes the trading rate of the $i$th investor at time $t$. We assume that $\pi^{i}_{0}$, the initial level of stock holdings, is given for $i \in \mathcal{I}$. It follows from the equation above that when $x^{i}_{t}>0$, the $i$th investor holds an instantaneous selling position in the stock. While, on the other hand $x^{i}_{t}<0$ implies that the $i$th investor holds an instantaneous buying position in the stock. The dynamics specified by (\ref{control}) require that $\pi^{i}_{t}$ be absolutely continuous $\mathds{P}$-almost surely.

\par The instantaneous impact of trading by the $i$th investor on the price of the risky asset is modeled by the function $\kappa^{i}(x^{i}_{t})$. We denote the price of the risky asset at time $t$ by $S_{t}$, and assume that the stock price process follows dynamics governed by the following stochastic differential equation
\begin{equation}
	dS_{t} \,=\, \sigma(S_{t})\,dB_{t}\, +\, \left(\sum_{i} \kappa^{i}\!\left(x^{i}_{t}\right)dt\right)
\end{equation}

\par We assume that the initial price $S_{0}$ is given. The deterministic function $\sigma$ above represents the local volatility of the stock price. We impose the following assumption with regard to the local volatility process
\begin{assumption}\label{siglip}
\par The function $\sigma: \mathds{R}\rightarrow\mathds{R}_{+}$ is bounded, and Lipschitz continuous in its argument.
\end{assumption}

\par We defer our discussion of the properties of the price impact function and instead turn our attention to the problem of specifying a cogent and well-defined criterion associated with the portfolio of an investor, which the investor seeks to optimize. Note that, unlike the classical Merton paradigm, the product $\pi^{i}_{t}\,S_{t}$ no longer equals the liquidation value of the stock in general, and hence does not define the wealth accrued to the investor from investing in the risky asset. We shall, alternatively, think of $\pi^{i}_{t}\,S_{t}$ as describing the \textit{market valuation} of stocks held by the $i$th investor at time $t$. Correspondingly, we define the value of investor $i$'s portfolio at time $t$, denoted by $W^{i}_{t}$, as follows
\begin{equation}
W^{i}_{t}\, =\, \text{Cash Position}\, +\, \pi^{i}_{t}\,S_{t}
\end{equation}

\par Here again we assume that $W^{i}_{0}$, the portfolio value at the initial time, is given for $i \in \mathcal{I}$. This definition might seem arbitrary but as argued in \citet*{bouchard2016almost}, one may consider the pair $(W^{i}_{t}, \pi^{i}_{t})$ at any given time in order to obtain the exact composition of the $i$th investor's portfolio in terms of cash account and stock holdings. Moreover, in the event that the functions $\kappa^{1}, \kappa^{2}$ are identically zero, that is, in the conventional Merton paradigm, the definition above reduces to the canonical wealth equation. In what follows, we work under the simplifying assumption that the risk free rate $r_{t}$ is identically zero for all $t \in [0,T]$, or equivalently, that we are working with discounted values. Under this assumption, the dynamics of $W^{i}_{t}$ are described by the following stochastic differential equation
\begin{equation}\label{portval}
	dW^{i}_{t}\, =\, \pi^{i}_{t}\,dS_{t}
\end{equation}

\par The equation above embodies a conservation principle analogous to the \textit{self-financing condition} ubiquitous in continuous-time models of portfolio choice. Intuitively, it mandates that any change in the portfolio value should accrue on account of a change in the prices of underlying assets alone. A straightforward re-balancing of the portfolio, which does not involve external flows of fund or stock holdings, and which is not accompanied by a price change, should leave the composite portfolio value unchanged.

\par Empirical evidence with regard to the shape of the price impact function is equivocal. On the one hand works such as \cite{hasbrouck1991measuring}, \citet*{gabaix2006institutional} which analyze high frequency trading data in financial markets, document that the price impact follows a power law. On the other hand, estimates of price impact using low frequency trade data indicate that the price impact function is linear, see for example \citet*{plerou2002quantifying}. 

\par In the present work, we limit attention to a linear permanent price impact function. The choice of a linear price impact function seems natural on several accounts. First, assuming a linear price impact function allows us to abstract from consideration of optimal hedging frequency for a block order (\citet*{gatheral2012transient}). Further, in a continuous time framework such as ours, it suffices to consider a linear price impact function so long as the hypothesis only applies to the leading order following \cite{loeper2018option}.

\par The principal appeal of a linear price impact function however follows from \cite{huberman2004price}, and \cite{gatheral2010no}, in which it is established that such a price impact function guarantees the absence of profitable price manipulation for a large investor who is cognizant of her price impact. While suggestive, these results are established either in a discrete-time framework or in the context of a continuous-time liquidation problem with a single large investor. Nevertheless, we show that in the portfolio choice problem we consider, linearity of the price impact function is necessary and sufficient to rule out price manipulation by institutional investors. To this end, we recall below the definition of a dynamic arbitrage opportunity. In subsequent discussion, given $i \in \mathcal{I}$, we follow game-theoretic convention and let $-i \in \mathcal{I}\backslash\{i\}$ denote $i$'s opponent.

\begin{definition}
\par Fix $\tau$ such that $0 < \tau \leq T$, consider an adapted trading rate process $\left\{x^{-i}_{t}\right\}_{t\, \in\, [0,\,\tau]}$ for investor $-i$, and recall that $W^{i}_{\tau}$ denotes the portfolio value of investor $i$ at time $\tau$. The trading rate process $\left\{x^{i}_{t}\right\}_{t\, \in\, [0,\tau]}$ defines a \textit{dynamic arbitrage} for investor $i$ if we have
\begin{enumerate}[(i)]
\item $\left\{x^{i}_{t}\right\}_{t\, \in\, [0,\tau]}$ is a round-trip trade, that is, $\int^{\tau}_{0} x^{i}_{t}\,dt\, =\, 0$. 
\item $\mathds{E}\left[W^{i}_{\tau}\right]\, >\, W^{i}_{0}\,+\,\mathds{E}\left[\mathlarger{\int}^{\tau}_{0} \pi^{i}_{t}\,\kappa^{-i}\!\left(x^{-i}_{t}\right)dt\right] $
\end{enumerate}
\end{definition}

\par Note from the definition above that should a dynamic arbitrage opportunity exist for investor $i$, and provided the trading horizon is large, she can in principle replicate the dynamic arbitrage infinitely often to generate an almost sure net positive gain\footnote{This is just a consequence of the law of large numbers.} in her portfolio value, over and above the extrinsic change occurring on account of trading by her rival. This observation lends credibility to the definition as, under certain conditions, a dynamic arbitrage opportunity can approximate a classical arbitrage opportunity in the sense of a \textit{free lunch with vanishing risk}, see \cite{delbaen1994general}. 

\par The following lemma provides a characterization of the price impact function $\kappa^{i}$, which is both necessary and sufficient to rule out the existence of profitable price manipulation opportunities for investor $i$ in the market, in the sense of dynamic arbitrage defined above. The proof of the lemma relies on and extends the arguments of \cite{gatheral2010no} and \cite{gueant2016financial} to the case of optimal portfolio choice in a financial market with multiple large investors. For the sake of readability, the proof is relegated to the technical appendix.
\begin{lemma}\label{linpi}
\par The absence of dynamic arbitrage opportunities for investor $i$ is equivalent to the price impact function $\kappa^{i}$ being linear.
\end{lemma}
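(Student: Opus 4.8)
The plan is to first recast absence of dynamic arbitrage as a purely pathwise inequality for $\kappa^i$, and then to pin down linearity by testing that inequality against a short list of explicit ``block'' round trips. For the reduction: integrating $dW^i_t=\pi^i_t\,dS_t$ from \eqref{portval} against the price dynamics and using $d\pi^i_t=-x^i_t\,dt$, one has, along any admissible strategy pair,
\[
 W^i_\tau = W^i_0 + \int_0^\tau \pi^i_t\,\sigma(S_t)\,dB_t + \int_0^\tau \pi^i_t\bigl(\kappa^i(x^i_t)+\kappa^{-i}(x^{-i}_t)\bigr)\,dt .
\]
Since $\sigma$ is bounded (Assumption~\ref{siglip}), for admissible strategies the Brownian integral is a true martingale, so taking expectations turns condition (ii) of the definition into $\mathds{E}\bigl[\int_0^\tau \pi^i_t\,\kappa^i(x^i_t)\,dt\bigr]>0$, the $\kappa^{-i}$-terms cancelling between the two sides --- which is precisely why the definition's benchmark includes that term. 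Hence absence of dynamic arbitrage for investor $i$ is equivalent to $\int_0^\tau \pi^i_t\,\kappa^i(x^i_t)\,dt\le 0$ for every horizon $\tau\in(0,T]$ and every round trip $x^i$ (and deterministic, piecewise-constant controls suffice to probe this).

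The ``if'' direction is then immediate: if $\kappa^i(x)=\lambda_i x$, then for a round trip $\pi^i_\tau=\pi^i_0$, so $\int_0^\tau \pi^i_t\kappa^i(x^i_t)\,dt=-\tfrac{\lambda_i}{2}\bigl((\pi^i_\tau)^2-(\pi^i_0)^2\bigr)=0$, pathwise. For the ``only if'' direction, assume no dynamic arbitrage. Given rates $v_1,v_2>0$ and a lot size $q>0$, I would compare the round trip that sells at rate $v_1$ then buys at rate $v_2$ with the one that buys at rate $v_2$ then sells at rate $v_1$ (each of duration $q/v_1+q/v_2$, each ending at $\pi^i_0$): a direct computation gives a P\&L of
\[
 q\Bigl(\pi^i_0\mp\tfrac{q}{2}\Bigr)\Bigl(\tfrac{\kappa^i(v_1)}{v_1}+\tfrac{\kappa^i(-v_2)}{v_2}\Bigr),
\]
with $\mp$ resolving to $-$ for the first ordering and $+$ for the second. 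Running the no-arbitrage inequality with $q$ small, and then with $q>2|\pi^i_0|$ (feasible inside $[0,T]$ once $v_1,v_2$ are large), over both orderings, forces $\kappa^i(v_1)/v_1+\kappa^i(-v_2)/v_2=0$ for all sufficiently large $v_1,v_2$; separating variables produces a constant $\lambda_i$ with $\kappa^i(x)=\lambda_i x$ for every large $|x|$. To cover the remaining bounded set of rates I would use round trips that move rapidly --- at a rate on which $\kappa^i$ is already known to be linear, so this leg contributes nothing to the round-trip P\&L --- to a position $\pi_1$ of chosen sign, trade for a short time $s$ at the rate $x^{\ast}$ being probed, then move rapidly back; such a trip has P\&L $\bigl(\kappa^i(x^{\ast})-\lambda_i x^{\ast}\bigr)\,s\,\bigl(\pi_1-\tfrac{x^{\ast}}{2}s\bigr)$, whose sign equals that of $\kappa^i(x^{\ast})-\lambda_i x^{\ast}$ for a suitable $\pi_1$, so absence of arbitrage forces $\kappa^i(x^{\ast})=\lambda_i x^{\ast}$; the same device with $x^{\ast}=0$ (idling at $\pi_1$) gives $\kappa^i(0)=0$, and hence $\kappa^i$ is linear throughout.

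The hard part is this ``only if'' direction: engineering round trips that identify $\kappa^i$ at \emph{every} rate, subject to two constraints. First, admissible trajectories must start and end at the prescribed $\pi^i_0$ --- circumvented by taking the lot size $q>2|\pi^i_0|$, which flips the sign of the weight $\pi^i_0\mp q/2$ and thereby yields the reverse inequality. Second, the horizon $T$ is finite, which is what forces the ``fast detour, slow probe, fast return'' construction and hence the two-stage argument (linearity at large rates first, then everywhere). By comparison the reduction is routine, modulo the standard integrability that makes $\int \pi^i_t\sigma(S_t)\,dB_t$ a martingale, and $\kappa^i(0)=0$ is a minor separate check.
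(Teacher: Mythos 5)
Your proof is correct, and at its core it follows the same route as the paper: reduce absence of dynamic arbitrage to the deterministic inequality $\int_0^\tau \pi^i_t\,\kappa^i(x^i_t)\,dt\le 0$ over round trips, get the ``if'' direction from the same integration-by-parts identity, and extract the ``only if'' direction by running two-block round trips in both orderings --- your relation $\kappa^i(v_1)/v_1+\kappa^i(-v_2)/v_2=0$ is exactly the paper's identity $\alpha\kappa^i(-\beta)+\beta\kappa^i(\alpha)=0$ after dividing through by $v_1v_2$. Where you genuinely diverge is in how that identity is obtained and finished off. The paper spreads each two-block trade over the whole horizon (switch time $\beta T/(\alpha+\beta)$), which delivers the identity for \emph{all} rates at once, but its computations effectively normalize the initial holding to zero (the $\pi^i_0$-terms are dropped), and it needs a preliminary oddness step plus the normalization $\beta=1$ before concluding linearity, with a separate three-block trade for $\kappa^i(0)=0$. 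You instead keep $\pi^i_0$ arbitrary and neutralize it by taking the lot size $q>2\lvert\pi^i_0\rvert$, which flips the sign of the weight $\pi^i_0\mp q/2$ --- on this point you are more careful than the paper --- but the finite horizon then confines the identity to large rates, which is what forces your second stage: the fast--slow--fast probe round trip with P\&L $(\kappa^i(x^{\ast})-\lambda_i x^{\ast})\,s\,(\pi_1-x^{\ast}s/2)$ (which checks out, since the $\lambda_i$-linear part integrates to zero over any round trip and the fast legs lie in the already-linear region) to extend linearity to all rates and to $x^{\ast}=0$. Both arguments are sound: yours buys an explicit treatment of a nonzero initial position at the price of a two-stage extraction; the paper's buys a one-shot functional equation valid at every rate at the price of an implicit zero-initial-holding normalization and the extra oddness lemma.
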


\par In view of the characterization of the price impact function furnished by the lemma above, which excludes profitable price manipulation (dynamic arbitrage) opportunities, we can rewrite the dynamics for the price process of the risky asset as follows
\begin{equation}\label{price}
	dS_{t}\, =\, \sigma(S_{t})\, dB_{t} \,-\, \left(\sum_{i} \theta^{i} x^{i}_{t} dt\right)
\end{equation}

\par The positive constant $\theta^{i}$ above captures the extent of permanent price impact associated with the trading activity of the $i$th investor. The sign restriction on $\theta^{i}$ is consistent with the observation that substantial buy orders drive up, while substantial sell orders drive down the price of any traded asset. Note that given the separable, additive form of price impact we consider, we can extend the setup and subsequent analysis to the case of $n>2$ large investors in a straightforward manner, but we retain the parsimonious two player setup to streamline the analysis without sacrificing economic import.

\par Subsequently, we shall treat the microstructural foundations of $\theta^{i}$ essentially as a black box. However, one may draw upon existing literature on market microstructure to grasp its underlying determinants. We cite one such example: the seminal work by \cite{shleifer1986large}, where the authors argue that large shareholders exercise effective control and monitoring, thereby reducing managerial rents and consequently improving expected returns. 

\par We suppose that each investor selects a trading strategy $\{x^{i}_{t}\}_{t\,\in\,[0,T]}$ so as to maximize the value of her portfolio at the terminal time $T$, and that the preferences of the $i$th investor over her terminal portfolio value are represented by means of a (Bernoulli) utility function, $u^{i}:\mathds{R} \rightarrow \mathds{R}$, which is twice continuously differentiable (that is, $u^{i}\in \mathds{C}^{2}$), strictly increasing and strictly concave in its argument. To formulate a precise statement of the optimization problem faced by a given institutional investor, we give the definition for the class of admissible controls in our setup;

\begin{definition}\label{admcon}
\par Given an initial time $s$ such that $0\leq s\leq T$, the class of admissible controls $\left\{X =\{x_{t}\}_{\,t\, \in\, [0,T]}\right\}$ with respect to $s$ is denoted by $\mathcal{A}_{s}$, and it consists of controls which satisfy
\begin{enumerate}[(i)]
	\item $x:\Omega\times\left[s,T\right]\rightarrow\,\mathds{R}$, is adapted with respect to $\mathds{F}$
	\item $x_{r} = 0$, for all $r \in [0,s)$
	\item $\mathds{E}\left[\exp{\left\{\,m\!\mathlarger{\int}_{s}^{T}\!\!\!\left\lvert x_{t}\right\rvert dt\mathlarger{\int}_{s}^{T}\!\!\!\left\lvert \hat{x}_{t}\right\rvert dt\right\}}\right] < \infty,\ $ for all $\,m \in \mathds{R},\, $ $\hat{X} = \left\{\hat{x}_{t}\right\}_{t\, \in\, [0,T]} \in \mathcal{A}_{s}$ 
\end{enumerate}
\end{definition}

\par Having defined the class of admissible controls, we characterize the dynamics of the associated state process. For ease of notation, we let $Y^{i}_{t}$ denote the state vector at time $t$, associated with the control problem of the $i$th investor, that is, $\left(Y^{i}_{t}\right)^{\!\mathrm{T}} = \left(S_{t},\pi^{i}_{t},\pi^{-i}_{t},W^{i}_{t},W^{-i}_{t}\right)$. In addition, we introduce and define the vector-valued functions $a^{i}$, $b^{i}$ and $\mathrm{v}^{i}$ as follows
\begin{align*}
    & a^{i}\!\left(Y^{i}_{t}\right)^{\!\mathrm{T}} = \left(-\theta^{-i},\,0,\,-1,\,-\theta^{-i}\pi^{i}_{t},\,-\theta^{-i}\pi^{-i}_{t}\right)\\
    & b^{i}\!\left(Y^{i}_{t}\right)^{\!\mathrm{T}} = \left(-\theta^{i},\,-1,\,0,\,-\theta^{i}\pi^{i}_{t},\,-\theta^{i}\pi^{-i}_{t}\right)\\
    & \mathrm{v}^{i}\!\left(Y^{i}_{t}\right)^{\mathrm{T}} = \sigma\!\left(S_{t}\right)\left(1,\,0,\,0,\,\pi^{i}_{t},\,\pi^{-i}_{t}\right)
\end{align*}

\par The notation introduced above allows a succinct characterization of the dynamics of $Y^{i}_{t}$, in terms of the following stochastic differential equation,
\begin{equation}\label{stated}
	\begin{split}
	dY^{i}_{t}\ =\ a^{i}\!\left(Y^{i}_{t}\right)x^{-i}_{t}dt\, +\, b^{i}\!\left(Y^{i}_{t}\right)x^{i}_{t} dt\, +\, \mathrm{v}^{i}\!\left(Y^{i}_{t}\right)\,dB_{t}
	\end{split}
\end{equation}

\par Given the characterization of the state process dynamics above, the proposition below establishes the existence of a unique, non-explosive strong solution to the system of stochastic differential equations defining the state process dynamics above (\ref{stated}). This proposition does not follow from standard existence results as the coefficients in (\ref{stated}) do not satisfy a Lipschitz type condition in general. The arguments presented in the proof below adapt and extend the ideas employed in \cite{applebaum2009levy} and \cite{lan2014new}, primarily through a localization approach. For the sake of readability, the proof is relegated to the technical appendix.

\begin{proposition}\label{exist1}
\par Let the local volatility function $\sigma$ satisfy {\color{azul-pesc}\Cref{siglip}}, and $X^{i}, X^{-i} \in \mathcal{A}_{s}$. Then, the system of stochastic differential equations (\ref{stated}) has a unique strong solution which is non-explosive, that is, the lifetime of the solution, $\liminf_{\,n\,\rightarrow\,\infty}\,\left\{t>0,\ \left\vert Y^{i}_{t}\right\vert\geq n \right\}\, >\, T$, $\mathds{P}$-almost surely.
\end{proposition}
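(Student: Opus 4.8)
The plan is to exploit the \emph{cascade} (lower-triangular) structure of (\ref{stated}): the components $\pi^{i},\pi^{-i}$ are driven by the controls alone, the price $S$ is then driven by the controls alone, and $W^{i},W^{-i}$ are finally stochastic integrals against $dS$. One therefore builds the solution block by block, invoking a localization argument of the \cite{applebaum2009levy,lan2014new} type only where the (mild) state-nonlinearity in $S$ requires it, and deduces non-explosion from a priori pathwise supremum bounds on $[0,T]$. The only quantitative input drawn from admissibility is that, taking $\hat X = X^{i}$ in \Cref{admcon}(iii), one gets $\mathds{E}\big[\exp\{m(\int_{s}^{T}|x^{i}_{t}|\,dt)^{2}\}\big]<\infty$ for every $m$, whence $A^{j}:=\int_{s}^{T}|x^{j}_{t}|\,dt<\infty$ a.s.\ for $j\in\{i,-i\}$.

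First I would solve the two portfolio equations explicitly: $\pi^{j}_{t}=\pi^{j}_{0}-\int_{s}^{t}x^{j}_{r}\,dr$ is continuous with $\sup_{[0,T]}|\pi^{j}_{t}|\le|\pi^{j}_{0}|+A^{j}<\infty$ a.s. Substituting into the $S$-equation gives $dS_{t}=dG_{t}+\sigma(S_{t})\,dB_{t}$, where $G_{t}:=-\int_{0}^{t}(\theta^{i}x^{i}_{r}+\theta^{-i}x^{-i}_{r})\,dr$ is continuous, adapted and of a.s.\ finite variation on $[0,T]$; the shift $\tilde S_{t}:=S_{t}-G_{t}$ converts this into $d\tilde S_{t}=\sigma(\tilde S_{t}+G_{t})\,dB_{t}$, whose diffusion coefficient is bounded (by \Cref{siglip}) and globally Lipschitz in the state with the Lipschitz constant of $\sigma$. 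Classical theory — or, following \cite{applebaum2009levy} and \cite{lan2014new}, localization at $\tau_{n}=\inf\{t:|\tilde S_{t}|\ge n\}$ followed by patching via uniqueness — then yields a unique strong $\tilde S$, hence a unique strong $S$; since $\langle\tilde S\rangle_{T}=\int_{0}^{T}\sigma(S_{r})^{2}\,dr\le\lVert\sigma\rVert_{\infty}^{2}T$, the Dambis--Dubins--Schwarz theorem exhibits $\tilde S$ as a time-changed Brownian motion over a bounded interval, so $\sup_{[0,T]}|S_{t}|<\infty$ a.s. Finally, setting $W^{j}_{t}:=W^{j}_{0}+\int_{0}^{t}\pi^{j}_{r}\,dG_{r}+\int_{0}^{t}\pi^{j}_{r}\sigma(S_{r})\,dB_{r}$, the Stieltjes integral is a.s.\ finite because $\int_{0}^{T}|\pi^{j}_{r}|\,|dG_{r}|\le\big(\sup_{[0,T]}|\pi^{j}|\big)(\theta^{i}A^{i}+\theta^{-i}A^{-i})$, and the stochastic integral is well defined and a.s.\ bounded on $[0,T]$ since $\int_{0}^{T}(\pi^{j}_{r})^{2}\sigma(S_{r})^{2}\,dr\le\big(\sup_{[0,T]}|\pi^{j}|\big)^{2}\lVert\sigma\rVert_{\infty}^{2}T<\infty$ (again Dambis--Dubins--Schwarz). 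Collecting the blocks produces a solution $Y^{i}=(S,\pi^{i},\pi^{-i},W^{i},W^{-i})$ of (\ref{stated}) with $\sup_{[0,T]}|Y^{i}_{t}|<\infty$ a.s.; uniqueness is inherited blockwise — the $\pi^{j}$ trivially, $S$ from the cited uniqueness, the $W^{j}$ as stochastic integrals against the common semimartingale $S$ — and $\liminf_{n}\inf\{t>0:|Y^{i}_{t}|\ge n\}>T$ a.s.\ is exactly non-explosion.

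The step I expect to be the main obstacle — and the reason the statement does not fall under off-the-shelf theorems — is precisely the non-explosion, equivalently the a priori $\sup_{[0,T]}$ bounds. Viewed as a single five-dimensional system, (\ref{stated}) has a drift carrying a time-dependent, \emph{unbounded} random Lipschitz modulus of order $|x^{i}_{t}|+|x^{-i}_{t}|$, and a diffusion that is only \emph{locally} Lipschitz through the products $\sigma(S_{t})\pi^{j}_{t}$, so neither existence nor pathwise uniqueness is covered by the standard global-Lipschitz results; a localization scheme supplies local solutions but closing the argument still demands that the explosion time exceed $T$. The cascade structure is what makes those bounds accessible, and only $A^{i},A^{-i}<\infty$ a.s.\ together with $\lVert\sigma\rVert_{\infty}<\infty$ are needed; the full exponential-moment condition of \Cref{admcon}(iii) is what one would invoke if, instead of arguing blockwise, one wanted $L^{p}$ (rather than merely a.s.) control of $\sup_{[0,T]}|Y^{i}_{t}|$, e.g.\ to run a Picard iteration on the whole system or to obtain the integrability needed later for dynamic-programming arguments.
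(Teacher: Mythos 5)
Your argument is correct, and it takes a genuinely different route from the paper's. You exploit the triangular structure of (\ref{stated}): $\pi^{i},\pi^{-i}$ are explicit Lebesgue integrals of the controls; after subtracting the finite-variation drift $G_t=-\int_s^t(\theta^{i}x^{i}_r+\theta^{-i}x^{-i}_r)\,dr$, the price solves $d\tilde S_t=\sigma(\tilde S_t+G_t)\,dB_t$, an SDE with a bounded, progressively measurable coefficient that is uniformly Lipschitz in the state; and $W^{i},W^{-i}$ are then stochastic integrals $\int\pi^{j}\,dS$ with integrands that are a.s.\ bounded on $[s,T]$. Pathwise uniqueness and the a.s.\ finiteness of $\sup_{[s,T]}\lvert Y^{i}_t\rvert$ (hence non-explosion) are inherited block by block, and your accounting of hypotheses is accurate: only $\int_s^T\lvert x^{j}_t\rvert\,dt<\infty$ a.s.\ (obtained by taking $\hat X=X^{i}$ in \Cref{admcon}(iii)) and \Cref{siglip} are used. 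The paper instead treats the five-dimensional system monolithically: an Euler scheme localized at stopping times where the control values exceed a level $k$, $L^{2}$ convergence via Gronwall, Burkholder\textendash Davis\textendash Gundy and Borel\textendash Cantelli, pathwise uniqueness by Gronwall on the stopped system, and a separate non-explosion argument via the It\^o\textendash Meyer\textendash Tanaka formula applied to a concave test function of $\lvert Y^{i}_t\rvert^{2}$, where the admissibility integrability (through H\"older) controls terms such as $\int\lvert x^{-i}_t\rvert\,\lvert\pi^{i}_t\rvert\,dt$. Your construction is more elementary and transparent, and it makes non-explosion a triviality of continuity of explicitly built paths; the paper's scheme is heavier but does not rely on the cascade structure (it would survive, e.g., $\sigma$ depending on the holdings) and produces en route the kind of localized moment estimates it needs again later (cf.\ \Cref{momin}). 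Your closing remark is also right: the full exponential-moment condition in \Cref{admcon}(iii) is not needed for this almost-sure statement, but it is what underwrites the $L^{p}$ control of the state and of $u^{i}(W^{i}_T)$ exploited in the subsequent lemmas.
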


\par We next turn our attention to defining the best-response value function for the $i$th investor. To this end, given $X^{-i}\,\in\,\mathcal{A}_{s}$, we define the (von-Neumann\textendash Morgenstern) payoff functional $U^{i}$ for the $i$th investor, presuming that the $i$th investor selects a control $X^{i} \in \mathcal{A}_{s}$, as the expected utility of the resulting terminal value of her portfolio,
\begin{equation*}
	U^{i}\!\left(X^{i};X^{-i}\right)\, =\, \mathds{E}\left[u^{i}\!\left(W^{i}_{T}\right)\right]
\end{equation*}

\par Following \citet*[Chapter 4]{dockner2000differential}, we shall focus attention on \textit{Markov\textendash Nash equilibrium} of the Merton\textendash Cournot game defined above. This is justified in view of the fact that in our setup the strategy of an institutional investor affects the payoff of her rival only through its influence on the state vector, particularly on the expected return of the risky asset. Recall that Markovian strategies for the $i$th investor are a class of closed-loop or feedback strategies, in which at each instant $0\leq t\leq T$, the history of the game, encapsulated by $\{Y^{i}_{s}: 0\leq s \leq t\}$, influences current trading rate $x^{i}_{t}$ only through its value at time $Y^{i}_{t}$. The formal definition of the class of Markovian strategies for an institutional investor in our setup is:
\begin{definition}
\par For an initial time $s \in [0,T]$, and $i \in \mathcal{I}$, $\mathcal{A}^{m}_{s}$ denotes the class of Markovian controls $\left\{X^{i} =\{x^{i}_{t}\}_{\,t\, \in\, [0,T]}\right\}$ with respect to $s$, which satisfy $X^{i} \in \mathcal{A}_{s}$, and $x^{i}_{t} = x^{i}\!\left(t, Y^{i}_{t}\right)$, for $t \in [0,T]$.
\end{definition}

\par In view of (\ref{stated}), it follows that if her opponent employs a Markovian strategy, then investor $i$ has a best-response which is Markovian\footnote{This is of course subject to the best-response problem of investor $i$ being well-defined in the sense of having a nonempty set of optimal controls.}. See \citet[Chapter 13]{fudenberg1991game} for a detailed discussion. We can therefore define the $i$th investor's optimal best-response problem on the set of admissible strategies, without loss of generality, provided we limit her opponent to Markovian strategies. Thus, the best-response value function $J^{i}$ for the $i$th investor, given a Markovian admissible control $X^{-i} \in \mathcal{A}^{m}_{s}$ for investor $-i$, where $i,-i \in \mathcal{I}$ and $i \neq -i$, is then defined to be the supremum of her payoff functional over the set of all admissible controls $X^{i} \in \mathcal{A}_{s}$,
\begin{equation*}
	J^{i}\!\left(s,y^{i};X^{-i}\right)\ =\ \sup_{X^{i}\,\in\,\mathcal{A}_{s}}U^{i}\!\left(X^{i};X^{-i}\right)
\end{equation*}

\par It should be noted that the class of closed-loop strategies we consider is fundamentally different from \citet*{micheli2021closed}, wherein an investor's trading rate is a function of the trading rate of other investors. However, unlike their framework which focuses on temporary price impact, we consider a setup with permanent price impact where the consideration of Markovian strategies is fitting. We conclude this section with a concise statement concerning the definition of the solution to the strategic investment stochastic differential Merton\textendash Cournot game outlined above
\begin{definition}\label{mneq}
A pair of control processes $(\hat{X}^{1}\!,\,\hat{X}^{2})$ is said to be a Markov\textendash Nash equilibrium for the game if for a given initial time $s$, we have $\hat{X}^{1},\,\hat{X}^{2} \in \mathcal{A}^{m}_{s}$ and 
	\begin{equation*}
		J^{i}\big(s,y^{i};\hat{X}^{-i}\big)\ =\ U^{i}\big(\hat{X}^{i};\,\hat{X}^{-i}\big),\quad \text{where}\ i\neq -i\ \text{and}\ i,-i \in \mathcal{I}
	\end{equation*}
\end{definition}

\par The definition above is weaker than the one employed in \citet*{chen2022learning}, where the authors focus on \textit{Subgame perfect\footnote{Informally, a subgame perfect equilibrium of a stochastic differential game is a strategy tuple which constitutes a Nash equilibrium of every proper subgame, and thus is in particular, a Nash equilibrium of the stochastic differential game itself.}(Nash) equilibrium}, wherein off-equilibrium strategies of other players may not necessarily be Markovian. The consideration of subgame perfection on their part is facilitated by the assumption of a tractable price impact framework, as in \cite{cuoco1998optimal}, and a linear quadratic utility function, which circumvents the dynamic programming approach by employing point-wise quadratic maximization instead. In contrast, by focusing on a Markov\textendash Nash equilibrium we can be more permissive regarding the class of investor preferences by allowing for (heterogeneous) risk-aversion among investors, while considering a price impact framework with greater degree of verisimilitude.  

\section{Diffeomorphic Integral Flow}\label{intflo}

\par We begin this section with a statement concerning notation we employ subsequently. The first derivative of a function with respect to its $m$th argument will be denoted by $D_{m}$. Additionally, we make use of $D^{n}_{m}$ to denote the $n$th partial derivative of a function with respect to its $m$th argument, for $n>1$. 

\par Returning to the Merton\textendash Cournot stochastic differential game described above, if $x^{i}_{t}$ were locally bounded, one could consider appealing to the notion of viscosity solutions \citep*{crandall1983viscosity} to obtain a characterization for the best-response value function of investor $i$, in terms of the viscosity solution of the following Hamilton\textendash Jacobi\textendash Bellman equation
\begin{multline}\label{hjbvisc}
	 D_{1}\,J^{i}\!\left(s,y^{i};X^{-i}\right)  \,+\, \left\{\sup\limits_{x^{i}} \, \left\langle D_{2}\,J^{i}\!\left(s,y^{i};X^{-i}\right),\ a^{i}(Y^{i})\,x^{-i} +\, b^{i}(Y^{i})\,x^{i} \right\rangle \right\}\\
	 +\, \frac{1}{2}\,\textrm{tr}\left(\,D^{2}_{2}\,J^{i}\!\left(s,y^{i};X^{-i}\right)\mathrm{v}^{i}(Y^{i})\mathrm{v}^{i}(Y^{i})^{\prime}\right)\, =\, 0, \ \ \text{with}\ \ J^{i}\vert_{\,t\,=\,T}\, =\, u^{i}\!\left(W^{i}_{T}\right)
\end{multline}

\par Note that in order for the above characterization of the best-response value function to be meaningful, the supremum with respect to $x^{i}$ must be well-defined. However, in our setup we wish to focus on liquidity frictions arising solely on account of price impact, and hence assuming \textit{a priori} that the trading rate of an investor is (locally) bounded seems overly restrictive\footnote{Interested readers are directed to \cite{longstaff2001optimal}, in which the author considers a setup where the trading rate is required to be bounded on account of liquidity constraints arising out of borrowing and short-sales constraints.}. In our setup, we work under the general premise that $x^{i}_{t} \in \mathds{R}$. As a result of this, the term associated with the supremum and hence the left hand side in the equation above may fail to be finite, yielding a singularity which renders the dynamic programming approach infeasible. 

\par It turns out that the best-response problem of investor $i$ in our model is related to the class of singular stochastic optimal control problems considered in \cite{lasry2000classe}, which suggests adapting and extending their approach to the Merton\textendash Cournot stochastic differential game. We provide here a heuristic argument along the lines of \cite{goldys2019class}, which extends the Lasry\textendash Lions approach to the case of jump-diffusions driven by L\'{e}vy processes, to illustrate the mechanics of this approach and its potential relevance to optimal portfolio choice in dynamic thin markets. To this end, note that in order to ensure that the above characterization (\ref{hjbvisc}) of the best-response value function remains valid we require the term corresponding to the controlled drift coefficient be zero, that is, the following condition must hold
\begin{equation}\label{finite}
\left\langle\, D_{2}\,J^{i}\!\left(s,y^{i};X^{-i}\right),\  b^{i}(Y^{i})\,\right\rangle\, =\, 0
\end{equation}

\par In case the above condition holds, then the supremum with respect to $x^{i}$ turns out to be well-defined in the Hamilton\textendash Jacobi\textendash Bellman equation above (\ref{hjbvisc}). This observation underlies the approach considered in \cite{lasry2000classe}, where the authors first derive an integral flow associated with the controlled drift coefficient of the state vector of the singular control problem. In a similar vein, we introduce the integral flow $\phi^{i}(q,y^{i}) \in \mathds{R}^{5}$, which is parametrized by the scalar $q \in \mathds{R}$, and is derived from the drift function $b^{i}(\cdot)$ by way of the following ordinary differential equation
\begin{equation}\label{flow}
\frac{\partial\phi^{i}(q)}{\partial q}\, =\, b^{i}\!\left(\phi^{i}\!\left(q\right)\right),\ \ \text{with}\ \ \phi^{i}\!\left(0,y^{i}\right)\, =\, y^{i}
\end{equation}

\par If one were to show, as indeed we aim to, that the best-response value function is invariant with respect to the integral flow computed above, that is, $J^{i}\!\left(s,\phi^{i}(q,y^{i});X^{-i}\right) = J^{i}\!\left(s,y^{i};X^{-i}\right)$, then it follows that by differentiating both sides of the invariance equation with respect to $q$ and invoking (\ref{flow}), we recover (\ref{finite}), which is precisely the condition required for the best-response problem of an investor in the Merton\textendash Cournot stochastic differential game to be non-trivial

\par For the class of singular control problems which satisfy an invariance property with respect to the integral flow defined in terms of the controlled drift coefficient of the state vector, the Lasry\textendash Lions approach constructs an auxiliary stochastic control problem, on the quotient space induced by orbits of this flow, which is equivalent to the original singular control problem in the sense that their respective value functions are equal\footnote{The careful reader will note the liberty we have taken in this statement, since the value functions are not defined on dimensionally equivalent domains. However, as we show later (see the proof of \textcolor{azul-pesc}{\Cref{vinv}}), this claim can be made rigorous with little effort.}. Further, the constructed auxiliary control problem is tractable by standard methods which allows us to solve the original singular best-response problem. 

\par As a coda to this section, we present the following proposition which establishes regularity properties for the integral flow defined above, and which we shall employ in proving the well-posedness of the auxiliary control problem defined in the next section. For the sake of readability, the proof is relegated to the technical appendix.

\begin{proposition}\label{reg}
	Given the vector-valued differential equation (\ref{flow}) which defines the integral flow $\phi^{i}(q,y^{i})$ for the singular best-response optimal control problem, we have  
	\begin{enumerate}[(i)]
		\item $\phi^{i}(q,y^{i})$ exists and is unique.
		\item The function $\varepsilon^{i}_{q}(\cdot\,) = \phi^{i}(q,\cdot\,)$ is twice continuously differentiable.
	\end{enumerate}
\end{proposition}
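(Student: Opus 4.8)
The plan is to exploit the affine structure of the vector field $b^{i}$. For fixed $i \in \mathcal{I}$, writing a generic state as $y = (s, p_{i}, p_{-i}, w_{i}, w_{-i}) \in \mathds{R}^{5}$, the data defining (\ref{flow}) read $b^{i}(y) = \left(-\theta^{i},\,-1,\,0,\,-\theta^{i} p_{i},\,-\theta^{i} p_{-i}\right) = A^{i}y + c^{i}$ for a constant matrix $A^{i} \in \mathds{R}^{5 \times 5}$ and a constant vector $c^{i} \in \mathds{R}^{5}$; in particular $b^{i} \in \mathds{C}^{\infty}(\mathds{R}^{5})$, it is globally Lipschitz, and it has at most linear growth. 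Part (i) then follows from the Cauchy\textendash Lipschitz (Picard\textendash Lindel\"{o}f) theorem: the global Lipschitz bound furnishes, for each $y^{i}$, a unique maximal solution of (\ref{flow}), and a Gr\"{o}nwall estimate based on the linear-growth bound precludes explosion in finite $q$, so $q \mapsto \phi^{i}(q, y^{i})$ is defined, and unique, on all of $\mathds{R}$. Since moreover the system (\ref{flow}) is triangular\textemdash the first three coordinates have constant $q$-derivatives and the last two feed only on the second and third\textemdash one may alternatively integrate it coordinate by coordinate to obtain $\phi^{i}(q, y^{i})$ in closed form, which makes both claims transparent and will be convenient in the sequel.

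For part (ii) I would invoke the classical theorem on differentiable dependence of the solution of an ODE on its initial condition: the time-$q$ flow map of a $\mathds{C}^{k}$ vector field is $\mathds{C}^{k}$ in the initial datum, so with $b^{i} \in \mathds{C}^{\infty}$ the map $\varepsilon^{i}_{q}(\cdot\,) = \phi^{i}(q, \cdot\,)$ is $\mathds{C}^{\infty}$, hence in particular twice continuously differentiable, for every $q \in \mathds{R}$; the same conclusion is immediate from the closed form in (i), each coordinate of $\phi^{i}(q, y^{i})$ being a polynomial in the entries of $y^{i}$ with coefficients polynomial in $q$. It is also worth recording, as a by-product of uniqueness, the flow (group) property $\phi^{i}(q_{1} + q_{2}, \cdot\,) = \phi^{i}\!\left(q_{1}, \phi^{i}(q_{2}, \cdot\,)\right)$, whence $\varepsilon^{i}_{q}$ is in fact a $\mathds{C}^{\infty}$ diffeomorphism of $\mathds{R}^{5}$ with inverse $\varepsilon^{i}_{-q}$\textemdash this justifies the terminology and is what permits passage to the quotient by the orbits of the flow when the auxiliary problem is constructed. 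If required, $D\varepsilon^{i}_{q}$ and $D^{2}\varepsilon^{i}_{q}$ can additionally be characterised as solutions of the first and second variational equations of (\ref{flow}), giving explicit formulas that are bounded on compact sets.

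I anticipate no substantive obstacle. The only point that is not entirely automatic\textemdash precisely because $b^{i}$ is affine rather than bounded\textemdash is completeness of the flow, i.e.\ existence for every $q \in \mathds{R}$ and not merely locally in $q$, and this is exactly what the linear-growth Gr\"{o}nwall argument delivers; everything else is a direct appeal to standard ODE theory (or the one-line explicit integration). The content of the statement is simply to pin down the regularity of $\phi^{i}$ on which the flow-invariance property and the well-posedness of the auxiliary control problem of the next section will rely.
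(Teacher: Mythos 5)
Your proposal is correct and follows essentially the same route as the paper: part (i) via Picard--Lindel\"{o}f using the (global) Lipschitz continuity of the affine field $b^{i}$, and part (ii) via smooth dependence of the flow on the initial datum, equivalently the (constant-coefficient) variational equation, exactly as in the paper's appeal to the corresponding results in Applebaum. Your additional observations\textemdash the explicit closed-form integration, completeness of the flow in $q$, and the group property $\phi^{i}(q_{1}+q_{2},\cdot)=\phi^{i}(q_{1},\phi^{i}(q_{2},\cdot))$\textemdash are consistent with, and indeed used elsewhere in, the paper.
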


\section{Auxiliary Control Problem}\label{auxcp}
\par This section deals with the construction of an auxiliary stochastic control problem with the help of the integral flow defined in the previous section. The constructed auxiliary control problem will be shown to be equivalent to the best-response problem of an investor, in the sense that their associated value functions will coincide. It will also follow that the auxiliary control problem is tractable by standard dynamic programming methods by construction, which will in turn allow us to solve for the value function of the singular best-response stochastic control problem.

\par The principal notion guiding the construction of the auxiliary control problem is that one can employ the diffeomorphic flow defined in the preceding section to transport a given state to any desired value, instantaneously and frictionlessly. Moreover, a judicious choice of the flow parameter in conjunction with It\^{o}'s Lemma leads to the elimination of the terms underlying the singular nature of the best-response problem of an investor from the dynamics of the transformed state process. The only remaining step is to determine an appropriate transformation of the utility function so as to ensure the desired equivalence. 

\par We begin our description of the auxiliary control problem with the derivation of the auxiliary state process. To this end, we first derive the abridged process $\left\{\Gamma^{i}_{t}\right\}$, linked to the state process of the best-response problem of an investor (\ref{stated}), in three steps outlined below. 

\par First, given a fixed initial time $s \in [0,T]$, we substitute the zero control, $x^{i}_{t} = 0, \forall \ t \in \left(s,T\right]$, in the dynamics of the state process of the best-response problem of the $i$th investor (\ref{stated}). That is, for $s \leq t \leq T$, we consider the process $\left\{Y^{i,\,0}\right\}$, where the zero in the superscript indicates the zero control. From the discussion above, it follows that $Y^{i,\,0}_{s}\, =\, Y^{i}_{s}$, while for $t \in (s,T]$, the dynamics of $\left\{Y^{i,\,0}\right\}$ are governed by the following stochastic differential equation  
\begin{equation*}
     dY^{i,\,0}_{t}\, =\, a^{i}(Y^{i}_{t})\,x^{-i}_{t}dt\, +\, \mathrm{v}^{i}(Y^{i}_{t})\,dB_{t}
\end{equation*}

\par Note that as a consequence of the preceding step, the terms corresponding to $b^{i}$ in the dynamics of the state process $Y^{i}$ are eliminated. Next, we define the stochastic process $\left\{\Gamma^{i}\right\}$ as an abridgement of $\left\{Y^{i,\,0}\right\}$ by dropping $\pi^{i}_{t}$ as a state variable, that is, we have $\Gamma^{i}_{t} = Y^{i,\,0}_{t}\backslash\,\pi^{i,\,0}_{t} = (S^{0}_{t},\pi^{-i,\,0}_{t},W^{i,\,0}_{t},W^{-i,\,0}_{t})$ for $t \in \left(s, T\right]$. Additionally, note that the control variable for the auxiliary control problem is taken to be the degenerate/free state variable $\pi^{i}_{t}$.

\par Finally, in the last step, we note that in order to complete the specification of the dynamics of $\left\{\Gamma^{i}\right\}$, it remains to define the initial value $\Gamma^{i}_{s}$, which we affix as $\Gamma^{i}_{s} = \phi^{i}(-q,Y^{i,0}_{s})\backslash\, \phi^{i}(-q,\pi^{i,0}_{s}) = \phi^{i}(-q,Y^{i}_{s})\backslash\, \phi^{i}(-q,\pi^{i}_{s})$, where $q \in \mathds{R}$. That is, the abridged process $\Gamma^{i}$ corresponds to the truncated zero-control process $Y^{i,0}\backslash\,\pi^{i,0}$, modulo a translation of its initial value $Y^{i,0}_{s}\backslash\,{\pi^{i,0}_{s}}$ by the flow value $\phi^{i}(-q,Y^{i,0}_{s})\backslash\, \phi^{i}(-q,\pi^{i,0}_{s})$, for some fixed choice of scalar $q \in \mathds{R}$. 

\par Next, we define below two vector-valued functions $\hat{a}$ and $\hat{\mathrm{v}}$, in order to succinctly characterize the dynamics of the stochastic process $\left\{\Gamma^{i}\right\}$
\begin{equation*}
\setlength{\jot}{5pt}
     \begin{aligned}
		 & \hat{a}^{i}(\pi^{i}_{t},\Gamma^{i}_{t})^{\mathrm{T}} = \left(-\theta^{-i},-1, -\theta^{-i}\pi^{i}_{t},-\theta^{-i}\pi^{-i}_{t}\right) \\
		 & \hat{\mathrm{v}}^{i}(\pi^{i}_{t},\Gamma^{i}_{t})^{\mathrm{T}} = \sigma\!\left(S_{t}\right)\left(1,0,\pi^{i}_{t},\pi^{-i}_{t}\right)
	\end{aligned}
\end{equation*}

\par Thus, given $0\leq s\leq t\leq T$, the apparatus defined above leads us to the following stochastic differential equation governing the dynamics of the process $\left\{\Gamma^{i}\right\}$ 
\begin{equation}\label{abstate}
\setlength{\jot}{5pt}
   \begin{aligned}
     d\Gamma^{i}_{t}\, =\, &\, \hat{a}^{i}(\pi^{i}_{t},\Gamma^{i}_{t})\,x^{-i}_{t}dt +\, \hat{\mathrm{v}}^{i}(\pi^{i}_{t},\Gamma^{i}_{t})\,dB_{t}\\
     \Gamma^{i}_{s}\, =\, &\, \phi^{i}(-q,Y^{i}_{s})\backslash\, \phi^{i}(-q,\pi^{i}_{s})
   \end{aligned}
\end{equation}

\par Furthering our construction of the auxiliary control problem, we define the set of admissible control processes. To this end, we affix a compact set $\Delta \in \mathds{R}$, such that $0 \in \Delta$. Then, given an initial time $s \in [0,T]$, we can define the set of admissible control processes for the auxiliary control problem, denoted by $\mathcal{A}^{a}_{s}$, as follows
\begin{definition}\label{auxadmcon}
\par For a given an initial time $s \in \left[0,T\right]$, the class of admissible auxiliary controls $\left\{\{\pi_{t}\}_{t \in [0,T]}\right\}$ with respect to $s$, denoted by $\mathcal{A}^{a}_{s}$, satisfies
\begin{enumerate}[(i)]
	\item $\pi:\Omega\times\left[s,\,T\right]\rightarrow\,\Delta$, is adapted with respect to $\mathds{F}$
	\item $\pi\left(\omega,\cdot\,\right):\left[s,\,T\right]\rightarrow\,\Delta$, is left-continuous at $T$, for $\mathds{P}$-almost every $\omega$
	\item $\pi_{r} = 0$, for all $r \in [0,s)$
\end{enumerate}
\end{definition}

\par With the definition of the set of admissible controls and the abridged process $\Gamma^{i}$ at our disposal, we can define the controlled state process as well as the associated state dynamics for the auxiliary control problem. To this effect, Note that the (matrix) differential equation (\ref{flow}) which defines the integral flow $\phi^{i}(q,Y^{i}_{t})$ is autonomous in $q$. We can thus exploit this property to express a solution of the matrix differential equation (\ref{flow}), with an alternate initial condition, in terms of a simple $q$-shift of $\phi^{i}$ itself, see \citet[Section 6.2]{teschl2012ordinary} for details.

\par This observation in conjunction with the discussion above then implies that if we define the candidate auxiliary state process $\left\{Z^{i}\right\}$ by the relation $Z^{i}_{t} = \phi^{i}(q,\Gamma^{i}_{t})$, we can re-write it equivalently, by way of the $q$-shift property, as $\Gamma^{i}_{t}\, =\, \phi^{i}(-q, Z^{i}_{t})$, where we solve the matrix differential equation (\ref{flow}) through routine computation to obtain $\phi^{i}(q,\Gamma^{i}_{t})$ as
\begin{equation*}
\phi^{i}\!\left(q,\Gamma^{i}_{t}\right) = \phi^{i}\!\left(q,Y^{i}_{t}\right)\backslash\,\phi^{i}\!\left(q,\pi^{i}_{t}\right) = \left(S-\theta^{i}q, \pi^{-i}, W^{i}-\theta^{i}\pi^{i}q + \frac{\theta^{i}}{2}q^{2}, W^{-i}-\theta^{i}\pi^{-i}q\right)^{\mathrm{T}}
\end{equation*}

\par In order to define a suitable controlled state process for the auxiliary control problem, we next eliminate the abridged process $\left\{\Gamma^{i}\right\}$ in the relation defining $\left\{Z^{i}\right\}$ above, and introduce the control process $\left\{\pi^{i}\right\}$ in its dynamics. This final measure is achieved by exploiting the equivalence $\Gamma^{i}_{t} = \phi^{i}(-q, Z^{i}_{t})$, and by replacing the parameter $q$ with the $\mathds{F}$-adapted process $\left\{\pi^{i}\right\}$. For brevity we introduce the following notation which we use subsequently
\begin{equation}\label{auxnot}
P^{i} = S - \theta^{i}\pi^{i},\ \ \mathrm{w}^{i} = W^{i}-\frac{\theta^{i}}{2}\!\left(\pi^{i}\right)^{2},\ \ \mathrm{w}^{-i} = W^{-i}-\theta^{i}\pi^{-i}\pi^{i}
\end{equation}

\par Thus, we can re-write the controlled state process for the auxiliary control problem as $Z^{i}\, =\, \left(P^{i}, \pi^{-i}, \mathrm{w}^{i}, \mathrm{w}^{-i}\right)$. Strictly speaking, the auxiliary state variables defined above do not have an associated economic interpretation. One may, however, think of the state variable $P^{i}$ as describing the \textit{auxiliary price} faced by the $i$th investor, while $\mathrm{w}^{i}$, $\mathrm{w}^{-i}$ can be interpreted as the \textit{auxiliary portfolio values} of the two investors.

\par Next, we derive the dynamics of the controlled auxiliary state process $\left\{Z^{i}\right\}$. To this end, we consider the following stochastic differential equation which follows from a routine application of It\^{o}'s Lemma to the function $\phi^{i}(q,\Gamma^{i}_{t})$, where the applicability of It\^{o}'s Lemma is ensured by {\color{azul-pesc}\Cref{reg}} and routine computation shows that $D^{2}_{2}\phi^{i}\!\left(q,\Gamma^{i}_{t}\right) = \mathbf{0}_{4\times 4}$; 
\begin{equation}\label{auxflow}
	\begin{split}
		 d\phi^{i}(q,\Gamma^{i}_{t}) = &\,D_{2}\phi^{i}\!\left(q,\Gamma^{i}_{t}\right)\hat{a}^{i}\!\left(\pi^{i}_{t},\Gamma^{i}_{t}\right)x^{-i}_{t} dt + D_{2}\phi^{i}\!\left(q, \Gamma^{i}_{t}\right)\mathrm{\hat{v}}^{i}\!\left(\pi^{i}_{t},\Gamma^{i}_{t}\right)dB_{t},\quad s< t\leq T\\
    \end{split}
\end{equation}

\par In view of (\ref{auxnot}) and (\ref{auxflow}), the dynamics of the state process for the auxiliary control problem are then governed by the following system of stochastic differential equations for $s < t \leq T$
\begin{equation*}
\setlength{\jot}{5pt}
\begin{aligned}
& dP^{i}_{t}\,=\, -\theta^{-i} x^{-i}_{t} dt\,+\, \sigma\!\left(P^{i}_{t}+\theta^{i}\pi^{i}_{t}\right)dB_{t}\\ 
& d\pi^{-i}_{t}\, =\, -x^{-i}_{t} dt \\
& d\mathrm{w}^{i}_{t}\,=\, \pi^{i}_{t}\left(-\theta^{-i}x^{-i}_{t} dt\,+\, \sigma\!\left(P^{i}_{t}+\theta^{i}\pi^{i}_{t}\right)dB_{t}\right)\\
& d\mathrm{w}^{-i}_{t}\,=\,\pi^{-i}_{t}\left(-\theta^{-i}x^{-i}_{t} dt\,+\, \sigma\!\left(P^{i}_{t}+\theta^{i}\pi^{i}_{t}\right)dB_{t}\right) \,+\,\theta^{i}\pi^{i}_{t}\,x^{-i}_{t} dt
\end{aligned}
\end{equation*}

\par In order to derive the initial condition, we recall the $q$-shift property of the integral flow $\phi^{i}$ in conjunction with the fact that $Z^{i}_{s} = \phi^{i}(q, \Gamma^{i}_{s})$ and $\Gamma^{i}_{s} = \phi^{i}(-q, Z^{i}_{s})$ to obtain $Z^{i}_{s} = Y^{i}_{s}\backslash \pi^{i}_{s}$. Next, we define vector-valued deterministic functions $\beta^{i}$ and $\nu^{i}$ as follows
\begin{equation}\label{auxceq}
\setlength{\jot}{5pt}
\begin{aligned}
    \beta^{i}\!\left(\pi^{i}_{t},Z^{i}_{t}\right)^{\mathrm{T}} & = \left(-\theta^{-i},-1,-\theta^{-i}\pi^{i}_{t},\theta^{i}\pi^{i}_{t}-\theta^{-i}\pi^{-i}_{t}\right) =  \beta^{i}\!\left(\pi^{i}_{t},\phi^{i}(q, \Gamma^{i}_{t})\right)^{\mathrm{T}}\\ & = D_{2}\phi^{i}\!\left(\pi^{i}_{t},\Gamma^{i}_{t}\right)\hat{a}^{i}\!\left(\pi^{i}_{t},\Gamma^{i}_{t}\right)\\
    \nu^{i}(\pi^{i}_{t},Z^{i}_{t})^{\mathrm{T}} & =  \sigma\!\left(P^{i}_{t}+\theta^{i}\pi^{i}_{t}\right)\left(1,0,\pi^{i}_{t},\pi^{-i}_{t}\right) = \sigma\!\left(S_{t}\right)\left(1,0,\pi^{i}_{t},\pi^{-i}_{t}\right) = \nu^{i}(\pi^{i}_{t},\phi^{i}(q, \Gamma^{i}_{t}))^{\mathrm{T}} \\ & = D_{2}\phi^{i}\!\left(\pi^{i}_{t},\Gamma^{i}_{t}\right)\hat{v}^{i}\!\left(\pi^{i}_{t},\Gamma^{i}_{t}\right)
\end{aligned}
\end{equation}

\par We use the definition of the vector-valued functions $\beta^{i}$ and $\gamma^{i}$ to re-write the dynamics of the auxiliary state process $Z^{i}$ succinctly as follows
\begin{equation}\label{astate}
\setlength{\jot}{5pt}
	\begin{aligned}
		dZ^{i}_{t} =\, &\ \beta^{i}\!\left(\pi^{i}_{t}, Z^{i}_{t}\right)x^{-i}_{t}dt\, +\, \nu^{i}\!\left(\pi^{i}_{t}, Z^{i}_{t}\right)dB_{t}\\
		Z^{i}_{s} =\, &\ Y^{i}_{s},\quad 0\leq s\leq t \leq T
	\end{aligned}
\end{equation}

\par Having defined a suitable controlled state process for the auxiliary control problem, we turn our attention to the question of well-posedness of the system of stochastic differential equation (\ref{astate}) governing its dynamics. This is not immediate from standard existence results since the coefficients in the system of stochastic differential equations (\ref{astate}) do not satisfy a Lipschitz hypothesis in general. We instead extend the Euler approximation method employed in \cite{applebaum2009levy} and \cite{lan2014new} as earlier in order to prove the following proposition, the proof of which is relegated to the technical appendix for the sake of readability.

\begin{proposition}\label{ext} \par Suppose the local volatility function $\sigma$ satisfies {\color{azul-pesc}\Cref{siglip}}. Then, the system of stochastic differential equations defined by (\ref{astate}) has a unique strong solution for every $\pi^{i} \in \mathcal{A}^{a}_{s}$. Moreover, the solution is non-explosive, that is, the lifetime of the solution, $\inf\left\{t>0,\ \left\vert Z^{i}_{t}\right\vert\geq n \right\} > T$ as $n \rightarrow \infty$, almost surely.
\end{proposition}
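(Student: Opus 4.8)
The plan is to mirror the proof of \textcolor{azul-pesc}{\Cref{exist1}}, namely a localization/Euler-approximation argument in the spirit of \cite{applebaum2009levy} and \cite{lan2014new}, preceded by a check that the transformed coefficients $\beta^{i},\nu^{i}$ inherit the structure of $a^{i},b^{i},\mathrm{v}^{i}$. Throughout, fix $\pi^{i}\in\mathcal{A}^{a}_{s}$ and the opponent's control $x^{-i}\in\mathcal{A}^{m}_{s}\subseteq\mathcal{A}_{s}$, treated as given adapted processes. The two quantitative facts I would record first are: \emph{(a) local Lipschitz continuity} --- since $\pi^{i}$ is valued in the compact set $\Delta$ and, by \textcolor{azul-pesc}{\Cref{siglip}}, $\sigma$ is bounded and Lipschitz, the maps $z\mapsto\beta^{i}(\pi^{i}_{t},z)$ and $z\mapsto\nu^{i}(\pi^{i}_{t},z)$ are, uniformly in $t\in[s,T]$, Lipschitz on each ball $\{|z|\leq n\}$ with constant $K_{n}$ depending only on $n$, on $\Delta$, and on the bounds for $\sigma$ (here the $\mathds{C}^{2}$-regularity of $\phi^{i}$ furnished by \textcolor{azul-pesc}{\Cref{reg}} and the vanishing $D^{2}_{2}\phi^{i}\equiv\mathbf{0}_{4\times4}$ used in deriving (\ref{auxflow}) guarantee that $\beta^{i},\nu^{i}$ are well defined and share these regularity properties); and \emph{(b) linear growth} --- $|\beta^{i}(\pi^{i}_{t},z)|+|\nu^{i}(\pi^{i}_{t},z)|\leq C(1+|z|)$ uniformly in $t$, because, reading off (\ref{auxceq}), the only state coordinate on which these coefficients depend non-trivially is $\pi^{-i}$, which enters linearly, while $\sigma$ contributes only a bounded factor. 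Crucially, in the $dt$-coefficient this linear growth is multiplied by the random process $x^{-i}$, which is only $L^{1}([s,T])$ pathwise but, by \textcolor{azul-pesc}{\Cref{admcon}}(iii) applied with $\hat{X}=X^{-i}$, has $\int_{s}^{T}|x^{-i}_{t}|\,dt$ with finite moments of every order.

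For the construction, for each $n\in\mathds{N}$ I would pick a globally Lipschitz cut-off $\psi_{n}\colon\mathds{R}^{4}\to\mathds{R}^{4}$ with $\psi_{n}(z)=z$ on $\{|z|\leq n\}$ and $|\psi_{n}|\leq n+1$, and consider the truncated equation obtained from (\ref{astate}) by replacing $Z^{i}_{t}$ with $\psi_{n}(Z^{i,n}_{t})$ inside $\beta^{i}$ and $\nu^{i}$, keeping $Z^{i,n}_{s}=Y^{i}_{s}$. By (a) the truncated coefficients are bounded and globally Lipschitz in the state, the drift additionally carrying the integrable-in-time random factor $x^{-i}_{t}$; existence and pathwise uniqueness of a strong solution $Z^{i,n}$ then follow by a Picard iteration run against the a.s.\ finite, continuous, increasing clock $t\mapsto\int_{s}^{t}(1+|x^{-i}_{r}|)\,dr$, after the further localization by $\rho_{m}=\inf\{t>s:\int_{s}^{t}|x^{-i}_{r}|\,dr\geq m\}$ (with $\rho_{m}>T$ for all large $m$, a.s.), which renders this clock deterministically bounded. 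Setting $\tau_{n}=\inf\{t>s:|Z^{i,n}_{t}|\geq n\}$, pathwise uniqueness forces $Z^{i,n}=Z^{i,m}$ on $[s,\tau_{n}]$ for $m\geq n$, so $(\tau_{n})$ is non-decreasing and $Z^{i}_{t}:=Z^{i,n}_{t}$ for $t<\tau_{n}$ defines, unambiguously, a strong solution of (\ref{astate}) on the stochastic interval $[s,\tau_{\infty})$, with $\tau_{\infty}:=\lim_{n}\tau_{n}$.

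The crux is non-explosion, i.e.\ $\mathds{P}(\tau_{\infty}>T)=1$. I would apply It\^{o}'s formula to $|Z^{i}_{t\wedge\tau_{n}}|^{2}$ and use the linear-growth bound (b) to dominate the drift and quadratic-variation terms by $\int_{s}^{t\wedge\tau_{n}}C(1+|x^{-i}_{r}|)(1+|Z^{i}_{r}|^{2})\,dr$ plus a local-martingale term; taking suprema over $t$, passing to expectations, and estimating the martingale contribution via the Burkholder--Davis--Gundy inequality reduces matters to the stochastic Gronwall inequality (as in \cite{lan2014new}), whose random multiplier $\int_{s}^{T}C(1+|x^{-i}_{r}|)\,dr$ is integrable with moments of all orders precisely by \textcolor{azul-pesc}{\Cref{admcon}}(iii), while the forcing term, depending only on the initial datum $Y^{i}_{s}$ (which has finite moments of every order, as boundedness of $\sigma$ in \textcolor{azul-pesc}{\Cref{siglip}} makes the pre-$s$ dynamics sub-Gaussian), is harmless. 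This yields $\sup_{n}\mathds{E}\bigl[\sup_{s\leq t\leq T\wedge\tau_{n}}|Z^{i}_{t}|^{2}\bigr]\leq C<\infty$, hence $\mathds{P}(\tau_{n}\leq T)\leq C/n^{2}\to0$ and therefore $\mathds{P}(\tau_{\infty}\leq T)=0$; so $Z^{i}$ is defined on all of $[s,T]$ and is non-explosive in the stated sense. Uniqueness on $[s,T]$ is then immediate: any strong solution coincides with $Z^{i,n}$ on $[s,\tau_{n}]$ by local pathwise uniqueness, and $\tau_{n}$ exceeds $T$ eventually, a.s. The step I expect to be the main obstacle is the non-explosion estimate --- specifically, controlling the interplay between the merely $L^{1}([s,T])$-in-time random input $x^{-i}$ and the linearly growing, state-coupled coefficients --- which is exactly where the exponential-integrability requirement built into \textcolor{azul-pesc}{\Cref{admcon}}(iii) becomes indispensable.
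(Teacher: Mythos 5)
Your proposal is correct in outline and reaches the same three conclusions (existence, pathwise uniqueness, non-explosion), but it does so with two devices that differ from the paper's. For existence the paper runs an Euler scheme on successively finer grids, localized by a stopping time built from the opponent's control, and extracts the solution as an a.s.\ uniform limit via a Borel--Cantelli argument; you instead truncate the coefficients, run a Picard iteration against the random clock $t\mapsto\int_{s}^{t}(1+|x^{-i}_{r}|)\,dr$ after localizing by $\rho_{m}=\inf\{t:\int_{s}^{t}|x^{-i}_{r}|\,dr\geq m\}$, and patch the truncated solutions together by local pathwise uniqueness. Both routes work; your localization by the \emph{integral} of $|x^{-i}|$ is in fact the more natural one given that admissibility only controls $\int_{s}^{T}|x^{-i}_{t}|\,dt$, and your preliminary observations (local Lipschitz continuity and linear growth of $\beta^{i},\nu^{i}$, uniformly in $t$ because $\pi^{i}$ lives in the compact set $\Delta$ and $\sigma$ is bounded Lipschitz) are exactly the structural facts the paper also exploits. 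For non-explosion the routes diverge more substantially: the paper applies the It\^{o}--Meyer--Tanaka formula to the concave test function $g(|Z^{i}|^{2})$ with $g(x)=\int_{0}^{x}(1+\epsilon)^{-1}d\epsilon$, so that the factor $D_{1}g(\vartheta)=1/(1+\vartheta)$ tames the linearly growing, $x^{-i}$-multiplied drift, and the resulting bound needs only \emph{first} moments of quantities like $\int|x^{-i}|\,dt$ and $\bigl(\int|x^{-i}|\,dt\bigr)^{2}$, after which explosion is excluded by contradiction ($g(\vartheta_{\zeta})=\infty$ would violate the finite bound). You instead estimate $|Z^{i}_{t\wedge\tau_{n}}|^{2}$ directly and invoke a stochastic Gronwall inequality whose random multiplier $\int_{s}^{T}C(1+|x^{-i}_{r}|)\,dr$ must have exponential moments; these are indeed available from \Cref{admcon}(iii) (take $\hat{X}=X^{-i}$ and use $mu\leq m^{2}/4+u^{2}$), so your approach is viable, but it consumes strictly more integrability than the paper's Lyapunov-type argument.

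One caveat on your non-explosion step: as stated, you claim the uniform bound $\sup_{n}\mathds{E}\bigl[\sup_{s\leq t\leq T\wedge\tau_{n}}|Z^{i}_{t}|^{2}\bigr]\leq C$, i.e.\ the borderline exponent $p=1$, which the standard stochastic Gronwall lemma does not deliver when the multiplier is random and unbounded; it yields moments of order $p<1$ of the running supremum in terms of exponential moments of $\int A_{r}\,dr$. This is not a fatal gap, because your Chebyshev step only needs \emph{some} positive moment: with $p\in(0,1)$ you still get $\mathds{P}(\tau_{n}\leq T)\leq C n^{-2p}\rightarrow 0$, hence $\mathds{P}(\tau_{\infty}\leq T)=0$, and uniqueness on $[s,T]$ follows from local pathwise uniqueness exactly as you say. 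You should either weaken the displayed bound to a $p<1$ moment, or switch to the paper's logarithmic test function, which sidesteps Gronwall altogether and is the more economical argument here.
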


\par We conclude the description of the auxiliary control problem by defining the auxiliary best-response value function for the $i$th large investor. To this end, given a fixed initial time $s \in [0,T)$, a deterministic initial auxiliary state $z^{i} \in \mathds{R}^{\lvert\,Z^{i}\,\rvert}$, and an admissible control $X^{-i} \in \mathcal{A}_{s}$ for investor $-i$, the auxiliary best-response value function of the $i$th investor, denoted by $V^{i}$, is defined as
\begin{equation}\label{auxvf}
\setlength{\jot}{7pt}
\begin{aligned}
   & V^{i}\!\left(s,z^{i};X^{-i}\right)\, = \sup\limits_{\pi^{i}\,\in\, \mathcal{A}^{a}_{s}}\mathds{E}\left[\,\sup\limits_{q\, \in\, \mathds{R}}\Big\{\,u^{i}\circ\phi^{i}\!\left(-q,Z^{i}_{T}\right)\!\Big\}\right] = \sup\limits_{\pi^{i}\,\in\, \mathcal{A}^{a}_{s}}\mathds{E}\left[u^{i}\!\left(\mathrm{w}^{i}_{T}\right)\right] \overset{\mathrm{def}}{=}\! \sup\limits_{\pi^{i}\,\in\,\mathcal{A}^{a}_{s}} H^{i}\!\left(\pi^{i};X^{-i}\right) \\
   & \quad \text{with}\ \ V^{i}\!\left(s,z^{i};X^{-i}\right)\!\big\vert_{\,s\,=\,T}\, = \sup\limits_{q\, \in\, \mathds{R}}\Big\{\,u^{i}\circ\phi^{i}\!\left(-q,Z^{i}_{T}\right)\!\Big\}
\end{aligned}    
\end{equation}

\section{Value Function Equivalence}\label{vequiv}
\par This section establishes the equivalence result which is central to the present work using a three step approach as in \cite{lasry2000classe}, and \cite{goldys2019class}. First, we establish an invariance result for the value function of the singular best-response control problem. In particular, we prove that the best-response value function of investor $i$ is invariant with respect to the diffeomorphic integral flow defined in {\color{azul-pesc}\Cref{intflo}}. 

\par Next, we show that a corresponding invariance result holds for the value function of the auxiliary optimal control problem, and finally, we establish the equivalence of the value functions of the singular and the auxiliary optimal control problems with the aid of the two invariance results. In subsequent discussion, we maintain the standing assumption that the utility function representing the preferences of the investors over the terminal portfolio value belong to the CARA or exponential class of utility functions. Specifically, we assume
\begin{equation*}
u^{i}\!\left(W^{i}_{T}\right)\, =\, -\,\frac{1}{\delta^{i}}\exp\!{\left(-\delta^{i}W^{i}_{T}\right)},\ \delta^{i}>0
\end{equation*}

\par The positive scalar $\delta^{i}$ above denotes the (constant) coefficient of absolute risk-aversion of the $i$th investor. Note that we permit heterogeneity in risk-aversion amongst institutional investors. It is straightforward to check that $u^{i}\in\mathds{C}^{2}$ for $i\in\mathcal{I}$, which implies that the exponential functional form assumed above for the utility function is compatible with the fundamental desiderata for the utility functions of investors outlined earlier in \textcolor{azul-pesc}{\Cref{model}}. 

\subsection{Singular Problem Invariance}

\par In order to prove the desired invariance result for the value function of the singular best-response control problem we first define the truncated best-response value function $J^{i,n}$, that is, given an initial time $s \in [0,T]$, an admissible strategy $X^{-i} \in \mathcal{A}_{s}$ for investor $-i$ and $n \in \mathds{N}$, we define 
\[{J}^{i,n}\!\left(s,y^{i};X^{-i}\right)\,=\,\sup_{X^{i}\,\in\,\mathcal{A}^{n}_{s}}\,U^{i}\!\left(X^{i},X^{-i}\right)\]

\par Evidently, the constraint space $\mathcal{A}^{n}_{s}$ in the definition above represents the subclass of the class of admissible controls that are bounded by $n$, and is formally defined as follows
\[\mathcal{A}^{n}_{s}\,=\,\big\{X \in \mathcal{A}_{s}: \left\lvert x_{t}\right\rvert\leq n,\, \text{for all}\ s\leq t\leq T\,\big\}\]

\par The following ancillary lemma collects certain regularity properties of the best-response value function $J^{i}$ as well as the truncated best-response value function $J^{i,n}$. Note that since $J^{i}$ corresponds to the value associated with a singular control problem, the fact that it is non-degenerate and has desirable continuity properties is not immediate. Thus, we call upon \textcolor{azul-pesc}{\Cref{admcon}}, as well as the convergence and continuity properties of elements of the sequence $\{J^{i,n}\}_{n \,\in \,\mathds{N}}$ defined above, to prove the following lemma, the proof of which is relegated to the technical appendix for the sake of readability.
\begin{lemma}\label{jconv}
\par Given an initial time $s\in[0,T]$, an admissible strategy $X^{-i} \in \mathcal{A}_{s}$ for investor $-i$, and a deterministic initial state $y^{i}$ in the domain of $Y^{i}$, we have
\begin{enumerate}[(i)]
\item The function $J^{i}\!\left(s,\,\cdot\,;\,X^{-i}\right):\mathds{R}^{\lvert\,Y^{i}\,\rvert}\,\rightarrow\,\mathds{R}$ is non-degenerate, that is, 
\[\left\lvert\,J^{i}\!\left(s,y^{i};\,X^{-i}\right)\right\rvert < \infty,\, \forall\, y^{i} \in \mathds{R}^{\lvert\,Y^{i}\,\rvert}\]
\item The sequence $\left\{J^{i,n}\!\left(s,y^{i};\,X^{-i}\right)\right\}_{n\,\in\, \mathds{N}}$ converges to $J^{i}\!\left(s,y^{i};\,X^{-i}\right)$, that is, 
\[\lim\limits_{n\,\rightarrow\,\infty}\,J^{i,n}\!\left(s,\,y^{i};\,X^{-i}\right)\, =\, J^{i}\!\left(s,\,y^{i};\,X^{-i}\right)\]
\item The function $J^{i,n}\!\left(s,\,\cdot\,;\,X^{-i}\right) : \mathds{R}^{\lvert\, Y^{i}\,\rvert}\,\rightarrow\, \mathds{R}$ is continuous.
\item The function $J^{i}\!\left(s,\,\cdot\,;\,X^{-i}\right) : \mathds{R}^{\lvert\, Y^{i}\,\rvert}\,\rightarrow\, \mathds{R}$ is lower semi-continuous.
\item The function $J^{i}\!\left(\,\cdot\,,y^{i};\,X^{-i}\right) : [0,T]\,\rightarrow\, \mathds{R}$ is lower semi-continuous at $T$.
\end{enumerate}
\end{lemma}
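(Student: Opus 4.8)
The plan is to prove the five assertions in the order (i), (iii), (ii), and then (iv)--(v), which follow quickly. Throughout I would lean on three structural facts: $u^{i}<0$, so every value function is automatically bounded above by $0$; $\sigma$ is bounded (Assumption~\ref{siglip}); and Definition~\ref{admcon}(iii) forces $\mathds{E}\big[\exp(c\int_{s}^{T}\lvert x_{t}\rvert\,dt)\big]<\infty$ and $\mathds{E}\big[\exp(c(\int_{s}^{T}\lvert x_{t}\rvert\,dt)^{2})\big]<\infty$ for every $c\in\mathds{R}$ and every $X\in\mathcal{A}_{s}$ (the former being an elementary consequence of the latter via $ca\le \tfrac{c^{2}}{2m}+\tfrac{m}{2}a^{2}$). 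For (i), boundedness above is free; for the lower bound I would test with the zero control $x^{i}\equiv 0$ (which lies in $\mathcal{A}_{s}$, keeps $\pi^{i}$ frozen, and yields $W^{i}_{T}=w^{i}+\pi^{i}_{0}(\int_{s}^{T}\sigma(S_{t})\,dB_{t}-\theta^{-i}\int_{s}^{T}x^{-i}_{t}\,dt)$ with $w^{i},\pi^{i}_{0}$ the relevant components of $y^{i}$): the martingale term has quadratic variation at most $\lVert\sigma\rVert_{\infty}^{2}(T-s)$, so its exponential has all moments; the drift term is controlled by Definition~\ref{admcon}(iii) applied to $X^{-i}$; and a Cauchy--Schwarz split gives $\mathds{E}[\exp(-\delta^{i}W^{i}_{T})]<\infty$, hence $J^{i}>-\infty$.

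For (iii) I would fix $n$ and bound $\lvert J^{i,n}(s,y^{i};X^{-i})-J^{i,n}(s,\tilde y^{i};X^{-i})\rvert$ by $\sup_{X^{i}\in\mathcal{A}^{n}_{s}}\mathds{E}\big[\lvert u^{i}(W^{i,y^{i},X^{i}}_{T})-u^{i}(W^{i,\tilde y^{i},X^{i}}_{T})\rvert\big]$, where $W^{i,y^{i},X^{i}}$ denotes the $W^{i}$-coordinate of the solution of (\ref{stated}) from $y^{i}$ under $X^{i}$. Using $\lvert u^{i}(a)-u^{i}(b)\rvert\le(e^{-\delta^{i}a}+e^{-\delta^{i}b})\lvert a-b\rvert$ together with Cauchy--Schwarz, the task splits into (a) a bound on $\sup_{X^{i}\in\mathcal{A}^{n}_{s},\,\lvert\tilde y^{i}-y^{i}\rvert\le 1}\mathds{E}[e^{-2\delta^{i}W^{i,\tilde y^{i},X^{i}}_{T}}]$ and (b) $\mathds{E}[\lvert W^{i,y^{i},X^{i}}_{T}-W^{i,\tilde y^{i},X^{i}}_{T}\rvert^{2}]\to 0$ as $\tilde y^{i}\to y^{i}$, uniformly over $X^{i}\in\mathcal{A}^{n}_{s}$. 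For (a), a control bounded by $n$ makes $\int_{s}^{T}\lvert x^{i}_{t}\rvert\,dt\le n(T-s)$ deterministic, so only the $X^{-i}$-driven contributions carry exponential moments, and these are finite and control-independent by Definition~\ref{admcon}(iii); for (b) I would prove continuous dependence of (\ref{stated}) on its initial datum by localization --- necessary because $\mathrm{v}^{i}$ is only locally Lipschitz through the factor $\sigma(S)\pi^{i}$ --- using a priori moment bounds for $Y^{i}$ that are uniform over $\mathcal{A}^{n}_{s}$, the $a^{i}(\cdot)x^{-i}$ drift being tamed by the exponential integrability of $\int_{s}^{T}\lvert x^{-i}_{t}\rvert\,dt$. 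This gives continuity of $J^{i,n}(s,\cdot;X^{-i})$, and since $W^{i}$ is literally a coordinate of $Y^{i}$, no extra work is needed to pass from $Y^{i}_{T}$ to $W^{i}_{T}$.

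For (ii), monotonicity $\mathcal{A}^{n}_{s}\subseteq\mathcal{A}^{n+1}_{s}\subseteq\mathcal{A}_{s}$ gives $J^{i,n}\uparrow\bar J^{i}\le J^{i}$; for the converse I would fix, using (i), an $\varepsilon$-optimal $X^{i}\in\mathcal{A}_{s}$ for $J^{i}$ and use its truncations $X^{i,(n)}:=(-n)\vee x^{i}\wedge n\in\mathcal{A}^{n}_{s}$. Since $\int_{s}^{T}\lvert X^{i,(n)}_{t}-x^{i}_{t}\rvert\,dt\to0$ a.s., I would check that $\pi^{i,(n)}\to\pi^{i}$ and (via Gronwall with the \emph{deterministic} Lipschitz constant of $\sigma$) $S^{(n)}\to S$ uniformly on $[s,T]$, hence that the stochastic and drift integrals defining $W^{i,(n)}_{T}$ converge, so $W^{i,(n)}_{T}\to W^{i}_{T}$ a.s.; moreover $\{e^{-\delta^{i}W^{i,(n)}_{T}}\}_{n}$ is bounded in $L^{1+\alpha}$ for some $\alpha>0$, the drift contributions being controlled by $\int_{s}^{T}\lvert x^{i}_{t}\rvert\,dt$ and $\int_{s}^{T}\lvert x^{-i}_{t}\rvert\,dt$ and the stochastic integral $\int_{s}^{T}\pi^{i,(n)}_{t}\sigma(S^{(n)}_{t})\,dB_{t}$ having an $n$-uniform exponential moment by a Dol\'{e}ans-exponential/Novikov argument, its quadratic variation being at most $\lVert\sigma\rVert_{\infty}^{2}(T-s)(\lvert\pi^{i}_{0}\rvert+\int_{s}^{T}\lvert x^{i}_{t}\rvert\,dt)^{2}$, which has all exponential moments. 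Uniform $L^{1+\alpha}$-boundedness yields uniform integrability of $\{u^{i}(W^{i,(n)}_{T})\}_{n}$, so $U^{i}(X^{i,(n)};X^{-i})\to U^{i}(X^{i};X^{-i})$ by Vitali's theorem, whence $\bar J^{i}=J^{i}$. Then (iv) is immediate: $J^{i}(s,\cdot;X^{-i})=\sup_{n}J^{i,n}(s,\cdot;X^{-i})$ is a pointwise supremum of continuous functions. For (v), at $s=T$ the admissible control acts on a singleton so $J^{i}(T,y^{i};X^{-i})=u^{i}(w^{i})$; for $s<T$ the zero control gives $J^{i}(s,y^{i};X^{-i})\ge\mathds{E}[u^{i}(W^{i,0,s}_{T})]$, and as $s\uparrow T$ the Brownian and drift integrals in $W^{i,0,s}_{T}$ tend to $0$ (in $L^{2}$ and a.s., respectively) while the relevant exponential moments stay bounded, so $\mathds{E}[u^{i}(W^{i,0,s}_{T})]\to u^{i}(w^{i})$ and $\liminf_{s\uparrow T}J^{i}(s,y^{i};X^{-i})\ge u^{i}(w^{i})$.

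I expect the main obstacle to be the two estimates just flagged: the $n$-uniform integrability in (ii) and the $X^{i}$-uniform continuous dependence in (iii). Both are non-routine because the coefficients of (\ref{stated}) are neither globally Lipschitz nor of linear growth uniformly in the control ($\mathrm{v}^{i}$ carries $\sigma(S)\pi^{i}$ and $a^{i}$ carries the a priori unbounded $x^{-i}$), so everything must be extracted from Definition~\ref{admcon}(iii) and the boundedness in Assumption~\ref{siglip}. The subtle point is the stochastic integral term: estimating it pathwise through $\sup_{t}\lvert\int_{s}^{t}\sigma\,dB\rvert$ only furnishes exponential moments of its square below a finite threshold, which is not enough; routing instead through the Dol\'{e}ans exponential reduces all the required exponential moments to those of $\int_{s}^{T}\lvert x^{i}_{t}\rvert\,dt$ and $\int_{s}^{T}\lvert x^{-i}_{t}\rvert\,dt$, which Definition~\ref{admcon}(iii) supplies.
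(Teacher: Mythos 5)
Your plan follows essentially the same route as the paper's proof: the zero control together with a Cauchy--Schwarz split and exponential-moment bounds from Definition (admissibility)(iii) for (i); truncation of a near-optimal control, convergence of the controlled state, and a moment/uniform-integrability estimate to pass to the limit in the utilities for (ii); a sup-over-bounded-controls stability estimate with respect to the initial datum, proved by localization on $\int\lvert x^{-i}\rvert\,dt$, for (iii); and limiting arguments for (iv)--(v), where your ``supremum of continuous functions'' and zero-control shortcuts are equivalent to the paper's $\epsilon$-arguments, and your Vitali/$L^{1+\alpha}$ route replaces the paper's Cauchy--Schwarz bound involving $\mathds{E}\,\lvert D_{1}u^{i}(W^{i,\rho}_{T})\rvert^{2}$. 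One caveat: in (ii) the step ``$S^{(n)}\to S$ uniformly via Gronwall with the deterministic Lipschitz constant of $\sigma$'' does not work pathwise, because $S^{(n)}-S$ contains the stochastic integral $\int_{s}^{\cdot}\bigl(\sigma(S^{(n)}_{u})-\sigma(S_{u})\bigr)dB_{u}$, which a pathwise Gronwall cannot absorb; as in the paper you should run the Burkholder--Davis--Gundy plus Gronwall estimate in $L^{2}$ under the localization $\int_{s}^{\cdot}\lvert x^{-i}_{u}\rvert\,du\le k$ (exactly the machinery you already invoke in (iii)(b)), obtaining convergence in $L^{2}$ and hence almost surely along a subsequence, which is enough for the Vitali argument.
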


\par Next, we establish an equicontinuity like property for the controlled state process $Y^{i}$ in the following lemma, which is called upon in proving the desired invariance result. To this end, given a fixed initial time $s \in [0,T]$, we consider a decreasing sequence of deterministic times $\{\epsilon_{n}\}_{n\,\in\,\mathds{N}} \subseteq (s,T]$, such that $\epsilon_{n}\downarrow s$. Moreover, for a given a scalar $q \in \mathds{R}$, we define a control process $X^{\epsilon_{n}}=\left\{x^{\epsilon_{n}}_{t}\right\}_{t\,\in\,[s,\,T]}$, corresponding to the element $\epsilon_{n}$ of the sequence above, as follows
\begin{equation}\label{uep}
    x^{\epsilon_{n}}_{t}\ =\ \begin{cases} q/\left(\epsilon_{n}-s\right),\,& t\, \in\, \left[s,\epsilon_{n}\right]\\
    0,\,& t \in \left[0,s\right)\,\bigcup\,\left(\epsilon_{n},T\right]
    \end{cases}  
\end{equation}

\par It follows from the definition above that for all $k\in\mathds{N}$, such that $k\geq\left\lvert \,q/(\epsilon_{n}-s)\,\right\rvert$, we have $X^{\epsilon_{n}}\in\mathcal{A}^{k}_{s}$. For a given $X^{-i}\,\in\,\mathcal{A}_{s}$, we then consider the sequence of processes $\left\{Y^{i,\,\epsilon_n}\right\}_{n\,\in\,\mathds{N}}$ where $Y^{i,\,\epsilon_n}$ is defined as the solution of the system of stochastic differential equations (\ref{stated}), with initial condition $Y^{i,\,\epsilon_n}_{s}\,=\,y^{i} \in \mathds{R}^{\,\left\lvert\, Y^{i}\,\right\rvert}$, and $X^{i}\,=\, X^{\epsilon_n}$. With the help of the apparatus introduced here, we state and prove the following ancillary lemma

\begin{lemma}\label{flowconv}
\par Fix an $n\in\mathds{N}$, and consider the corresponding random variable $Y^{i,\,\epsilon_{n}}_{\epsilon_{n}}$. The random variable $Y^{i,\epsilon_n}_{\epsilon_{n}}$ converges to $\phi^{i}\!\left(q,y^{i}\right)$ $\mathds{P}$-almost surely as $\epsilon_{n}\downarrow s$.
\end{lemma}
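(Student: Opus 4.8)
The plan is to show that the control $X^{\epsilon_n}$ transports the initial state $y^i$ along the integral flow $\phi^i(\cdot,y^i)$ as $\epsilon_n \downarrow s$, by comparing the solution of \eqref{stated} driven by $X^{\epsilon_n}$ with the flow ODE \eqref{flow}, and controlling the three error sources: the drift contributions of $a^i$ and $\mathrm{v}^i$ over the vanishing interval $[s,\epsilon_n]$, and the discrepancy between the discrete-time integral of $b^i(Y^{i,\epsilon_n}_t)x^{\epsilon_n}_t$ and the flow. First I would introduce the time-changed/rescaled variable: over $[s,\epsilon_n]$ the control has constant value $q/(\epsilon_n-s)$, so writing $\tau = (t-s)/(\epsilon_n-s) \in [0,1]$ and $\tilde Y^{(n)}_\tau = Y^{i,\epsilon_n}_{s+\tau(\epsilon_n-s)}$, the dynamics become
\begin{equation*}
d\tilde Y^{(n)}_\tau \;=\; q\, b^i\!\big(\tilde Y^{(n)}_\tau\big)\,d\tau \;+\; (\epsilon_n-s)\, a^i\!\big(\tilde Y^{(n)}_\tau\big)\, x^{-i}_{s+\tau(\epsilon_n-s)}\,d\tau \;+\; \mathrm{v}^i\!\big(\tilde Y^{(n)}_\tau\big)\, d\tilde B^{(n)}_\tau,
\end{equation*}
where $\tilde B^{(n)}_\tau = (B_{s+\tau(\epsilon_n-s)} - B_s)/\sqrt{\epsilon_n-s}$ is, by Brownian scaling, again a standard Brownian motion on $[0,1]$ but now multiplied by the small factor $\sqrt{\epsilon_n-s}$ in front of $\mathrm{v}^i$ (one factor from the time change, the other absorbed by the rescaled Brownian motion — I would keep the bookkeeping explicit rather than hand-wave it). Meanwhile $\phi^i(q\tau, y^i)$ solves $d\phi/d\tau = q\, b^i(\phi)$ with the same initial condition $y^i$, and $\phi^i(q\cdot 1, y^i) = \phi^i(q,y^i)$ is exactly the claimed limit.

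Next I would set $\Delta^{(n)}_\tau := \tilde Y^{(n)}_\tau - \phi^i(q\tau, y^i)$ and estimate $\sup_{\tau\in[0,1]}|\Delta^{(n)}_\tau|$. Subtracting the two integral equations, the leading term is $q\int_0^\tau [b^i(\tilde Y^{(n)}_r) - b^i(\phi^i(qr,y^i))]\,dr$; since $b^i$ is locally Lipschitz (it is affine in $\pi^i,\pi^{-i}$ and its other entries are constant), on the (random but a.s.\ finite by \Cref{exist1} and \Cref{reg}) compact set through which both trajectories pass this is bounded by $qL\int_0^\tau|\Delta^{(n)}_r|\,dr$. The remaining terms are the $a^i$-drift term, bounded in sup-norm by $(\epsilon_n-s)\cdot C \cdot \int_s^{\epsilon_n}|x^{-i}_t|\,dt$ — which $\to 0$ a.s.\ because $X^{-i}\in\mathcal{A}_s$ forces $t\mapsto\int_s^t|x^{-i}_r|dr$ to be finite, hence the integral over the shrinking interval vanishes — and the stochastic term $\sqrt{\epsilon_n-s}\int_0^\tau \mathrm{v}^i(\tilde Y^{(n)}_r)\,d\tilde B^{(n)}_r$, whose sup over $\tau$ I would control via the Burkholder–Davis–Gundy inequality together with \Cref{siglip} (boundedness of $\sigma$) to get a bound of order $\sqrt{\epsilon_n-s}$ times (an affine function of) $\sup_\tau|\Delta^{(n)}_\tau|$ plus a constant. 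Collecting, $\sup_{\tau}|\Delta^{(n)}_\tau| \le \eta_n + qL\int_0^1 \sup_{r\le\tau}|\Delta^{(n)}_r|\,d\tau$ with $\eta_n\to 0$ a.s.\ (possibly along a subsequence, upgraded to full convergence afterwards), and Grönwall's inequality gives $\sup_\tau|\Delta^{(n)}_\tau|\le \eta_n e^{qL}\to 0$. Evaluating at $\tau=1$ yields $Y^{i,\epsilon_n}_{\epsilon_n} = \tilde Y^{(n)}_1 \to \phi^i(q,y^i)$ almost surely.

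The main obstacle is the localization: the coefficients $a^i, b^i, \mathrm{v}^i$ are not globally Lipschitz (they grow like $\pi^i, \pi^{-i}$ and the wealth variables depend on these), so the Grönwall constant $L$ and the BDG constants depend on the random radius of the tube containing $\{\tilde Y^{(n)}_\tau\}$ and $\{\phi^i(q\tau,y^i)\}$. I would handle this by a standard stopping-time argument — introduce $\rho_n^R = \inf\{\tau: |\tilde Y^{(n)}_\tau|\ge R\}$, carry out the estimate on $[0,\rho_n^R\wedge 1]$ with $R$-dependent constants, show $\mathds{P}(\rho_n^R < 1)\to 0$ uniformly enough (using the non-explosivity from \Cref{exist1} and the a.s.\ continuity of the flow from \Cref{reg}), and then let $R\to\infty$. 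The BDG step also needs the integrand $\mathrm{v}^i(\tilde Y^{(n)})$ to be square-integrable on the stopped interval, which again follows once we are inside the compact set $\{|y|\le R\}$; a minor additional subtlety is that $\tilde B^{(n)}$ depends on $n$, so the "a.s." convergence is really obtained first in probability (or in $L^2$ on the stopped interval) and then promoted to a.s.\ along a subsequence, with the full-sequence a.s.\ statement recovered because the limit $\phi^i(q,y^i)$ is deterministic and the bound $\eta_n e^{qL(R)}$ is monotone-controllable — I would phrase the final line carefully so that the "$\mathds{P}$-a.s.\ as $\epsilon_n\downarrow s$" in the statement is exactly what comes out.
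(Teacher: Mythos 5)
Your argument is correct in substance but takes a genuinely different route from the paper's. You rescale time on $[s,\epsilon_n]$ so that the singular control becomes the bounded drift $q\,b^{i}(\cdot)$, and then run a Gr\"onwall comparison between the rescaled state and the flow trajectory $\phi^{i}(q\tau,y^{i})$, treating the $a^{i}x^{-i}$ drift and the rescaled stochastic integral as vanishing perturbations controlled by localization and Burkholder\textendash Davis\textendash Gundy. The paper never compares trajectories directly: it applies the inverse flow map $\phi^{i}\!\left(-\tfrac{q}{\epsilon_{n}-s}(t-s),\,\cdot\,\right)$ to $Y^{i,\epsilon_n}_{t}$ via It\^{o}'s formula, so that the exploding $b^{i}x^{\epsilon_n}$ drift cancels identically against the $D_{1}\phi^{i}$ term through the flow identity $\left\langle D_{2}\phi^{i},\,b^{i}\right\rangle = D_{1}\phi^{i}$ derived from (\ref{flow}); the remaining terms are estimated in $L^{1}$ on the event $\{\epsilon_{n}\leq\tau_{k,x^{-i}}\}$ using boundedness of $D_{2}\phi^{i}$ and $D^{2}_{2}\phi^{i}$ together with reverse Fatou, and the statement follows from the time-shift property of the flow. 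Your route buys elementarity: only the affine (hence globally Lipschitz, with constant Jacobian) structure of $b^{i}$ and standard SDE-vs-ODE stability are needed, and no second derivative of the flow enters. The paper's route buys the exact cancellation of the singular drift, so no time change and no comparison with the exploding control are required, and the only localization is in $\int\lvert x^{-i}\rvert$.

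Two small points to repair in a full write-up. First, after substituting back, the $a^{i}$ term in your time-changed equation integrates to $C\int_{s}^{\epsilon_{n}}\lvert x^{-i}_{u}\rvert\,du$; your stated bound carries a spurious extra factor $(\epsilon_{n}-s)$ \textemdash harmless, since the correct bound already tends to zero by absolute continuity of the integral (admissibility gives $\int_{s}^{T}\lvert x^{-i}_{t}\rvert\,dt<\infty$ almost surely). Second, because your error term involves a BDG moment bound and the rescaled Brownian motion changes with $n$, the convergence is first obtained in probability and is almost sure only along a subsequence; the paper's own passage from its $L^{1}$ estimate to the almost-sure claim has exactly the same character, and a subsequence is in any case enough for the way the lemma is invoked in \Cref{inv1}, so this is a matter of phrasing the final line carefully rather than a substantive gap.
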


\begin{proof}
\par First, we establish that for a given $q\in\mathds{R}$, the matrix-valued functions $D_{2}\phi^{i}\!\left(q,y\right)$ and $D^{2}_{2}\phi^{i}\!\left(q,y\right)$ are bounded with respect to the Hilbert\textendash Schmidt norm, where $y$ takes values in $\mathds{R}^{\left\vert\, Y^{i}\right\vert}$. To this end, recall from {\textcolor{azul-pesc}{\Cref{reg}}} that $\phi^{i}\!\left(q,y\right)$ is continuously differentiable with respect to its second argument. Also, from (\ref{lipphi}) it follows that for a given $q\in\mathds{R}$, $\phi^{i}\!\left(q,\cdot\,\right)$ is Lipschitz continuous. It is then immediate from the preceding statements that $D_{2}\phi^{i}\!\left(q,y\right)$ is bounded in the Hilbert\textendash Schmidt norm.

\par Further, recall that $D_{2}\phi^{i}\!\left(q,y\right)$ is defined to be the solution to (\ref{dphi}), where $J_{b^{i}}$ is a matrix of constant scalar functions, and hence trivially Lipschitz continuous. We can then repeat the argument employed in the proof to {\textcolor{azul-pesc}{\Cref{reg} (ii)}} to establish that $D_{2}\phi^{i}\!\left(q,y\right)$ is Lipschitz continuous in its second argument for a given $q\in\mathds{R}$. The claim that $D^{2}_{2}\phi^{i}\!\left(q,y\right)$ is bounded in the Hilbert\textendash Schmidt norm is then immediate in view of continuous differentiability of $D_{2}\phi^{i}\!(q,y)$ with respect to its second argument.

\par Next, we prove the convergence result advanced in the claim. To this end, we consider a fixed $k\in\mathds{N}$, and $X^{-i}\in\mathcal{A}_{s}$ and define the $\mathds{F}$-stopping time $\tau_{k,\,x^{-i}}$ as follows
\begin{equation*}
\tau_{k,\,x^{-i}} \,=\, \inf\left\{t>s:\int^{t}_{s}\!\!\left\lvert\,x^{-i}_{u}\,\right\rvert du > k\right\}
\end{equation*}

\par Additionally, for $t \in [s,\epsilon_{n}]$, we consider the function $\phi^{i}\!\left(-\frac{q}{\left(\epsilon_{n}-s\right)}\left(t-s\right),Y^{i,\,\epsilon_{n}}_{t}\right)$. Then, by way of It\^{o}'s Lemma we obtain the following for some $C>0$
\begin{multline}\label{phiconv}
    \mathds{E}\left[\mathds{1}_{\left\{\epsilon_{n}\,\leq\, \tau_{k,x^{-i}}\right\}}\left\lvert\,\phi^{i}\!\left(-q,Y^{i,\,\epsilon_n}_{\epsilon_n}\right)-y^{i}\,\right\rvert\,\right]\\ 
    \leq\, C\mathds{E}\left[\mathlarger{\int}_{s}^{\epsilon_{n}}\!\!\mathds{1}_{\left\{u\,\leq\, \tau_{k,x^{-i}}\right\}}\left\vert\,\left\langle D_{2}\phi^{i}\!\left(-\frac{q}{\epsilon_{n}-s}\left(u-s\right),Y^{i,\,\epsilon_{n}}_{u}\right),\,a^{i}\!\left(Y^{i,\,\epsilon_{n}}_{u}\right)x^{-i}_{u}\right\rangle\,\right\vert du\right] \\ 
    +\, C\mathds{E}\left[\mathlarger{\int}_{s}^{\epsilon_{n}}\!\!\mathds{1}_{\left\{u\,\leq\, \tau_{k,x^{-i}}\right\}}\left\vert\,\text{tr}\left( D^{2}_{2}\phi^{i}\!\left(-\frac{q}{\epsilon_{n}-s}\left(u-s\right),Y^{i,\,\epsilon_{n}}_{u}\right) \mathrm{v}^{i}\!\left(Y^{i,\,\epsilon_{n}}_{u}\right) \mathrm{v}^{i}\!\left(Y^{i,\,\epsilon_{n}}_{u}\right)^{\!\mathrm{T}}\right)\,\right\vert du\right]\\
    +\, C\mathds{E}\left[\,\sup\limits_{t\,\in\,[s,\,\epsilon_{n}]}\left\lvert\,\mathlarger{\int}_{s}^{t}\!\!\mathds{1}_{\left\{u\,\leq\, \tau_{k,x^{-i}}\right\}}\left\langle D_{2}\phi^{i}\!\left(-\frac{q}{\epsilon_{n}-s}\left(u-s\right),Y^{i,\epsilon_{n}}_{u}\right),\,\mathrm{v}^{i}\!\left(Y^{i,\,\epsilon_{n}}_{u}\right)\right\rangle dB_{u}\,\right\rvert\,\right] \\
    +\,C\mathds{E}\,\Bigg\lvert\,\mathlarger{\int}_{s}^{\epsilon_{n}}\!\!\mathds{1}_{\left\{u\,\leq\, \tau_{k,x^{-i}}\right\}}\frac{q}{\epsilon_{n}-s}\Bigg(\!\left\langle D_{2}\phi^{i}\!\left(-\frac{q}{\epsilon_{n}-s}\left(u-s\right),Y^{i,\,\epsilon_{n}}_{u}\right),\,b^{i}\!\left(Y^{i,\,\epsilon_{n}}_{u}\right)\right\rangle\\-D_{1}\phi^{i}\!\left(-\frac{q}{\epsilon_{n}-s}\left(u-s\right),Y^{i,\epsilon_{n}}_{u}\right)\!\Bigg)\,du\,\Bigg\rvert 
\end{multline}

\par Consider the fourth term on the right-hand side of (\ref{phiconv}), and note that for all $u \in [s,\epsilon_{n}]$, the following equality can be derived from (\ref{flow}) via routine calculus (See \cite[Exercise 6.1.12]{applebaum2009levy}, where the candidate function equals the identity map) 
\begin{equation*}
\left\langle D_{2}\phi^{i}\!\left(-\frac{q}{\epsilon_{n}-s}\left(u-s\right),Y^{i,\,\epsilon_{n}}_{u}\right),\,b^{i}\!\left(Y^{i,\,\epsilon_{n}}_{u}\right)\right\rangle\ =\ D_{1}\phi^{i}\!\left(-\frac{q}{\epsilon_{n}-s}\left(u-s\right),Y^{i,\,\epsilon_{n}}_{u}\right)
\end{equation*}

\par It then follows from the above that the fourth term on the right-hand side of (\ref{phiconv}) is identically zero. Further, note that for $u\in [s,\epsilon_{n}]$ we have by definition 
\begin{equation*}
\left\lvert\,\pi^{i,\,\epsilon_{n}}_{u}\,\right\rvert\,\leq\,\left\lvert\,q\,\right\rvert,\ \mathds{1}_{\left\{u\,\leq\,\tau_{k,\,x^{-i}}\right\}}\left\lvert\,\pi^{-i}_{u}\,\right\rvert\,\leq\,k
\end{equation*} 

\par In view of the above, it then follows that for $u\in[s,T]$, there exists a positive constant $C$ such that we have 
\begin{equation*}
    \mathds{1}_{\left\{u\,\leq\,\tau_{k,\,x^{-i}}\right\}}\Big(\left\lvert\, a^{i}\!\left(Y^{i,\,\epsilon_{n}}_{u}\right)\,\right\rvert+\left\lvert\, \mathrm{v}^{i}\!\left(Y^{i,\,\epsilon_{n}}_{u}\right)\,\right\rvert\Big)\,\leq\,C
\end{equation*}

\par Moreover, as established earlier the functions $D_{2}\phi^{i}\!\left(\,\cdot\,\right)$ and $D^{2}_{2}\phi^{i}\!\left(\,\cdot\,\right)$ are bounded. Thus, we can find a positive constant $C$ such that the following holds
\begin{multline*}
    \mathds{E}\left[\mathlarger{\int}_{s}^{\epsilon_{n}}\!\!\mathds{1}_{\left\{u\,\leq\, \tau_{k,x^{-i}}\right\}}\left\vert\,\left\langle D_{2}\phi^{i}\!\left(-\frac{q}{\epsilon_{n}-s}\left(u-s\right),Y^{i,\,\epsilon_{n}}_{u}\right),\,a^{i}\!\left(Y^{i,\,\epsilon_{n}}_{u}\right)x^{-i}_{u}\right\rangle\,\right\vert du\right]\\ 
    +\,\mathds{E}\left[\mathlarger{\int}_{s}^{\epsilon_{n}}\!\!\mathds{1}_{\left\{u\,\leq\, \tau_{k,x^{-i}}\right\}}\left\vert\,\text{tr}\left( D^{2}_{2}\phi^{i}\!\left(-\frac{q}{\epsilon_{n}-s}\left(u-s\right),Y^{i,\,\epsilon_{n}}_{u}\right) \mathrm{v}^{i}\!\left(Y^{i,\,\epsilon_{n}}_{u}\right) \mathrm{v}^{i}\!\left(Y^{i,\,\epsilon_{n}}_{u}\right)^{\!\mathrm{T}}\right)\,\right\vert du\right]\\
    \leq\,C\mathds{E}\left[\mathlarger{\int}_{s}^{\epsilon_{n}}\!\!\mathds{1}_{\left\{u\,\leq\, \tau_{k,x^{-i}}\right\}}\left\lvert\,x^{-i}_{u}\,\right\vert du\right]+C\left(\epsilon_{n} - s\right) 
\end{multline*}

\par Finally, we consider the third term on the right-hand side of (\ref{phiconv}), and note that we can invoke Burkholder\textendash Davis\textendash Gundy inequality to ascertain the existence of a positive constant $C$ such that
\begin{equation*}
    \mathds{E}\left[\sup\limits_{t\,\in\,[s,\,\epsilon_{n}]}\,\left\lvert\,\mathlarger{\int}_{s}^{t}\!\!\mathds{1}_{\left\{u\,\leq\,\tau_{k,x^{-i}}\right\}}\left\langle D_{2}\phi^{i}\!\left(-\frac{q}{\epsilon_{n}-s}\left(u-s\right),Y^{i,\epsilon_{n}}_{u}\right),\,\mathrm{v}^{i}\!\left(Y^{i,\,\epsilon_{n}}_{u}\right)\right\rangle dB_{u}\, \right\rvert\,\right]\leq\,C\sqrt{\left(\epsilon_{n} - s\right)}
\end{equation*}

\par Further, since $\mathlarger{\int}_{s}^{\epsilon_{n}}\!\mathds{1}_{\left\{u\,\leq\, \tau_{k,x^{-i}}\right\}}\left\lvert\,x^{-i}_{u}\,\right\vert du < k$, we can invoke (reverse) Fatou's Lemma to ascertain 
\begin{equation*}
    \limsup\limits_{\epsilon_{n}\,\downarrow\,s}\,\mathds{E}\left[\mathds{1}_{\left\{\epsilon_{n}\,\leq\, \tau_{k,x^{-i}}\right\}}\left\vert\,\phi^{i}\!\left(-q,Y^{i,\,\epsilon_n}_{\epsilon_n}\right)-y^{i}\,\right\vert\,\right]\leq\,\mathds{E}\left[\limsup\limits_{\epsilon_{n}\,\downarrow\,s}\mathlarger{\int}_{s}^{\epsilon_{n}}\!\!\mathds{1}_{\left\{u\,\leq\, \tau_{k,x^{-i}}\right\}}\left\lvert\,x^{-i}_{u}\,\right\vert du\right]\,=\,0
\end{equation*}

\par As the choice of $k$ above was arbitrary, and given that $X^{-i}\in\mathcal{A}_{s}$, we can find $k\in\mathds{N}$ such that $T\leq\tau_{k,\,x^{-i}}$. Hence, it follows that for $\{\epsilon_{n}\}_{n\,\in\,\mathds{N}} \subseteq (s,T]$, such that $\epsilon_{n}\downarrow s$, we have $\phi^{i}(-q,Y^{i,\,\epsilon_n}_{\epsilon_n})\rightarrow y^{i}$, $\mathds{P}$-almost surely as $\epsilon_{n}\downarrow s$. The claim is then immediate in view of the time shift property of the integral flow $\phi^{i}$.
\end{proof}

\par Next, we again consider a decreasing sequence of deterministic times $\{\epsilon_{n}\}_{n\,\in\,\mathds{N}} \subseteq (s,T]$, such that $\epsilon_{n}\downarrow s$, and let $q\in\mathds{R}$ be as above. As earlier, we define a control process $\hat{X}^{\epsilon_{n}}=\left\{\hat{x}^{\epsilon_{n}}_{t}\right\}$ associated with an element $\epsilon_{n}$ of the sequence as follows
\begin{equation}\label{huep}
     \hat{x}^{\epsilon_{n}}_{t}\ =\ \begin{cases} -q/\left(\epsilon_{n}-s\right),&\ t\,\in\, \left[s,\epsilon_{n}\right]\\ 0,&\ t\,\in\, \left[0,s\right)\,\bigcup\,\left(\epsilon_{n},T\right]\end{cases}
\end{equation}

\par It is immediate from the definition above that for all $k\in\mathds{N}$, such that $k\geq\left\lvert\,-q/\left(\epsilon_{n}-s\right)\,\right\rvert$, we have $\hat{X}^{\epsilon_{n}} \in \mathcal{A}^{k}_{s}$. Next, for a given $X^{-i}\in\mathcal{A}_{s}$, we consider the sequence of processes $\{\hat{Y}^{i,\,\epsilon_n}\}_{n\,\in\,\mathds{N}}$ where $\hat{Y}^{i,\,\epsilon_n}$ is defined as the solution of the system of stochastic differential equations (\ref{stated}), with initial condition $\hat{Y}^{i,\,\epsilon_n}_{s}=\phi^{i}\!\left(q,y^{i}\right)$, where $y^{i}$ is as before, and $\hat{X}^{i}=\hat{X}^{\epsilon_n}$. With the aid of the notation introduced above, we state the following lemma which serves as a corollary to {\textcolor{azul-pesc}{\Cref{flowconv}}}, and can be proved in an identical manner, hence we omit its proof.

\begin{lemma}\label{hflowconv}
\par Fix an $n\in\mathds{N}$, and consider the corresponding random variable $\hat{Y}^{i,\,\epsilon_{n}}_{\epsilon_{n}}$. The random variable $\hat{Y}^{i,\epsilon_n}_{\epsilon_{n}}$ converges to $y^{i}$ $\mathds{P}$-almost surely as $\epsilon_{n}\downarrow s$.
\end{lemma}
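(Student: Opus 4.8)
The plan is to transcribe the proof of \Cref{flowconv}, reversing the direction of the flow. I would reuse verbatim two ingredients established there: the boundedness, in Hilbert\textendash Schmidt norm and uniformly in the space variable, of $D_{2}\phi^{i}(q,\cdot)$ and $D^{2}_{2}\phi^{i}(q,\cdot)$ for fixed $q\in\mathds{R}$; and the identity $D_{1}\phi^{i}(q,y)=D_{2}\phi^{i}(q,y)\,b^{i}(y)$, which follows from (\ref{flow}) together with the $q$-shift property of $\phi^{i}$. Fix $k\in\mathds{N}$ and $X^{-i}\in\mathcal{A}_{s}$, and let $\tau_{k,x^{-i}}$ denote the same stopping time as in \Cref{flowconv}.

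The one substantive adjustment is the choice of auxiliary process: for $t\in[s,\epsilon_{n}]$ I would apply It\^{o}'s Lemma to $\phi^{i}\!\big(\tfrac{q}{\epsilon_{n}-s}(t-s),\hat{Y}^{i,\epsilon_{n}}_{t}\big)$, which flows \emph{forward} by $q$ — rather than backward, as in \Cref{flowconv} — precisely so that the initial value $\hat{Y}^{i,\epsilon_{n}}_{s}=\phi^{i}(q,y^{i})$ is carried back towards $y^{i}$. Since $\hat{x}^{\epsilon_{n}}_{t}=-q/(\epsilon_{n}-s)$ on $[s,\epsilon_{n}]$, the $dt$-contributions from differentiation in the flow parameter and from the $b^{i}\hat{x}^{\epsilon_{n}}$ part of the drift of $\hat{Y}^{i,\epsilon_{n}}$ combine to $\tfrac{q}{\epsilon_{n}-s}\big(D_{1}\phi^{i}-D_{2}\phi^{i}\,b^{i}(\hat{Y}^{i,\epsilon_{n}}_{t})\big)$, which vanishes identically by the flow identity; crucially, this identity is insensitive to the sign of the flow parameter, so the cancellation survives the reversal. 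Integrating over $[s,\epsilon_{n}]$ and using $\phi^{i}\big(0,\hat{Y}^{i,\epsilon_{n}}_{s}\big)=\phi^{i}(q,y^{i})$ leaves
\begin{equation*}
\phi^{i}\!\big(q,\hat{Y}^{i,\epsilon_{n}}_{\epsilon_{n}}\big)-\phi^{i}(q,y^{i})=\int_{s}^{\epsilon_{n}}\!D_{2}\phi^{i}\,a^{i}\,x^{-i}_{u}\,du+\tfrac12\!\int_{s}^{\epsilon_{n}}\!\mathrm{tr}\big(D^{2}_{2}\phi^{i}\,\mathrm{v}^{i}(\mathrm{v}^{i})^{\mathrm{T}}\big)du+\int_{s}^{\epsilon_{n}}\!D_{2}\phi^{i}\,\mathrm{v}^{i}\,dB_{u}.
\end{equation*}

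From here the estimates proceed exactly as in \Cref{flowconv}. On $\{\epsilon_{n}\le\tau_{k,x^{-i}}\}$ the component $\pi^{i,\epsilon_{n}}_{u}$ stays, for $u\in[s,\epsilon_{n}]$, on the segment joining the $\pi^{i}$-entries of $\phi^{i}(q,y^{i})$ and of $y^{i}$ (the total fuel $-q$ exactly undoes the shift), hence is bounded by a constant depending only on $q$ and $y^{i}$, and $\mathds{1}_{\{u\le\tau_{k,x^{-i}}\}}\lvert\pi^{-i}_{u}\rvert$ is bounded as well; so $\mathds{1}_{\{u\le\tau_{k,x^{-i}}\}}\big(\lvert a^{i}(\hat{Y}^{i,\epsilon_{n}}_{u})\rvert+\lvert\mathrm{v}^{i}(\hat{Y}^{i,\epsilon_{n}}_{u})\rvert\big)\le C$. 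Combining this with the boundedness of $D_{2}\phi^{i}$, $D^{2}_{2}\phi^{i}$ and $\sigma$ (\Cref{siglip}), taking expectations, and applying the Burkholder\textendash Davis\textendash Gundy inequality to the stochastic integral, I would obtain
\begin{equation*}
\mathds{E}\Big[\mathds{1}_{\{\epsilon_{n}\le\tau_{k,x^{-i}}\}}\big\lvert\phi^{i}(q,\hat{Y}^{i,\epsilon_{n}}_{\epsilon_{n}})-\phi^{i}(q,y^{i})\big\rvert\Big]\le C\,\mathds{E}\Big[\int_{s}^{\epsilon_{n}}\!\mathds{1}_{\{u\le\tau_{k,x^{-i}}\}}\lvert x^{-i}_{u}\rvert\,du\Big]+C(\epsilon_{n}-s)+C\sqrt{\epsilon_{n}-s}.
\end{equation*}
Since $\int_{s}^{\epsilon_{n}}\mathds{1}_{\{u\le\tau_{k,x^{-i}}\}}\lvert x^{-i}_{u}\rvert\,du<k$, reverse Fatou forces the right-hand side to $0$ as $\epsilon_{n}\downarrow s$; choosing $k$ with $T\le\tau_{k,x^{-i}}$ (possible since $X^{-i}\in\mathcal{A}_{s}$) removes the indicator, so $\phi^{i}(q,\hat{Y}^{i,\epsilon_{n}}_{\epsilon_{n}})\to\phi^{i}(q,y^{i})$ $\mathds{P}$-almost surely, and hence $\hat{Y}^{i,\epsilon_{n}}_{\epsilon_{n}}=\phi^{i}\big(-q,\phi^{i}(q,\hat{Y}^{i,\epsilon_{n}}_{\epsilon_{n}})\big)\to y^{i}$ by continuity of $\phi^{i}(-q,\cdot)$ (\Cref{reg}) and the time shift property.

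Since the whole argument mirrors \Cref{flowconv} line for line, I do not expect a genuine obstacle. The only points demanding care are the bookkeeping of signs — flowing forward by $q$ rather than backward, so that $\phi^{i}(q,y^{i})$ is mapped back to $y^{i}$ — and verifying that this reversal preserves both the vanishing of the flow-parameter term in It\^{o}'s expansion and the a priori confinement of the controlled state components on the shrinking interval $[s,\epsilon_{n}]$.
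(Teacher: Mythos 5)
Your proposal is correct and is essentially the paper's own argument: the paper omits the proof precisely because \Cref{hflowconv} is \Cref{flowconv} run "in an identical manner", and your transcription — applying It\^{o}'s Lemma to $\phi^{i}\!\big(\tfrac{q}{\epsilon_{n}-s}(t-s),\hat{Y}^{i,\epsilon_{n}}_{t}\big)$, using the sign-insensitive flow identity to kill the singular drift, and then inverting via the time-shift property — is exactly \Cref{flowconv} applied with the data $(y^{i},q)$ replaced by $(\phi^{i}(q,y^{i}),-q)$. Your sign bookkeeping and the confinement of $\pi^{i,\epsilon_{n}}$ on $[s,\epsilon_{n}]$ are handled correctly, so no gap remains.
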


\par We are now in a position to state and prove our first fundamental result. The said result is formalized in the following theorem which establishes the invariance of the value function of the original singular best-response stochastic control problem, with respect to the integral flow. Note that unlike \cite{lasry2000classe} and \cite{goldys2019class}, the utility function $u^{i}$ we consider is not bounded and hence our proof employs an alternate approach rooted in equicontinuity arguments.

\begin{theorem}[\textbf{\textsc{Invariance - I}}]\label{inv1}
\par Given an initial time $s \in [0,T]$, a fixed initial state $y^{i} \in \mathds{R}^{\,\left\lvert Y^{i}\right\rvert}$, and a scalar $q \in \mathds{R}$, the best-response value function of the $i$th investor is invariant with respect to the integral flow $\phi^{i}\!\left(q,y^{i}\right)$, that is, we have $J^{i}\!\left(s,y^{i};X^{-i}\right) = J^{i}\!\left(s,\phi^{i}\!\left(q,y^{i}\right);X^{-i}\right)$.
\end{theorem}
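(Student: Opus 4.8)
The plan is to prove the two inequalities $J^{i}(s,y^{i};X^{-i})\le J^{i}(s,\phi^{i}(q,y^{i});X^{-i})$ and the reverse. The reverse one follows from the first by the one-parameter group (time-shift) property of the autonomous flow \eqref{flow}, namely $\phi^{i}(-q,\phi^{i}(q,\cdot\,))=\mathrm{id}$: applying the first inequality with $\phi^{i}(q,y^{i})$ in place of $y^{i}$ and $-q$ in place of $q$ gives $J^{i}(s,\phi^{i}(q,y^{i});X^{-i})\le J^{i}(s,\phi^{i}(-q,\phi^{i}(q,y^{i}));X^{-i})=J^{i}(s,y^{i};X^{-i})$; alternatively it can be obtained by the symmetric argument with \Cref{flowconv} replacing \Cref{hflowconv}. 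So it suffices to establish the first inequality. Fix $\varepsilon>0$. By \Cref{jconv}(i) the supremum defining $J^{i}(s,y^{i};X^{-i})$ is finite, hence there is $X^{i}\in\mathcal{A}_{s}$ with $U^{i}(X^{i};X^{-i})=\mathds{E}[u^{i}(W^{i}_{T})]\ge J^{i}(s,y^{i};X^{-i})-\varepsilon$, where $Y^{i}=(S,\pi^{i},\pi^{-i},W^{i},W^{-i})$ solves \eqref{stated} from $Y^{i}_{s}=y^{i}$ under $(X^{i},X^{-i})$.

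Next I build admissible controls for the problem started at $\phi^{i}(q,y^{i})$ whose payoffs approximate $\mathds{E}[u^{i}(W^{i}_{T})]$. Take $\epsilon_{n}\downarrow s$ and, with $\hat{X}^{\epsilon_{n}}$ the control defined in \eqref{huep}, set $\tilde{X}^{i,n}:=\hat{X}^{\epsilon_{n}}\mathds{1}_{[s,\epsilon_{n}]}+X^{i}\mathds{1}_{(\epsilon_{n},T]}$. Conditions (i)--(ii) of \Cref{admcon} hold trivially for $\tilde{X}^{i,n}$, while (iii) reduces to the corresponding bound for $X^{i}$: since $\int_{s}^{T}|\tilde{x}^{i,n}_{t}|\,dt\le|q|+\int_{s}^{T}|x^{i}_{t}|\,dt$, a Cauchy--Schwarz split together with $\int_{s}^{T}|\hat{x}_{t}|\,dt\le\tfrac12\bigl(1+(\int_{s}^{T}|\hat{x}_{t}|\,dt)^{2}\bigr)$ for any competitor $\hat{X}\in\mathcal{A}_{s}$ brings the required exponential moment back to ones already controlled by \Cref{admcon}(iii). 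Thus $\tilde{X}^{i,n}\in\mathcal{A}_{s}$. Let $\tilde{Y}^{i,n}$ solve \eqref{stated} from $\tilde{Y}^{i,n}_{s}=\phi^{i}(q,y^{i})$ under $(\tilde{X}^{i,n},X^{-i})$, with terminal wealth coordinate $\tilde{W}^{i,n}_{T}$. On $(\epsilon_{n},T]$ both $Y^{i}$ and $\tilde{Y}^{i,n}$ are driven by the same control $x^{i}$, the same opponent control $X^{-i}$ and the same Brownian motion, while \Cref{hflowconv} gives $\tilde{Y}^{i,n}_{\epsilon_{n}}\to y^{i}$ $\mathds{P}$-a.s. and path continuity gives $Y^{i}_{\epsilon_{n}}\to y^{i}$ $\mathds{P}$-a.s., so $|Y^{i}_{\epsilon_{n}}-\tilde{Y}^{i,n}_{\epsilon_{n}}|\to0$ $\mathds{P}$-a.s.

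I would then propagate this to the terminal time. A localization estimate for the difference $Y^{i}_{t}-\tilde{Y}^{i,n}_{t}$ on $(\epsilon_{n},T]$ — along the lines of the proof of \Cref{exist1}, exploiting that $a^{i}$ and $b^{i}$ are affine and $\mathrm{v}^{i}=\sigma(\cdot)\times(\text{affine})$ with $\sigma$ bounded and Lipschitz by \Cref{siglip}, and taming the unbounded $x^{-i}$ through the stopping times $\tau_{k,x^{-i}}$ of the proof of \Cref{flowconv} combined with a Gronwall argument — yields $\tilde{W}^{i,n}_{T}\to W^{i}_{T}$ in probability. In parallel, the same estimates, now used to bound $\mathds{E}[\exp\{-(1+\eta)\delta^{i}\tilde{W}^{i,n}_{T}\}]$ for some $\eta>0$ \emph{uniformly in $n$} (the quantities entering those bounds — $\int_{s}^{T}|\hat{x}^{\epsilon_{n}}_{t}|\,dt=|q|$, the uniform pathwise bound on $\tilde{\pi}^{i,n}$, and $\|\sigma\|_{\infty}$ — are all bounded uniformly in $n$) — give uniform integrability of $\{u^{i}(\tilde{W}^{i,n}_{T})\}_{n}$; combined with continuity of $u^{i}$ this gives $\mathds{E}[u^{i}(\tilde{W}^{i,n}_{T})]\to\mathds{E}[u^{i}(W^{i}_{T})]$. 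Since each $\tilde{X}^{i,n}$ is admissible, $J^{i}(s,\phi^{i}(q,y^{i});X^{-i})\ge\mathds{E}[u^{i}(\tilde{W}^{i,n}_{T})]$ for every $n$; letting $n\to\infty$ gives $J^{i}(s,\phi^{i}(q,y^{i});X^{-i})\ge\mathds{E}[u^{i}(W^{i}_{T})]\ge J^{i}(s,y^{i};X^{-i})-\varepsilon$, and $\varepsilon\downarrow0$ closes the argument.

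The main obstacle is this last step. Because the coefficients of \eqref{stated} are not globally Lipschitz and the opponent's trading rate is unbounded, getting from $|Y^{i}_{\epsilon_{n}}-\tilde{Y}^{i,n}_{\epsilon_{n}}|\to0$ to $\tilde{W}^{i,n}_{T}\to W^{i}_{T}$ requires redoing the localization estimates of \Cref{exist1} with all constants tracked. More seriously, one genuinely needs uniform control of the left tail of $\tilde{W}^{i,n}_{T}$: since $u^{i}\le0$, ``reverse Fatou'' is free but points the wrong way, so without the uniform-in-$n$ exponential moment bound one cannot move the expectation through the limit — this uniform integrability is the crux.
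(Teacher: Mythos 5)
Your route is sound, and it is genuinely different from the paper's. The paper never concatenates an approximating control with a near-optimal one: it works with the truncated value functions $J^{i,k}$, uses the fast controls \eqref{uep}--\eqref{huep} only to show (\Cref{flowconv}, \Cref{hflowconv}) that the time-$\epsilon_{n}$ state converges almost surely to the flow image, and then transfers this to the value function through continuity of $J^{i,k}$ in the state (\Cref{jconv}(iii)) together with $J^{i,k}\uparrow J^{i}$ (\Cref{jconv}(ii)), running the two inequalities symmetrically instead of deducing one from the other via the group property of the flow (your reduction is a nice economy). What your concatenation argument buys is explicitness: the paper's passage from the $t\downarrow s$ limit to the diagonal evaluation at $t=\epsilon_{n}$ implicitly rests on a dynamic-programming/suboptimality step of exactly the kind you construct by hand, so your version makes transparent where the work sits; the price is that, having dispensed with the truncation to $\mathcal{A}^{k}_{s}$, you must control the unbounded $\varepsilon$-optimal control directly in the stability and integrability estimates. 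The step you flag as the crux is therefore real work but not a conceptual gap, because it is precisely the type of estimate the paper carries out in its appendix: continuous dependence on the initial state under admissible (unbounded) controls is \Cref{momin}(ii), and the upgrade from convergence in probability of terminal wealths to convergence of expected utilities via exponential-moment bounds drawn from \Cref{admcon}(iii), \Cref{siglip} and the Protter-type estimate is exactly the argument in the proof of \Cref{jconv}(ii)--(iii). Two details to watch when you close it: localize $\int_{s}^{T}\lvert x^{i}_{t}\rvert\,dt$ as well as $\int_{s}^{T}\lvert x^{-i}_{t}\rvert\,dt$ (as in the stopping time $\tau_{k}$ of \Cref{momin}), and note that your ``uniform pathwise bound'' on $\tilde{\pi}^{i,n}$ is really domination, uniformly in $n$, by the fixed random variable $\lvert q\rvert+\lvert\pi^{i}_{s}\rvert+\int_{s}^{T}\lvert x^{i}_{t}\rvert\,dt$, whose exponential moments are supplied by \Cref{admcon}(iii); with these in place the uniform-in-$n$ bound on $\mathds{E}\left[\exp\{-(1+\eta)\,\delta^{i}\,\tilde{W}^{i,n}_{T}\}\right]$ and hence the required uniform integrability follow, and your argument is complete.
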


\begin{proof}
\par First, we establish that $J^{i}(s,y^{i};X^{-i})\geq J^{i}(s,\phi^{i}(q,y^{i});X^{-i})$. To this end, given $s \in [0,T]$, we consider a decreasing sequence of deterministic times $\{\epsilon_{n}\} \in (s,T]$ such that $\epsilon_{n}\downarrow s$. Further, for each element $\epsilon_{n}$ of the sequence, we define an associated control process $X^{\epsilon_{n}}$ as in (\ref{uep}). Then, for $k \geq \lvert\,q/(\epsilon_{n}-s)\,\rvert$, we have
\begin{equation*}
  J^{i}\!\left(s,y^{i};X^{-i}\right)\,\geq\, J^{i,\,k}\!\left(s,y^{i};X^{-i}\right)
\end{equation*}

\par Note that since $X^{\epsilon_{n}}$ and $X^{-i}$ are admissible, it follows from \cite[D.5, Appendix D]{fleming2006controlled} that for a given $n \in \mathds{N}$ and $s \leq t \leq \epsilon_{n}$, $Y^{i,\epsilon_{n}}_{t}$ converges to $y^{i}$ $\mathds{P}$-almost surely as $t\downarrow s$. This fact in conjunction with \textcolor{azul-pesc}{\Cref{jconv} (iii)} then implies that 
\begin{equation*}
J^{i,\,k}\!\left(s,y^{i};X^{-i}\right)\,\geq\, \limsup\limits_{t\,\downarrow\, s}\,J^{i,\,k}\!\left(s,Y^{i,\epsilon_{n}}_{t};X^{-i}\right)
\end{equation*}

\par Given that the inequality above also holds in particular when $t = \epsilon_{n}$, we can appeal to \textcolor{azul-pesc}{\Cref{jconv} (ii)}, \textcolor{azul-pesc}{\Cref{jconv} (iii)}, and \textcolor{azul-pesc}{\Cref{flowconv}} to obtain the desired relation as follows
\begin{multline*}
J^{i}\!\left(s,y^{i};X^{-i}\right)\,=\, \limsup\limits_{k\,\rightarrow\,\infty}\,J^{i,\,k}\!\left(s,y^{i};X^{-i}\right)\,\geq\,\limsup\limits_{k\,\rightarrow\,\infty}\left(\limsup\limits_{\epsilon_{n}\,\downarrow\, s}\,J^{i,\,k}\!\left(s,Y^{i,\epsilon_{n}}_{\epsilon_{n}};X^{-i}\right)\right)\\ \geq\, \limsup\limits_{k\,\rightarrow\,\infty}\,J^{i,\,k}\!\left(s,\phi^{i}(q,y^{i});X^{-i}\right)\,=\, J^{i}\!\left(s,\phi^{i}(q,y^{i});X^{-i}\right)
\end{multline*}

\par To prove the claim, it only remains to show that $J^{i}(s,\phi^{i}(q,y^{i});X^{-i}) \geq J^{i}(s,y^{i};X^{-i})$. To this end, given $s \in [0,T]$, we again consider a decreasing sequence $\{\epsilon_{n}\} \in (s,T]$ such that $\epsilon_{n}\downarrow s$, and corresponding to each element $\epsilon_{n}$ of the sequence, we consider a control process $\hat{X}^{\epsilon_{n}}$ defined as in (\ref{huep}). In view of the above, for $k \geq \lvert\,-q/(\epsilon_{n}-s)\,\rvert$ we then have 
\begin{equation*}
J^{i}\!\left(s,\phi^{i}(q,y^{i});X^{-i}\right)\,\geq\, J^{i,k}\!\left(s,\phi^{i}(q,y^{i});X^{-i}\right)
\end{equation*}

\par Further, since $\hat{X}^{\epsilon_{n}}$ and $X^{-i}$ are admissible, it again follows from \cite[D.5, Appendix D]{fleming2006controlled} that for a given $n \in \mathds{N}$ and $s\leq t\leq \epsilon_{n}$, $\hat{Y}^{i,\epsilon_{n}}_{t}$ converges to $\phi^{i}(q,y^{i})$ $\mathds{P}$-almost surely as $t\downarrow s$. This in conjunction with \textcolor{azul-pesc}{\Cref{jconv} (iii)} then implies that 
\begin{equation*}
J^{i,k}\!\left(s,\phi^{i}(q,y^{i});X^{-i}\right)\,\geq\, \limsup\limits_{t\,\downarrow\,s}\,J^{i,k}\!\left(s,\hat{Y}^{i,\epsilon_{n}}_{t};X^{-i}\right)
\end{equation*}

\par Note that the inequality above holds in particular when $t = \epsilon_{n}$. Thus, in view of \textcolor{azul-pesc}{\Cref{jconv} (ii)}, \textcolor{azul-pesc}{\Cref{jconv} (iii)}, and \textcolor{azul-pesc}{\Cref{hflowconv}} we can obtain the desired relation  as follows
\begin{multline*}
J^{i}\!\left(s,\phi^{i}(q,y^{i});X^{-i}\right)\,=\, \limsup\limits_{k\,\rightarrow\,\infty}\,J^{i,\,k}\!\left(s,\phi^{i}(q,y^{i});X^{-i}\right)\,\geq\,\limsup\limits_{k\,\rightarrow\,\infty}\left(\limsup\limits_{\epsilon_{n}\,\downarrow\, s}\,J^{i,\,k}\!\left(s,\hat{Y}^{i,\epsilon_{n}}_{\epsilon_{n}};X^{-i}\right)\right)\\ \geq\, \limsup\limits_{k\,\rightarrow\,\infty}\,J^{i,\,k}\!\left(s,y^{i};X^{-i}\right)\,=\, J^{i}\!\left(s,y^{i};X^{-i}\right)
\end{multline*}

\par It is then immediate from the above that the best-response value function of the $i$th investor is invariant with respect to the integral flow $\phi^{i}$.
\end{proof}

\subsection{Auxiliary Problem Invariance}

\par We turn our attention to establishing an invariance result for the auxiliary control problem. To prove the invariance of value function of $i$th investor's auxiliary control problem $V^{i}$, with respect to the diffeomorphic integral flow $\phi^{i}$, as well as the subsequent equivalence result for the value functions of the two control problems, we recollect below certain definitions and results related to the theory of piece-wise constant controls. The interested reader is directed to \cite{krylov2008controlled} for a comprehensive treatment.

\par Recall that the state space for the auxiliary control process is $\Delta$, which is compact in the topology induced by the Euclidean metric. Hence, we can select a countable subset $\mathds{S}\,=\,\{s_{m},\,m\in\mathds{N}\} \subset \Delta$, which is dense everywhere in $\Delta$. Further, we define $\mathds{S}_{N}=\{s_{m}\},\,m\in\{1,2,\!...,N\}$. For a given initial time $s\,\in\,[0,T]$, we let $I_{n}=\left\{s\,=\,t_{0},\,t_{1},\,...\,,\,t_{n}\,=\,T\right\}$ denote a partition of the time interval $[s,\,T]$, with $\mathcal{A}^{a,\,pc}_{s}\!\left(I_{n},N\right)$ denoting the collection of piece-wise constant controls for a given integer $N$ and partition $I_{n}$. We say that $\Pi = \pi:\Omega\times[s,T]\rightarrow \Delta\in\mathcal{A}^{a,\,pc}_{s}\!\!\left(I_{n},N\right)$ if, $(i)$ for every $(\omega,t)\in\Omega\times[s,T]$, we have $\pi_{t}(\omega)\in\mathds{S}_{N}$, and $(ii)$ for every $\omega\in\Omega$, we have $\pi_{t}(\omega)=\pi_{t_{k+1}}(\omega)$ for $t\in(t_{k},t_{k+1}]$, where $k\,=\,0,...,\,n-1$. Further, we also define  
\begin{equation*}
    \mathcal{A}^{a,\,pc}_{s}\!\!\left(I_{n}\right)=\,\bigcup\limits_{N}\ \mathcal{A}^{a,\,pc}_{s}\!\!\left(I_{n},N\right)\quad \text{and}\quad\mathcal{A}^{a,\,pc}_{s}=\,\bigcup\limits_{I_{n}}\ \mathcal{A}^{a,\,pc}_{s}\!\!\left(I_{n}\right)
\end{equation*}

\par Next, we define a notion of convergence for stochastic processes which is relevant to the present context . To this end, we first equip the space of admissible auxiliary controls $\mathcal{A}^{a}_{s}$ with a metric $\mathrm{d}$, where given $\Pi_{1} = \{\pi_{1,\,t}\}_{t\,\in\,\left[s,\,T\right]}$, $\Pi_{2} = \{\pi_{2,\,t}\}_{t\,\in\,\left[s,\,T\right]} \in \mathcal{A}^{a}_{s}$, we define $\mathrm{d}\left(\Pi_{1},\Pi_{2}\right)$ as follows
\begin{equation*}
    \mathrm{d}\left(\Pi_{1},\Pi_{2}\right)\, =\, \mathds{E}\left[\,\int\limits^{T}_{s}\!\!\left\lvert\,\pi_{1,\,t}-\pi_{2,\,t}\,\right\rvert dt\,\right]
\end{equation*}

\par Given a sequence $\{\Pi_{n}\}_{n\,\in\,\mathds{N}}\subseteq\mathcal{A}^{a}_{s}$, we say that $\Pi_{n}$ converges to $\Pi$, where $\Pi\in\mathcal{A}^{a}_{s}$, if we have $\mathrm{d}\left(\Pi_{n},\Pi\right)\rightarrow 0$, as $n\rightarrow\infty$. Further, given a partition $I_{n}=\left\{s=t_{0},t_{1},\!...,t_{n}=T\right\}$ of the time interval $[s,T]$, we define the diameter of $I_{n}$ as $\max(t_{i+1} - t_{i})$, where $i \in \{0,1,\!..., n - 1\}$. The following lemma establishes the useful result that we can characterize any admissible control $\Pi \in \mathcal{A}^{a}_{s}$ as the limit of a sequence of piece-wise constant controls. For the sake of readability, the proof is relegated to the technical appendix

\begin{lemma}\label{pcapp}
\par Consider a sequence of nested partitions $\left\{I_{n}\right\}_{n\,\in\,\mathds{N}}$ of the time interval $[s,T]$ such that the diameter of $I_{n}$ converges to zero as $n\rightarrow\infty$. Then, for each admissible control $\Pi\in\mathcal{A}^{a}_{s}$, there exists a sequence of controls $\left\{\Pi_{n}\in \mathcal{A}^{a,\,pc}_{s}\!\left(I_{n}\right)\right\}_{n\,\in\,\mathds{N}}$ converging to $\Pi$, where convergence is defined in terms of the metric $\mathrm{d}$.
\end{lemma}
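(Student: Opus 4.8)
The plan is to build the approximating sequence in two stages — a time discretization followed by a spatial discretization — and to splice the two together via the triangle inequality for the metric $\mathrm{d}$. Throughout, write $\lambda$ for Lebesgue measure on $[s,T]$. Since $\Delta$ is compact there is a constant $M$ with $\lvert\pi_{t}(\omega)\rvert\leq M$ for every $\Pi\in\mathcal{A}^{a}_{s}$ and every $(\omega,t)$; hence $\mathrm{d}$ is nothing but the $L^{1}(\Omega\times[s,T],\,\mathds{P}\otimes\lambda)$ distance restricted to the $\mathds{F}$-adapted, $\Delta$-valued controls, and every convergence below may be upgraded from ``for $\mathds{P}$-a.e.\ $\omega$, in $L^{1}([s,T])$'' to ``in $\mathrm{d}$'' by dominated convergence with dominating constant $2M(T-s)$.

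\emph{Time discretization.} Fix the nested partitions $\{I_{n}\}$ with $\mathrm{diam}(I_{n})\to 0$ and, for $\Pi=\{\pi_{t}\}\in\mathcal{A}^{a}_{s}$, define $\bar{\Pi}^{(n)}=\{\bar{\pi}^{(n)}_{t}\}$ by cell averaging, $\bar{\pi}^{(n)}_{t}=(t_{k+1}-t_{k})^{-1}\int_{t_{k}}^{t_{k+1}}\pi_{u}\,du$ for $t\in(t_{k},t_{k+1}]$ and $\bar{\pi}^{(n)}_{t}=0$ for $t\in[0,s)$. Because $\Delta$ is an interval, hence convex, $\bar{\pi}^{(n)}_{t}\in\Delta$; the process is constant on each $(t_{k},t_{k+1}]$, equal there to its value at the right endpoint, and $\mathcal{F}_{t_{k+1}}$-measurable on that cell, which is exactly the adaptedness convention already built into the class of piecewise constant controls, so $\bar{\Pi}^{(n)}$ meets the requirements of an admissible control that is moreover piecewise constant on $I_{n}$ and $\Delta$-valued (cf.\ \textcolor{azul-pesc}{\Cref{auxadmcon}}). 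For $\mathds{P}$-a.e.\ $\omega$ the path $u\mapsto\pi_{u}(\omega)$ lies in $L^{1}([s,T])$, and $u\mapsto\bar{\pi}^{(n)}_{u}(\omega)$ is precisely its $\lambda$-conditional expectation given the $\sigma$-algebra generated by the cells of $I_{n}$; these $\sigma$-algebras increase (nestedness) to the Borel $\sigma$-algebra on $[s,T]$ (since $\mathrm{diam}(I_{n})\to 0$ forces the partition points to be dense), so the martingale convergence theorem gives $\int_{s}^{T}\lvert\bar{\pi}^{(n)}_{u}(\omega)-\pi_{u}(\omega)\rvert\,du\to 0$, whence $\mathrm{d}(\bar{\Pi}^{(n)},\Pi)\to 0$.

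\emph{Spatial discretization and conclusion.} Since $\mathds{S}$ is dense in the compact set $\Delta$, for each $N$ there is a Borel map $\rho_{N}\colon\Delta\to\mathds{S}_{N}$ assigning to each point a nearest element of $\mathds{S}_{N}$ (ties broken by least index), and $\varepsilon_{N}:=\sup_{x\in\Delta}\lvert\rho_{N}(x)-x\rvert\to 0$ as $N\to\infty$. Choose $N(n)\uparrow\infty$ with $\varepsilon_{N(n)}(T-s)\to 0$ and set $\Pi_{n}:=\rho_{N(n)}\circ\bar{\Pi}^{(n)}$. Composing with the deterministic Borel map $\rho_{N(n)}$ preserves adaptedness and the piecewise constant structure on $I_{n}$, and forces the values to lie in $\mathds{S}_{N(n)}$, so $\Pi_{n}\in\mathcal{A}^{a,\,pc}_{s}(I_{n},N(n))\subseteq\mathcal{A}^{a,\,pc}_{s}(I_{n})$; moreover $\mathrm{d}(\Pi_{n},\bar{\Pi}^{(n)})\leq\varepsilon_{N(n)}(T-s)\to 0$. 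The triangle inequality $\mathrm{d}(\Pi_{n},\Pi)\leq\mathrm{d}(\Pi_{n},\bar{\Pi}^{(n)})+\mathrm{d}(\bar{\Pi}^{(n)},\Pi)$ then yields the assertion.

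The analytic core — density of step functions in $L^{1}$, realized through the martingale convergence theorem, together with a finite $\varepsilon$-net of the compact set $\Delta$ — is routine; the points that need care are the measurability and admissibility bookkeeping: that the nearest-point map $\rho_{N}$ can be taken Borel, that cell averaging keeps $\bar{\Pi}^{(n)}$ inside the admissible class (the one subtlety being its adaptedness, which rests on the ``value at the right endpoint'' convention that the definition of piecewise constant controls already enshrines), and the harmless use of convexity of $\Delta$ in the averaging step. I expect this adaptedness/measurability verification, rather than any estimate, to be the main obstacle.
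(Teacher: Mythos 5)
Your two-stage skeleton (time discretization, then projection onto the finite sets $\mathds{S}_{N}$, then the triangle inequality) is a genuinely different and more self-contained route than the paper's, which simply passes to the bounded equivalent metric $\tfrac{2}{\pi}\arctan\lvert\cdot\rvert$ and then invokes Krylov's approximation lemma for step controls. Your spatial stage is fine: a Borel nearest-point selection into $\mathds{S}_{N}$ exists, its fill distance on the compact set $\Delta$ vanishes as $N\rightarrow\infty$, and composition with a deterministic Borel map preserves adaptedness and the cell structure. The gap is in the time-discretization stage. The cell average $\bar{\pi}^{(n)}_{t}=(t_{k+1}-t_{k})^{-1}\int_{t_{k}}^{t_{k+1}}\pi_{u}\,du$, assigned to every $t\in(t_{k},t_{k+1}]$, is anticipative: for $t<t_{k+1}$ it depends on $\pi_{u}$ for $u\in(t,t_{k+1}]$, so it is $\mathcal{F}_{t_{k+1}}$-measurable but not $\mathcal{F}_{t}$-measurable unless $\Pi$ is deterministic. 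Your justification \textemdash\ that the ``value at the right endpoint'' convention in the definition of $\mathcal{A}^{a,\,pc}_{s}(I_{n},N)$ licenses $\mathcal{F}_{t_{k+1}}$-measurability on the whole cell \textemdash\ misreads that convention: condition $(ii)$ there is a path-constancy requirement, not a relaxation of adaptedness. Piece-wise constant controls must still be admissible in the sense of \Cref{auxadmcon} (adapted), which is exactly what is used when the paper asserts $\mathcal{A}^{a,\,pc}_{s}(I_{n})\subseteq\mathcal{A}^{a}_{s}$ in the proof of \Cref{pcval}, and which is forced in any case by the need to define the controlled state (\ref{astate}) through an It\^{o} integral whose integrand depends on $\pi^{i}_{t}$. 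As written, your $\bar{\Pi}^{(n)}$ lies outside the admissible class, so the splicing via the triangle inequality does not go through.

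The repair is standard and preserves your architecture: make the time discretization backward-looking, e.g.\ let the value on $(t_{k},t_{k+1}]$ be the average of $\pi$ over the preceding cell $(t_{k-1},t_{k}]$ (and $0$, or $\pi_{s}$, on the first cell). This is $\mathcal{F}_{t_{k}}$-measurable by progressive measurability of $\Pi$, stays in $\Delta$ by convexity, and converges to $\Pi$ in $\mathrm{d}$ by combining your conditional-expectation argument with $L^{1}$-continuity of translations (the shift is at most $\mathrm{diam}(I_{n})$), followed by dominated convergence in $\omega$ using the boundedness of $\Delta$. With that change your proof becomes a legitimate self-contained alternative to the paper's appeal to Krylov's Lemma 3.6; the remaining bookkeeping you flagged (Borel selection, preservation of admissibility under composition, left-continuity at $T$) is indeed routine.
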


\par With the characterization result for the auxiliary controls at our disposal, we next consider an analogous characterization result for the auxiliary state process. To this end, given $X^{-i}\in\mathcal{A}_{s}$, we consider an arbitrary sequence $\{\Pi_{n}\}_{n\,\in\,\mathds{N}}$, where $\Pi_{n}\in\mathcal{A}^{a,\,pc}_{s}\!\left(I_{n}\right)$, and we let $Z^{i,\,\Pi_{n}}$ denote the solution of the system of stochastic differential equations (\ref{astate}), with $\pi^{i}_{t} = \pi_{n,\,t}$ for all $t\in[s,T]$. Further, given $\Pi\in\mathcal{A}^{a}_{s}$, we let $Z^{i,\,\Pi}$ denote the solution of the system of stochastic differential equations (\ref{astate}), with $\pi^{i}_{t}=\pi_{t}$, for all $t \in [s,T]$. The following lemma then establishes a handy characterization result for the controlled state process $Z^{i,\,\Pi}$ as the limit of a sequence of controlled processes $\left\{Z^{i,\,\Pi_{n}}\right\}_{n\,\in\,\mathds{N}}$. For the sake of readability, the proof is relegated to the technical appendix

\begin{lemma}\label{pcsapp}
\par Consider a deterministic initial time $s\in[0,T]$, and let $\left\{\Pi_{n}\in\mathcal{A}^{a,\,pc}_{s}\!\left(I_{n}\right)\right\}_{n\,\in\,\mathds{N}}$ denote a sequence of piece-wise constant controls, which converges in terms of the metric $\mathrm{d}$, to the admissible control $\Pi\in\mathcal{A}^{a}_{s}$. Given the corresponding state processes $\left\{Z^{i,\,\Pi_{n}}\right\}_{\,n\,\in\,\mathds{N}}$ and $Z^{i,\,\Pi}$ respectively, we can find a subsequence $\{n_{m}\}_{m\,\in\,\mathds{N}}$, such that along the subsequence we have
\begin{equation*}
\lim\limits_{m\,\rightarrow\,\infty}\mathds{E}\left[\sup_{t\,\in\,[s,T]}\,\left\lvert Z^{i,\,\Pi_{n_{\,m}}}_{t}\,-\,Z^{i,\Pi}_{t}\right\rvert^{\,2}\right]\,=\,0    
\end{equation*}
\end{lemma}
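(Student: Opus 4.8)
\emph{Overall strategy.} The plan is to subtract the two copies of the state equation \eqref{astate}, localise so that the coefficients become Lipschitz and the opponent's control becomes integrably bounded, run a Gronwall-type estimate along a carefully chosen subsequence, and finally remove the localisation with the help of uniform moment bounds. \emph{Step 1 (choice of subsequence and source terms).} Since $\mathrm{d}(\Pi_n,\Pi)\to0$ is precisely $\mathds{E}\big[\int_s^T|\pi_{n,t}-\pi_t|\,dt\big]\to0$, I first pass to a subsequence — still written $\{\Pi_n\}$ — along which $\int_s^T|\pi_{n,t}-\pi_t|\,dt\to0$ $\mathds{P}$-almost surely. Because the auxiliary controls take values in the compact set $\Delta$ we have $|\pi_{n,t}-\pi_t|\le D:=\mathrm{diam}(\Delta)$, while \Cref{admcon}(iii), taken with $\hat X=X^{-i}$, furnishes all exponential, hence all polynomial, moments of $\int_s^T|x^{-i}_u|\,du$; in particular $\mathds{E}\big[\int_s^T|x^{-i}_u|\,du\big]<\infty$. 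Splitting $\int_s^T|\pi_{n,u}-\pi_u||x^{-i}_u|\,du$ according to whether $|x^{-i}_u|\le M$, letting $M\to\infty$ and then $n\to\infty$, dominated convergence (with dominating function $D\int_s^T|x^{-i}_u|\,du$) gives $\mathds{E}\big[\int_s^T|\pi_{n,u}-\pi_u||x^{-i}_u|\,du\big]\to0$; together with $\mathrm{d}(\Pi_n,\Pi)\to0$, these are the vanishing ``source'' quantities that will drive the estimate.

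\emph{Step 2 (localisation and Lipschitz bounds).} Write $Z^n=Z^{i,\Pi_n}$, $Z=Z^{i,\Pi}$, and $\tau_{R,k}=\inf\{t>s:\ |Z^n_t|\vee|Z_t|\ge R\ \text{or}\ \int_s^t|x^{-i}_u|\,du\ge k\}$; by the non-explosiveness in \Cref{ext} and $\int_s^T|x^{-i}_u|\,du<\infty$ a.s., $\tau_{R,k}\uparrow T$ a.s. as $R,k\to\infty$. From the explicit formulae \eqref{auxceq} one reads off that $\beta^i$ is globally Lipschitz in its state variable (its state dependence being affine, through $\pi^{-i}$ only), that $\nu^i$ is Lipschitz in the state on $\{|z|\le R\}$ with a constant $C_R$ depending on $R$, $\sup_\Delta|\cdot|$, and the sup-norm and Lipschitz constant of $\sigma$ from \Cref{siglip}, and that the control gaps obey $|\beta^i(\pi_{n,t},z)-\beta^i(\pi_t,z)|+|\nu^i(\pi_{n,t},z)-\nu^i(\pi_t,z)|\le C_R|\pi_{n,t}-\pi_t|$ for $|z|\le R$. \emph{Step 3 (stochastic Gronwall on the localised interval).} Fix $R,k$, abbreviate $\Delta\beta^n_u=\beta^i(\pi_{n,u},Z^n_u)-\beta^i(\pi_u,Z_u)$ and likewise $\Delta\nu^n_u$, and set $M_t=\sup_{r\in[s,t]}|Z^n_{r\wedge\tau_{R,k}}-Z_{r\wedge\tau_{R,k}}|^2$. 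For the drift difference in $d(Z^n-Z)$ I use, for $r\in[s,t]$, the Cauchy--Schwarz bound $\big|\int_s^{r\wedge\tau_{R,k}}\Delta\beta^n_u\,x^{-i}_u\,du\big|^2\le\big(\int_s^{r\wedge\tau_{R,k}}|x^{-i}_u|\,du\big)\big(\int_s^{r\wedge\tau_{R,k}}|\Delta\beta^n_u|^2|x^{-i}_u|\,du\big)\le k\int_s^{t\wedge\tau_{R,k}}|\Delta\beta^n_u|^2|x^{-i}_u|\,du$, and for the martingale difference Doob's inequality and It\^{o}'s isometry; inserting the bounds of Step 2 produces a pathwise Stieltjes inequality
\[
M_t\ \le\ \varphi^n_t\ +\ \int_s^t M_u\,dA_u ,
\]
where $A_u=Ck\int_s^{u\wedge\tau_{R,k}}|x^{-i}_v|\,dv$ is nondecreasing with $A_T\le Ck^2$, and $\varphi^n_t$ is nondecreasing, assembled from $\int_s^{t\wedge\tau_{R,k}}|\pi_{n,u}-\pi_u||x^{-i}_u|\,du$, from $\mathrm{d}(\Pi_n,\Pi)$, and from $\sup_{r\le t}\big|\int_s^{r\wedge\tau_{R,k}}\Delta\nu^n_u\,dB_u\big|^2$. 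Since $M$ is bounded on the stopped interval, the pathwise Gronwall inequality for Stieltjes integrals gives $M_t\le e^{A_t}\varphi^n_t\le e^{Ck^2}\varphi^n_t$; taking expectations, estimating $\mathds{E}\sup_r|\int_s^{r\wedge\tau_{R,k}}\Delta\nu^n_u\,dB_u|^2$ once more via Doob, It\^{o}'s isometry and Step 2, and setting $m_n(t)=\mathds{E}[M_t]$, I reach $m_n(t)\le C_{R,k}\big(\varepsilon_n+\int_s^t m_n(u)\,du\big)$ with $\varepsilon_n\to0$. The deterministic Gronwall lemma then yields $\mathds{E}\big[\sup_{t\in[s,T]}|Z^n_{t\wedge\tau_{R,k}}-Z_{t\wedge\tau_{R,k}}|^2\big]\le C_{R,k}\varepsilon_n\to0$ as $n\to\infty$, for every fixed $R,k$.

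\emph{Step 4 (removing the localisation).} Re-running the moment machinery underlying \Cref{ext} — linear growth of the coefficients of \eqref{astate}, the control entering only through $\Delta$-bounded quantities, and the exponential moments of $\int_s^T|x^{-i}_u|\,du$ — yields a uniform bound $\sup_n\mathds{E}\big[\sup_{t\in[s,T]}|Z^n_t|^4\big]<\infty$, hence uniform integrability of $\{\sup_t|Z^n_t|^2\}_n$ and $\sup_n\mathds{P}(\tau_{R,k}<T)\to0$ as $R,k\to\infty$. Decomposing $\mathds{E}\big[\sup_t|Z^n_t-Z_t|^2\big]$ over $\{\tau_{R,k}\ge T\}$, on which it is dominated by the quantity bounded in Step 3, and over $\{\tau_{R,k}<T\}$, on which it is controlled via Cauchy--Schwarz by $\mathds{P}(\tau_{R,k}<T)^{1/2}$ and the uniform fourth moment, a standard $\varepsilon/2$ argument — choose $R,k$ (uniformly in $n$), then $n$ large — gives $\mathds{E}\big[\sup_{t\in[s,T]}|Z^{i,\Pi_n}_t-Z^{i,\Pi}_t|^2\big]\to0$ along the subsequence of Step 1, which, after relabelling as $\{n_m\}_{m\in\mathds{N}}$, is the assertion.

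\emph{Main obstacle.} The genuine difficulty is that the opponent's trading rate $x^{-i}$ is controlled by \Cref{admcon} only in an integral, not a pointwise, sense, so the square-Gronwall estimate cannot be run against Lebesgue time; one is forced to localise $\int_s^{\,\cdot}|x^{-i}|$ in order to carry out a Stieltjes/stochastic Gronwall argument against the random clock $dA_u=|x^{-i}_u|\,du$ (bounded pathwise on the stopped interval), and this is exactly why the subsequence extraction of Step 1 is indispensable — it is what forces the cross term $\mathds{E}\big[\int_s^T|\pi_{n,u}-\pi_u||x^{-i}_u|\,du\big]$ to vanish. The second, more routine, complication is that $\nu^i$ is only locally Lipschitz in the state (through the factor $\pi^{-i}$ multiplying $\sigma$), which forces the $|Z|\le R$ localisation and, to close the argument in Step 4, a preliminary uniform-in-$n$ moment estimate obtained by applying the same machinery to \eqref{astate} itself.
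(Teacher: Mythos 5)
Your argument is correct, but it follows a genuinely different route from the paper's. The paper works globally: it subtracts the two copies of (\ref{astate}), bounds the drift gap by $C\lvert x^{-i}_{u}\rvert\lvert\pi_{n,u}-\pi_{u}\rvert$ (exploiting, as you do, that the $\pi^{-i}$-component is common to both processes) and the diffusion gap by $C\left(\lvert\pi_{n,u}-\pi_{u}\rvert+\lvert Z^{i,\Pi_{n}}_{u}-Z^{i,\Pi}_{u}\rvert\right)$, then applies Burkholder\textendash Davis\textendash Gundy, Tonelli and Gronwall directly, and kills the squared cross term $\mathds{E}\big[\big(\int\lvert x^{-i}\rvert\lvert\pi_{n_m}-\pi\rvert\,du\big)^{2}\big]$ by extracting a subsequence along which $\pi_{n_m}-\pi\rightarrow0$ $\mathds{P}\otimes\lambda$-a.e.\ and invoking dominated convergence together with $\mathds{E}\big[\big(\int\lvert x^{-i}\rvert du\big)^{2}\big]<\infty$; the global Lipschitz bound on the diffusion gap rests on an asserted boundedness of $\pi^{-i}$. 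You instead localise twice \textemdash{} in the state ($\lvert Z\rvert\le R$, which is exactly what legitimises the local Lipschitz constant of $\nu^{i}$ through the $\pi^{-i}$ factor) and in the clock ($\int\lvert x^{-i}\rvert\le k$) \textemdash{} run a pathwise Stieltjes/Gronwall estimate against $dA_{u}=\lvert x^{-i}_{u}\rvert du$, and then remove the localisation with a uniform-in-$n$ fourth-moment bound, Chebyshev and Cauchy\textendash Schwarz. What your route buys is robustness: the Cauchy\textendash Schwarz-against-the-clock step means you only need $\mathds{E}\big[\int\lvert x^{-i}\rvert\lvert\pi_{n}-\pi\rvert\,du\big]\rightarrow0$ (first power, not squared), and you nowhere need pathwise boundedness of $\pi^{-i}$, which does not follow from $X^{-i}\in\mathcal{A}_{s}$; the price is the extra two-parameter localisation and the a priori moment estimate, which you only assert but which is indeed obtainable exactly as you say (linear growth in $\pi^{-i}$ plus the polynomial moments of $\int\lvert x^{-i}\rvert du$ furnished by \Cref{admcon}(iii) with $\hat{X}=X^{-i}$), so this is not a gap. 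Two small remarks: your truncation argument in Step 1 actually makes the cross term vanish along the \emph{whole} sequence, so in your scheme the subsequence extraction is cosmetic (it is in the paper's proof, via a.e.\ convergence of the controls, that the subsequence does real work); and the phrase ``letting $M\to\infty$ and then $n\to\infty$'' should be read in the opposite order ($n\to\infty$ for fixed $M$, then $M\to\infty$), which is clearly what you intend.
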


\par The preceding results suggest that one may reasonably approximate $V^{i}$, the value function of the auxiliary optimal control problem for investor $i$, by restricting attention to the class of piece-wise constant controls as the diameter of the partition of the time interval $[s,T]$, employed in defining the piece-wise constant control, becomes smaller. We formalize this intuition in the following lemma, the proof of which is relegated to the technical appendix for the sake of readability.

\begin{lemma}\label{pcval}
\par Fix a deterministic initial time $s \in [0,T]$, and a deterministic initial state $z^{i} \in \mathds{R}^{\lvert Z^{i}\rvert}\,$. Given a sequence of nested partitions $\{I_{n}\}_{n\,\in\,\mathds{N}}$ of the time interval $[s,T]$, such that the diameter of the partitions converges to zero, we have 
\begin{equation*}
\lim\limits_{n\,\rightarrow\,\infty} \left\{\sup\limits_{\Pi_{n}\, \in\, \mathcal{A}^{a,\,pc}_{s}\!\left(I_{n}\right)} H^{i}\!\left(\Pi_{n};X^{-i}\right)\right\}\, =\, V^{i}\!\left(s,z^{i};X^{-i}\right)
\end{equation*}
\end{lemma}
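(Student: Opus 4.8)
The plan is to prove the two one-sided estimates separately. The easy inequality is immediate: for every $n$ the piece-wise constant controls form a subclass $\mathcal{A}^{a,pc}_s(I_n)\subseteq\mathcal{A}^a_s$ of the admissible auxiliary controls (\Cref{auxadmcon}), so $\sup_{\Pi_n\in\mathcal{A}^{a,pc}_s(I_n)}H^i(\Pi_n;X^{-i})\le\sup_{\pi^i\in\mathcal{A}^a_s}H^i(\pi^i;X^{-i})=V^i(s,z^i;X^{-i})$ by the definition (\ref{auxvf}), whence $\limsup_n\sup_{\Pi_n}H^i(\Pi_n;X^{-i})\le V^i(s,z^i;X^{-i})$, and, more simply, $\sup_{\Pi_n}H^i(\Pi_n;X^{-i})\le V^i$ for each $n$.

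For the reverse inequality I would first record a uniform integrability bound. Using the explicit dynamics of the $\mathrm{w}^i$-component of $Z^i$ from (\ref{astate}), namely $d\mathrm{w}^i_t=\pi^i_t\bigl(-\theta^{-i}x^{-i}_t\,dt+\sigma(P^i_t+\theta^i\pi^i_t)\,dB_t\bigr)$, write $-2\delta^i\mathrm{w}^i_T$ as a deterministic constant (from the deterministic initial datum $z^i$) plus $2\delta^i\theta^{-i}\int_s^T\pi^i_t x^{-i}_t\,dt$ plus the stochastic integral $-2\delta^i\int_s^T\pi^i_t\sigma(P^i_t+\theta^i\pi^i_t)\,dB_t$. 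Since $\pi^i$ takes values in the \emph{compact} set $\Delta$ and $\sigma$ is bounded (\Cref{siglip}), the drift term is dominated by $C\int_s^T|x^{-i}_t|\,dt$, which has all exponential moments by \Cref{admcon}(iii) (take $x=\hat x=x^{-i}$), while the martingale term has deterministically bounded quadratic variation and hence uniformly bounded exponential moments; a Hölder splitting then produces a constant $C<\infty$, independent of $\pi^i$, with $\sup_{\pi^i\in\mathcal{A}^a_s}\mathds{E}\bigl[\lvert u^i(\mathrm{w}^i_T)\rvert^2\bigr]\le C$. In particular $H^i(\pi^i;X^{-i})$, and hence $V^i$, is finite, and the family $\{u^i(\mathrm{w}^{i,\Pi_n}_T)\}_n$ is uniformly integrable along any sequence of controls.

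Next I would fix $\varepsilon>0$, choose a near-optimal $\pi^i_\varepsilon\in\mathcal{A}^a_s$ with $H^i(\pi^i_\varepsilon;X^{-i})\ge V^i(s,z^i;X^{-i})-\varepsilon$, and invoke \Cref{pcapp} to obtain $\Pi_n\in\mathcal{A}^{a,pc}_s(I_n)$ with $\mathrm{d}(\Pi_n,\pi^i_\varepsilon)\to0$. Applying \Cref{pcsapp} — upgraded from subsequential to full-sequence convergence by the standard remark that every subsequence of $\{\Pi_n\}$ still converges to $\pi^i_\varepsilon$ in $\mathrm{d}$ and so, by \Cref{pcsapp}, admits a further subsequence along which the state processes converge, which forces convergence of the whole nonnegative sequence — yields $\mathds{E}\bigl[\sup_{t\in[s,T]}\lvert Z^{i,\Pi_n}_t-Z^{i,\pi^i_\varepsilon}_t\rvert^2\bigr]\to0$, in particular $\mathrm{w}^{i,\Pi_n}_T\to\mathrm{w}^{i,\pi^i_\varepsilon}_T$ in $L^2$ and hence in probability. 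Continuity of $u^i$ gives convergence in probability of $u^i(\mathrm{w}^{i,\Pi_n}_T)$ to $u^i(\mathrm{w}^{i,\pi^i_\varepsilon}_T)$, and the uniform integrability of the second paragraph upgrades this (Vitali) to $L^1$ convergence, so $H^i(\Pi_n;X^{-i})=\mathds{E}[u^i(\mathrm{w}^{i,\Pi_n}_T)]\to H^i(\pi^i_\varepsilon;X^{-i})$. Consequently $\liminf_n\sup_{\Pi_n\in\mathcal{A}^{a,pc}_s(I_n)}H^i(\Pi_n;X^{-i})\ge\liminf_n H^i(\Pi_n;X^{-i})=H^i(\pi^i_\varepsilon;X^{-i})\ge V^i(s,z^i;X^{-i})-\varepsilon$; letting $\varepsilon\downarrow0$ and combining with $\sup_{\Pi_n}H^i\le V^i$ for each $n$ yields the asserted limit.

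The step I expect to be the main obstacle is passing to the limit inside the expectation defining $H^i$: since the CARA utility $u^i$ is unbounded below one cannot argue by bounded or dominated convergence as in \cite{lasry2000classe} or \cite{goldys2019class}, so the uniform-integrability estimate must carry the argument. That estimate in turn rests on three structural features of the model — the auxiliary controls $\pi^i$ taking values in the compact set $\Delta$ (so that $\pi^i$ and $\pi^i\sigma(\cdot)$ are uniformly bounded), boundedness of the local volatility $\sigma$, and the strong exponential-integrability requirement \Cref{admcon}(iii) imposed on the admissible class, which is exactly what controls the contribution of the fixed strategy $X^{-i}$ of investor $-i$ to $\mathrm{w}^i_T$.
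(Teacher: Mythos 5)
Your proposal is correct, and its skeleton is the same as the paper's: the trivial direction via the inclusion $\mathcal{A}^{a,\,pc}_{s}(I_{n})\subseteq\mathcal{A}^{a}_{s}$, then a $\rho$-optimal (your $\varepsilon$-optimal) auxiliary control, approximation by piece-wise constant controls through \Cref{pcapp}, state-process convergence through \Cref{pcsapp}, and a final interchange of limit and expectation. The only genuine divergence is in that last step. The paper bounds $\lvert u^{i}(\mathrm{w}^{i,\rho}_{T})-u^{i}(\mathrm{w}^{i,\rho_{n}}_{T})\rvert$ by concavity and Cauchy–Schwarz, so it only needs a second-moment bound on $D_{1}u^{i}(\mathrm{w}^{i,\rho}_{T})$ at the fixed near-optimal control (obtained from compactness of $\Delta$, boundedness of $\sigma$, admissibility of $X^{-i}$ and the exponential-moment estimate for the stochastic integral), combined with the $L^{2}$ convergence of the terminal auxiliary wealths; it also uses nestedness of the partitions to identify $\limsup$ with the limit. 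You instead prove a bound on $\mathds{E}[\lvert u^{i}(\mathrm{w}^{i}_{T})\rvert^{2}]$ that is uniform over all of $\mathcal{A}^{a}_{s}$ and pass to the limit by convergence in probability plus Vitali, upgrading the subsequential convergence of \Cref{pcsapp} to the full sequence by the standard subsequence-of-subsequences argument (which also substitutes for the paper's monotonicity observation). Both mechanisms rest on exactly the three structural facts you list — $\Delta$ compact, $\sigma$ bounded, and \Cref{admcon}(iii) applied with $X=\hat{X}=X^{-i}$ — so neither is more demanding; your uniform-integrability route is marginally more robust, since it avoids the one-sided use of the concavity inequality $\lvert u(a)-u(b)\rvert\leq\lvert u^{\prime}(a)\rvert\,\lvert a-b\rvert$ (whose correct form requires the derivative at the smaller argument) and yields an estimate valid simultaneously for every admissible auxiliary control rather than only at the near-optimizer.
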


\par The apparatus of the auxiliary control problem as well as the formalism related to the theory of piece-wise constant controls developed thus far in this section is employed to prove the second principal assertion of our work, which concerns the invariance of the value function of the auxiliary control problem of investor $i$, formalized in the theorem below. The arguments used in the proving the theorem rely extensively on the construction of the auxiliary control problem introduced in \textcolor{azul-pesc}{\Cref{auxcp}} along with ancillary results related to the theory of piece-wise controls introduced in this section.

\begin{theorem}[\textbf{\textsc{Invariance - II}}]\label{inv2}
Consider a fixed initial time $s \in [0,T]$, as well as a deterministic initial state $Z^{i}_{s} = z^{i} \in \mathds{R}^{\left\lvert Z^{i} \right\rvert}$. Given a scalar $q \in \Delta$, and $X^{-i}\in\mathcal{A}_{s}$, the auxiliary best-response value function of investor $i$ is invariant with respect to the integral flow $\phi^{i}\!\left(q,\,z^{i}\right)$, that is, we have $V^{i}\!\left(s,z^{i};X^{-i}\right)\, =\, V^{i}\!\left(s,\phi^{i}\!\left(q,z^{i}\right);X^{-i}\right)$.
\end{theorem}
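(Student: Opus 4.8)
The plan is to follow the three-step template of the proof of \Cref{inv1}, with the piece-wise constant control apparatus of \Cref{pcapp}--\Cref{pcval} in the role played there by the truncated value functions $J^{i,n}$, and to establish the two inequalities $V^{i}(s,\phi^{i}(q,z^{i});X^{-i})\ge V^{i}(s,z^{i};X^{-i})$ and its reverse separately. The structural fact I would lean on throughout is that, by \eqref{auxvf}, the terminal payoff equals $u^{i}(\mathrm{w}^{i}_{T})$, which, being $\sup_{q'\in\mathds{R}}\{u^{i}\circ\phi^{i}(-q',\cdot)\}$ evaluated at $Z^{i}_{T}$, depends on $Z^{i}_{T}$ only through its $\phi^{i}$-orbit; hence it suffices to show that displacing the initial datum along a flow line can be undone by an adjustment of the control that leaves the $\phi^{i}$-orbit of $Z^{i}_{T}$ asymptotically intact.

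For the first inequality, fix $\varepsilon>0$ and, using \Cref{pcval}, pick a piece-wise constant control $\Pi$ that is $\varepsilon$-optimal for $z^{i}$; shrinking its first subinterval and invoking \Cref{pcsapp}, I may assume at a further cost $\varepsilon$ that $\Pi$ takes the value $0$ on that subinterval, since changing a piece-wise constant control on a subinterval of vanishing length perturbs $Z^{i}_{T}$, hence $u^{i}(\mathrm{w}^{i}_{T})$, negligibly in $L^{1}$. Against the shifted datum $\phi^{i}(q,z^{i})$ I would run the control $\widehat{\Pi}$ equal to $q$ on a short interval $(s,t_{1}]$ and equal to $\Pi$ thereafter, which is admissible because $q\in\Delta$. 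The $q$-shift property of $\phi^{i}$ gives $\widehat{\Gamma}^{i}_{s+}=\phi^{i}\bigl(-q,\phi^{i}(q,z^{i})\bigr)=z^{i}=\Gamma^{i}_{s+}$ (the last equality using that $\Pi$ has first value $0$); moreover, reading off the coefficients $\hat{a}^{i},\hat{\mathrm{v}}^{i}$ in \eqref{abstate}, the coordinates of $\Gamma^{i}$ other than $W^{i,0}$ do not involve the control, so $\widehat{\Gamma}^{i}$ and $\Gamma^{i}$ coincide in those coordinates on all of $(s,T]$, while in the coordinate $W^{i,0}$ they differ at time $T$ by exactly $q\bigl(S^{0}_{t_{1}}-S^{0}_{s}\bigr)$---accumulated on $(s,t_{1}]$, where $dW^{i,0}=\pi^{i}\,dS^{0}$ and the two controls differ by $q$---and this vanishes a.s. as $t_{1}\downarrow s$ by path continuity of $S^{0}$. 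Since $Z^{i}_{T}=\phi^{i}(\pi^{i}_{T},\Gamma^{i}_{T})$ and $\phi^{i}$ is continuous in its second argument (\Cref{reg}), it follows that $\widehat{Z}^{i}_{T}\to Z^{i}_{T}$, hence $u^{i}(\widehat{\mathrm{w}}^{i}_{T})\to u^{i}(\mathrm{w}^{i}_{T})$ a.s.; combining this with uniform integrability of the terminal utilities---which I would extract from the exponential-moment condition \Cref{admcon}(iii) together with the a priori estimates behind \Cref{ext}---and then letting $\varepsilon\downarrow0$ yields $V^{i}(s,\phi^{i}(q,z^{i});X^{-i})\ge V^{i}(s,z^{i};X^{-i})$.

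The reverse inequality follows by the mirror construction: normalize an $\varepsilon$-optimal piece-wise constant control for $\phi^{i}(q,z^{i})$ to have first value $q$ (again admissible since $q\in\Delta$), and against $z^{i}$ run the control equal to $0$ on $(s,t_{1}]$ and to that control afterwards; the same computation shows the two abridged terminal states differ only in the $W^{i,0}$-coordinate, now by $-q\bigl(S^{0}_{t_{1}}-S^{0}_{s}\bigr)\to0$. The step I expect to be the main obstacle is the bookkeeping around the compact control set $\Delta$ and the lattices $\mathds{S}_{N}$: the normalizations and the maneuvering values $0$ and $q$ must sit in $\Delta$ (they do) and, for the piece-wise constant approximation, be approached by points of $\mathds{S}_{N}$, so the interchange of the limits in $N$, in the partition diameter, in $t_{1}\downarrow s$, and in the suprema over controls must be threaded carefully through the metric $\mathrm{d}$ and \Cref{pcval}; and since, unlike in \Cref{inv1}, there is no truncation index in reserve, the uniform integrability estimate has to carry the convergence of the payoffs on its own.
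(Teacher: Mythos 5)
Your proposal reaches the right conclusion and stays inside the paper's framework, but it is a genuinely different route from the paper's own proof. The paper never takes a limit over a shrinking initial window: it pins down the piece-wise constant competitor only at the single initial instant ($\pi_{n,\,t_{0}}=q$ in one direction, $\tilde{\pi}_{n,\,t_{0}}=0$ in the other), so that the abridged process attached to \emph{any} such control is exactly the process $\Gamma^{i,\,0}$ started from $z^{i}$, and the terminal auxiliary state obeys the exact orbit identity $Z^{\,i,\,\Pi_{n}}_{T}=\phi^{i}\big(\pi_{n,\,t_{n}},\Gamma^{i,\,0}_{T}\big)$; translation invariance of $\sup_{m}u^{i}\circ\phi^{i}(-m,\cdot)$ then makes the payoff comparison exact for every such $\Pi_{n}$, each inequality follows from one comparison with the zero control (resp. the constant-$q$ control) on the other side plus a single application of \Cref{pcval}, and the case $s=T$ is disposed of separately. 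You instead force the competitor to equal $q$ (resp. $0$) on a whole interval $(s,t_{1}]$, which creates the residual $q\big(S^{0}_{t_{1}}-S^{0}_{s}\big)$ in the $W^{i,0}$-coordinate and obliges you to add an $\varepsilon$-optimal selection, an almost-sure limit $t_{1}\downarrow s$, and a uniform-integrability step; these ingredients are available (the moment bounds you need are precisely those used in the proof of \Cref{pcval}, via \cite[Theorem 43, Page 140]{protter2005stochastic} together with boundedness of $\Delta$ and of $\sigma$, rather than \Cref{admcon}(iii) or the estimates behind \Cref{ext} as such), so your argument closes, and it has the mild virtue of relying only on continuity of the abridged dynamics near $s$ rather than on the exact control-independence of the terminal orbit, at the price of a longer approximation argument. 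Two small corrections: the bookkeeping you fear around the lattices $\mathds{S}_{N}$ is unnecessary, because the spliced control only needs to lie in $\mathcal{A}^{a}_{s}$ (values in $\Delta$, adapted, left-continuous at $T$), not in the piece-wise constant class, so no interchange with $N$ or with the partition diameter arises on that side — the only limits you must order are $t_{1}\downarrow s$ inside a fixed $\varepsilon$, then $\varepsilon\downarrow 0$, which is unproblematic; and you should treat $s=T$ separately (as the paper does, via translation invariance of the supremum at the terminal time), since your construction presupposes $s<T$.
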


\begin{proof}
\par The proof of the claim is straightforward in the particular instance when $s=T$. To see this, note that when $s=T$, we have $Z^{i}_{T} = z^{i}$, and so in view of the definition of $V^{i}$ it is immediate that

\begin{equation*}
V^{i}\!\left(T,z^{i};X^{-i}\right)\, =\, \sup\limits_{m\, \in\, \mathds{R}}\,u^{i}\circ\phi^{i}\!\left(-m,Z^{i}_{T}\right)\, =\, \sup\limits_{m\, \in\, \mathds{R}}\,u^{i}\circ\phi^{i}\!\left(-m,z^{i}\right)
\end{equation*}

\par Moreover, since the Bernoulli utility function $u^{i}$ is strictly increasing, we note that the supremum in the definition above remains invariant with respect to translation of the flow $\phi^{i}$, which then implies that given a scalar $q \in \Delta$ we have
\begin{equation*}
V^{i}\!\left(T,z^{i};X^{-i}\,\right)\, =\, \sup\limits_{m\, \in\, \mathds{R}}\,u^{i}\circ\phi^{i}\!\left(-m + q,z^{i}\right)\, =\, \sup\limits_{m \,\in\, \mathds{R}}\,u^{i}\circ\phi^{i}\!\left(-m,\phi^{i}\!\left(q,z^{i}\right)\right)
\end{equation*}

\par The second equality in the equation above is a straightforward consequence of the flow property. The claim is then immediate from the above. 

\par Next, we establish that $V^{i}\!\left(s,z^{i};X^{-i}\right)\geq V^{i}\!\left(s,\phi^{i}(q,z^{i});X^{-i}\right)$ for a given initial time $s \in [0,T)$. To this end, we consider the deterministic strategy $\Pi_{0} = \{\pi_{0,\,t}\}_{\,t\,\in\,[s,\,T]}$, where $\pi_{0,\,t} = 0$, for all $t \in [s,T]$. It is immediate in view of \textcolor{azul-pesc}{\Cref{auxadmcon}} that $\Pi_{0} \in \mathcal{A}^{a}_{s}$. Further, for a given $X^{-i} \in \mathcal{A}_{s}$, we let $Z^{\,i,\,0}_{T}$ denote the terminal value of the controlled auxiliary state process corresponding to the control $\Pi_{0}$ for investor $i$, with $Z^{\,i,\,0}_{s} = z^{i}$. Recalling the definition of $V^{i}$, we then have
\begin{equation*}
V^{i}\!\left(s,z^{i};X^{-i}\right)\, \geq\, \mathds{E}\left[\,\sup_{m\,\in\,\mathds{R}}u^{i}\circ\phi^{i}\big(\!-m,Z^{\,i,\,0}_{T}\big)\right]
\end{equation*}

\par Further, it follows by definition that for $t \in [0,T]$ we have $Z^{\,i,\,0}_{t} = \phi^{i}(0,\Gamma^{\,i,\,0}_{t})$, where recall that $\Gamma^{\,i,\,0}_{t}$ denotes the value at the time $t$ of the abridged process $\Gamma^{\,i,\,0}$, which is defined as the solution of the system of stochastic differential equations (\ref{abstate}), with $\Gamma^{\,i,\,0}_{s} = z^{i}$. Thus, we can rewrite the equation above as
\begin{equation*}
V^{i}\!\left(s,z^{i};X^{-i}\right)\, \geq\, \mathds{E}\left[\,\sup_{m\,\in\,\mathds{R}}u^{i}\circ\phi^{i}\big(\!-m,\Gamma^{\,i,\,0}_{T}\big)\right]
\end{equation*}

\par Next, for a given partition $I_{n} = \left\{s=t_{0},\,t_{1},\!...,\,t_{n}=T\right\}$ of the time interval $[s,T]$, we consider a piece-wise constant control $\Pi_{n} \in \mathcal{A}^{a,\,pc}_{s}\!\left(I_{n}\right)$ such that $\pi_{n,\,t_{0}} = q \in \Delta$. Further, given $X^{-i} \in \mathcal{A}_{s}$, we let ${Z}^{\,i,\,\Pi_{n}}$ denote the controlled auxiliary state process corresponding to auxiliary control $\Pi_{n}$ for investor $i$, with initial condition ${Z}^{\,i,\,\Pi_{n}}_{s} = \phi^{i}(q,z^{i})$. Recalling the definition of the auxiliary state process we obtain ${Z}^{\,i,\,\Pi_{n}}_{T} = \phi^{i}\big(\pi_{n,\,t_{n}},\Gamma^{i,\,0}_{T}\big)$ which in turn implies that
\begin{equation*}
V^{i}\!\left(s,z^{i};X^{-i}\right)\, \geq\, \mathds{E}\left[\, \sup_{m\,\in\,\mathds{R}}u^{i}\circ\phi^{i}\big(\!-m,\phi^{i}\big(\!-\pi_{n,\,t_{n}},{Z}^{\,i,\,\Pi_{n}}_{T}\big)\big)\right]
\end{equation*}

\par Note that the supremum in the equation above is again translation invariant on account of strict monotonicity of the Bernoulli utility function $u^{i}$, which in conjunction with the flow property of $\phi^{i}$ and the definition of $V^{i}$, then implies that we have  
\begin{equation*}
V^{i}\!\left(s,z^{i};X^{-i}\right)\ \geq\, \sup\limits_{\Pi_{n}\, \in\, \mathcal{A}^{a,\,pc}_{s}\!\left(I_{n}\right)}\mathds{E}\left[\, \sup_{m\,\in\,\mathds{R}}u^{i}\circ\phi^{i}\big(\!-m,{Z}^{\,i,\,\Pi_{n}}_{T}\big)\right]\ \geq\, \sup\limits_{\Pi_{n}\, \in\, \mathcal{A}^{a,\,pc}_{s}\!\left(I_{n}\right)} \mathds{E}\left[\,u^{i}\big(\mathrm{w}^{\,i,\,\Pi_{n}}_{T}\big)\right]
\end{equation*}

\par Note that since the choice of the partition $I_n$ above was arbitrary, it then follows in view of \textcolor{azul-pesc}{\Cref{pcval}} that we have $V^{i}\!\left(s,\,z^{i};\,X^{-i}\right)\geq V^{i}\!\left(s,\,\phi^{i}\!(q,\,z^{i});\,X^{-i}\right)$. 

\par In order to prove the claim, it remains to show that $V^{i}\!\left(s,\phi^{i}(q,z^{i});X^{-i}\right)\geq V^{i}\!\left(s,z^{i};X^{-i}\right)$, for a given initial time $s\in[0,T)$. To this end, we fix $q \in \Delta$ and consider the deterministic strategy $\Pi_{q} = \{\pi_{q,\,t} \}_{t\,\in\,[s,T]}$ for investor $i$, where $\pi_{q,\,t} = q$, for $t\in[s,\,T]$. It is immediate in view of \textcolor{azul-pesc}{\Cref{auxadmcon}} that $\Pi_{q} \in \mathcal{A}^{a}_{s}$. Further, given $X^{-i} \in \mathcal{A}_{s}$, we let $Z^{\,i,\,q}_{T}$ denote the value of the corresponding controlled auxiliary state process at the terminal time $T$, with initial condition $Z^{\,i,\,q}_{s} = \phi^{i}(q,z^{i})$. Recalling the definition of $V^{i}$, it then follows that we have
\begin{equation*}
    V^{i}\!\left(s,\phi^{i}(q,z^{i});X^{-i}\right)\, \geq\, \mathds{E}\left[\, \sup_{m\,\in\,\mathds{R}}u^{i}\circ\phi^{i}\big(\!-m,Z^{\,i,\,q}_{T}\big)\right]
\end{equation*}

\par Further, as earlier we let $\Gamma^{\,i,\,0}_{t}$ denote the value at the time $t$ of the abridged process $\Gamma^{\,i,\,0}$, which is defined as the solution of the system of stochastic differential equations (\ref{abstate}), with $\Gamma^{\,i,\,0}_{s} = z^{i}$. It follows from the definition of the abridged process $\Gamma^{\,i}$ that $\Gamma^{\,i,\,0}_{t} = \phi^{i}(-q,Z^{\,i,\,q}_{t})$ for all $t \in [s,T]$, which in turn implies that
\begin{equation*}
    V^{i}\!\left(s,\phi^{i}(q,z^{i});X^{-i}\right)\, \geq\, \mathds{E}\left[\, \sup_{m\,\in\,\mathds{R}}u^{i}\circ\phi^{i}\big(\!-m,\phi^{i}\big(q,\Gamma^{\,i,\,0}_{T}\big)\big)\right]
\end{equation*}

\par Next, we consider a partition $I_{n} = \left\{s=t_{0},\,t_{1},...,\,t_{n}=T\right\}$ of the time interval $[s,T]$, and let $\widetilde{\Pi}_{n} \in \mathcal{A}^{a,\,pc}_{s}\!\left(I_{n}\right)$ be an admissible piece-wise constant control such that $\tilde{\pi}_{n,\,t_{0}} = 0 \in \Delta$. Further, given $X^{-i}\in\mathcal{A}_{s}$, we let $\widetilde{Z}^{i}$ denote the controlled auxiliary state process corresponding to the control $\widetilde{\Pi}_{n}$ for investor $i$. Note that by definition of the auxiliary state process the initial state value equals $\widetilde{Z}^{i}_{s} = \phi^{i}(0,\Gamma^{\,i,\,0}_{s}) = z^{i}$, and we have $\widetilde{Z}^{i}_{T} = \phi^{i}(\tilde{\pi}_{n,\,t_{n}},\Gamma^{i,\,0}_{T})$ which in turn implies that the equation above can be rewritten as
\begin{equation*}
    V^{i}\!\left(s,\phi^{i}(q,z^{i});X^{-i}\right)\, \geq\ \mathds{E}\left[\,\sup_{m\,\in\,\mathds{R}}u^{i}\circ\phi^{i}\big(\!-m,\phi^{i}\big(q,\phi^{i}\big(\!-\tilde{\pi}_{n,\,t_{n}},\widetilde{Z}^{i}_{T}\big)\big)\big)\right]
\end{equation*}

\par Again, we appeal to the flow property of $\phi^{i}$, and the translation invariance of the supremum in the equation above that follows on account of the strict monotonicity of the Bernoulli utility function $u^{i}$, in conjunction with the definition of $V^{i}$ which then leads us to  
\begin{equation*}
V^{i}\!\left(s,\phi^{i}(q,z^{i});X^{-i}\right)\, \geq\, \sup\limits_{\widetilde{\Pi}_{n}\, \in\, \mathcal{A}^{a,\,pc}_{s}\!\left(I_{n}\right)}\mathds{E}\left[\, \sup_{m\,\in\,\mathds{R}}u^{i}\circ\phi^{i}\big(\!-m,\widetilde{Z}^{i}_{T}\big)\right]\, \geq\, \sup\limits_{\widetilde{\Pi}_{n}\, \in\, \mathcal{A}^{a,\,pc}_{s}\!\left(I_{n}\right)} \mathds{E}\left[u^{i}\big(\tilde{\mathrm{w}}^{i}_{T}\big)\right]
\end{equation*}

\par Finally, since the choice of the partition $I_n$ above was arbitrary, we consider the limit as $n\rightarrow \infty$ and invoke \textcolor{azul-pesc}{\Cref{pcval}}, which then implies that $V^{i}\!\left(s,z^{i};X^{-i}\right)\geq V^{i}\!\left(s,\phi^{i}(q,z^{i});X^{-i}\right)$, whereupon the claim in the statement of the lemma follows as a consequence.
\end{proof}

\subsection{Value Function Equivalence}

\par This subsection is devoted to the statement and proof of the value function equivalence result. The theorem below, which documents this principal result, serves as a game-theoretic counterpart to \cite[Theorem 1]{lasry2000classe} in which the authors limit attention to a single agent framework. Nevertheless, given our focus on Markovian strategies in the present work, the proof of the theorem proceeds along similar lines. 

\par We first show that $V^{i}$ is bounded above by $J^{i}$ by appealing to the piece-wise constant control framework for the auxiliary control problem outlined earlier, which in conjunction with the standard nature of the auxiliary control problem leads us to the desired result. To show the converse, we first show that $V^{i}$ serves as an upper bound for the truncated best-response value function $J^{i,n}$ using a viscosity solution approach, from which we can obtain the desired result by appealing to \textcolor{azul-pesc}{\Cref{jconv}}.

\begin{theorem}[\textbf{\textsc{Equivalence}}]\label{vinv}
\par Consider a fixed initial time $s \in [0,T)$, a fixed initial state for the singular control problem $Y^{i}_{s} = y^{i} \in \mathds{R}^{\lvert\, Y^{i}\,\rvert}$, a scalar $q \in \Delta$, and a fixed initial state for the auxiliary control problem $z^{i} \in \mathds{R}^{\left\lvert Z^{i}\right\rvert}$, such that $z^{i} = \big(y^{i}\backslash \pi^{i}_{s}\big)$. Given $X^{-i}\in\mathcal{A}_{s}$, the best-response value function of the of the $i$th investor is equivalent to the value function of the corresponding auxiliary control problem, that is we have $J^{i}\!\left(s,y^{i};X^{-i}\right)\ =\ V^{i}\!\left(s,z^{i};X^{-i}\right)$
\end{theorem}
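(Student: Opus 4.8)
The plan is to prove the two inequalities $V^i\le J^i$ and $J^i\le V^i$ separately, throughout reading the free auxiliary control $\pi^i$ as the $\pi^i$-coordinate of the singular state $Y^i$ and resolving the apparent dimension mismatch via the lift that identifies the initial data $z^i=y^i\backslash\pi^i_s$; this is exactly how the auxiliary state process was built in \textcolor{azul-pesc}{\Cref{auxcp}}, so the bookkeeping anticipated in the earlier footnote amounts to nothing more than keeping track of which coordinate of $Z^i$ corresponds to which coordinate of $Y^i$.

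For $V^i\le J^i$ I would use the piece-wise constant control machinery of \textcolor{azul-pesc}{\Cref{pcapp}}--\textcolor{azul-pesc}{\Cref{pcval}}: by \textcolor{azul-pesc}{\Cref{pcval}} it suffices to bound $\sup_{\Pi_n\in\mathcal{A}^{a,pc}_s(I_n)}H^i(\Pi_n;X^{-i})$ by $J^i(s,y^i;X^{-i})$ for every partition $I_n$ of vanishing diameter, and then let $n\to\infty$. Fix a piece-wise constant auxiliary control taking the value $q_k$ on $(t_{k-1},t_k]$. In the singular problem I construct, for $m$ large, an admissible control $X^{i,m}$ that on a short sub-interval $[t_{k-1},t_{k-1}+1/m]$ of each cell trades at the constant rate moving the $\pi^i$-coordinate from $q_{k-1}$ to $q_k$, in the spirit of \eqref{uep}, is zero on the remainder of each cell, and performs a final rapid trade near $T$ realising the maximiser of the inner $\sup_q$ in \eqref{auxvf}. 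Cell-by-cell application of the It\^o expansion underlying \textcolor{azul-pesc}{\Cref{flowconv}} shows that $Y^{i,m}$ converges $\mathds{P}$-a.s. as $m\to\infty$ to the process obtained from the zero-control process $\Gamma^i$ by the successive flow shifts $\phi^i(q_k,\cdot)$, i.e.\ precisely to the auxiliary state $Z^{i,\Pi_n}$ of \eqref{astate}; in particular $W^{i,m}_T\to\mathrm{w}^{i,\Pi_n}_T$ after the terminal shift. Since the total variation of the piece-wise constant path of $\pi^i$ is finite ($\Delta$ compact, finitely many cells), each $X^{i,m}\in\mathcal{A}_s$, so by continuity of $u^i$ and uniform integrability $J^i\ge\limsup_m U^i(X^{i,m};X^{-i})=\mathds{E}[u^i(\mathrm{w}^{i,\Pi_n}_T)]$; supremising over $\Pi_n$ and invoking \textcolor{azul-pesc}{\Cref{pcval}} gives $J^i\ge V^i$ in this direction, i.e.\ $V^i\le J^i$.

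For the converse $J^i\le V^i$ I would pass through the truncated problems: by \textcolor{azul-pesc}{\Cref{jconv}}(ii) it is enough to show $J^{i,n}\le V^i$ for every $n$. The truncated problem has bounded controls, so $J^{i,n}$ is a viscosity solution of \eqref{hjbvisc} with the supremum taken over $x^i\in[-n,n]$ and terminal data $u^i(W^i_T)$. On the other hand the lift $\widetilde V^i(s,y^i;X^{-i}):=V^i(s,y^i\backslash\pi^i;X^{-i})$ is flow-invariant by \textcolor{azul-pesc}{\Cref{inv2}}; differentiating the invariance in $q$ at $q=0$ and using \eqref{flow} yields $\langle D_2\widetilde V^i,b^i\rangle=0$, which is exactly condition \eqref{finite}. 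Hence the $b^i$-term drops out of \eqref{hjbvisc}, the truncation $|x^i|\le n$ becomes irrelevant (the unconstrained supremum over $x^i$ is finite and equal to zero along the flow direction), and $\widetilde V^i$ --- which by the dynamic programming principle for the auxiliary control problem, the regularity of \textcolor{azul-pesc}{\Cref{reg}}, and the construction of \textcolor{azul-pesc}{\Cref{auxcp}} is a viscosity solution of the resulting non-singular HJB --- is in particular a viscosity supersolution of the HJB solved by $J^{i,n}$. Its terminal data dominates that of $J^{i,n}$: taking $q=0$ in $\sup_q u^i\circ\phi^i(-q,z^i)$ and recalling $z^i=y^i\backslash\pi^i$ gives $\sup_q u^i\circ\phi^i(-q,z^i)\ge u^i(W^i_T)$ (more generally one argues, via strict monotonicity of $u^i$ and the flow property, exactly as in the proof of \textcolor{azul-pesc}{\Cref{inv2}}). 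A comparison principle for this degenerate parabolic equation then gives $J^{i,n}\le\widetilde V^i=V^i$, and letting $n\to\infty$ with \textcolor{azul-pesc}{\Cref{jconv}}(ii) yields $J^i\le V^i$.

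The main obstacle is the comparison step in the second inequality: equation \eqref{hjbvisc} is degenerate parabolic with coefficients $a^i$ and $\mathrm{v}^i$ that grow linearly in the portfolio variables and are not globally Lipschitz, so an off-the-shelf comparison theorem does not apply; I expect to need a localisation in the state variable together with the non-explosivity estimates of \textcolor{azul-pesc}{\Cref{exist1}} and \textcolor{azul-pesc}{\Cref{ext}}, and possibly a penalisation at infinity, in order to run the doubling-of-variables argument. A secondary difficulty, largely routine once the first is settled, is the consistent handling of the dimension lift and of the inner $\sup_q$ in \eqref{auxvf} --- one must verify that $\widetilde V^i$ genuinely solves the relevant PDE on all of $\mathds{R}^{\lvert Y^i\rvert}$ and not merely on flow slices, and that the $s=T$ identifications match up --- both of which, as the earlier footnote indicates, reduce to the flow property of $\phi^i$ together with the monotonicity of $u^i$.
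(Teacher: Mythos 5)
Your two-inequality architecture is sound, and your second half ($J^{i}\le V^{i}$) is essentially the paper's own argument: the paper likewise passes through the truncated values $J^{i,n}$, shows that (the lift of) $V^{i}$ is a viscosity supersolution and $J^{i,n}$ a viscosity subsolution of one common penalized equation (\ref{visc2}), which carries the term $n\lvert\langle D_{2}\psi,b^{i}\rangle\rvert$, invokes the Fleming\textendash Soner comparison principle, and lets $n\to\infty$ via \Cref{jconv}(ii). Two adjustments to your sketch are needed there: the step ``differentiate the invariance in $q$'' is only heuristic, since $V^{i}$ has no known differentiability \textemdash\ the paper instead transfers the supersolution property through test functions, setting $\hat g(s,\gamma)=g(s,\phi^{i}(q^{*},\gamma))$ and using $D^{2}_{2}\phi^{i}=\mathbf 0$ together with \Cref{inv2} to move the global minimum along the flow; and the comparison must be run for the single equation (\ref{visc2}) rather than between solutions of two different HJBs (your concern about unbounded coefficients is legitimate, but the paper resolves it by citation, so localisation is a refinement, not a structural change). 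Where you genuinely diverge is the first inequality: you rebuild a cell-by-cell rapid-trading approximation of a piece-wise constant auxiliary control, which obliges you to upgrade the almost-sure convergence $W^{i,m}_{T}\to\mathrm w^{i,\Pi_{n}}_{T}$ to convergence of expected utilities \textemdash\ nontrivial since $u^{i}$ is unbounded below, so your ``uniform integrability'' conceals exponential-moment estimates of the kind carried out in the proof of \Cref{jconv}. The paper avoids this entirely: it follows the zero control to $T-\epsilon$, applies the already-established invariance \Cref{inv1} once, and then uses lower semicontinuity of $J^{i}$ at $T$ (\Cref{jconv}(iv),(v)) and Fatou to reach $\mathds E\big[\sup_{q}u^{i}\circ\phi^{i}\big(q,\big(\Gamma^{i,0}_{T},\pi^{i}_{s}\big)\big)\big]$, after which the identification $Z^{i,\Pi_{n}}_{T}=\phi^{i}\big(\pi_{n,t_{n}},\Gamma^{i,0}_{T}\big)$, translation invariance of the inner supremum and \Cref{pcval} finish exactly as in your plan. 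Your route is more constructive and makes the Lasry\textendash Lions mechanism visible, but it re-derives the machinery already packaged in \Cref{inv1}/\Cref{flowconv} and pays for it with the integrability step; the paper's route recycles \Cref{inv1} and sidesteps convergence of expectations through semicontinuity and Fatou.
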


\begin{proof}
\par First, we establish that $J^{i}\!\left(s,y^{i};X^{-i}\right) \geq V^{i}\!\left(s,z^{i};X^{-i}\right)$. To this end, given an initial time $s \in [0,T)$ such that $Y^{i}_{s} = y^{i}$, we consider the deterministic strategy $\widehat{X}^{i} = \{\hat{x}^{i}_{t}\}_{t\, \in\, [s,T]}$, where $\hat{x}^{i}_{t} = 0$, for $t \in [s,T]$. Given $X^{-i} \in \mathcal{A}_{s}$, it is immediate that if the $i$th investor were to follow the strategy $\widehat{X}^{i}$ from initial time $s$ up to time $T-\epsilon$, for $\epsilon > 0$, then we would have $Y^{i}_{T-\epsilon}\, =\, (\Gamma^{i,\,0}_{T-\epsilon},\pi^{i}_{s})$, where $\Gamma^{i,\,0}$ denotes the abridged state process defined by (\ref{abstate}), with initial condition $\Gamma^{i,\,0}_{s} = Y^{i}_{s}\backslash \pi^{i}_{s}$. Given that $J^{i}$ is defined to be the value function of $i$th investor's singular best-response problem, and in view of the fact that $\hat{X}^{i} \in \mathcal{A}_{s}$, it follows by way of the invariance result established in \textcolor{azul-pesc}{\Cref{inv1}} that for a given scalar $q \in \mathds{R}$ we have
\begin{equation*}
    J^{i}\!\left(s,y^{i};X^{-i}\right)\, \geq\, \mathds{E}\left[J^{i}\big(T-\epsilon,\big(\Gamma^{i,\,0}_{T-\epsilon},\pi^{i}_{s}\big);X^{-i}\big)\right]\, \geq\, \mathds{E}\left[J^{i}\big(T-\epsilon,\phi^{i}\big(q,\big(\Gamma^{i,\,0}_{T-\epsilon},\pi^{i}_{s}\big)\big);X^{-i}\big)\right]
\end{equation*}

\par Recall from \textcolor{azul-pesc}{\Cref{jconv}(v)} that the function $J^{i}$ is lower semi-continuous with respect to its first argument at the terminal time $T$, and from \textcolor{azul-pesc}{\Cref{jconv}(iv)} that $J^{i}$ is lower semi-continuous with respect to its second argument, which along with the continuity of the integral flow $\phi^{i}$ then implies that we have
\begin{equation*}
    J^{i}\!\left(s,y^{i};X^{-i}\right)\, \geq\, \liminf\limits_{\epsilon\,\downarrow\, 0}\, \mathds{E}\left[J^{i}\big(T-\epsilon,\phi^{i}\big(q,\big(\Gamma^{i,\,0}_{T-\epsilon},\pi^{i}_{s}\big)\big);X^{-i}\big)\right]
\end{equation*}

\par Moreover, since $\hat{X}^{i} \in \mathcal{A}_{s}$ for $s \in [0,T]$, it follows from the above that we have
\begin{equation*}
    \inf\limits_{u\,\in\,[T-\epsilon,\,T]}\,J^{i}\big(u,\phi^{i}\big(q,\big(\Gamma^{i,\,0}_{T-\epsilon},\pi^{i}_{s}\big)\big);X^{-i}\big)\, \geq\, -\frac{1}{\delta^{i}}\,\mathds{E}\left[\exp{\!\left(-\delta^{i}W^{i}_{s} -\delta^{i}\pi^{i}_{s}\int_{s}^{T}\!\!\!\!dS_{t}\right)} \right]
\end{equation*}

\par Note that the right-hand side above represents the expected utility of the $i$th investor from following the strategy $\hat{X}^{i}$ from initial time $s$ up to time $T$, given $X^{-i} \in \mathcal{A}_{s}$. From the proof of \textcolor{azul-pesc}{\Cref{jconv}}, we know that the right-hand side of the equation above is finite. Thus, we may invoke Fatou's Lemma to establish that
\begin{equation}\label{eq1}
    J^{i}\!\left(s,y^{i};X^{-i}\right)\, \geq\, \mathds{E}\left[\liminf\limits_{\epsilon\,\downarrow\, 0}\,J^{i}\big(T-\epsilon,\phi^{i}\big(q,\big(\Gamma^{i,\,0}_{T-\epsilon},\pi^{i}_{s}\big)\big);X^{-i}\big)\right]
\end{equation}

\par Appealing again to the lower semi-continuity of the function $J^{i}$  with respect to its first argument at the terminal time $T$, as well as with respect to its second argument, see \textcolor{azul-pesc}{\Cref{jconv}}, along with the continuity of the integral flow $\phi^{i}$ and the definition of the function $J^{i}$ at time $T$ we have\begin{equation*}
    \liminf\limits_{\epsilon\,\downarrow\, 0}\,J^{i}\big(T-\epsilon,\phi^{i}\big(q,\big(\Gamma^{i,\,0}_{T-\epsilon},\pi^{i}_{s}\big)\big);X^{-i}\big)\, \geq\, u^{i}\circ\phi^{i}\big(q,\big(\Gamma^{i,\,0}_{T}\!,\pi^{i}_{s}\big)\big)
\end{equation*}

\par Since the choice of $q$ above was arbitrary, it then follows that we have
\begin{equation*}
    \liminf\limits_{\epsilon\,\downarrow\, 0}\,J^{i}\big(T-\epsilon,\phi^{i}\big(q,\big(\Gamma^{i,\,0}_{T-\epsilon},\pi^{i}_{s}\big)\big);X^{-i}\big)\, \geq\, \sup\limits_{q\,\in\,\mathds{R}}\, u^{i}\circ\phi^{i}\big(q,\big(\Gamma^{i,\,0}_{T},\pi^{i}_{s}\big)\big)
\end{equation*}

\par Next, we consider a partition $I_{n} = \left\{s=t_{0},\,t_{1},...,\,t_{n}=T\right\}$ of the time interval $[s,T]$, and let $\Pi_{n} \in \mathcal{A}^{a,\,pc}_{s}\!\left(I_{n}\right)$ be an admissible piece-wise constant control such that $\pi_{n,\,t_{0}} = 0 \in \Delta$. Given $X^{-i}\in\mathcal{A}_{s}$, we let $Z^{\,i,\,\Pi_{n}}$ denote the auxiliary state process corresponding to the strategy $\Pi_{n}$ for the $i$th investor. Note from the definition of the auxiliary state process that the initial state value is given by $Z^{\,i,\,\Pi_{n}}_{s} = \phi^{i}\big(0,\Gamma^{\,i,\,0}_{s}\big) = z^{i}$. Also, given the definition of the auxiliary state process we have $Z^{\,i,\,\Pi_{n}}_{T} = \phi^{i}\big(\pi_{n,\,t_{n}},\Gamma^{i,\,0}_{T}\big)$, which in conjunction with the fact that the Bernoulli utility function $u^{i}$ does not depend on $\pi^{i}_{T}$ then implies that
\begin{equation*}
    \liminf\limits_{\epsilon\,\downarrow\, 0}\,J^{i}\big(T-\epsilon,\phi^{i}\big(q,\big(\Gamma^{i,\,0}_{T-\epsilon},\pi^{i}_{s}\big)\big);X^{-i}\big)\, \geq\, \sup\limits_{q\,\in\,\mathds{R}}\, u^{i}\circ\phi^{i}\big(q,\phi^{i}\big(-\pi_{n,\,t_{n}},Z^{\,i,\,\Pi_{n}}_{T}\big)\big)
\end{equation*}

\par We recall the flow property of $\phi^{i}$, along with the translation invariance of the supremum in the equation above which follows on account of the strict monotonicity of the Bernoulli utility function $u^{i}$, to note that (\ref{eq1}) leads us to
\begin{equation*}
   J^{i}\!\left(s,y^{i};X^{-i}\right)\, \geq\, \sup\limits_{\Pi_{n}\, \in\, \mathcal{A}^{a,\,pc}_{s}\!\left(I_{n}\right)}\mathds{E}\left[\,\sup\limits_{q\,\in\,\mathds{R}}\, u^{i}\circ\phi^{i}\big(q,Z^{\,i,\,\Pi_{n}}_{T}\big)\right]
\end{equation*}

\par Since the choice of the partition $I_n$ above was arbitrary, we consider the limit as $n\rightarrow \infty$ and invoke \textcolor{azul-pesc}{\Cref{pcval}}, whereupon it follows that $J^{i}\!\left(s,y^{i};X^{-i}\right)\geq V^{i}\!\left(s,\,z^{i};X^{-i}\right)$.

\par It remains to prove the converse, that is, we show that $V^{i}\!\left(s,z^{i};X^{-i}\right)\geq J^{i}\!\left(s,y^{i};X^{-i}\right)$. Recall that $V^{i}$ denotes the value function of the auxiliary best-response problem of investor $i$, and hence it represents a viscosity supersolution of the associated Hamilton\textendash Jacobi\textendash Bellman equation. Thus, if we let $\underline{V}^{i}$ denote the lower semi-continuous envelope of $V^{i}$, that is, $\underline{V}^{i}$ denotes the largest lower semi-continuous minorant of $V^{i}$ (the existence of which follows in view of the fact that $V^{i}$ is bounded from below, as shown above), it follows that given $g \in \mathds{C}^{1,2}\big((0,T]\times\mathds{R}^{\lvert Z^{i}\rvert};\mathds{R}\big)$ such that $\underline{V}^{i} - g$ attains a global minimum at $\left(s^{*}\!,z^{*}\right)\in (0,T)\times\mathds{R}^{\lvert Z^{i}\rvert}$, we must have
  \begin{equation*}
  -D_{1}g\!\left(s^{*}\!,z^{*}\right)+\sup\limits_{q\,\in\, \Delta}\left\{\left\langle -D_{2}g\!\left(s^{*}\!,z^{*}\right),\beta^{i}\!\left(q,z^{*}\right)x^{-i}\right\rangle- \frac{1}{2}\mathrm{tr}\left(D_{2}^{2}g\!\left(s^{*}\!,z^{*}\right)\nu^{i}\!\left(q,z^{*}\right)\nu^{i}\!\left(q,z^{*}\right)^{\mathrm{T}}\right)\right\} \geq 0 
  \end{equation*}

\par Moreover, as the functions $\beta^{i}$, $\nu^{i}$ are continuous in their first argument and the set $\Delta$ is compact, we can find a $q^{*} \in \Delta$ such that the equation above can be re-written equivalently as
\begin{equation*}
  -D_{1}g\!\left(s^{*}\!,z^{*}\right)-\left\langle D_{2}g\!\left(s^{*}\!,z^{*}\right),\beta^{i}\!\left(q^{*}\!,z^{*}\right)x^{-i}\right\rangle- \frac{1}{2}\mathrm{tr}\left(D_{2}^{2}g\!\left(s^{*}\!,z^{*}\right)\nu^{i}\!\left(q^{*}\!,z^{*}\right)\nu^{i}\!\left(q^{*}\!,z^{*}\right)^{\mathrm{T}}\right) \geq 0 
  \end{equation*}
  
 \par Further, given an initial auxiliary state $z^{*}$, and an auxiliary control value $q^{*}$, we let $\gamma^{*}$ denote the corresponding initial value of the associated abridged process $\Gamma^{i}$, defined as in \textcolor{azul-pesc}{\Cref{auxcp}}. Recalling the definition of the auxiliary state process $Z^{i}$, it follows that we have $z^{*} = \phi^{i}\!\left(q^{*},\gamma^{*}\right)$, using which we can re-write the equation above equivalently as 
\begin{multline*}
  -D_{1}g\!\left(s^{*}\!,\phi^{i}\!\left(q^{*}\!,\gamma^{*}\right)\right)-\left\langle D_{2}g\!\left(s^{*}\!,\phi^{i}\!\left(q^{*}\!,\gamma^{*}\right)\right),\beta^{i}\!\left(q^{*}\!,\phi^{i}\!\left(q^{*}\!,\gamma^{*}\right)\right)x^{-i}\right\rangle\\ -\, \frac{1}{2}\mathrm{tr}\left(D_{2}^{2}g\!\left(s^{*}\!,\phi^{i}\!\left(q^{*}\!,\gamma^{*}\right)\right)\nu^{i}\!\left(q^{*}\!,\phi^{i}\!\left(q^{*}\!,\gamma^{*}\right)\right)\nu^{i}\!\left(q^{*}\!,\phi^{i}\!\left(q^{*}\!,\gamma^{*}\right)\right)^{\mathrm{T}}\right)\geq 0 
 \end{multline*}

\par Recalling the definitions of $\beta^{i}, \nu^{i}$ from \textcolor{azul-pesc}{\Cref{auxcp}} (\ref{auxceq}) and the fact that $D^{2}_{2}\phi^{i}\!\left(q,\gamma\right) = \mathbf{0}_{4\times4}$ for all $(q,\gamma)$, we can re-write the equation above equivalently as
\begin{multline}\label{visc1}
  -D_{1}g\!\left(s^{*}\!,\phi^{i}\!\left(q^{*}\!,\gamma^{*}\right)\right)-\left\langle D_{2}g\!\left(s^{*}\!,\phi^{i}\!\left(q^{*}\!, \gamma^{*}\right)\right),D_{2}\phi^{i}\!\left(q^{*}\!,\gamma^{*}\right)\hat{a}^{i}\!\left(\gamma^{*}\right)x^{-i}\right\rangle\\ -\, \frac{1}{2}\mathrm{tr}\left(D_{2}\phi^{i}\!\left(q^{*}\!,\gamma^{*}\right)^{\mathrm{T}}D_{2}^{2}g\!\left(s^{*}\!,\phi^{i}\!\left(q^{*}\!, \gamma^{*}\right)\right)D_{2}\phi^{i}\!\left(q^{*}\!,\gamma^{*}\right)\mathrm{\hat{v}}^{i}\!\left(\gamma^{*}\right)\mathrm{\hat{v}}^{i}\!\left(\gamma^{*}\right)^{\mathrm{T}}\right) \geq 0 
  \end{multline}

\par Next, corresponding to $g \in \mathds{C}^{1,2}\big((0,T]\times\mathds{R}^{\lvert Z^{i} \rvert};\mathds{R}\big)$ we define $\hat{g} \in \mathds{C}^{1,2}\big((0,T]\times\mathds{R}^{\lvert Z^{i}\rvert};\mathds{R}\big)$ as $\hat{g}\!\left(s,\gamma\right) = g\!\left(s,\phi^{i}\!\left(q^{*}\!,\gamma\right)\right)$, and further note that in view of the definition of $\hat{g}$ we have 
\begin{align*}
     D_{1}\hat{g}\!\left(s,\gamma\right)\, =\, &\, D_{1}g\!\left(s,\phi^{i}\!\left(q^{*}\!,\gamma\right)\right)\\
     D_{2}\hat{g}\!\left(s,\gamma\right)\, =\, &\, D_{2}g\!\left(s,\phi^{i}\!\left(q^{*}\!,\gamma\right)\right)D_{2}\phi^{i}\!\left(q^{*}\!,\gamma\right)\\
     D^{2}_{2}\hat{g}\!\left(s,\gamma\right)\, =\, &\,  D_{2}\phi^{i}\!\left(q^{*}\!,\gamma\right)^{\mathrm{T}}D_{2}^{2}g\!\left(s,\phi^{i}\!\left(q^{*}\!,\gamma\right)\right)D_{2}\phi^{i}\!\left(q^{*}\!,\gamma\right)+D_{2}g\!\left(s,\phi^{i}\!\left(q^{*}\!,\gamma\right)\right)D^{2}_{2}\phi^{i}\!\left(q^{*}\!,\gamma\right)\\
     =\, &\,  D_{2}\phi^{i}\!\left(q^{*}\!,\gamma\right)^{\mathrm{T}}D_{2}^{2}g\!\left(s,\phi^{i}\!\left(q^{*}\!,\gamma\right)\right)D_{2}\phi^{i}\!\left(q^{*}\!,\gamma\right)
\end{align*}

\par Note that the last equality above follows from the fact that we have $D^{2}_{2}\phi^{i}\!\left(q,\gamma\right) = \mathbf{0}_{4\times4}$ for all $(q,\gamma)$. Recalling the definition of $\hat{g}$ together with the invariance result for the auxiliary control problem established in \textcolor{azul-pesc}{\Cref{inv2}}, we ascertain that whenever $\underline{V}^{i} - g$ attains a global minimum at $\left(s^{*},\phi^{i}(q^{*},\gamma^{*})\right)\in (0,T)\times\mathds{R}^{\lvert Z^{i}\rvert}$, $\underline{V}^{i}(\,\cdot\,,\phi^{i}(q^{*}\!,\cdot\,)) - \hat{g}$ attains a global minimum at $\left(s^{*}\!,\gamma^{*}\right)$. This fact, in conjunction with the definition of $\hat{g}$ implies that we can re-write (\ref{visc1}) equivalently as 
\begin{equation*}
  -D_{1}\hat{g}\!\left(s^{*}\!,\gamma^{*}\right)-\left\langle D_{2}\hat{g}\!\left(s^{*}\!,\gamma^{*}\right),\hat{a}^{i}\!\left(\gamma^{*}\right)x^{-i}\right\rangle - \frac{1}{2}\mathrm{tr}\left(D_{2}^{2}\hat{g}\!\left(s^{*}\!,\gamma^{*}\right)\mathrm{\hat{v}}^{i}\!\left(\gamma^{*}\right)\mathrm{\hat{v}}^{i}\!\left(\gamma^{*}\right)^{\mathrm{T}}\right) \geq 0 
  \end{equation*}
  
\par Next, we consider the function $\mathfrak{U}^{i}\!\left(\,\cdot\,,\phi^{i}(q^{*}\!,\cdot\,);X^{-i}\right): [0,T) \times \mathds{R}^{\lvert Y^{i}\rvert}\rightarrow\mathds{R}$ defined as follows
\begin{equation*}
\mathfrak{U}^{i}\!\left(s,\phi^{i}(q^{*}\!,y);X^{-i}\right) \, =\, V^{i}\!\left(s,\phi^{i}(q^{*}\!,\gamma);X^{-i}\right)
\end{equation*}

\par It is immediate in view of the definition of the abridged state process $\Gamma^{i}$ that in the definition above we have $\gamma = y\backslash \pi^{i}_{s}$ or equivalently $y = (\gamma,\pi^{i}_{s})$. Further, given $f \in \mathds{C}^{1,2}\big((0,T]\times\mathds{R}^{\lvert Y^{i}\rvert};\mathds{R}\big)$ we define $\hat{f} \in \mathds{C}^{1,2}\big((0,T]\times\mathds{R}^{\lvert Y^{i}\rvert};\mathds{R}\big)$ as $\hat{f}\!\left(s,y\right) = f\!\left(s,\phi^{i}\!\left(q^{*}\!,y\right)\right)$. It then follows in view of the definition of $\mathfrak{U}^{i}$ that whenever $\underline{\mathfrak{U}}^{i}(\,\cdot\,,\phi^{i}(q^{*}\!,\cdot\,)) - \hat{f}(\,\cdot\,,\cdot\,)$ attains a global minimum at $\left(s^{*}\!,y^{*}\right)\in (0,T)\times\mathds{R}^{\lvert Y^{i}\rvert}$, $\underline{V}^{i}(\,\cdot\,,\phi^{i}(q^{*}\!,\cdot\,)) - \hat{f}\big(\,\cdot\,,\big(\,\cdot\,,\pi^{i,*}_{s^{*}}\big)\big)$ attains a global minimum at $\left(s^{*}\!,\gamma^{*}\right)$, and therefore the function $\tilde{f} \in \mathds{C}^{1,2}\big((0,T]\times\mathds{R}^{\lvert Z^{i}\rvert};\mathds{R}\big)$ defined as $\tilde{f}\!\left(s,\gamma\right) = \hat{f}\big(s,\big(\gamma,\pi^{i,*}_{s^{*}}\big)\big)$ satisfies
\begin{equation*}
  -D_{1}\tilde{f}\!\left(s^{*}\!,\gamma^{*}\right)-\left\langle D_{2}\tilde{f}\!\left(s^{*}\!,\gamma^{*}\right),\hat{a}^{i}\!\left(\gamma^{*}\right)x^{-i}\right\rangle - \frac{1}{2}\mathrm{tr}\left(D_{2}^{2}\tilde{f}\!\left(s^{*}\!,\gamma^{*}\right)\mathrm{\hat{v}}^{i}\!\left(\gamma^{*}\right)\mathrm{\hat{v}}^{i}\!\left(\gamma^{*}\right)^{\mathrm{T}}\right)\geq 0 
\end{equation*}

\par Recalling the definition of the vector-valued functions $a^{i}, \mathrm{v}^{i}$ along with the fact that in both functions the element corresponding to the state $\pi^{i}$ is identically zero, we can re-write the equation above equivalently as 
\begin{equation*}
  -D_{1}\hat{f}\!\left(s^{*}\!,y^{*}\right)-\left\langle D_{2}\hat{f}\!\left(s^{*}\!,y^{*}\right),a^{i}\!\left(y^{*}\right)x^{-i}\right\rangle- \frac{1}{2}\mathrm{tr}\left(D_{2}^{2}\hat{f}\!\left(s^{*}\!,{y}^{*}\right)\mathrm{{v}}^{i}\!\left({y}^{*}\right)\mathrm{{v}}^{i}\!\left({y}^{*}\right)^{\mathrm{T}}\right)\geq 0 
\end{equation*}
  
\par In view of the above, it follows that the function $\mathfrak{U}^{i}\!\left(s,\phi^{i}(q^{*}\!,\gamma);X^{-i}\right)$, and hence the function $V^{i}\!\left(s,\phi^{i}(q^{*}\!,\gamma);X^{-i}\right) = V^{i}\!\left(s,z;X^{-i}\right)$ is a viscosity supersolution to the following equation for a given $n\in \mathds{N}$
\begin{equation}\label{visc2}
\setlength{\jot}{7pt}
  -D_{1}\psi\!\left(s,y\right)+ n\left\lvert\left\langle D_{2}\psi\!\left(s,y\right),b^{i}\!\left(y\right)\right\rangle\right\rvert-\left\langle D_{2}\psi\!\left(s,y\right),a^{i}\!\left(y\right)x^{-i}\right\rangle -\frac{1}{2}\mathrm{tr}\left(D_{2}^{2}\psi\!\left(s,y\right)\mathrm{v}^{i}\!\left(y\right)\mathrm{v}^{i}\!\left(y\right)^{\mathrm{T}}\right)=0
\end{equation}

\par Additionally, we note that the terminal condition for the equation above is given as 
\begin{equation*}
V^{i}\!\left(T,\phi^{i}(q^{*}\!,\gamma);X^{-i}\right)\, =\, \sup_{m\, \in\, \mathds{R}}\, u^{i}\circ\phi^{i}\left(-m,\gamma\right)\, \geq\, u^{i}\!\left(W^{i}_{T}\right)
\end{equation*}

\par Next, we recall that $J^{i,n}$ is defined to be the value function of $i$th investor's truncated best-response problem, that is $J^{i,n}$ denotes the value function of $i$th investor's best-response problem when her control process $X^{i}$ is constrained to lie in $\mathcal{A}^{n}_{s}$. In view of this, it follows that $J^{i,n}$ must satisfy the viscosity subsolution property for the associated Hamilton\textendash Jacobi\textendash Bellman equation. Thus, if we let $\bar{J}^{i,\,n}$ denote the upper semi-continuous envelope of $J^{i,n}$, that is, $\bar{J}^{i,\,n}$ denotes the smallest upper semi-continuous majorant of $V^{i}$ (the existence of which follows in view of the fact that $J^{i,n}$ is bounded, as shown in \textcolor{azul-pesc}{\Cref{jconv}}, it follows that given $f \in \mathds{C}^{1,2}\big((0,T]\times\mathds{R}^{\lvert Y^{i}\rvert};\mathds{R}\big)$, such that $\bar{J}^{i,\,n} - f$ attains a global maximum at $\left(s^{*}\!,y^{*}\right)\in (0,T)\times\mathds{R}^{\lvert Y^{i}\rvert}$, we must have
\begin{align*}
  -D_{1} f\!\left(s,y\right)+\sup_{x\,\in\,\overline{\mathcal{B}_{n}(0)}}\Big\{\!\left\langle -D_{2} f\!\left(s,y\right),b^{i}\!\left(y\right) x\right\rangle\!\Big\} &-\left\langle D_{2} f\!\left(s,y\right),a^{i}\!\left(y\right)x^{-i}\right\rangle\\ &- \frac{1}{2}\mathrm{tr}\left(D_{2}^{2}f\!\left(s,y\right)\mathrm{v}^{i}\!\left(y\right)\mathrm{v}^{i}\!\left(y\right)^{\mathrm{T}}\right)\leq 0
\end{align*}

\par It is immediate in view of the above that $J^{i,n}$ is a viscosity subsolution to (\ref{visc2}), with terminal condition given by $J^{i,n}\!\left(T,y;X^{-i}\right) = u^{i}\!\left(W^{i}_{T}\right)$. Thus, recalling the comparison principle for viscosity solutions of second-order partial differential equations \cite[Theorem V.9.1]{fleming2006controlled} we conclude that given a scalar $q \in \Delta$, an initial time $s \in [0,T)$, and initial states $y^{i} \in \mathds{R}^{\lvert\,Y^{i}\rvert}$, $z^{i} \in \mathds{R}^{\lvert\,Z^{i}\rvert}$, such that $z^{i} = \big(y^{i}\backslash \pi^{i}_{s}\big)$ we have $J^{i,n}\!\left(s,y^{i};X^{-i}\right)\leq V^{i}\!\left(s,\phi^{i}(q,z^{i});X^{-i}\right)$. Further, by way of \textcolor{azul-pesc}{\Cref{jconv}}, and the invariance result for the auxiliary best-response value function established in \textcolor{azul-pesc}{\Cref{inv2}}, it is immediate from the above that we have $J^{i}\!\left(s,y^{i};X^{-i}\right)\leq V^{i}\!\left(s,z^{i};X^{-i}\right)$, whereupon the claim follows immediately.
\end{proof}

\section{Markov\textendash Nash Equilibrium}\label{optpo}

\par The first principal objective of this section is to show that under certain regularity conditions the optimal trajectories of the an investor's best-response problem and the associated auxiliary control problem coincide which facilitates analytical characterization of the best-response of an investor in the Merton\textendash Cournot stochastic differential game given that the auxiliary control problem is tractable by standard methods. 

\par Secondly, in the particular instance when the asset price volatility is constant, we obtain a closed-form expression for the Markov\textendash Nash equilibrium portfolios, which turn out to be deterministic. We also prove that the resulting deterministic Markov\textendash Nash equilibrium portfolios are unique if we limit attention to the class of c\`{a}dl\`{a}g strategies and discuss the role of imperfect competition in potentially explaining the excessive trading puzzle.

\par In the subsequent discussion, we work under an additional simplifying assumption, which entails little loss of generality, that given a deterministic initial time $s \in [0,T]$, we have $\pi^{i}_{t} = 0$, for $t < s$, $i \in \mathcal{I}$. This assumption serves to ensure that the price of the risky asset as well as the portfolio values of the two investors are independent of trading history prior to the initial time, conditional on initial holdings.

\subsection{Best\textendash Response Characterization}

\par In order to characterize the best-response of an investor, we first look to derive an optimal auxiliary control for investor $i$. Since the auxiliary control problem of investor $i$ is a standard optimal control problem by construction, we can in principle solve for the optimal auxiliary control by using either the Hamilton\textendash Jacobi\textendash Bellman approach, or the Stochastic Maximum Principle approach. However, in the following lemma we derive an optimal auxiliary control for investor $i$ through a direct verification argument, which is technically less onerous and serves to shorten the proof considerably.    

\begin{proposition}\label{opaux}
\par Let $\underline{\pi}$ and $\overline{\pi}$ denote the lower and upper bound of the state space $\Delta$ of the auxiliary control process $\pi^{i}$, respectively. Given a fixed initial time $0\leq s<T$, a deterministic initial auxiliary state $z^{i} \in \mathds{R}^{\lvert Z^{i}\rvert}$, an admissible strategy $X^{-i} \in \mathcal{A}_{s}$ for investor $-i$ we can characterize an optimal auxiliary strategy for investor $i$ which is denoted by $\Pi^{i,*}$ as
\begin{equation*}
    \begin{cases} 
    \pi^{i,*}_{t}\, =\,  \underline{\pi}\,\bigvee\Big(\!-\theta^{-i}x^{-i}_{t}/\delta^{i}\sigma^{2}\!\left(P^{i}_{t}+\theta^{i}\pi^{i,*}_{t}\right)\,\bigwedge\,\overline{\pi}\Big),\ t \in \left[s,T\right]\\
    \pi^{i,*}_{s^{-}}\, =\, 0 
    \end{cases}
\end{equation*}
\end{proposition}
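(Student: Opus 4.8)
The proof proceeds by a direct verification argument for the auxiliary control problem, which is a standard (non-singular) stochastic control problem by construction. The key observation is that because of the structure of the auxiliary state dynamics derived in \textcolor{azul-pesc}{\Cref{auxcp}}, the auxiliary portfolio value $\mathrm{w}^i_T$ admits an explicit representation as a stochastic integral in which the control $\pi^i_t$ enters \emph{pointwise}, with no cross-time interactions beyond those already absorbed into $P^i_t$. Concretely, from the dynamics $d\mathrm{w}^i_t = \pi^i_t\big(-\theta^{-i}x^{-i}_t\,dt + \sigma(P^i_t+\theta^i\pi^i_t)\,dB_t\big)$ together with the CARA utility $u^i(w) = -\tfrac{1}{\delta^i}e^{-\delta^i w}$, the objective $H^i(\pi^i;X^{-i}) = \mathds{E}[u^i(\mathrm{w}^i_T)]$ becomes an expectation of an exponential of an Itô integral. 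The plan is to (i) write $\mathrm{w}^i_T = \mathrm{w}^i_s + \int_s^T \pi^i_t(-\theta^{-i}x^{-i}_t)\,dt + \int_s^T \pi^i_t\,\sigma(P^i_t+\theta^i\pi^i_t)\,dB_t$; (ii) condition on the Brownian path and on $X^{-i}$ in a way that reduces the maximization of $\mathds{E}[u^i(\mathrm{w}^i_T)]$ to a pointwise-in-$(t,\omega)$ maximization of the local ``drift minus half the local variance times $\delta^i$'' expression $-\theta^{-i}x^{-i}_t\pi^i_t - \tfrac{1}{2}\delta^i(\pi^i_t)^2\sigma^2(P^i_t+\theta^i\pi^i_t)$; and (iii) recognize that the unconstrained maximizer of a quadratic $-\theta^{-i}x^{-i}_t p - \tfrac12\delta^i p^2\sigma^2(\cdot)$ in $p$ is exactly $p^* = -\theta^{-i}x^{-i}_t/(\delta^i\sigma^2(P^i_t+\theta^i\pi^{i,*}_t))$, and that projecting onto the compact interval $[\underline{\pi},\overline{\pi}]$ preserves optimality since the quadratic is concave in $p$. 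Step (iii) is why the formula is an implicit fixed-point in $\pi^{i,*}_t$: the argument of $\sigma$ involves $\pi^{i,*}_t$ itself through $P^i_t + \theta^i\pi^{i,*}_t = S_t$, so one should note that along the optimal trajectory $P^i_t + \theta^i\pi^{i,*}_t$ coincides with the actual price $S_t$ (from \eqref{auxnot}), which resolves the circularity.

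The rigorous verification template is as follows. First, I would argue that the candidate $\Pi^{i,*}$ defined by the stated clipping formula is well-defined (existence/uniqueness of the fixed point for each $(t,\omega)$), adapted, takes values in $\Delta$, and is left-continuous at $T$, hence lies in $\mathcal{A}^a_s$ — this uses boundedness of $\sigma$ and of $\Delta$, boundedness away from zero being unnecessary since the clip handles the degenerate case. Next, for an arbitrary competitor $\pi^i \in \mathcal{A}^a_s$, I would apply Itô's formula to $t\mapsto u^i(\mathrm{w}^i_t)$ times an appropriate exponential-martingale correction, or more directly examine the process $M_t := \mathds{E}\big[u^i(\mathrm{w}^i_T)\,\big|\,\mathcal{F}_t\big]$; the goal is to show that for the candidate control the relevant drift term vanishes while for any other control it is non-positive, which gives $H^i(\pi^i;X^{-i}) \le H^i(\Pi^{i,*};X^{-i})$. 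The cleanest route is the pointwise-concavity route: write, for any admissible $\pi^i$,
\begin{equation*}
\mathds{E}\big[u^i(\mathrm{w}^i_T)\big] = -\tfrac{1}{\delta^i}\,\mathds{E}\Big[\exp\Big(-\delta^i\mathrm{w}^i_s - \delta^i\!\!\int_s^T\!\!\pi^i_t\sigma(S_t)\,dB_t - \delta^i\!\!\int_s^T\!\!(-\theta^{-i}x^{-i}_t)\pi^i_t\,dt\Big)\Big],
\end{equation*}
insert and remove the compensator $\exp\big(\tfrac12(\delta^i)^2\int_s^T(\pi^i_t)^2\sigma^2(S_t)\,dt\big)$ to form a true exponential martingale $\mathcal E_t$, and then bound the remaining Lebesgue-integral term by its pointwise maximum. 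Because $-\theta^{-i}x^{-i}_t p + \tfrac12\delta^i p^2\sigma^2(S_t)$ (the term appearing in the exponent after this manipulation, with sign flipped relative to what we maximize) is convex in $p$ and minimized at $p^*_t$, the exponent is pointwise minimized by the candidate; combined with $\mathds{E}[\mathcal E_T]=1$ (which needs a Novikov-type check) this yields the inequality. The integrability conditions in \textcolor{azul-pesc}{\Cref{admcon}}(iii) are precisely what make the relevant exponential moments finite, so I would cite those to justify the martingale property and the interchange of expectation and supremum.

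The step I expect to be the main obstacle is the \textbf{verification that the exponential local martingale $\mathcal E_t$ is a true martingale} — i.e., controlling $\mathds{E}[\mathcal E_T] = 1$ — given that $\pi^i$ ranges over all of $\mathcal{A}^a_s$ and $x^{-i}$ over all of $\mathcal{A}_s$. The compactness of $\Delta$ and boundedness of $\sigma$ bound $\int_s^T(\pi^i_t)^2\sigma^2(S_t)\,dt$ deterministically, so Novikov's condition for the $dB$-integral part is trivial; the genuine subtlety is the coupling with the $x^{-i}$-dependent Lebesgue term and with $S_t$ (which depends on $x^{-i}$), and this is exactly where Definition~\ref{admcon}(iii), the joint exponential-moment condition $\mathds{E}[\exp\{m\int|x|\,dt\int|\hat x|\,dt\}]<\infty$, must be invoked to dominate the cross terms. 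A secondary obstacle is making precise the claim that $P^i_t + \theta^i\pi^{i,*}_t$ equals the realized price $S_t$ \emph{along the optimal trajectory} and hence the fixed-point in the formula is consistent; this follows from \eqref{auxnot} and the construction $Z^i_s = Y^i_s$, but should be stated carefully because the formula as written would otherwise be merely implicit. Finally, one must handle the boundary terminal value $\pi^{i,*}_{s^-}=0$, which is just the convention from \textcolor{azul-pesc}{\Cref{auxadmcon}}(iii) and \textcolor{azul-pesc}{\Cref{optpo}}'s standing assumption, and the clipping at $T$ via left-continuity; these are bookkeeping rather than substance.
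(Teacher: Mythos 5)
Your proposal is correct and follows essentially the same route as the paper's own proof: a direct verification argument in which the CARA objective is rewritten with the exponential compensator (the paper phrases this as a Girsanov change of measure with Radon--Nikodym density $\vartheta_T$, justified by Novikov's criterion using exactly the bounds you cite, compactness of $\Delta$ and boundedness of $\sigma$), after which monotonicity of the exponential reduces the problem to pointwise minimization of $\delta^{i}\theta^{-i}\pi\,x^{-i}_{t}+\tfrac{(\delta^{i})^{2}}{2}\pi^{2}\sigma^{2}\!\left(P^{i}_{t}+\theta^{i}\pi\right)$ over $\pi\in\Delta$, whose clipped minimizer is the stated $\Pi^{i,*}$. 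The only minor differences are cosmetic: the paper never needs the joint exponential-moment condition (iii) in the definition of $\mathcal{A}_{s}$ at this step (Novikov is already secured by the bounds above, since the $x^{-i}$-dependent Lebesgue term stays outside the density and is handled by positivity and monotonicity of the exponential), and it keeps the volatility argument as $P^{i}_{t}+\theta^{i}\pi^{i}_{t}$ throughout, so the identification with $S_{t}$ that you flag is not required for this proposition.
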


\begin{proof}
\par It is immediate in view of \textcolor{azul-pesc}{\Cref{admcon}} and \textcolor{azul-pesc}{\Cref{auxadmcon}} that $\Pi^{i,*}$ as defined in the statement of the theorem is admissible , that is, we have $\Pi^{i,*} \in \mathcal{A}^{a}_{s}$. 

\par Let $\mathrm{w}^{i,*}_{T}$ denote the terminal value of $i$th investor's auxiliary portfolio value process corresponding to auxiliary control $\Pi^{i,*}$. It then follows as a consequence of the definition of the auxiliary best-response value function $V^{i}$, see (\ref{auxvf}), as well as the auxiliary portfolio value process $\mathrm{w}^{i}$, see (\ref{astate}), that $V^{i}\!\left(s,z^{i};X^{-i}\right)\geq\mathds{E}\left[u^{i}(\mathrm{w}^{i,*}_{T})\right]$.

\par Thus, in order to prove the claim, it suffices to show that $V^{i}\!\left(s,z^{i};X^{-i}\right)\leq \mathds{E}\left[u^{i}(\mathrm{w}^{i,*}_{T})\right]$. To this end, note that given an admissible strategy $X^{-i} \in \mathcal{A}_{s}$ for investor $-i$ and the definition of $V^{i}$, see (\ref{auxvf}), it follows that we have
\begin{equation*}\label{vinf}
V^{i}\!\left(s,z^{i};X^{-i}\right)=-\inf\limits_{\Pi^{i}\,\in\,\mathcal{A}^{a}_{s}}\mathds{E}\left[\frac{1}{\delta^{i}}\exp\left(\delta^{i}\mathrm{w}^{i}_{s}+\delta^{i}\theta^{-i}\!\!\!\int^{T}_{s}\!\!\!\pi^{i}_{t}\,x^{-i}_{t}dt-\delta^{i}\!\!\int^{T}_{s}\!\!\!\pi^{i}_{t}\,\sigma\!\left(P^{i}_{t}+\theta^{i}\pi^{i}_{t}\right)dB_{t}\right)\right]
\end{equation*}

\par Next, we define the non-negative valued stochastic process $\left\{\vartheta_{t}\right\}_{t\,\in\,\left[s,T\right]}$ as
\begin{equation*}
\vartheta_{t}\,=\,\exp\left(-\delta^{i}\!\!\int^{t}_{s}\!\!\pi^{i}_{u}\,\sigma\!\left(P^{i}_{u}+\theta^{i}\pi^{i}_{u}\right)dB_{u}\,-\,\frac{\,(\delta^{i})^{2}}{2}\!\!\!\int^{t}_{s}\!\!\left(\pi^{i}_{u}\right)^{2}\!\sigma^{2}\!\left(P^{i}_{u}+\theta^{i}\pi^{i}_{u}\right)du\right)
\end{equation*}

\par Given that $\Pi^{i} \in \mathcal{A}^{a}_{s}$, we have $\pi^{i}_{t} \in \Delta$ by definition for $s\leq t\leq T$. It then follows that $\Pi^{i}$ is uniformly bounded $\mathds{P}$-almost surely, which in conjunction with the fact that the local volatility function $\sigma$ satisfies \textcolor{azul-pesc}{\Cref{siglip}} in turn implies that 
\begin{equation*}
\mathds{E}\left[\exp\left(\,\frac{\,(\delta^{i})^{2}}{2}\!\!\!\int^{T}_{s}\!\!\left(\pi^{i}_{u}\right)^{2}\!\sigma^{2}\!\left(P^{i}_{u}+\theta^{i}\pi^{i}_{u}\right)du\right)\right]<\infty
\end{equation*}

\par It follows in view of the above and Novikov's criterion \cite[Proposition VIII.1.15, Page 332]{revuz2013continuous} that the stochastic process $\left\{\vartheta_{t}\right\}_{t\,\in\,\left[s,T\right]}$ is a uniformly integrable $(\mathds{P},\mathds{F})$-martingale with $\mathds{E}[\vartheta_{T}] = \vartheta_{s} = 1$, and hence we can define a probability measure $\mathds{Q}\approx\mathds{P}$ by way of its Radon\textendash Nikodym derivative $d\mathds{Q}/d\mathds{P}$ as
\begin{equation*}
\frac{d\mathds{Q}}{d\mathds{P}}\,=\,\vartheta_{T}\,=\,\exp\left(-\delta^{i}\!\!\int^{T}_{s}\!\!\pi^{i}_{t}\,\sigma\!\left(P^{i}_{t}+\theta^{i}\pi^{i}_{t}\right)dB_{t}\,-\,\frac{\,(\delta^{i})^{2}}{2}\!\!\!\int^{T}_{s}\!\!\left(\pi^{i}_{t}\right)^{2}\!\sigma^{2}\!\left(P^{i}_{t}+\theta^{i}\pi^{i}_{t}\right)dt\right)
\end{equation*}

\par In view of the equation above and as a consequence of Radon\textendash Nikodym theorem it then follows that we have 
\begin{equation*}
-V^{i}\!\left(s,z^{i};X^{-i}\right)=\inf\limits_{\Pi^{i}\,\in\,\mathcal{A}^{a}_{s}}\mathds{E}^{\mathds{Q}}\left[\frac{1}{\delta^{i}}\exp\left(\delta^{i}\mathrm{w}^{i}_{s}+\delta^{i}\theta^{-i}\!\!\!\int^{T}_{s}\!\!\!\pi^{i}_{t}\,x^{-i}_{t}dt +\frac{(\delta^{i})^{2}}{2}\!\!\!\int^{T}_{s}\!\!\!\!\left(\pi^{i}_{t}\right)^{2}\!\sigma^{2}\!\left(P^{i}_{t}+\theta^{i}\pi^{i}_{t}\right)dt\right)\right]
\end{equation*}

\par Note that $\mathds{E}^{\mathds{Q}}$ in the equation above denotes expectation with respect to the probability measure $\mathds{Q}\approx \mathds{P}$ defined earlier. Further, we note that given the strict monotonicity of the exponential function the equation above leads us to
\begin{equation*}
-V^{i}\!\left(s,z^{i};X^{-i}\right)\geq\frac{\exp\left(\delta^{i}\mathrm{w}^{i}_{s}\right)}{\delta^{i}}\,\mathds{E}^{\mathds{Q}}\left[\exp\left(\,\inf\limits_{\Pi^{i}\,\in\,\mathcal{A}^{a}_{s}}\delta^{i}\theta^{-i}\!\!\!\int^{T}_{s}\!\!\!\pi^{i}_{t}\,x^{-i}_{t}dt +\frac{\,(\delta^{i})^{2}}{2}\!\!\!\int^{T}_{s}\!\!\left(\pi^{i}_{t}\right)^{2}\!\sigma^{2}\!\left(P^{i}_{t}+\theta^{i}\pi^{i}_{t}\right)dt\right)\right]
\end{equation*}

\par Given the equation above, we recall the definition of $\Pi^{i,*}$ in the statement of the theorem so as to obtain
\begin{equation*}
-V^{i}\!\left(s,z^{i};X^{-i}\right)\geq\frac{\exp\left(\delta^{i}\mathrm{w}^{i}_{s}\right)}{\delta^{i}}\,\mathds{E}^{\mathds{Q}}\left[\exp\left(\,\delta^{i}\theta^{-i}\!\!\!\int^{T}_{s}\!\!\!\pi^{i,*}_{t}\,x^{-i}_{t}dt +\frac{\,(\delta^{i})^{2}}{2}\!\!\!\int^{T}_{s}\!\!\big(\pi^{i,*}_{t}\big)^{2}\sigma^{2}\big(P^{i}_{t}+\theta^{i}\pi^{i,*}_{t}\big)dt\right)\right]
\end{equation*}

\par Given the definition of the auxiliary portfolio value, see $\mathrm{w}^{i}$ (\ref{astate}), and from Radon\textendash Nikodym theorem it then follows that the right-hand side of the equation above equals $-\mathds{E}\left[u^{i}(\mathrm{w}^{i,*}_{T})\right]$ whereupon the claim follows immediately.
\end{proof}

\par We next turn our attention to the key issue of analyzing the relationship between the optimal auxiliary control of investor $i$, characterized in the preceding proposition, and investor $i$'s best-response. In the following theorem, we prove that given an admissible strategy of investor $-i$, under certain regularity conditions the optimal auxiliary control of investor $i$ characterized in \textcolor{azul-pesc}{\Cref{opaux}} corresponds to investor $i$'s best-response.

\par Informally, if the optimal auxiliary control of investor $i$ corresponding to a given admissible strategy of investor $-i$ is sufficiently \textit{regular}, we can in principle derive the optimal portfolio holding as well as the optimal trading rate of investor $i$ through investor $i$'s optimal auxiliary control. The proof of the theorem relies on arguments underlying the construction of investor $i$'s auxiliary control problem, outlined in \textcolor{azul-pesc}{\Cref{auxcp}}, in addition to the exponential functional form of investor $i$'s utility function.

\begin{theorem}\label{trajeq1}
\par Consider a fixed initial time $s \in [0,T)$, a deterministic initial portfolio value $W^{i}_{s}$ and a deterministic initial portfolio holding $\pi^{i}_{s}$ for investor $i$. Suppose that $\Pi^{i,\,*} \in \mathcal{A}^{a}_{s}$ represents an optimal auxiliary control for investor $i$ for a given admissible strategy $X^{-i}\in \mathcal{A}_{s}$ of investor $i$ such that for $\mathds{P}$-almost every $\omega$ we have
\begin{enumerate}[(i)]
\item $\pi^{i,*}_{s^{-}}(\omega) = 0$, with $\pi^{i,*}_{s}(\omega) = \pi^{i}_{s}$
\item $\Pi^{i,\,*}\!\left(\omega\right)$ is absolutely continuous with respect to the Lebesgue measure on $(s,T]$ 
\item The derivative process $X^{i,\,*}\!\left(\omega\right)$ of $\Pi^{i,\,*}\!\left(\omega\right)$, defined as $x^{i,\,*}_{t}(\omega) = -\partial_{+} \pi^{i,\,*}_{t}(\omega)/\partial t$, for $s\leq t \leq T$ satisfies $$\mathds{E}\left[\exp{\left\{m\!\int_{s}^{T}\!\!\!\!\left\lvert x^{i,\,*}_{t}\right\rvert dt\int_{s}^{T}\!\!\!\!\left\lvert\hat{x}_{t}\right\rvert dt\right\}}\right] < \infty,\  \text{for all}\ \ m \in \mathds{R},\, \hat{X} = \left\{\hat{x}_{t}\right\}_{t\, \in\, [s,\,T]} \in \mathcal{A}_{s}$$
\end{enumerate}
\par Then, the portfolio value process $W^{i,\,\Pi^{i,\,*}}$, defined as the controlled portfolio value corresponding to portfolio process $\pi^{i}_{t} = \pi^{i,\,*}_{t}\!,\,$ for all $s\leq t \leq T\,$ represents an optimal trajectory for the best-response problem of investor $i$, that is, we have $J^{i}(s,y^{i};X^{-i}) = \mathds{E}\big[u^{i}(W^{i,\,\Pi^{i,\,*}}_{T})\big]$.
\end{theorem}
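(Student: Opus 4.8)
The plan is to combine the equivalence result of \Cref{vinv} with the optimality of the auxiliary control $\Pi^{i,*}$ from \Cref{opaux}, by showing that the admissible strategy obtained from $\Pi^{i,*}$ through differentiation realises the auxiliary value $V^{i}$ \emph{inside} the original best-response problem. As a first step I would verify that the derivative process $X^{i,*}$, with $x^{i,*}_{t}=-\partial_{+}\pi^{i,*}_{t}/\partial t$, belongs to $\mathcal{A}_{s}$: hypothesis (ii) makes $x^{i,*}$ well defined $\mathrm{d}t\otimes\mathds{P}$-almost everywhere on $(s,T]$ and, together with hypothesis (i), ensures that the portfolio process generated by $X^{i,*}$ through \eqref{control} from the initial holding $\pi^{i}_{s}$, namely $t\mapsto\pi^{i}_{s}-\int_{s}^{t}x^{i,*}_{u}\,\mathrm{d}u$, coincides with $\pi^{i,*}_{t}$ for every $t\in[s,T]$ (the fundamental theorem of calculus for absolutely continuous paths). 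Adaptedness and the requirement $x^{i,*}_{r}=0$ for $r<s$ are inherited from $\Pi^{i,*}$, while hypothesis (iii) is exactly condition (iii) of \Cref{admcon}; hence $X^{i,*}\in\mathcal{A}_{s}$, and in particular $\mathds{E}\big[u^{i}\big(W^{i,\,\Pi^{i,*}}_{T}\big)\big]\leq J^{i}(s,y^{i};X^{-i})$ by the very definition of $J^{i}$ as a supremum.

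The crux is then to identify the terminal wealth with the auxiliary terminal wealth. Running the original state dynamics \eqref{control}--\eqref{portval} with $\pi^{i}\equiv\pi^{i,*}$, I would apply It\^o's Lemma to the flow-invariant combination $W^{i}-\tfrac{\theta^{i}}{2}(\pi^{i})^{2}$ underlying the construction of the auxiliary state in \Cref{auxcp}: the $x^{i,*}$-contributions coming from $\pi^{i}\,\mathrm{d}S$ and from $-\theta^{i}\pi^{i}\,\mathrm{d}\pi^{i}$ cancel, leaving precisely the $\mathrm{w}^{i}$-component of the auxiliary state dynamics \eqref{astate} driven only by $x^{-i}$ and $B$. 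Matching initial data via the identification $z^{i}=y^{i}\backslash\pi^{i}_{s}$ of \Cref{vinv} and hypothesis (i), this shows that $W^{i,\,\Pi^{i,*}}_{T}$ coincides with the auxiliary terminal portfolio value $\mathrm{w}^{i,*}_{T}$ produced by $\Pi^{i,*}$ as in \Cref{opaux}, $\mathds{P}$-almost surely. The exponential integrability in hypothesis (iii), combined with $\Pi^{i,*}$ being $\Delta$-valued and \Cref{siglip}, guarantees that the stochastic integrals involved are true martingales and that the relevant expectations are finite, so no spurious boundary terms survive.

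Chaining \Cref{vinv}, \Cref{opaux} and the previous step,
\[
J^{i}\big(s,y^{i};X^{-i}\big)\;=\;V^{i}\big(s,z^{i};X^{-i}\big)\;=\;\mathds{E}\big[u^{i}\big(\mathrm{w}^{i,*}_{T}\big)\big]\;=\;\mathds{E}\big[u^{i}\big(W^{i,\,\Pi^{i,*}}_{T}\big)\big],
\]
which together with the inequality recorded in the first paragraph yields the claimed identity, so the trajectory $W^{i,\,\Pi^{i,*}}$ is optimal for the best-response problem.

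The delicate point is the identification step: one must check that the frictionless ``flow transport'' defining the auxiliary terminal wealth $\mathrm{w}^{i,*}_{T}$ (including the inner supremum over $q$ in \eqref{auxvf}) is genuinely \emph{attained}, not merely dominated, by the original-problem trajectory generated by $X^{i,*}$ --- which is exactly where the regularity hypotheses (i)--(iii) are indispensable, since they rule out the auxiliary optimiser $\Pi^{i,*}$ being singular (a genuine jump after time $s$, or a trading rate failing the integrability of \Cref{admcon}) and hence not implementable within $\mathcal{A}_{s}$. The attendant initial-time bookkeeping --- the endowment transition $\pi^{i,*}_{s^{-}}=0$ to $\pi^{i,*}_{s}=\pi^{i}_{s}$, which is not a trade --- and the verification of the martingale property of the It\^o integrals under \Cref{siglip} and hypothesis (iii) are the remaining technical hurdles.
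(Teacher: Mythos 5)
Your first step (admissibility of $X^{i,*}$ via hypothesis (iii) and \Cref{admcon}, giving $\mathds{E}\big[u^{i}(W^{i,\,\Pi^{i,*}}_{T})\big]\leq J^{i}(s,y^{i};X^{-i})$) is correct and is exactly how the paper opens its proof, and your intended closing chain through \Cref{vinv} and the optimality of $\Pi^{i,*}$ is also the paper's route. The genuine gap is at the step you yourself flag as delicate: the claimed $\mathds{P}$-a.s. identity $W^{i,\,\Pi^{i,*}}_{T}=\mathrm{w}^{i,\,\Pi^{i,*}}_{T}$ is false in general. It\^o applied to $W^{i}_{t}-\tfrac{\theta^{i}}{2}(\pi^{i,*}_{t})^{2}$ does cancel the $x^{i,*}$ terms, but the initial data do \emph{not} match: by \Cref{vinv} the auxiliary problem is initialised at $z^{i}=y^{i}\backslash\pi^{i}_{s}$, i.e.\ $\mathrm{w}^{i}_{s}=W^{i}_{s}$ and $P^{i}_{s}=S_{s}$, whereas your transformed process starts at $W^{i}_{s}-\tfrac{\theta^{i}}{2}(\pi^{i}_{s})^{2}$. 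Hypothesis (i) fixes $\pi^{i,*}_{s^{-}}=0$ and $\pi^{i,*}_{s}=\pi^{i}_{s}$, but it does not make $\pi^{i}_{s}=0$ (indeed \Cref{mnec} requires non-zero initial holdings), and nothing forces $\pi^{i,*}_{T}=0$. Even granting the identification of the diffusion coefficients, what the cancellation actually yields is $W^{i,\,\Pi^{i,*}}_{T}=\mathrm{w}^{i,\,\Pi^{i,*}}_{T}+\tfrac{\theta^{i}}{2}\big[(\pi^{i,*}_{T})^{2}-(\pi^{i}_{s})^{2}\big]$, a sign-indefinite, generically non-zero correction; moreover, for non-constant $\sigma$ the processes $S_{t}$ and $P^{i}_{t}+\theta^{i}\pi^{i,*}_{t}$ solve the same SDE from initial values offset by $\theta^{i}\pi^{i}_{s}$, so $\sigma(S_{t})$ and $\sigma(P^{i}_{t}+\theta^{i}\pi^{i,*}_{t})$ need not even agree pathwise. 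Hence your final equality $\mathds{E}\big[u^{i}(\mathrm{w}^{i,*}_{T})\big]=\mathds{E}\big[u^{i}(W^{i,\,\Pi^{i,*}}_{T})\big]$, and with it the conclusion, does not follow as written.

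The paper never asserts coincidence of the two terminal wealths. Instead it writes $W^{i,\,\Pi^{i,*}}_{T}$ out explicitly, uses strict monotonicity of $u^{i}$ to discard the non-negative term $\tfrac{\theta^{i}}{2}(\pi^{i,*}_{T})^{2}$, and then, exploiting the exponential form of $u^{i}$ and the sign of $(\pi^{i,*}_{s})^{2}$, derives the one-sided bound $\mathds{E}\big[u^{i}(W^{i,\,\Pi^{i,*}}_{T})\big]\geq V^{i}(s,z^{i};X^{-i})$, which together with \Cref{vinv} and the easy direction closes the argument. To repair your write-up you must replace the exact ``identification step'' by this one-sided comparison, keeping explicit track of the $(\pi^{i}_{s})^{2}$ and $(\pi^{i,*}_{T})^{2}$ corrections and of the initial-value offset between $S$ and $P^{i}+\theta^{i}\pi^{i,*}$; asserting that the original trajectory attains the auxiliary value pathwise overstates what the regularity hypotheses (i)--(iii) deliver.
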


\begin{proof}
\par Suppose $\Pi^{i,\,*} \in \mathcal{A}^{a}_{s}$ denotes an optimal auxiliary control for investor $i$, for a given admissible strategy $X^{-i} \in \mathcal{A}_{s}$ of investor $-i$, satisfying the desiderata in the statement of the theorem, and let $W^{i,\,\Pi^{i,\,*}}\!$ denote the corresponding portfolio value process as defined in the statement of the theorem. 

\par It then follows as a consequence of \textcolor{azul-pesc}{\Cref{admcon}} that $X^{i,\, *} = \{x^{i,\,*}_{t}\}_{t\, \in\, [s,\,T]}$ as defined in the statement of the theorem is an admissible strategy for investor $i$, that is, we have $X^{i,\, *} \in \mathcal{A}_{s}$. Also, it is immediate from the above that $W^{i,\,X^{i,\,*}}_{T}\!\!=W^{i,\,\Pi^{i,\,*}}_{T}\!\!$ which in conjunction with the definition of $J^{i}$ then implies that 
\[J^{i}\big(s,y^{i};X^{-i}\big)\geq \mathds{E}\big[u^{i}(W^{i,\,X^{i,\,*}}_{T})\big]=\mathds{E}\big[u^{i}(W^{i,\,\Pi^{i,\,*}}_{T})\big]\]

\par To prove the converse, we recall (\ref{stated}), from which it immediately follows that given an optimal auxiliary control $\Pi^{i,*} \in \mathcal{A}^{a}_{s}$ and an admissible strategy $X^{-i} \in \mathcal{A}^{a}_{s}$ for investor $-i$, the terminal value of the stochastic process $W^{i,\,\Pi^{i,*}}\!$ as defined in the statement of the theorem can be written as
\begin{equation*}
W^{i,\,\Pi^{i,\,*}}_{T}\!=W^{i}_{s}+\frac{\,\theta^{i}}{2}\left[\big(\pi^{i,\,*}_{T}\big)^{2}-\big(\pi^{i,\,*}_{s}\big)^{2}\right]-\theta^{-i}\!\!\!\int^{T}_{s}\!\!\!\pi^{i,\,*}_{t}x^{-i}_{t}dt+\,\!\!\int^{T}_{s}\!\!\!\pi^{i,\,*}_{t}\sigma\!\left(S_{t}\right)dB_{t}
\end{equation*}

\par Also, given that the utility function $u^{i}$ of investor $i$ is strictly increasing it follows from the equation above that we have 
\begin{equation*}
\mathds{E}\big[u^{i}(W^{i,\,\Pi^{i,\,*}}_{T})\big]\geq\mathds{E}\left[u^{i}\!\left(W^{i}_{s}-\frac{\,\theta^{i}}{2}\big(\pi^{i,\,*}_{s}\big)^{2}-\theta^{-i}\!\!\!\int^{T}_{s}\!\!\!\pi^{i,\,*}_{t}x^{-i}_{t}dt+\,\!\!\int^{T}_{s}\!\!\!\pi^{i,\,*}_{t}\sigma\!\left(S_{t}\right)dB_{t}\right)\right]
\end{equation*}

\par Further, in view of the assumption regarding the functional form of the utility function $u^{i}$ we can rewrite the equation above as
\begin{equation*}
\mathds{E}\big[u^{i}(W^{i,\,\Pi^{i,\,*}}_{T})\big]\geq-\frac{1}{\delta^{i}}\,\mathds{E}\left[\exp\left(-\delta^{i}\left(W^{i}_{s}-\frac{\,\theta^{i}}{2}\big(\pi^{i,\,*}_{s}\big)^{2}-\theta^{-i}\!\!\!\int^{T}_{s}\!\!\!\pi^{i,\,*}_{t}x^{-i}_{t}dt+\,\!\!\int^{T}_{s}\!\!\!\pi^{i,\,*}_{t}\sigma\!\left(S_{t}\right)dB_{t}\right)\right)\right]
\end{equation*}

\par Next, recall from \textcolor{azul-pesc}{\Cref{auxcp}} that the initial value of $i$th investor's auxiliary portfolio process $\mathrm{w}^{i}_{s}$ equals $W^{i}_{s}$ by construction, and that by definition $S_{t} = P^{i}_{t}+\theta^{i}\pi^{i}_{t}$. Thus, given that $\Pi^{i,\,*}\!$ represents an optimal auxiliary control, given $X^{-i} \in \mathcal{A}_{s}$, it follows in view of (\ref{astate}) that the optimal terminal auxiliary portfolio value denoted by $\mathrm{w}^{i,\,\Pi^{i,\,*}}_{T}\!$ equals
\begin{equation*}
\mathrm{w}^{i,\,\Pi^{i,\,*}}_{T}\!=\,W^{i}_{s}\,-\,\theta^{-i}\!\!\!\int^{T}_{s}\!\!\!\pi^{i,\,*}_{t}x^{-i}_{t}dt+\!\int^{T}_{s}\!\!\!\pi^{i,\,*}_{t}\sigma\!\left(P^{i}_{t}+\theta^{i}\pi^{i,\,*}_{t}\right)dB_{t}
\end{equation*}

\par Therefore, given the equation above and in view of the fact that $\big(\pi^{i,\,*}_{s}\big)^{2}\geq 0$ we have
\begin{equation*}
\mathds{E}\big[u^{i}(W^{i,\,\Pi^{i,\,*}}_{T})\big]\geq-\frac{1}{\delta^{i}}\,\mathds{E}\left[\exp\!\left(\frac{\delta^{i}\theta^{i}}{2}\big(\pi^{i,\,*}_{s}\big)^{2}\right)\exp\!\left(-\delta^{i}\mathrm{w}^{i,\,\Pi^{i,\,*}}_{T}\right)\right]\geq-\frac{1}{\delta^{i}}\,\mathds{E}\left[\exp\!\left(-\delta^{i}\mathrm{w}^{i,\,\Pi^{i,\,*}}_{T}\right)\right]
\end{equation*}

\par Further, since $\Pi^{i,\,*} \in \mathcal{A}^{a}_{s}$ defines an optimal auxiliary control for investor $i$, it follows from the second inequality above that we have
\begin{equation*}
\mathds{E}\big[u^{i}(W^{i,\,\Pi^{i,\,*}}_{T})\big]\geq V^{i}\!\left(s,z^{i};X^{-i}\right)
\end{equation*}

\par Given the equation above as well as the equivalence result established in \textcolor{azul-pesc}{\Cref{vinv}}, it follows that we have $J^{i}(s,y^{i};X^{-i}) \leq \mathds{E}\big[u^{i}(W^{i,\,\Pi^{i,\,*}}_{T})\big]$ which completes the proof.
\end{proof}

\par From the preceding results, we see that the best-response of investor $i$, given an admissible strategy $X^{-i} \in \mathcal{A}_{s}$ of investor $-i$ can be characterized in terms of two distinct \textit{continuation} regions and a single \textit{control} region. The first continuation region is defined by the inequality $-\theta^{-i}x^{-i}_{t}/\delta^{i}\sigma^{2}\!\left(P^{i}_{t}+\theta^{i}\pi^{i}_{t}\right)\geq\overline{\pi}$ which corresponds to aggressive buying on part of investor $-i$. Due to this, there is an upward pressure on the drift of the price of the risky asset on account of investor $-i$'s price impact, which in turn drives up expected returns associated with the risky asset and acts as a \textit{positive externality} for investor $i$'s investment opportunity set, thereby inducing investor $i$ to optimally hold a long position of $\overline{\pi}$ in the risky asset. 

\par As a result, when the trading of investor $-i$ corresponds to the first continuation region, the portfolio process of investor $i$ jumps instantaneously to the upper bound $\overline{\pi}$, and subsequently the trading rate of investor $i$ drops down to zero. Thus, the best-response of investor $i$ is given by a \textit{barrier strategy} in the sense of \cite{kwon2015game}, with $\overline{\pi}$ acting as a barrier for the portfolio process, which can be absorbing or reflecting depending upon investor $-i$'s trading rate process $X^{-i} \in \mathcal{A}_{s}$.

\par The second continuation region is defined by the inequality $-\theta^{-i}x^{-i}_{t}/\delta^{i}\sigma^{2}\!\left(P^{i}_{t}+\theta^{i}\pi^{i}_{t}\right)\leq\underline{\pi}$ that corresponds to aggressive selling of the risky asset on part of investor $-i$, which leads to a downward pressure on the drift of the price of the risky asset due to investor $-i$'s price impact. This effect is responsible for driving down expected returns associated with the risky asset and generating a \textit{negative externality} on investor $i$'s investment opportunity set, which in turn leads to investor $i$ optimally holding a short position of $\underline{\pi}$ in the risky asset. 

\par Consequently, if the trading rate of investor $-i$ corresponds to the second continuation region, the portfolio process of investor $i$ jumps instantaneously to the lower bound $\underline{\pi}$, and as in the case of the first continuation region the trading rate of investor $i$ drops down to zero. The best-response of investor $i$ is again given by a barrier strategy in this case, with $\underline{\pi}$ serving as either an absorbing or a reflecting barrier for the portfolio process, depending upon investor $-i$'s trading rate process $X^{-i} \in \mathcal{A}_{s}$.

\par The lone control region for the best-response of investor $i$, given an admissible strategy $X^{-i} \in \mathcal{A}_{s}$ of investor $-i$ is defined by the inequality $\underline{\pi}<-\theta^{-i}x^{-i}_{t}/\delta^{i}\sigma^{2}\!\left(P^{i}_{t}+\theta^{i}\pi^{i}_{t}\right)<\overline{\pi}$ which signifies moderate trading behaviour by investor $-i$. In the control region, the nature of externality on investor $i$'s investment opportunity set on account of investor $-i$'s price impact is determined by the sign of $x^{-i}_{t}$.

\par In other words, the distinguishing feature of investor $i$'s control region is that the externality imposed on the investor $i$'s investment opportunity set due to investor $-i$'s trading can be positive or negative. Hence, if the trading rate of investor $-i$ corresponds to the control region, investor $i$'s \textit{best-response} portfolio process jumps instantaneously to $-\theta^{-i}x^{-i}_{s}/\delta^{i}\sigma^{2}\!\left(S_{s}\right)$ at the initial time $s$ and follows the trajectory defined by $\pi^{i,*}_{t}=-\theta^{-i}x^{-i}_{t}/\delta^{i}\sigma^{2}\!\left(S_{t}\right)$ thereafter, with investor $i$'s trading rate at time $t \in [s,T]$ given by $d\pi^{i,*}_{t}$.

\par Two remarks are in order here - first, note that although \textcolor{azul-pesc}{\Cref{opaux}} specifies an optimal auxiliary control for investor $i$, it does not imply that the specified optimal auxiliary control is unique. Second, while \textcolor{azul-pesc}{\Cref{trajeq1}} does provide seemingly mild set of criteria under which an optimal auxiliary control corresponds to a best-response, it is agnostic about potential multiplicity of best-responses for a given $X^{-i} \in \mathcal{A}_{s}$. Thus, in view of these limitations, the characterization of investor $i$'s best-response furnished by \textcolor{azul-pesc}{\Cref{opaux}} and \textcolor{azul-pesc}{\Cref{trajeq1}} remains partial at best. 

\par The following theorem, which serves as a partial converse to \textcolor{azul-pesc}{\Cref{trajeq1}}, addresses this issue by proving that under certain conditions investor $i$'s best-response to a given admissible strategy $X^{-i} \in \mathcal{A}_{s}$ of investor $-i$, if it exists, corresponds to an optimal auxiliary control for investor $i$. Thus, uniqueness of investor $i$'s optimal auxiliary control for a given admissible strategy $X^{-i} \in \mathcal{A}_{s}$ of investor $-i$ should in turn imply the uniqueness of investor $i$'s best-response to $X^{-i}$ and vice-versa, so long as the hypotheses of \textcolor{azul-pesc}{\Cref{trajeq1}} and \textcolor{azul-pesc}{\Cref{trajeq2}} are satisfied.

\begin{theorem}\label{trajeq2}
\par Consider a fixed initial time $s \in [0,T)$, a deterministic initial portfolio value $W^{i}_{s}$ and a deterministic initial portfolio holding $\pi^{i}_{s} \in \Delta$ for investor $i$. Suppose that $X^{i,\,*} \in \mathcal{A}_{s}$ represents a best-response strategy for $i$th investor, given an admissible strategy $X^{-i}\in \mathcal{A}_{s}$ for investor $-i$ such that for $s\leq t \leq T$,
\[\pi^{i,\,*}_{t}\, =\,\pi^{i}_{s}-\mathlarger{\int}_{s}^{t}\!\!\! x^{i,\,*}_{u} du\, \in\, \Delta,\ \mathds{P}-\text{almost surely}\]
\par Then, the auxiliary portfolio value process $\mathrm{w}^{i,\,\Pi^{i,\,*}}$ which is defined as the controlled process corresponding to auxiliary control $\pi^{i}_{t} = \pi^{i,\,*}_{t}\!$, for all $s\leq t \leq T$, determines an optimal trajectory for the auxiliary control problem of investor $i$, that is, we have $V^{i}(s,z^{i};X^{-i}) = \mathds{E}\big[u^{i}(\mathrm{w}^{i,\,\Pi^{i,\,*}}_{T})\big]$.
\end{theorem}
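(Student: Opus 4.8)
The plan is to leverage the equivalence theorem (\Cref{vinv}) together with the optimality of $X^{i,*}$ for the singular problem, so that the claim reduces to a pathwise comparison of the terminal portfolio value $W^{i,X^{i,*}}_T$ of the singular best-response problem and the terminal auxiliary portfolio value $\mathrm{w}^{i,\Pi^{i,*}}_T$, both evaluated along the candidate control $\Pi^{i,*}$ obtained by setting $\pi^i_t=\pi^{i,*}_t$ on $[s,T]$. The proof splits into the trivial inequality and its (harder) converse, mirroring the structure of the proof of \textcolor{azul-pesc}{\Cref{trajeq1}} in reverse.

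First I would verify admissibility: the process $\Pi^{i,*}=\{\pi^{i,*}_t\}$, extended by $\pi^{i,*}_r=0$ for $r<s$, belongs to $\mathcal{A}^a_s$ by \textcolor{azul-pesc}{\Cref{auxadmcon}}, since it is $\mathds{F}$-adapted because $X^{i,*}\in\mathcal{A}_s$ and $\pi^{i,*}_t=\pi^i_s-\int_s^t x^{i,*}_u\,du$, it is $\Delta$-valued by hypothesis, and it is (absolutely) continuous on $[s,T]$, hence left-continuous at $T$. Consequently $\Pi^{i,*}$ is a feasible control for the auxiliary problem, so directly from the definition of $V^i$ in \eqref{auxvf} as a supremum we get $V^i(s,z^i;X^{-i})\ge H^i(\Pi^{i,*};X^{-i})=\mathds{E}\big[u^i(\mathrm{w}^{i,\Pi^{i,*}}_T)\big]$.

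For the reverse inequality I would proceed as follows. By \textcolor{azul-pesc}{\Cref{vinv}}, $V^i(s,z^i;X^{-i})=J^i(s,y^i;X^{-i})$ (recall $z^i=y^i\backslash\pi^i_s$), and since $X^{i,*}$ is a best-response, $J^i(s,y^i;X^{-i})=U^i(X^{i,*};X^{-i})=\mathds{E}\big[u^i(W^{i,X^{i,*}}_T)\big]$, where $W^{i,X^{i,*}}$ is the controlled portfolio value of the singular problem under $(X^{i,*},X^{-i})$. Integrating \eqref{portval} against \eqref{price} and using $d\pi^{i,*}_t=-x^{i,*}_t\,dt$ (so that $\pi^{i,*}_t x^{i,*}_t\,dt=-\tfrac12\,d(\pi^{i,*}_t)^2$ and no quadratic-variation term appears), one obtains an expression for $W^{i,X^{i,*}}_T$ consisting of $W^i_s$, the cross term $-\theta^{-i}\int_s^T\pi^{i,*}_t x^{-i}_t\,dt$, the stochastic integral $\int_s^T\pi^{i,*}_t\sigma(S_t)\,dB_t$, and the price-impact correction $\tfrac{\theta^i}{2}(\pi^{i,*}_T)^2$. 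On the other hand, feeding $\pi^i_t=\pi^{i,*}_t$ into the auxiliary dynamics \eqref{astate} — and using the identity $S_t=P^i_t+\theta^i\pi^{i,*}_t$, which holds precisely because along $\Pi^{i,*}$ the auxiliary control coincides with the portfolio process (this is the design principle of \textcolor{azul-pesc}{\Cref{auxcp}}) — yields the same expression for $\mathrm{w}^{i,\Pi^{i,*}}_T$ save for the correction term, so that $W^{i,X^{i,*}}_T$ and $\mathrm{w}^{i,\Pi^{i,*}}_T$ differ only by that $O\big((\pi^{i,*}_T)^2\big)$ quantity. Combining this with the CARA form of $u^i$ (which turns additive shifts of the argument into multiplicative factors) and the sign of $\theta^i$, exactly as in the proof of \textcolor{azul-pesc}{\Cref{trajeq1}}, I would conclude $\mathds{E}\big[u^i(W^{i,X^{i,*}}_T)\big]\le\mathds{E}\big[u^i(\mathrm{w}^{i,\Pi^{i,*}}_T)\big]$, i.e. $V^i(s,z^i;X^{-i})\le\mathds{E}\big[u^i(\mathrm{w}^{i,\Pi^{i,*}}_T)\big]$; together with the easy inequality this gives the asserted equality.

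The main obstacle I expect is the careful bookkeeping in the comparison step: one must (i) justify $S_t=P^i_t+\theta^i\pi^{i,*}_t$ along the candidate, which requires the initialisation $Z^i_s=Y^i_s\backslash\pi^i_s$ of \eqref{astate} together with the standing convention $\pi^i_r=0$ for $r<s$ from \textcolor{azul-pesc}{\Cref{optpo}} to absorb the $(\pi^i_s)^2$ boundary term; and (ii) get the direction of the resulting inequality right after inserting the exponential utility, so that the nonnegative price-impact correction is dropped in the favourable direction — which is the same delicate point that arises in \textcolor{azul-pesc}{\Cref{trajeq1}} and is handled in the identical way here.
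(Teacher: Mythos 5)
Your proposal follows essentially the same route as the paper's proof: admissibility of $\Pi^{i,*}$ under \textcolor{azul-pesc}{\Cref{auxadmcon}} gives $V^{i}(s,z^{i};X^{-i})\geq \mathds{E}\big[u^{i}(\mathrm{w}^{i,\,\Pi^{i,\,*}}_{T})\big]$, and the converse is obtained by combining the equivalence of \textcolor{azul-pesc}{\Cref{vinv}} with best-response optimality and a pathwise comparison of $W^{i,\,\Pi^{i,\,*}}_{T}$ and $\mathrm{w}^{i,\,\Pi^{i,\,*}}_{T}$, which differ by $\tfrac{\theta^{i}}{2}\big[(\pi^{i,\,*}_{T})^{2}-(\pi^{i,\,*}_{s})^{2}\big]$ and are reconciled through monotonicity of $u^{i}$ and the exponential (CARA) factorization exactly as in \textcolor{azul-pesc}{\Cref{trajeq1}}. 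One minor correction: the $(\pi^{i,\,*}_{s})^{2}$ boundary term is not ``absorbed'' by the convention $\pi^{i}_{r}=0$ for $r<s$ (which does not force $\pi^{i}_{s}=0$); the paper instead simply discards the non-negative quantity $\tfrac{\theta^{i}}{2}(\pi^{i,\,*}_{s})^{2}$ using that $u^{i}$ is increasing, a cosmetic difference that does not alter your argument.
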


\begin{proof}
\par Suppose $X^{i,\,*} \in \mathcal{A}_{s}$ denotes an optimal best-response for investor $i$, given an admissible strategy $X^{-i} \in \mathcal{A}_{s}$ for investor $-i$, satisfying the desiderata in the statement of the theorem, and let $\mathrm{w}^{i,\,\Pi^{i,\,*}}\!$ denote the corresponding auxiliary portfolio value process as defined in the statement of the theorem. 

\par It follows as a consequence of \textcolor{azul-pesc}{\Cref{auxadmcon}} that $\Pi^{i,\, *}\! = \{\pi^{i,\,*}_{t}\}_{t\, \in\, [s,\,T]}$ as defined in the statement of the theorem, is an admissible control for the auxiliary control problem of investor $i$, that is, we have $\Pi^{i,\, *} \in \mathcal{A}^{a}_{s}$ which in conjunction with the definition of $V^{i}$ then implies that 
\[V^{i}\!\left(s,z^{i};X^{-i}\right)\geq\mathds{E}\big[u^{i}(\mathrm{w}^{i,\,\Pi^{i,\,*}}_{T})\big]\]

\par To prove the converse, we recall (\ref{astate}), which implies that given a best-response strategy $X^{i,\,*} \in \mathcal{A}_{s}$ or equivalently $\Pi^{i,\,*}$ for investor $i$ given an admissible strategy $X^{-i} \in \mathcal{A}^{a}_{s}$ for investor $-i$, the terminal value of the stochastic process $\mathrm{w}^{i,\,\Pi^{i,*}}\!$ as defined in the statement of the theorem can be written as
\begin{align*}
\mathrm{w}^{i,\,\Pi^{i,\,*}}_{T}\!=\,&\,\mathrm{w}^{i}_{s}\,-\,\theta^{-i}\!\!\!\int^{T}_{s}\!\!\!\pi^{i,\,*}_{t}x^{-i}_{t}dt+\,\!\!\int^{T}_{s}\!\!\!\pi^{i,\,*}_{t}\sigma\!\left(P^{i}_{t}+\theta^{i}\pi^{i,\,*}_{t}\right)dB_{t}\\
=\,&\,\mathrm{w}^{i}_{s}\,-\,\theta^{-i}\!\!\!\int^{T}_{s}\!\!\!\pi^{i,\,*}_{t}x^{-i}_{t}dt+\,\!\!\int^{T}_{s}\!\!\!\pi^{i,\,*}_{t}\sigma\!\left(P^{i}_{t}+\theta^{i}\pi^{i,\,*}_{t}\right)dB_{t}\,+\,\frac{\,\theta^{i}}{2}\left[\big(\pi^{i,\,*}_{T}\big)^{2}-\,\big(\pi^{i,\,*}_{s}\big)^{2}\right]\\
&\,-\,\frac{\,\theta^{i}}{2}\left[\big(\pi^{i,\,*}_{T}\big)^{2}-\,\big(\pi^{i,\,*}_{s}\big)^{2}\right]\,
\end{align*}

\par Given that the utility function $u^{i}$ of investor $i$ is strictly increasing, it follows from the equation above that we have 
\begin{equation}\label{traj1}
\mathds{E}\big[u^{i}(\mathrm{w}^{i,\,\Pi^{i,\,*}}_{T})\big]\geq\mathds{E}\left[u^{i}\!\left(\mathrm{w}^{i,\,\Pi^{i,\,*}}_{T}\!+\frac{\,\theta^{i}}{2}\left[\big(\pi^{i,\,*}_{T}\big)^{2}-\big(\pi^{i,\,*}_{s}\big)^{2}\right]-\frac{\,\theta^{i}}{2}\big(\pi^{i,\,*}_{T}\big)^{2}\right)\right]
\end{equation}

\par Recall from \textcolor{azul-pesc}{\Cref{auxcp}} that $\mathrm{w}^{i}_{s}$ the initial value of $i$th investor's auxiliary portfolio process equals $W^{i}_{s}$ by construction and that by definition we have $S_{t} = P^{i}_{t}+\theta^{i}\pi^{i}_{t}$. Further, given that $X^{i,\,*}\!$ represents an optimal best-response for investor $i$, given an admissible strategy $X^{-i} \in \mathcal{A}_{s}$ for investor $-i$ it follows in view of (\ref{stated}) that the optimal terminal portfolio value denoted by $W^{i,\,\Pi^{i,\,*}}_{T}\!$ equals
\begin{align*}
W^{i,\,\Pi^{i,\,*}}_{T}\!=\,&\,W^{i}_{s}\,+\frac{\,\theta^{i}}{2}\!\left[\big(\pi^{i,\,*}_{T}\big)^{2}\!-\big(\pi^{i,\,*}_{s}\big)^{2}\right]-\theta^{-i}\!\!\!\int^{T}_{s}\!\!\!\pi^{i,\,*}_{t}x^{-i}_{t}dt+\!\int^{T}_{s}\!\!\!\pi^{i,\,*}_{t}\sigma\!\left(S_{t}\right)dB_{t}\\
=\,&\,\mathrm{w}^{i,\,\Pi^{i,\,*}}_{T}\!+\frac{\,\theta^{i}}{2}\!\left[\big(\pi^{i,\,*}_{T}\big)^{2}\!-\big(\pi^{i,\,*}_{s}\big)^{2}\right]
\end{align*}

\par In view of the assumption regarding the functional form of the utility function $u^{i}$ we can thus rewrite (\ref{traj1}) as
\begin{equation*}
\mathds{E}\big[u^{i}(\mathrm{w}^{i,\,\Pi^{i,\,*}}_{T})\big]\geq-\frac{1}{\delta^{i}}\,\mathds{E}\left[\exp\left(-\delta^{i}\left(W^{i,\,\Pi^{i,\,*}}_{T}\!-\frac{\,\theta^{i}}{2}\big(\pi^{i,\,*}_{T}\big)^{2}\right)\right)\right]
\end{equation*}

\par Given the equation above and the fact that $\big(\pi^{i,\,*}_{T}\big)^{2}\geq 0$ it follows that
\begin{equation*}
\mathds{E}\big[u^{i}(\mathrm{w}^{i,\,\Pi^{i,\,*}}_{T})\big]\geq-\frac{1}{\delta^{i}}\,\mathds{E}\left[\exp\!\left(\frac{\delta^{i}\theta^{i}}{2}\big(\pi^{i,\,*}_{T}\big)^{2}\right)\exp\!\left(-\delta^{i}W^{i,\,\Pi^{i,\,*}}_{T}\right)\right]\geq-\frac{1}{\delta^{i}}\,\mathds{E}\left[\exp\!\left(-\delta^{i}W^{i,\,\Pi^{i,\,*}}_{T}\right)\right]
\end{equation*}

\par Since $X^{i,\,*} \in \mathcal{A}_{s}$ is a best-response for investor $i$, it follows from the second inequality above that we have
\begin{equation*}
\mathds{E}\big[u^{i}(\mathrm{w}^{i,\,\Pi^{i,\,*}}_{T})\big]\geq J^{i}\!\left(s,y^{i};X^{-i}\right)
\end{equation*}

\par Given the equation above and the equivalence result established in \textcolor{azul-pesc}{\Cref{vinv}}, it follows that we have $V^{i}(s,z^{i};X^{-i}) \leq \mathds{E}\big[u^{i}(\mathrm{w}^{i,\,\Pi^{i,\,*}}_{T})\big]$, which completes the proof.
\end{proof}

\subsection{Constant Volatility Paradigm}

\par In order to pin down a unique best-response and analytically solve for the resulting Markov\textendash Nash equilibrium, we require further structure on the local volatility function $\sigma$, as in general one may obtain optimal strategies (portfolios) which are processes with unbounded variation, $\mathds{P}$-almost surely, owing to the singular nature of an investor's best-response problem, see \cite[Remark 2.4]{lions2007large} for further detail. In subsequent discussion, we therefore focus on the particular case when the local volatility function $\sigma(\cdot)$ is a (positive) constant function, as in \cite{bachelier1900theorie}. Note that a (positive) constant function vacuously satisfies the hypotheses of \textcolor{azul-pesc}{\Cref{siglip}}, and hence the preceding analysis carries over verbatim.

\begin{assumption}\label{sigcon}
\par The function $\sigma: \mathds{R}\rightarrow\mathds{R}_{+}$ is constant, that is $\sigma\!\left(S_{t}\right) = \sigma > 0$.
\end{assumption}

\par In the following lemma we establish the uniqueness of the optimal auxiliary control of investor $i$, given an admissible strategy $X^{-i} \in \mathcal{A}_{s}$ for investor $-i$, under a constant volatility paradigm and the assumption of c\`{a}dl\`{a}g strategies. The proof of the lemma relies on standard arguments which crucially depend on the fact that, for a given initial time $s \in [0,T]$, the class of admissible auxiliary controls $\mathcal{A}^{a}_{s}$, defined as in \textcolor{azul-pesc}{\Cref{auxadmcon}}, is non-redundant (in an appropriate sense; see below for a precise statement). This fact is instrumental in proving that the function $H^{i}$ as defined in (\ref{auxvf}) is strictly concave with respect to $\pi^{i}$, which in turn implies the uniqueness of the optimal auxiliary control.

\begin{lemma}\label{uniq}
\par Consider a fixed initial time $s \in [0,T]$, and suppose that the local volatility function satisfies \textcolor{azul-pesc}{\Cref{sigcon}}. Given a deterministic initial auxiliary state $z^{i} \in \mathds{R}^{\lvert Z^{i} \rvert}$, and an admissible strategy $X^{-i} \in \mathcal{A}_{s}$ for investor $-i$, the optimal auxiliary policy of investor $i$ is unique within the class of c\`{a}dl\`{a}g strategies, that is, if there exist $\Pi^{i,\,*},\tilde{\Pi}^{i,\,*} \in \mathcal{A}^{a}_{s}$ such that
\begin{enumerate}[(i)]
\item $\Pi^{i,\,*},\tilde{\Pi}^{i,\,*} \in \mathcal{A}^{a}_{s}$ are c\`{a}dl\`{a}g processes.
\item $V^{i}\!\left(s,z^{i};X^{-i}\right) = \mathds{E}\big[u^{i}(\mathrm{w}^{i,\,\Pi^{i,\,*}}_{T})\big]=\mathds{E}\big[u^{i}(\mathrm{w}^{i,\,\tilde{\Pi}^{i,\,*}}_{T})\big]$
\end{enumerate}
\par Then, the stochastic processes $\Pi^{i,\,*}$ and $\tilde{\Pi}^{i,\,*}$ are modifications of each other.
\end{lemma}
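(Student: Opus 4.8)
The plan is to obtain uniqueness from strict concavity of the auxiliary payoff functional $H^{i}(\cdot\,;X^{-i})$ over the admissible auxiliary controls, a property which hinges on the exponential (CARA) form of $u^{i}$ together with the non-redundancy of $\mathcal{A}^{a}_{s}$. First I would note that $\mathcal{A}^{a}_{s}$ is convex: since $\Delta = [\underline{\pi},\overline{\pi}]$, a convex combination of $\mathds{F}$-adapted, $\Delta$-valued processes that are left-continuous at $T$ and vanish on $[0,s)$ is again of the same kind. Integrating the $\mathrm{w}^{i}$-component of (\ref{astate}) under \Cref{sigcon} gives the closed form $\mathrm{w}^{i,\,\Pi}_{T} = W^{i}_{s} - \theta^{-i}\int_{s}^{T}\pi_{t}x^{-i}_{t}\,dt + \sigma\int_{s}^{T}\pi_{t}\,dB_{t}$, in which the stochastic integral is a true $L^{2}$-martingale because $\Pi\in\mathcal{A}^{a}_{s}$ is uniformly bounded. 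Hence $\Pi\mapsto\mathrm{w}^{i,\Pi}_{T}$ is affine, and for $\Pi^{\lambda}:=(1-\lambda)\Pi^{0}+\lambda\Pi^{1}$ one has $\mathrm{w}^{i,\,\Pi^{\lambda}}_{T} = (1-\lambda)\mathrm{w}^{i,\,\Pi^{0}}_{T} + \lambda\,\mathrm{w}^{i,\,\Pi^{1}}_{T}$ $\mathds{P}$-a.s.

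Since $u^{i}(w) = -\tfrac{1}{\delta^{i}}e^{-\delta^{i}w}$ is strictly concave, composing with this affine map and taking expectations yields $H^{i}(\Pi^{\lambda};X^{-i}) \ge (1-\lambda)H^{i}(\Pi^{0};X^{-i}) + \lambda H^{i}(\Pi^{1};X^{-i})$, all of the expectations being finite by the exponential-moment condition \Cref{admcon}(iii) combined with the boundedness of $\Delta$ (this is the finiteness estimate already used in the proof of \Cref{jconv}), and the inequality is strict whenever $\mathds{P}(\mathrm{w}^{i,\Pi^{0}}_{T} \neq \mathrm{w}^{i,\Pi^{1}}_{T}) > 0$. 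The remaining ingredient, which is the non-redundancy of $\mathcal{A}^{a}_{s}$, is that the terminal-wealth map \emph{separates} controls modulo modification: if $\Pi^{0}, \Pi^{1}\in\mathcal{A}^{a}_{s}$ are not modifications of each other, then $\mathds{P}(\mathrm{w}^{i,\Pi^{0}}_{T} \neq \mathrm{w}^{i,\Pi^{1}}_{T}) > 0$. Granting this, the preceding observations show that $H^{i}(\cdot\,;X^{-i})$ is strictly concave on $\mathcal{A}^{a}_{s}$ modulo modification.

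To establish the separation property, set $\Delta\pi := \pi^{0} - \pi^{1}$ (bounded, adapted, c\`{a}dl\`{a}g) and suppose, for a contradiction, that $\sigma\int_{s}^{T}\Delta\pi_{t}\,dB_{t} = \theta^{-i}\int_{s}^{T}\Delta\pi_{t}x^{-i}_{t}\,dt$ $\mathds{P}$-a.s. Introducing the localizing times $\tau_{k} = \inf\{t > s : \int_{s}^{t}|x^{-i}_{u}|\,du > k\}$, which satisfy $\tau_{k} \ge T$ for all $k$ large enough $\mathds{P}$-a.s. since $X^{-i}\in\mathcal{A}_{s}$, one bounds the finite-variation side on $\{\tau_{k}\ge T\}$ and, by optional stopping together with the It\^{o} isometry and the Burkholder--Davis--Gundy inequality applied to the continuous $L^{2}$-martingale $t\mapsto\sigma\int_{s}^{t}\Delta\pi_{u}\,dB_{u}$, forces $\sigma^{2}\int_{s}^{T}\Delta\pi_{t}^{2}\,dt = 0$ $\mathds{P}$-a.s.; thus $\Delta\pi_{t} = 0$ for Lebesgue-a.e.\ $t$, $\mathds{P}$-a.s., and right-continuity on $[s,T)$ together with left-continuity at $T$ upgrade this to $\Delta\pi\equiv 0$ on $[s,T]$ $\mathds{P}$-a.s., i.e.\ $\Pi^{0}$ and $\Pi^{1}$ are modifications of each other. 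Uniqueness now follows by the usual midpoint argument: if c\`{a}dl\`{a}g $\Pi^{i,*}, \tilde{\Pi}^{i,*}\in\mathcal{A}^{a}_{s}$ both attain $V^{i}(s,z^{i};X^{-i})$, then $\Pi^{1/2} := \tfrac{1}{2}(\Pi^{i,*} + \tilde{\Pi}^{i,*})$ is again in $\mathcal{A}^{a}_{s}$ and c\`{a}dl\`{a}g, so $V^{i} \ge H^{i}(\Pi^{1/2};X^{-i}) \ge \tfrac{1}{2}H^{i}(\Pi^{i,*};X^{-i}) + \tfrac{1}{2}H^{i}(\tilde{\Pi}^{i,*};X^{-i}) = V^{i}$; the middle inequality is therefore an equality, the strictness clause yields $\mathds{P}(\mathrm{w}^{i,\Pi^{i,*}}_{T} \neq \mathrm{w}^{i,\tilde{\Pi}^{i,*}}_{T}) = 0$, and the separation property forces $\Pi^{i,*}$ and $\tilde{\Pi}^{i,*}$ to be modifications of each other.

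The main obstacle is the separation step: since $x^{-i}$ is only admissible rather than bounded, the drift term $\theta^{-i}\int_{s}^{T}\Delta\pi_{t}x^{-i}_{t}\,dt$ cannot be removed by a global change of measure --- Novikov's criterion is unavailable for a generic $X^{-i}\in\mathcal{A}_{s}$, in contrast with the bounded-integrand change of measure used in the proof of \Cref{opaux} --- so one is forced to argue by localization and stopping, exploiting the uniform bound on $\Delta\pi$ coming from compactness of $\Delta$ and the super-exponential integrability built into \Cref{admcon}(iii). Everything else is routine once the CARA specification and the finiteness estimates of \Cref{jconv} are in hand.
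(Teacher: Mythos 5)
Your overall architecture is the same as the paper's: convexity of $\mathcal{A}^{a}_{s}$ (via convexity of $\Delta$), the affine form of $\Pi\mapsto \mathrm{w}^{i,\,\Pi}_{T}$ under \Cref{sigcon}, strict concavity of $H^{i}$ from the CARA utility together with a separation (non-redundancy) property, the midpoint argument, and the upgrade from $\mathds{P}\otimes\lambda_{[s,T]}$-a.e.\ equality to modification using right-continuity on $[s,T)$ and left-continuity at $T$. All of those steps match the paper and are fine as you state them. The problem is the separation step itself, which you correctly identify as the crux and then do not actually prove.

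From optimality and strict concavity you reach the identity $\sigma\!\int_{s}^{T}\!\Delta\pi_{t}\,dB_{t}=\theta^{-i}\!\int_{s}^{T}\!\Delta\pi_{t}x^{-i}_{t}\,dt$ $\mathds{P}$-a.s., and this identity is known only at the terminal time. Your proposed chain --- localize with $\tau_{k}$, bound the finite-variation side on $\{\tau_{k}\geq T\}$, then invoke optional stopping, the It\^{o} isometry and Burkholder--Davis--Gundy to ``force'' $\sigma^{2}\!\int_{s}^{T}\!\Delta\pi_{t}^{2}\,dt=0$ --- does not deliver this. Stopping at $\tau_{k}$ gives nothing because you do not know that $M_{\tau_{k}}=A_{\tau_{k}}$ (writing $M_{t}=\sigma\!\int_{s}^{t}\!\Delta\pi\,dB$ and $A_{t}=\theta^{-i}\!\int_{s}^{t}\!\Delta\pi\,x^{-i}du$); multiplying by $\mathds{1}_{\{\tau_{k}\geq T\}}$ destroys both the martingale property and the isometry; and the isometry itself only yields $\sigma^{2}\,\mathds{E}\!\int_{s}^{T}\!\Delta\pi_{t}^{2}\,dt=\mathds{E}\big[M_{T}^{2}\big]=\mathds{E}\big[A_{T}^{2}\big]$, which is precisely the quantity you are trying to show vanishes, not zero by any of the stated tools. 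Conditioning gives $M_{t}=\theta^{-i}\,\mathds{E}\big[A_{T}\,\vert\,\mathcal{F}_{t}\big]$, so what must be excluded is that the random variable $\theta^{-i}\!\int_{s}^{T}\!\Delta\pi_{t}x^{-i}_{t}\,dt$ admits an It\^{o} representation whose integrand is exactly $\sigma\Delta\pi$; for bounded $x^{-i}$ this follows from an equivalent change of measure, but for a general admissible $x^{-i}$ --- where, as you yourself note, Novikov is unavailable --- BDG, isometry and localization do not produce the conclusion, and the ``super-exponential integrability'' of \Cref{admcon}(iii) is only gestured at, never used. (For comparison, the paper handles this point by reading $\big\langle \mathrm{w}^{i,\,\Pi}-\mathrm{w}^{i,\,\tilde{\Pi}}\big\rangle_{T}=\sigma^{2}\!\int_{s}^{T}(\pi_{t}-\tilde{\pi}_{t})^{2}dt=0$ directly off (\ref{uni1}); your instinct that the unbounded drift requires more care than that is reasonable, but the replacement you sketch would fail, so the implication ``equal terminal auxiliary wealths $\Rightarrow$ $\Delta\pi=0$ a.e.'' remains unproven in your write-up, and with it the uniqueness claim.)
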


\begin{proof}
\par We first show that the subset of the class of admissible auxiliary controls which satisfy c\`{a}dl\`{a}g property is non-redundant, conditional on the local volatility function $\sigma$ satisfying \textcolor{azul-pesc}{\Cref{sigcon}}. That is, given two admissible auxiliary controls $\Pi^{i}$ and $\tilde{\Pi}^{i} \in \mathcal{A}^{a}_{s}$, we prove that $\mathrm{w}^{i,\,\Pi^{i}}_{T} = \mathrm{w}^{i,\,\tilde{\Pi}^{i}}_{T}$, $\mathds{P}$-almost surely, if and only if the control processes $\Pi^{i}, \tilde{\Pi}^{i}$ are modifications of each other. To this end, recall the dynamics of $\mathrm{w}^{i}$ given by (\ref{astate}), and note that given $z^{i} \in \mathds{R}^{\lvert Z^{i}\rvert} $, and an admissible strategy $X^{-i} \in \mathcal{A}_{s}$ for investor $-i$ we have
\begin{equation}\label{uni1}
\mathrm{w}^{i,\,\Pi^{i}}_{T} - \mathrm{w}^{i,\,\tilde{\Pi}^{i}}_{T} = -\,\theta^{-i}\!\!\!\int^{T}_{s}\!\!\!\left(\pi^{i}_{t}-\tilde{\pi}^{i}_{t}\right)x^{-i}_{t}dt\,+\,\sigma\!\!\int^{T}_{s}\!\!\!\left(\pi^{i}_{t}-\tilde{\pi}^{i}_{t}\right)dB_{t} = 0, \ \mathds{P}-\text{almost surely}
\end{equation}

\par It is immediate from the above that if the processes $\Pi^{i}, \tilde{\Pi}^{i}$ are modifications of each other then $\mathrm{w}^{i,\,\Pi^{i}}_{T} = \mathrm{w}^{i,\,\tilde{\Pi}^{i}}_{T}$, $\mathds{P}$-almost surely. Further, if we denote the quadratic variation process of a continuous semimartingale $M$ as $\langle M\rangle$, it follows from (\ref{uni1}) that we have
\begin{equation*}
\left\langle\mathrm{w}^{i,\,\Pi^{i}} - \mathrm{w}^{i,\,\tilde{\Pi}^{i}}\right\rangle_{T}= \sigma^{2}\!\!\int^{T}_{s}\!\!\!\left(\pi^{i}_{t}-\tilde{\pi}^{i}_{t}\right)^{2}\!dt = 0, \ \mathds{P}-\text{almost surely}
\end{equation*}

\par Given the equation above, it follows that if we have $\mathrm{w}^{i,\,\Pi^{i}}_{T} = \mathrm{w}^{i,\,\tilde{\Pi}^{i}}_{T}$, $\mathds{P}$-almost surely, then the control processes $\Pi^{i}$ and $\tilde{\Pi}^{i}$ coincide $\mathds{P} \otimes \lambda_{[s,\,T]}$ almost everywhere, where $\lambda_{[s,\,T]}$ denotes the restriction of Lebesgue measure defined on $\mathds{R}$ over the interval $[s,T]$. Thus, for every $t \in [s,T)$, and given $\epsilon > 0$, we can find a decreasing sequence $\{t_{n}\}_{n\, \in\, \mathds{N}} \in (t,t+\epsilon\wedge T]$, with $\lim_{n\rightarrow \infty}t_{n} = t$, such that $\pi(\omega,t_{n}) = \tilde{\pi}(\omega,t_{n})$, for every element of the sequence, $\mathds{P}$-almost surely. Moreover, since the admissible control processes $\Pi^{i}$ and $\tilde{\Pi}^{i}$ are c\'{a}dl\'{a}g, this in turn implies that $\pi(\omega,t) = \tilde{\pi}(\omega,t)$, $\mathds{P}$-almost surely. 

\par Also, in view of \textcolor{azul-pesc}{\Cref{auxadmcon}}, the control processes $\Pi^{i}$ and $\tilde{\Pi}^{i}$ are left-continuous at $T$. We can then follow a similar argument as above to deduce that $\pi(\omega,T) = \tilde{\pi}(\omega,T)$, $\mathds{P}$-almost surely, and hence, if we have $\mathrm{w}^{i,\,\Pi^{i}}_{T} = \mathrm{w}^{i,\,\tilde{\Pi}^{i}}_{T}$, $\mathds{P}$-almost surely, it follows that the auxiliary control processes $\Pi^{i}$ and $\tilde{\Pi}^{i}$ are modifications of each other. Thus, we conclude that the class of admissible auxiliary controls $\mathcal{A}^{a}_{s}$ is non-redundant.

\par The non-redundancy of the class of admissible auxiliary controls $\mathcal{A}^{a}_{s}$ is a sufficient condition for the strict concavity of the function $H^{i}$, as defined in (\ref{auxvf}). To see this, consider two admissible auxiliary controls $\Pi^{i}, \tilde{\Pi}^{i} \in \mathcal{A}^{a}_{s}$, which are not modifications of each other. For a given scalar $\xi \in (0,1)$, note that the stochastic process $\Pi^{i,\,\xi}$ defined as $\Pi^{i,\,\xi} = \xi\Pi^{i} + (1-\xi)\tilde{\Pi}^{i}$ is an admissible auxiliary control, by virtue of convexity of the state space $\Delta$. Recalling the dynamics of the auxiliary state process (\ref{astate}), it follows that given $z^{i}\in \mathds{R}^{\lvert Z^{i}\rvert}$, and $X^{-i} \in \mathcal{A}_{s}$ we have
\begin{align*}
u^{i}\!\left(\mathrm{w}^{i,\,\Pi^{i,\,\xi}}_{T}\right)\,&=\, -\frac{1}{\delta^{i}}\exp\!\left(-\delta^{i}\mathrm{w}^{i}_{0}+\delta^{i}\theta^{-i}\!\!\int^{T}_{s}\!\!\!\pi^{i,\,\xi}_{t}x^{-i}_{t}dt-\delta^{i}\sigma\!\!\int^{T}_{s}\!\!\!\pi^{i,\,\xi}_{t}dB_{t}\right)\\
&=\,-\frac{1}{\delta^{i}}\exp\!\left(\xi\left(-\delta^{i}\mathrm{w}^{i,\,\Pi^{i}}_{T}\right)+(1-\xi)\left(-\delta^{i}\mathrm{w}^{i,\,\tilde{\Pi}^{i}}_{T}\right)\right)
\end{align*}

\par Note that the second equality above follows as a consequence of the definition of the stochastic process $\Pi^{i,\,\xi}$. Given that the control processes $\Pi^{i}, \tilde{\Pi}^{i} \in \mathcal{A}^{a}_{s}$ are not modifications of each other by assumption, and in view of the fact that the class of admissible auxiliary controls $\mathcal{A}^{a}_{s}$ is non-redundant as shown earlier, it follows that we have $\mathrm{w}^{i,\,\Pi^{i}}_{T} \neq \mathrm{w}^{i,\,\tilde{\Pi}^{i}}_{T}$, $\mathds{P}$-almost surely. This fact in conjunction with Jensen's inequality then implies that we have
\begin{align*}
u^{i}\!\left(\mathrm{w}^{i,\,\Pi^{i,\,\xi}}_{T}\right)\,&>\,\xi\left(-\frac{1}{\delta^{i}}\exp\!\left(-\delta^{i}\mathrm{w}^{i,\,\Pi^{i}}_{T}\right)\right)+(1-\xi)\left(-\frac{1}{\delta^{i}}\exp\!\left(-\delta^{i}\mathrm{w}^{i,\,\tilde{\Pi}^{i}}_{T}\right)\right)\\
&=\, \xi\, u^{i}\!\left(\mathrm{w}^{i,\,\Pi^{i}}_{T}\right)+(1-\xi)\, u^{i}\!\left(\mathrm{w}^{i,\,\tilde{\Pi}^{i}}_{T}\right)
\end{align*}

\par The strict concavity of the function $H^{i}$ follows from the above upon taking expectations, whereby the uniqueness of the optimal auxiliary control is immediate in view of (\ref{auxvf}).
\end{proof}

\par With the above results at our disposal, we are in a position to solve explicitly for the (unique) Markov\textendash Nash equilibrium in closed-form for the constant volatility case that involves solving a coupled system of ordinary differential equations associated with individual best-responses which can be achieved through routine computation as shown in the proposition below. 

\par Before we state and prove the result, we draw attention to the fact that the statement of the proposition below relies upon three prerequisites, which we discuss in detail for better understanding. To this end, we let $\chi$ and $\varphi$ denote positive scalars defined as
\[\chi = \sqrt{\frac{\delta^{1}\,\delta^{2}\,}{\theta^{1}\,\theta^{2}\,}}\left(\sigma\right)^{2}\  \text{and}\ \ \varphi = \sqrt{\frac{\delta^{1}\,\theta^{1}\,}{\delta^{2}\,\theta^{2}\,}}\]

\par The first condition in the statement of \textcolor{azul-pesc}{\Cref{mnec}} below requires that the two investors not be \textit{too similar} in terms of their price impact and risk-aversion. It is important to note that we impose this condition as it allows us to focus attention on the general case of asymmetric equilibrium. We can still derive analytical expressions for Markov\textendash Nash equilibrium strategies of the two large investors when this condition does not hold. However, these offer little additional qualitative insight and the computation involved is tedious hence we omit this case.

\par The second condition requires that the the initial holding of at least one of the two large investors be non-zero, or equivalently, since $\pi^{i}_{s^{-}} = 0$ for $i \in \mathcal{I}$ by assumption, we require that the initial trade of at least one large investor be non-zero. This condition has a direct bearing on the uniqueness of Markov\textendash Nash equilibrium, as upon relaxing this condition we obtain an additional, though still deterministic, Markov\textendash Nash equilibrium wherein each investor starts with zero holdings in the risky asset and does not trade subsequently. 

\par However, it is straightforward to see that this \textit{zero holding} equilibrium is \textit{unstable} in the sense of \cite{kohlberg1986strategic}, as even a slight unilateral deviation by either large investor suffices to induce off equilibrium play subsequently.

\par  Finally, the third condition requires that the two large investors are always within their \textit{control} region. This requirement implies that the portfolio constraint is never binding for either of the two large investors. In other words, the third condition allows us to focus on the case when liquidity constraints arise solely on account of the price impact of the two large investors and the resulting effect on the trading behaviour of the two large investors.

\par If this condition is relaxed, there is a possibility that the strategies of the two players eventually enter the \textit{continuation} region whereby the market becomes illiquid and further trading becomes infeasible. Thus, subject to the three conditions above, the proposition below derives the unique Markov\textendash Nash equilibrium.

\begin{proposition}\label{mnec}
\par Consider a fixed initial time $s \in [0,T)$, and suppose that the local volatility function satisfies \textcolor{azul-pesc}{\Cref{sigcon}}. Given fixed initial states $y^{1}, y^{2} \in \mathds{R}^{\left\lvert Y^{i}\right\rvert}$, suppose the following conditions hold
\begin{enumerate}[(i)]
\item $ \varphi \neq 1$
\item $ \left\lvert\, \pi^{1}_{s}\,\right\rvert \bigvee \left\lvert\,\pi^{2}_{s}\,\right\rvert \neq 0$
\item $\exp\!{\left(\chi T\right)}\left(\left(1+\varphi\right)\left\lvert\, \pi^{1}_{s}\,\right\rvert +\left(1+\frac{1}{\varphi}\right) \left\lvert\,\pi^{2}_{s}\,\right\rvert\,\right)\leq \left\vert\,\overline{\pi}\,\right\rvert\, \bigwedge\, \left\vert\,\underline{\pi}\,\right\rvert $
\end{enumerate}

\par Then, the unique Markovian\textendash Nash equilibrium portfolio processes $\left(\Pi^{1,*},\Pi^{2,*}\right)$ of the two investors are deterministic and are given by
\begin{align*}
&\begin{cases}\pi^{1,*}_{t}\,=\, \pi^{1}_{s}\cosh\!{\left(\chi t\right)}\,+\, \left(\pi^{2}_{s}/\varphi\right)\sinh\!{\left(\chi t\right)},\ t\in [s,T]\\
\pi^{1,*}_{s}-\,\pi^{1,*}_{s^{-}}\, =\, \pi^{1}_{s}\\
\pi^{1,*}_{s^{-}}\, =\, 0\end{cases}\\
&\begin{cases}\pi^{2,*}_{t}\,=\, \varphi\pi^{1}_{s}\sinh\!{\left(\chi t\right)}\,+\, \pi^{2}_{s}\cosh\!{\left(\chi t\right)},\ t\in [s,T]\\
\pi^{2,*}_{s}-\,\pi^{2,*}_{s^{-}}\, =\, \pi^{2}_{s}\\
\pi^{2,*}_{s^{-}}\, =\, 0\end{cases}
\end{align*}
\end{proposition}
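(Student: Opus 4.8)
The plan is to collapse the Markov\textendash Nash fixed-point condition to a deterministic linear system of ordinary differential equations, using the best-response characterisation furnished by \textcolor{azul-pesc}{\Cref{opaux}}, \textcolor{azul-pesc}{\Cref{trajeq1}}, \textcolor{azul-pesc}{\Cref{trajeq2}} and \textcolor{azul-pesc}{\Cref{uniq}}; to solve that system in closed form; and to check that condition (iii) keeps the solution in the interior of each investor's control region so that the unclipped (linear) form of the optimal auxiliary control is legitimate over the whole horizon. First I would note that under \textcolor{azul-pesc}{\Cref{sigcon}} the argument $P^{i}_{t}+\theta^{i}\pi^{i}_{t}=S_{t}$ of the volatility in \textcolor{azul-pesc}{\Cref{opaux}} is immaterial, so on its control region the optimal auxiliary control of investor $i$ against an admissible $X^{-i}$ is $\pi^{i,*}_{t}=-\theta^{-i}x^{-i}_{t}/(\delta^{i}\sigma^{2})$ for $t\in[s,T]$ with $\pi^{i,*}_{s^{-}}=0$, and by \textcolor{azul-pesc}{\Cref{uniq}} it is the unique optimal auxiliary control within c\`{a}dl\`{a}g strategies. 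A pair of admissible Markovian strategies $(\Pi^{1,*},\Pi^{2,*})$ staying in their control regions, with trading rates $x^{i,*}_{t}=-\partial_{+}\pi^{i,*}_{t}/\partial t$, is then a Markov\textendash Nash equilibrium in the sense of \textcolor{azul-pesc}{\Cref{mneq}} as soon as \textcolor{azul-pesc}{\Cref{trajeq1}} applies to each investor; its hypotheses force the two coupled identities $\pi^{i,*}_{t}=-\theta^{-i}x^{-i,*}_{t}/(\delta^{i}\sigma^{2})$, $i=1,2$, together with the consistency requirement $\pi^{i,*}_{s}=\pi^{i}_{s}$ (so the auxiliary optimum already starts at the prescribed holding and hypothesis (i) of \textcolor{azul-pesc}{\Cref{trajeq1}} holds).

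Writing $x^{-i,*}_{t}=-\dot{\pi}^{-i,*}_{t}$, the two best-response identities become the deterministic coupled linear system
\[
\dot{\pi}^{1,*}_{t}=\frac{\delta^{2}\sigma^{2}}{\theta^{1}}\,\pi^{2,*}_{t},\qquad \dot{\pi}^{2,*}_{t}=\frac{\delta^{1}\sigma^{2}}{\theta^{2}}\,\pi^{1,*}_{t},
\]
equivalently $\ddot{\pi}^{i,*}_{t}=\chi^{2}\pi^{i,*}_{t}$ with $\chi^{2}=\delta^{1}\delta^{2}\sigma^{4}/(\theta^{1}\theta^{2})>0$. Because $\sigma$ is constant this carries no stochastic forcing, so with the deterministic data $\pi^{i,*}_{s}=\pi^{i}_{s}$ it has a unique, deterministic solution; diagonalising the coefficient matrix (eigenvalues $\pm\chi$, eigenvectors $(1,\pm\varphi)$ with $\varphi=\sqrt{\delta^{1}\theta^{1}/(\delta^{2}\theta^{2})}$) and matching the initial data yields precisely the stated hyperbolic-cosine/sine formulae, the jumps $\pi^{i,*}_{s}-\pi^{i,*}_{s^{-}}=\pi^{i}_{s}$ merely recording the standing convention $\pi^{i}_{t}=0$ for $t<s$. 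Condition (i), $\varphi\neq1$, enters only to display the generic asymmetric form; the borderline case is analogous.

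Next I would verify admissibility and the control-region constraint. The candidate trajectories are deterministic, real-analytic on $(s,T]$ and bounded, so the remaining hypotheses of \textcolor{azul-pesc}{\Cref{trajeq1}} and membership in $\mathcal{A}_{s}$ and $\mathcal{A}^{a}_{s}$ are immediate. From $\cosh(\chi(t-s))\leq e^{\chi T}$ and the triangle inequality one obtains $|\pi^{1,*}_{t}|\vee|\pi^{2,*}_{t}|\leq e^{\chi T}\big((1+\varphi)|\pi^{1}_{s}|+(1+\varphi^{-1})|\pi^{2}_{s}|\big)$ for all $t\in[s,T]$, so condition (iii) forces $|\pi^{i,*}_{t}|<|\overline{\pi}|\wedge|\underline{\pi}|$, i.e.\ $\underline{\pi}<-\theta^{-i}x^{-i,*}_{t}/(\delta^{i}\sigma^{2})<\overline{\pi}$; both investors therefore remain strictly in their control regions, the unclipped form used above is valid throughout, and \textcolor{azul-pesc}{\Cref{trajeq1}} delivers that $(\Pi^{1,*},\Pi^{2,*})$ is a Markov\textendash Nash equilibrium. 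For the converse, let $(\Pi^{1,*},\Pi^{2,*})$ be any Markov\textendash Nash equilibrium: \textcolor{azul-pesc}{\Cref{trajeq2}} identifies each best response with an optimal auxiliary control, which by \textcolor{azul-pesc}{\Cref{uniq}} is the unique one of \textcolor{azul-pesc}{\Cref{opaux}}, so on the control region the same coupled identities and consistency condition hold; if an investor were the first to reach the boundary at some $t_{0}\in(s,T]$, the ODE would hold on $[s,t_{0})$, pinning $\pi^{i,*}$ to the stated formula up to and, by continuity, at $t_{0}$, where condition (iii) makes that value strictly interior \textemdash{} a contradiction. Hence every equilibrium solves the same system with the same deterministic data and coincides, pathwise, with the stated (deterministic) one; condition (ii) discards the only competing fixed point, the zero-holding trajectory $\pi^{i,*}\equiv0$, which is in any case unstable in the sense of \cite{kohlberg1986strategic}.

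The routine part is solving the linear ODE. The delicate point is the interplay between the trajectory-equivalence theorems and the region in which the optimal auxiliary control takes its unclipped linear form: one must simultaneously arrange that the candidate equilibrium's consistency condition at $t=s$ makes the auxiliary optimum start at the prescribed holding $\pi^{i}_{s}$ (so that \textcolor{azul-pesc}{\Cref{trajeq1}} and \textcolor{azul-pesc}{\Cref{trajeq2}} are applicable) \emph{and} that condition (iii) keeps both investors strictly inside their control regions for the entire horizon (so the unclipped form \textemdash{} hence the ODE \textemdash{} is valid for all $t$, and no boundary equilibria can arise). Once that bookkeeping is in place, the closed-form equilibrium, its uniqueness across Markovian strategies, and its determinism all follow from the deterministic nature of the resulting ODE system.
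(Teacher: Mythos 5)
Your proposal is correct and takes essentially the same route as the paper: collapse the fixed point of the best-response maps from \Cref{opaux} (valid on the control region, which condition (iii) guarantees) to the deterministic coupled linear ODE system, solve it in hyperbolic form, check admissibility, and conclude existence via \Cref{trajeq1} and uniqueness via \Cref{trajeq1}, \Cref{trajeq2} and \Cref{uniq}. Incidentally, the signs you derive, $\dot{\pi}^{1}=+\left(\delta^{2}\sigma^{2}/\theta^{1}\right)\pi^{2}$ and $\dot{\pi}^{2}=+\left(\delta^{1}\sigma^{2}/\theta^{2}\right)\pi^{1}$, are the ones the stated solution actually satisfies (the displayed system in the paper's proof carries a spurious minus sign), and your spelled-out converse argument merely fills in what the paper dispatches as an immediate consequence of those three results.
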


\begin{proof}
\par Suppose that conditions $\mathrm{(i)}$ and $\mathrm{(ii)}$ in the statement of the proposition hold, then it can be checked from routine computation that $\Pi^{1,*}$ and $\Pi^{2,*}$, as defined in the statement of the proposition, constitute a solution to the following system of ordinary differential equations, for $t\in [s,T]$ given deterministic initial values $\pi^{1}_{s}$ and $\pi^{2}_{s}$
\begin{equation*}
\setlength{\jot}{5pt}
\begin{aligned}
&\frac{d\pi^{1}_{t}}{dt}\, =\,-\frac{\delta^{2}\left(\sigma\right)^{2}}{\theta^{1}}\,\pi^{2}_{t}\\
&\frac{d\pi^{2}_{t}}{dt}\, =\,-\frac{\delta^{1}\left(\sigma\right)^{2}}{\theta^{2}}\,\pi^{1}_{t} \\
&\ \pi^{i}_{s} - \pi^{i}_{s^{-}}\,=\,\pi^{i}_{s},\ 
 \pi^{i}_{s^{-}}\,=\, 0,\ \forall\, {i \in \mathcal{I}}
\end{aligned}
\end{equation*}

\par Further, in view of condition $\mathrm{(iii)}$, the stochastic processes $X^{1,*}$ and $X^{2,*}$ defined as $x^{i,\,*}_{t}(\omega) = -\partial_{+} \pi^{i,\,*}_{t}(\omega)/\partial t$, for $s\leq t \leq T$, $i \in \mathcal{I}$ are admissible strategies in the sense that $X^{1,*}, X^{2,*} \in \mathcal{A}_{s}$. Hence, we can invoke \textcolor{azul-pesc}{\Cref{opaux}} in conjunction with condition $\mathrm{(iii)}$ in the statement of the proposition to ascertain that the stochastic process $\Pi^{1,*}$ is an optimal auxiliary control for investor $1$ given $X^{2,*}$, while the stochastic process $\Pi^{2,*}$ constitutes an optimal auxiliary control for investor $2$ given where $X^{1,*}$ when the local volatility function satisfies \textcolor{azul-pesc}{\Cref{sigcon}}.

\par Thus, as a result of the above and by way of \textcolor{azul-pesc}{\Cref{trajeq1}}, it follows that the stochastic processes $X^{1,*}$ and $X^{2,*}$ are mutual best-responses and hence constitute a Markovian\textendash Nash equilibrium, as defined in \textcolor{azul-pesc}{\Cref{mneq}}, of the Merton\textendash Cournot stochastic differential game. The uniqueness result follows as an immediate consequence of \textcolor{azul-pesc}{\Cref{trajeq1}}, \textcolor{azul-pesc}{\Cref{trajeq2}} and \textcolor{azul-pesc}{\Cref{uniq}}.
\end{proof}

\par Next, we discuss the qualitative implications of the Markov\textendash Nash Equilibrium trading strategies derived above. From \textcolor{azul-pesc}{\Cref{mnec}}, it follows that strategic competition for liquidity generates excessive trading compared to the canonical Merton model. To see this, note that if $\theta^{1}$ and $\theta^{2}$ were identically zero, the price process of the risky asset becomes a martingale which implies that the returns associated with the risky asset are zero and hence the investors have no incentive to trade in the risky asset. 

\par This result appears strikingly paradoxical at first since price impact can be interpreted as a form of transaction cost, and hence it should serve to discourage investors from trading relative to a frictionless market. However, unlike a monopolistic framework, a strategic framework leads to the imposition of an externality on the investment opportunity set of a given large investor on account of the price impact of the other large investor(s), which in turn compels an investor to optimally adjust  risky asset holdings in response.

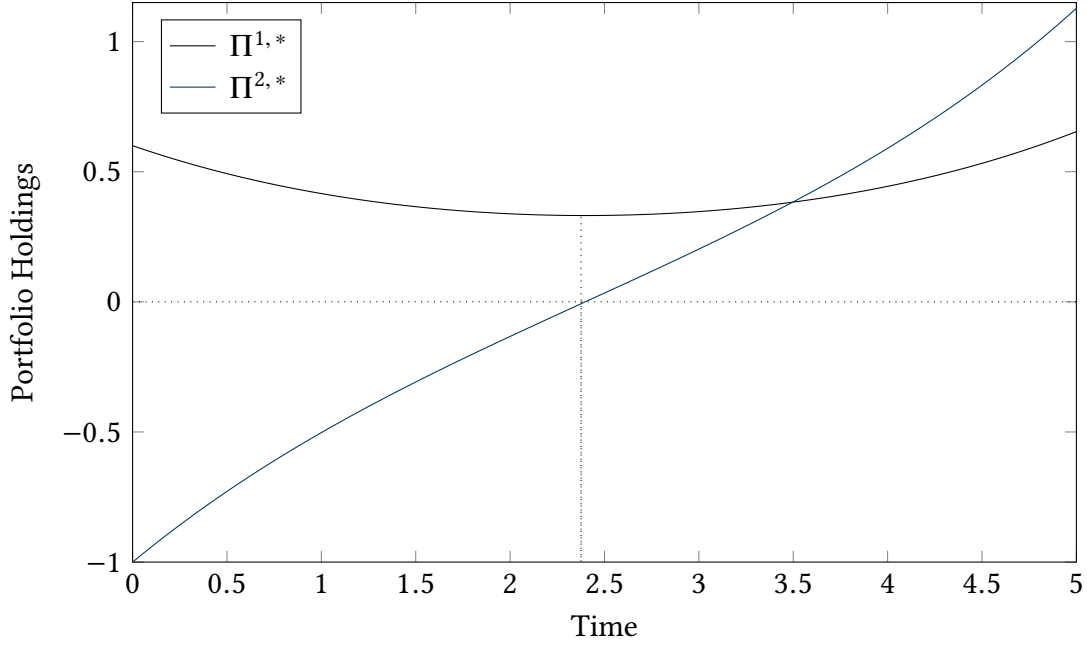
\begin{figure}[h]
\centering
\begin{tikzpicture}
\begin{axis}[
    x tick label style={/pgf/number format/1000 sep=, legend pos=north west },
    ylabel={Portfolio Holdings},
    xlabel={Time},
    xmin=0, xmax=5,
    ymin=-1, ymax=1.15,
    xtick={0,0.5,1,1.5,2,2.5,3,3.5,4,4.5,5},
    ymajorgrids=false,
    grid style=none,
]
\addplot[
       color=black,
       solid,
       domain=0:5,
       samples=1500,
       mark=none,
       smooth
       ] 
       {(0.6*cosh(0.5*x))-(0.5*(sinh(0.5*x)))};
     \addlegendentry{$\ \Pi^{1,\,*}$}
\addplot[
       color=azul-pesc,
       solid,
       domain=0:5,
       samples=1500,
       mark=none,
       smooth
       ]      
       {(1.2*(sinh(0.5*x)))-cosh(0.5*x)};
    \addlegendentry{$\ \Pi^{2,\,*}$}
\addplot[
    color=black,
    dotted,
    mark=,
    ]
    coordinates {
    (0,0)(0.5,0)(1,0)(1.5,0)(2,0)(2.5,0)(3,0)(3.5,0)(4,0)(4.5,0)(5,0)
    };
\addplot[
    color=black,
    dotted,
    mark=,
    ]
    coordinates {
    (2.376,0)(2.376,-0.5)(2.376,-1)(2.376,0.331)
    };
\end{axis}
\end{tikzpicture}
\floatfoot{Note: Illustration of Markov\textendash Nash equilibrium portfolio holdings of the two large investors corresponding to parameter values $\pi^{1}_{0} = 0.6$, $\pi^{2}_{0} = -1$, $\chi = 0.5$, $\varphi = 2$. For these specific parameter values, the optimal strategy for the second investor is to always buy the asset, while the optimal strategy for the first investor is to initially sell the asset until such time as the holding of the second investor is zero, indicated by the dashed lines, and to buy the asset subsequently.}
\caption{Markov\textendash Nash Equilibrium Portfolio Holdings}
\label{fig:2}
\end{figure}

\par Given the form of the equilibrium trading strategies derived in \textcolor{azul-pesc}{\Cref{mnec}}, the intuition underlying excessive trading on part of the investors is immediate in the instance when the initial holding (trade) of the two investors have the same sign. \textcolor{azul-pesc}{\Cref{fig:2}} provides an illustration of excessive trading in equilibrium by the two investors when their respective initial trades differ in their signs. For the specified parameter values in \textcolor{azul-pesc}{\Cref{fig:2}}, investor $1$ initially sells the asset in equilibrium since the initial holding of the second investor is negative, which generates a negative externality on the investment opportunity set of the first investor.

\par On the other hand, for the same specified parameter values investor $2$ always buys the asset in equilibrium. This is because the holding of the first investor always remains positive thereby generating a positive externality on the investment opportunity set of the second investor. Eventually the holding of the second investor becomes zero at time $t^{c} = (1/\chi)\,\mathrm{arctanh}(\pi^{2}_{0}/\varphi\pi^{1}_{0})$ at which point investor $1$ switches from selling to buying the asset given that subsequent to time $t^{c}$ the externality imposed on the investment opportunity set of investor $1$ by investor $2$ becomes positive. 

\par Interestingly while the nature of the externality imposed on the investment opportunity set of a rival and hence the direction of the rival's trading in the risky asset is determined solely by the sign of an investor's holding in the risky asset, the rate at which the rival trades depends on whether the investor is buying or selling the risky asset, that is, whether the investor is increasing or reducing the risky asset holding. For example, in the setup of \textcolor{azul-pesc}{\Cref{fig:2}} investor $2$ buys the asset at a slower rate when investor $1$ is selling the asset compared to when investor $1$ switches to buying the asset.

\section{Concluding Remarks}\label{conc}

\par In the present work, we analyzed an augmented version of Merton's canonical portfolio choice problem with two large investors whose trading influences the price of the underlying traded financial asset by modelling the resulting strategic interaction among investors as a non-zero sum singular stochastic differential game in the spirit of dynamic Cournot duopoly. We established an equivalence result for an investor's best-response problem and an auxiliary classical optimal control problem by extending the diffeomorphic flow method introduced in \cite{lasry2000classe}. We further showed that the optimal trajectories of the two control problems are indistinguishable under certain regularity conditions, and under the assumption of constant asset price volatility, we showed that the unique Markov\textendash Nash equilibrium is deterministic, which allowed us to reflect on the importance of imperfect competition in rationalizing the excessive trading puzzle.  

\par Our work also presents interesting avenues for future research which extend and complement our analysis. For instance, given the parsimonious additive permanent price impact framework we consider, it is straightforward to generalize most results presented here to the case of a multiple (but finite) large investors. A more involved but interesting generalization would analyze strategic competition for liquidity in a mean-field framework with major and minor players. It should also be noted that in the present work the assumption of exponential utility plays a central role and as a result it is natural to ask whether our approach extends to other commonly used utility functions such as the Constant Absolute Risk-Aversion (CARA) utility function, for example. Economically, a strategic portfolio choice framework which allows investors to trade using market as well as limit orders provides a greater degree of verisimilitude and remains an interesting challenge for future research.

\newpage
\raggedright
\bibliographystyle{rss}
\pagestyle{plain}
\bibliography{Reference}

\newpage

\justifying

\begin{appendices}

\titleformat{\section}[block]{\vspace{1em}\large\centering\scshape\bfseries}{Appendix \thesection.}{0.5em}{} 

\section{Data Appendix}

\par The estimates reported in \textcolor{azul-pesc}{\Cref{fig:1}} are sourced from the December 9, 2022 release of the Flow of Funds Accounts for the United States released by the Federal Reserve Board. Specifically, we refer Table L.223 of the Flow of Funds Accounts which documents the market value levels of corporate equity holdings for different investor categories. The estimates for the share of corporate equity held by Mutual Funds follow from Table L.223, by clubbing together the values reported for long-term mutual funds with close-ended mutual funds as well as exchange-traded funds. All ratios are computed with respected to the market value level of corporate equity held by all sectors, which is again directly reported in Table L.223. As detailed in the financial accounts guide, the value of holdings for the household sectors is computed as a residual by the Federal Reserve Board and thus includes the share of corporate equity owned by nonprofit organizations in addition to preferred stock and closely held corporate equity.

\par It is straightforward to factor out the value of closely held corporate equity as these estimates are reported by the Federal Reserve Board in Table L.223 itself. While, although the Federal Reserve Board publishes a series on financial assets held by private nonprofit organizations and charitable organizations respectively, these club the corporate equity holdings with mutual fund shares and hence serve as noisy estimates at best. Moreover, in order to eliminate the value of preferred stock, \cite{french2008presidential} relies on proprietary data obtained though personal correspondence from Standard and Poor, which unfortunately is neither publicly available nor deducible from the estimates presented in \cite{french2008presidential}. Thus, in view of these limitations, our estimates of the share of corporate equity held directly by households represents an upper bound, which nevertheless are fairly consistent with the values reported in \cite{french2008presidential} and \cite{stambaugh2014presidential}.

\section{Technical Appendix}

\subsection{Proof of \textcolor{azul-pesc}{\Cref{linpi}}}

\begin{proof}
\par In the proof, we limit attention to trading strategies defined over $\left[0,T\right]$, without loss of generality. We first show that a linear price impact function implies no dynamic arbitrage. To this end, we suppose $\kappa^{i}(x^{i}_{t}) = -\theta^{i}x^{i}_{t}$, where $\theta^{i} > 0$, and we consider a round-trip trade $\left\{x^{i}_{t}\right\}_{t\, \in\, [0,T]}$. It then follows from (\ref{portval}) that we have
\begin{equation*}
    \mathds{E}\left[W^{i}_{T}\right]\,=\, W^{i}_{0}\,+\,\mathds{E}\left[\int^{T}_{0}\!\! \pi^{i}_{t}\,\kappa^{-i}\left(x^{-i}_{t}\right)dt\right]\,+\,\mathds{E}\left[\int^{T}_{0}\!\!\pi^{i}_{t}\,\kappa^{i}\left(x^{i}_{t}\right)dt\right]
\end{equation*}

\par By virtue of the fact that $\kappa^{i}$ is linear, we can rewrite the above equation as
\begin{equation*}
    \mathds{E}\left[W^{i}_{T}\right]\,=\, W^{i}_{0}\,+\,\mathds{E}\left[\int^{T}_{0}\!\!\pi^{i}_{t}\, \kappa^{-i}\left(x^{-i}_{t}\right)dt\right]\,+\,\mathds{E}\left[\int^{T}_{0}\!\!-\pi^{i}_{t}\,\theta^{i}x^{i}_{t}\,dt\right]
\end{equation*}

\par Recalling the fact that we have $d\pi^{i}_{t}\, =\, -x^{i}_{t}\,dt$ by definition, we can rewrite the third term on the right hand side of the above equation as follows
\begin{equation*}
\mathds{E}\left[\int^{T}_{0}\!\!-\pi^{i}_{t}\,\theta^{i}x^{i}_{t}\,dt\right]\,=\,\mathds{E}\left[\int^{T}_{0}\!\!\pi^{i}_{t}\,\theta^{i}\,d\pi^{i}_{t}\right]\,=\,\frac{\theta^{i}}{2}\,\mathds{E}\left[\left(\pi^{i}_{T}\right)^{2} - \left(\pi^{i}_{0}\right)^{2}\right]
\end{equation*}

\par The claim then follows since $\pi^{i}_{T} = \pi^{i}_{0}$ which in turn follows from the fact that $\left\{x^{i}_{t}\right\}_{t\, \in\, [0,T]}$ is a round-trip trade.

\par We now prove the converse; the absence of dynamic arbitrage opportunities implies that the price impact function must be linear. We first establish an ancillary result which states that if there are no dynamic arbitrage opportunities then the function $\kappa^{i}$ must be an odd function. To this end, we fix $x \in \mathds{R}\backslash\{0\}$, and consider the following deterministic trading strategy for investor $i$
\begin{equation*}
    x^{i}_{t}\ =\ \begin{cases}\ x,& t\, \in\, \left[0,\,\frac{T}{2}\right]\\ 
    -x,& t\, \in\, \left(\frac{T}{2},\,T\right]\end{cases}
\end{equation*}

\par It is immediate that the trading strategy defined above is a round-trip trade. Further, recalling the fact that $d\pi^{i}_{t}\, =\, -x^{i}_{t}\,dt$, it follows from the definition above that we have
\begin{equation*}
    \int^{T}_{0}\!\!\pi^{i}_{t}\kappa^{i}\left(x^{i}_{t}\right)dt= -x\kappa^{i}(x)\!\mathlarger{\int}^{\frac{\,T}{2}}_{0}\!\!\left(\int^{t}_{0}\!\!du\right)dt - x\kappa^{i}(-x)\!\mathlarger{\int}^{T}_{\frac{\,T}{2}}\!\!\left(\int^{\frac{\,T}{2}}_{0}\!\!du\right)dt+ x\kappa^{i}(-x)\!\mathlarger{\int}^{T}_{\frac{\,T}{2}}\!\!\left(\int^{t}_{\frac{\,T}{2}}du\right)dt 
\end{equation*}

\par In view of the above, the absence of dynamic arbitrage opportunities then implies that the following must hold
\begin{equation}\label{da1}
    \mathds{E}\left[\int^{T}_{0}\!\!\pi^{i}_{t}\,\kappa^{i}\left(x^{i}_{t}\right)dt\right]\, =\,-\frac{x\,T^{2}}{8}\left[\kappa^{i}(x)\,+\,\kappa^{i}(-x)\right]\,\leq\, 0
\end{equation}

\par The claim follows on replacing $x$ by $-x$ in (\ref{da1}). Next, we fix $\alpha$, $\beta > 0$, $\tau = \beta T/(\alpha+\beta)$ and consider the following deterministic trading strategy for investor $i$
\begin{equation*}
    x^{i}_{t}\ =\ \begin{cases}\ \alpha, & t\, \in\, \left[0,\,\tau\right]\\ 
    -\beta, & t\, \in\, \left(\tau,\,T\right]\end{cases}
\end{equation*}

\par It is straightforward to check that the trading strategy defined above is a round-trip trade. Also, it follows from the definition above that we have
\begin{equation*}
    \int^{T}_{0}\!\!\pi^{i}_{t}\,\kappa^{i}\left(x^{i}_{t}\right)dt\,=\, -\alpha\kappa^{i}(\alpha)\!\!\mathlarger{\int}^{\tau}_{0}\!\!\left(\int^{t}_{0}\!\!du\right)dt\, -\, \alpha\kappa^{i}\!\left(\beta\right)\!\!\mathlarger{\int}^{T}_{\tau}\!\!\left(\int^{\tau}_{0}\!\!du\right)dt\,+\,\beta\kappa^{i}\!\left(\beta\right)\!\!\mathlarger{\int}^{T}_{\tau}\!\!\left(\int^{t}_{\tau}\!\!du\right)dt 
\end{equation*}

\par The absence of dynamic arbitrage opportunities together with the equation above then implies that
\begin{equation}\label{lin1}
    \mathds{E}\left[\int^{T}_{0}\!\!\pi^{i}_{t}\,\kappa^{i}\left(x^{i}_{t}\right)dt\right]\, =\,-\frac{\alpha\beta\,T^{2}}{2(\alpha+\beta)^{2}}\left[\alpha\kappa^{i}(-\beta)\,+\,\beta\kappa^{i}(\alpha)\right]\,\leq\, 0
\end{equation}

\par Next, given $\alpha$ and $\beta$ as above, we let $\hat{\tau} = \alpha T / (\alpha + \beta)$ and consider the following deterministic strategy for investor $i$
\begin{equation*}
    x^{i}_{t}\ =\ \begin{cases} -\beta, & t\, \in\, \left[0,\,\hat{\tau}\right]\\ 
    \ \alpha, & t\, \in\, \left(\hat{\tau},\,T\right]\end{cases}
\end{equation*}

\par The trading strategy clearly defines a round-trip trade. Moreover, it follows from the definition above that we have
\begin{equation*}
    \int^{T}_{0}\!\!\pi^{i}_{t}\,\kappa^{i}\left(x^{i}_{t}\right)\,dt\,=\, \beta\kappa^{i}\!\left(-\beta\right)\!\!\mathlarger{\int}^{\hat{\tau}}_{0}\!\!\left(\int^{t}_{0}\!\!du\right)dt\, +\, \beta\kappa^{i}(\alpha)\!\!\mathlarger{\int}^{T}_{\hat{\tau}}\!\!\left(\int^{\hat{\tau}}_{0}\!\!du\right)dt\,-\, \alpha\kappa^{i}(\alpha)\!\!\mathlarger{\int}^{T}_{\hat{\tau}}\!\!\left(\int^{t}_{\hat{\tau}}\!\!du\right)dt 
\end{equation*}

\par The equation above in conjunction with the absence of dynamic arbitrage opportunities then implies that
\begin{equation}\label{lin2}
    \mathds{E}\left[\int^{T}_{0}\!\!\pi^{i}_{t}\,\kappa^{i}\left(x^{i}_{t}\right)\,dt\right]\, =\,\frac{\alpha\beta\,T^{2}}{2(\alpha+\beta)^{2}}\left[\alpha\kappa^{i}(-\beta)\,+\,\beta\kappa^{i}(\alpha)\right]\,\leq\, 0
\end{equation}

\par From (\ref{lin1}) and (\ref{lin2}), it then follows that for $\alpha$, $\beta > 0$, we have $\alpha\kappa^{i}(-\beta)+\beta\kappa^{i}(\alpha) = 0$, and hence that
\begin{equation*}
\alpha\kappa^{i}(-\beta)\,=\,-\beta\kappa^{i}(\alpha)
\end{equation*} 

\par Since the above must hold for all $\alpha$, $\beta > 0$, we fix $\alpha > 0$ and let $\beta = 1$. Recalling the fact that $\kappa^{i}$ must be an odd function then implies that we must have  
\begin{equation*}
\kappa^{i}(\alpha)\,=\,-\alpha\kappa^{i}(-1)\, =\, \alpha\kappa^{i}(1) 
\end{equation*}

\par It only remains to show that $\kappa^{i}(0) = 0$. To this end, we consider the following deterministic strategy for $i$th investor 
\begin{equation*}
    x^{i}_{t}\ =\ \begin{cases}-\kappa^{i}(0), & t \,\in\, \left[0,\, \frac{T}{3}\right]\\ 
    \ \ 0, & t \,\in\, \left(\frac{T}{3},\, \frac{2T}{3}\right] \\  \ \kappa^{i}(0), & t \,\in\, \left[\frac{2T}{3},\, T\right] \end{cases}
\end{equation*}

\par Since the trading strategy defined above is a round-trip trade, it follows that
\begin{align}\label{kap}
\begin{split}
    \int^{T}_{0}\!\!\pi^{i}_{t}\,\kappa^{i}\left(x^{i}_{t}\right)\,dt\,=\,&\ \kappa^{i}(0)\,\kappa^{i}\!\left(-\kappa^{i}(0)\right)\!\!\mathlarger{\int}^{\frac{\,T}{3}}_{0}\!\left(\int^{t}_{0}du\right)dt\, +\, \left(\kappa^{i}(0)\right)^{2}\!\!\mathlarger{\int}^{\frac{\,2T}{3}}_{\frac{\,T}{3}}\!\left(\int^{\frac{\,T}{3}}_{0}\!du\right)dt\\
    & +\,\kappa^{i}(0)\,\kappa^{i}\!\left(\kappa^{i}(0)\right)\!\!\mathlarger{\int}^{T}_{\frac{\,2T}{3}}\!\left(\int^{\frac{\,T}{3}}_{0}\!du\right)dt\, -\,\kappa^{i}(0)\,\kappa^{i}\!\left(\kappa^{i}(0)\right)\!\!\mathlarger{\int}^{T}_{\frac{\,2T}{3}}\!\left(\int^{t}_{\frac{\,2T}{3}}\!du\right)dt
\end{split}
\end{align}

\par Recall that absence of dynamic arbitrage implies that $\kappa^{i}$ must be an odd function, which in conjunction with (\ref{kap}) implies the following
\begin{equation*}
    \mathds{E}\left[\int^{T}_{0}\!\!\pi^{i}_{t}\,\kappa^{i}\left(x^{i}_{t}\right)\,dt\right]\, =\,\frac{\,T^{2}}{9}\left(\kappa^{i}(0)\right)^{2}\,\leq\, 0
\end{equation*}

\par It follows that $\kappa^{i}(0)$ must be zero, and hence the price impact function must be linear.
\end{proof}

\subsection{Proof of \textcolor{azul-pesc}{\Cref{exist1}}}

\begin{proof}
\par \textsc{Existence} - We prove existence by way of Euler approximation approach as in \cite{applebaum2009levy}. To this end, we construct a sequence of stochastic differential equations through discretization of (\ref{stated}) on successively finer grids. Subsequently, we show that the constructed sequence has at least one uniformly convergent subsequence. Finally, we establish that the limit process of this uniformly convergent subsequence is a solution to (\ref{stated}). In what follows, we simplify notation by assuming that $s=0$, without any loss of generality. 

\par We define a sequence of stochastic processes $\{Y^{i,\,n},\,n\in \mathds{N}\cup 0\}$, by setting $Y^{i,\,0}_{t} = Y^{i}_{0}$ for $t \in [0,T]$. Further, for $n \geq 1$ we let $Y^{i,\,n}_{0} = Y^{i}_{0}$, and introduce $k(n,t) = \lfloor t n\rfloor/n$, $k^{+}(n,t) = \left(\lfloor t n\rfloor/n + 1\right)\wedge T$. Then, given $X^{i}, X^{-i} \in \mathcal{A}_{0}$, we define $Y^{i,\,n}_{t}$ for $t \in \left(k(n,t),\, k^{+}(n,t)\right]$, recursively as follows
\[Y^{i,\,n}_{t}\, =\, Y^{i,\,n}_{k(n,\,t)}\, +\,\int\limits^{t}_{k(n,\,t)}\!\!\!\!a^{i}\left(Y^{i,\,n}_{k(n,\,u)}\right)x^{-i}_{u}\, du\,+\, \int\limits^{t}_{k(n,\,t)}\!\!\!\!b^{i}\left(Y^{i,\,n}_{k(n,\,u)}\right)x^{i}_{u}\,du \,+\, \int\limits^{t}_{k(n,\,t)}\!\!\!\!\mathrm{v}^{i}\left(Y^{i,\,n}_{k(n,\,u)}\right)dB_{u}\]

\par Note that the recursive form of the definition above enables us to rewrite the expression defining $Y^{i,\,n}_{t}$ equivalently as follows
\begin{equation*}
Y^{i,\,n}_{t}\, =\, Y^{i,\,n}_{0}\, +\, \int\limits^{t}_{0}\!\!a^{i}\left(Y^{i,\,n}_{k(n,\,u)}\right)x^{-i}_{u}\,du\, +\,\int\limits^{t}_{0}\!\!b^{i}\left(Y^{i,\,n}_{k(n,\,u)}\right)x^{i}_{u}\,du\,+\, \int\limits^{t}_{0}\!\!\mathrm{v}^{i}\left(Y^{i,\,n}_{k(n,\,u)}\right)dB_{u}
\end{equation*}

\par Next, we prove existence and uniqueness of a strong solution to the system of stochastic differential equations defined by (\ref{stated}) using a localization argument. To this end, we define $\mathds{F}$-stopping times $\hat{\tau}_{k,\,x^{i}}$, $\hat{\tau}_{k,\,x^{-i}}$, and $\hat{\tau}_{k}$ as follows
\begin{equation*}
\hat{\tau}_{k,\,x^{i}}\, =\, \inf\left\{t > s:\, \left\lvert x^{i}_{t}\right\rvert\, >\, k\right\},\ \ \hat{\tau}_{k,\,x^{-i}}\, =\, \inf\left\{t > s:\, \left\lvert x^{-i}_{t}\right\rvert\, >\, k\right\},\ \ \hat{\tau}_{k}\, =\, \hat{\tau}_{k,\,x^{i}}\,\wedge\,\hat{\tau}_{k,\,x^{-i}}
\end{equation*}

\par Further, given $X^{i}, X^{-i} \in \mathcal{A}_{0}$ and $Y^{i}_{0} \in \mathds{R}^{\left\lvert\, Y^{i}\,\right\rvert}$, it follows by definition of $Y^{i,\,n}$ that for $t \in [0,T]$ we have
\begin{align*}
    \mathds{1}_{\left\{t\,\leq\,\hat{\tau}_{k}\right\}}\left\lvert\,Y^{i,\,1}_{t} - Y^{i,\,0}_{t}\,\right\rvert\ \leq\,& \int\limits^{t}_{0}\!\! \mathds{1}_{\left\{u\,\leq\,\hat{\tau}_{k}\right\}}\left\lvert\, a^{i}\!\left(Y^{i,\,1}_{k(n,\,u)}\right)\,\right\rvert\left\lvert\, x^{-i}_{u}\,\right\rvert du\, +\, \int\limits^{t}_{0}\!\! \mathds{1}_{\left\{u\,\leq\,\hat{\tau}_{k}\right\}}\left\lvert\, b^{i}\!\left(Y^{i,\,1}_{k(n,\,u)}\right)\,\right\rvert\left\lvert\, x^{i}_{u}\,\right\rvert du \\ & +\, \left\lvert\,\int\limits^{t}_{0}\!\!\mathds{1}_{\left\{u\,\leq\,\hat{\tau}_{k}\right\}}\mathrm{v}^{i}\!\left(Y^{i,\,1}_{k(n,\,u)}\right)dB_{u}\,\right\rvert
\end{align*}

\par Given the above, it then follows in view of Cauchy\textendash Schwarz inequality and the definition of the $\mathds{F}$-stopping time $\hat{\tau}_{k}$ that we can find a positive constant $C$ (dependent on $k$) such that
\begin{align*}
    \mathds{E}\left[\,\sup\limits_{t\,\in\, [0,\,T]}\mathds{1}_{\left\{t\,\leq\,\hat{\tau}_{k}\right\}}\left\lvert\,Y^{i,\,1}_{t} - Y^{i,\,0}_{t}\,\right\rvert^{2}\right]\, \leq\ &\,  C\mathds{E}\left[\,\int\limits^{T}_{0}\!\! \mathds{1}_{\left\{u\,\leq\,\hat{\tau}_{k}\right\}}\left\lvert\, a^{i}\!\left(Y^{i,\,1}_{k(n,\,u)}\right)\,\right\rvert^{2}\!\!du\,\right]\\ 
    & +\,C\mathds{E}\left[\,\int\limits^{T}_{0}\!\! \mathds{1}_{\left\{u\,\leq\,\hat{\tau}_{k}\right\}}\left\lvert\, b^{i}\!\left(Y^{i,\,1}_{k(n,\,u)}\right)\,\right\rvert^{2}\!\!du\,\right]\\ 
    & +\, C\mathds{E}\left[\,\sup\limits_{t\,\in\,[0,\,T]}\,\left\lvert\,\int\limits^{t}_{0}\!\!\mathds{1}_{\left\{u\,\leq\,\hat{\tau}_{k}\right\}}\mathrm{v}^{i}\!\left(Y^{i,\,1}_{k(n,\,u)}\right)dB_{u}\right\rvert^{2}\right]
\end{align*}

\par We invoke Burkholder\textendash Davis\textendash Gundy inequality in view of the third term on the right-hand side of the equation above  which implies the existence of a positive constant $C$ (dependent on $k$) such that the equation above leads us to
\begin{align*}
    \mathds{E}\left[\,\sup\limits_{t\,\in\, [0,\,T]}\mathds{1}_{\left\{t\,\leq\,\hat{\tau}_{k}\right\}}\left\lvert\,Y^{i,\,1}_{t} - Y^{i,\,0}_{t}\,\right\rvert^{2}\right]\, \leq\ &\,  C\mathds{E}\left[\,\int\limits^{T}_{0}\!\! \mathds{1}_{\left\{u\,\leq\,\hat{\tau}_{k}\right\}}\left\lvert\, a^{i}\!\left(Y^{i,\,1}_{k(n,\,u)}\right)\,\right\rvert^{2}\!du\,\right]\\ 
    & +\,C\mathds{E}\left[\,\int\limits^{T}_{0}\!\! \mathds{1}_{\left\{u\,\leq\,\hat{\tau}_{k}\right\}}\left\lvert\, b^{i}\!\left(Y^{i,\,1}_{k(n,\,u)}\right)\,\right\rvert^{2}\!du\,\right]\\ 
    & +\, C\mathds{E}\left[\,\int\limits^{T}_{0}\!\!\mathds{1}_{\left\{u\,\leq\,\hat{\tau}_{k}\right\}}\left\lvert\,\mathrm{v}^{i}\!\left(Y^{i,\,1}_{k(n,\,u)}\right)\,\right\rvert^{2}du\right]
\end{align*}

\par Next, by definition of the $\mathds{F}$-stopping time $\hat{\tau}_{k}$ it follows that $\mathds{1}_{\left\{t\,\leq\,\hat{\tau}_{k}\right\}}\lvert\,\pi^{i,\,1}_{k(n,\,t)}\rvert$, $\mathds{1}_{\left\{t\,\leq\,\hat{\tau}_{k}\right\}}\lvert\,\pi^{-i,\,1}_{k(n,\,t)}\rvert$ are bounded above by $kT$, for $t \in [0,T]$. Thus, it follows in view of the definition of the vector-valued functions $a^{i}$, $b^{i}$, and $\mathrm{v}^{i}$, in conjunction with the fact that the local volatility function $\sigma$ satisfies \textcolor{azul-pesc}{\Cref{siglip}}, that we can find a positive constant $C$ (dependent on $k$) such that the equation above leads us to
\begin{equation}\label{ind1}
    \mathds{E}\left[\,\sup\limits_{t\,\in\, [0,\,T]}\mathds{1}_{\left\{t\,\leq\,\hat{\tau}_{k}\right\}}\left\lvert\,Y^{i,\,1}_{t} - Y^{i,\,0}_{t}\,\right\rvert^{2}\right]\, \leq \,  CT\left(1 + k^{2}T^{2}\right)
\end{equation}

\par Similarly, for $n \in \mathds{N}$, with $n > 1$, we again invoke the definition of the $\mathds{F}$-stopping time $\hat{\tau}_{k}$ along with Cauchy\textendash Schwarz inequality to assert that we can find a positive constant $C$ (dependent on $k$) such that
\begin{align*}
    \mathds{E}\left[\sup\limits_{t\,\in\, [0,\,T]}\mathds{1}_{\left\{t\,\leq\,\hat{\tau}_{k}\right\}}\left\lvert\,Y^{i,\,n}_{t} - Y^{i,\,n-1}_{t}\,\right\rvert^{2}\right] \leq &\, C\mathds{E}\left[\,\int\limits^{T}_{0}\!\! \mathds{1}_{\left\{u\,\leq\,\hat{\tau}_{k}\right\}}\left\lvert\, a^{i}\!\left(Y^{i,\,n}_{k(n,\,u)}\right) - a^{i}\!\left(Y^{i,\,n-1}_{k(n,\,u)}\right)\,\right\rvert^{2}\!du\,\right]\\ 
    & + C\mathds{E}\left[\,\int\limits^{T}_{0}\!\! \mathds{1}_{\left\{u\,\leq\,\hat{\tau}_{k}\right\}}\left\lvert\, b^{i}\!\left(Y^{i,\,n}_{k(n,\,u)}\right)-b^{i}\!\left(Y^{i,\,n-1}_{k(n,\,u)}\right)\,\right\rvert^{2}\!du\,\right]\\ 
    & + C\mathds{E}\left[\sup_{t\,\in\,[0,\,T]}\,\left\lvert\,\int\limits^{t}_{0}\!\!\mathds{1}_{\left\{u\,\leq\,\hat{\tau}_{k}\right\}}\left(\mathrm{v}^{i}\!\left(Y^{i,\,n}_{k(n,\,u)}\right)-\mathrm{v}^{i}\!\left(Y^{i,\,n-1}_{k(n,\,u)}\right)\right)dB_{u}\right\rvert^{2}\right]
\end{align*}

\par Next, we recall that by definition the vector-valued functions $a^{i}$ and $b^{i}$ are Lipschitz continuous, which in conjunction with Burkholder\textendash Davis\textendash Gundy inequality implies that the equation above leads us to
\begin{align*}
    \mathds{E}\left[\,\sup\limits_{t\,\in\, [0,\,T]}\mathds{1}_{\left\{t\,\leq\,\hat{\tau}_{k}\right\}}\left\lvert\,Y^{i,\,n}_{t} - Y^{i,\,n-1}_{t}\,\right\rvert^{2}\right]\, \leq\ &\, C\mathds{E}\left[\,\int\limits^{T}_{0}\!\! \mathds{1}_{\left\{u\,\leq\,\hat{\tau}_{k}\right\}}\left\lvert\,Y^{i,\,n}_{k(n,\,u)} - Y^{i,\,n-1}_{k(n,\,u)}\,\right\rvert^{2}\!du\,\right]\\ 
    & +\, C\mathds{E}\left[\,\int\limits^{T}_{0}\!\!\mathds{1}_{\left\{u\,\leq\,\hat{\tau}_{k}\right\}}\,\left\lvert\mathrm{v}^{i}\!\left(Y^{i,\,n}_{k(n,\,u)}\right)-\mathrm{v}^{i}\!\left(Y^{i,\,n-1}_{k(n,\,u)}\right)\right\rvert^{2}dB_{u}\right]
\end{align*}

\par Further, since the local volatility function $\sigma$ is Lipschitz continuous in view of {\textcolor{azul-pesc}{\Cref{siglip}}}, it follows from the definition of $\mathrm{v}^{i}$ and the $\mathds{F}$-stopping time $\hat{\tau}_{k}$ that the function $\mathrm{v}^{i}$ satisfies Lipschitz continuity on $[0,\hat{\tau}_{k}]$. This fact in conjunction with Tonelli's theorem implies that we can then find a positive constant $C$ (dependent on $k$) such that the equation above leads us to
\begin{equation}\label{ind2}
    \mathds{E}\left[\,\sup\limits_{t\,\in\, [0,\,T]}\mathds{1}_{\left\{t\,\leq\,\hat{\tau}_{k}\right\}}\left\lvert\,Y^{i,\,n}_{t} - Y^{i,\,n-1}_{t}\,\right\rvert^{2}\right]\, \leq \ C\!\!\int\limits^{T}_{0}\!\!\mathds{E}\left[\,\sup_{\hat{u}\,\in\,[0,\,u]} \mathds{1}_{\left\{\hat{u}\ \leq\ \hat{\tau}_{k}\right\}}\left\lvert\,Y^{i,\,n}_{\hat{u}} - Y^{i,\,n-1}_{\hat{u}}\,\right\rvert^{2}\right]du
\end{equation}

\par In view of (\ref{ind1}) and (\ref{ind2}), we then proceed by induction to deduce the following bound
\begin{equation}\label{ind3}
    \mathds{E}\left[\,\sup\limits_{t\,\in\, [0,\,T]}\mathds{1}_{\left\{t\,\leq\,\hat{\tau}_{k}\right\}}\left\lvert\,Y^{i,\,n}_{t} - Y^{i,\,n-1}_{t}\,\right\rvert^{2}\right]\, \leq \ \frac{C\left(1+k^{2}T^{2}\right)T^{n}}{n!}
\end{equation}

\par Note that the positive constant $C$ in the equation above is dependent on $k$. Having derived the bound above, we next establish that $\left\{Y^{i,\,n}_{t}\right\}_{n\,\in\,\mathds{N}}$ is convergent in $\mathds{L}^{2}$ for $t \in [0, \hat{\tau}_{k}]$. To this end, consider $m,n \in \mathds{N}$ and note that given the equation above it follows that we have
\begin{equation}\label{ind4}
    \mathds{1}_{\left\{t\,\leq\,\hat{\tau}_{k}\right\}}\left\lvert\,Y^{i,\,n}_{t} - Y^{i,\,m}_{t}\,\right\rvert_{\,\mathds{L}^{2}}\ \leq\ \mathlarger{\sum}\limits^{n}_{j\,=\ m\,+\,1}\mathds{1}_{\left\{t\,\leq\,\hat{\tau}_{k}\right\}}\left\lvert\,Y^{i,\,j}_{t} - Y^{i,\ j-1}_{t}\,\right\rvert_{\,\mathds{L}^{2}}\ \leq\ \mathlarger{\sum}\limits^{n}_{j\, =\ m\,+\,1}\frac{\left[\,C\left(1+k^{2}T^{2}\right)T^{j}\right]^{1/2}}{\left(j!\right)^{1/2}}
\end{equation}

\par It is immediate that the sum on the right-hand side of the equation above converges, which in turn implies that the sequence $\left\{Y^{i,\,n}_{t}\right\}_{n\,\in\,\mathds{N}} \in \mathds{L}^{2}$ is Cauchy for $t \in [0, \hat{\tau}_{k}]$. Further, since $\mathds{L}^{2}$ is a Banach space, it is complete and hence the sequence $\left\{Y^{i,\,n}_{t}\right\}_{n\,\in\,\mathds{N}}$ converges to some $Y^{i}_{t} \in \mathds{L}^{2}$ for $t \in [0, \hat{\tau}_{k}]$. To establish $\mathds{P}$-almost sure uniform convergence of the sequence to the limit process $Y^{i}$ obtained above, we note that given (\ref{ind3}), it follows in view of Chebyshev's inequality that we have
\begin{equation*}
    \mathds{P}\left(\,\sup\limits_{t\,\in\, [0,\,T]}\mathds{1}_{\left\{t\,\leq\,\hat{\tau}_{k}\right\}}\left\lvert\,Y^{i,\,n}_{t} - Y^{i,\,n-1}_{t}\,\right\rvert^{2}\,\geq\,\frac{1}{2^{n}}\,\right)\ \leq \ \frac{C\left(1+k^{2}T^{2}\right)(4T)^{n}}{n!}
\end{equation*}

\par Since the term on the right-hand side of the equation above converges to zero as $n \rightarrow \infty$, we invoke Borel\textendash Cantelli lemma \cite[Theorem 8.3.4]{dudley2018real} which then implies that the equation above leads us to 
\begin{equation*}
    \mathds{P}\left(\,\limsup_{n\, \rightarrow\, \infty}\,\sup\limits_{t\,\in\, [0,\,T]}\mathds{1}_{\left\{t\,\leq\,\hat{\tau}_{k}\right\}}\left\lvert\,Y^{i,\,n}_{t} - Y^{i,\,n-1}_{t}\,\right\rvert^{2}\,\geq\,\frac{1}{2^{n}}\,\right)\ = \ 0
\end{equation*}

\par Note that we can re-write the above equivalently as
\begin{equation*}
    \mathds{P}\left(\,\liminf_{n\, \rightarrow\, \infty}\,\sup\limits_{t\,\in\, [0,\,T]}\mathds{1}_{\left\{t\,\leq\,\hat{\tau}_{k}\right\}}\left\lvert\,Y^{i,\,n}_{t} - Y^{i,\,n-1}_{t}\,\right\rvert^{2}\,<\,\frac{1}{2^{n}}\,\right)\ = \ 1
\end{equation*}

\par It is then immediate from the above that the sequence $\{Y^{i,\,n}\}_{n\,\in\,\mathds{N}}$ is $\mathds{P}$-almost surely uniformly convergent on the interval $[0,\hat{\tau}_{k}]$ to the limit process $Y^{i} \in \mathds{L}^{2}$ obtained above.

\par Next, we show that the stochastic process $\{Y^{i}\}$ that is obtained as a uniform limit of the Euler approximation is indeed a solution to the stochastic differential equation (\ref{stated}) on the interval $[0,\hat{\tau}_{k}]$. To this end, we define a stochastic process $\hat{Y}^{i}$ for $t \in [0,\hat{\tau}_{k}]$ as follows
\begin{equation*}
\hat{Y}^{i}_{t}\, =\, Y^{i}_{0}\, +\, \int\limits^{t}_{0}\!\!a^{i}\left(Y^{i}_{u}\right)x^{-i}_{u}\,du\, +\,\int\limits^{t}_{0}\!\!b^{i}\left(Y^{i}_{u}\right)x^{i}_{u}\,du\,+\, \int\limits^{t}_{0}\!\!\mathrm{v}^{i}\left(Y^{i}_{u}\right)dB_{u}
\end{equation*}

\par Given the definition above, we appeal to an argument identical to the one used in deriving (\ref{ind2}) above, in order to assert the existence the of a positive constant $C$ (dependent on $k$) such that we have
\begin{equation*}
    \mathds{E}\left[\mathds{1}_{\left\{t\,\leq\,\hat{\tau}_{k}\right\}}\left\lvert\,\hat{Y}^{i}_{t} - Y^{i,\,n}_{t}\,\right\rvert^{2}\right] \leq C\!\!\int\limits^{T}_{0}\!\!\mathds{E}\left[ \mathds{1}_{\left\{u\ \leq\ \hat{\tau}_{k}\right\}}\left\lvert\,Y^{i}_{u} - Y^{i,\,n}_{u}\,\right\rvert^{2}\right]du \leq CT\sup\limits_{t\, \in\, [0,\,T]}\,\mathds{E}\left[\,\mathds{1}_{\left\{t\,\leq\,\hat{\tau}_{k}\right\}}\left\lvert\,Y^{i}_{t} - Y^{i,\,n}_{t}\,\right\rvert^{2}\right]
\end{equation*}

\par Further, it follows in view of the bound derived in (\ref{ind4}) that we have
\begin{equation*}
    \sup\limits_{t\, \in\, [0,\,T]}\,\mathds{E}\left[\,\mathds{1}_{\left\{t\,\leq\,\hat{\tau}_{k}\right\}}\left\lvert\,Y^{i}_{t} - Y^{i,\,n}_{t}\,\right\rvert^{2}\right]\ \ \leq\ \ \left(\,\mathlarger{\sum}\limits^{\infty}_{j\, =\ n\,+\,1}\frac{\left[\ C\left(1+k^{2}T^{2}\right)T^{j}\right]^{1/2}}{\left(j!\right)^{1/2}}\right)^{2}
\end{equation*}

\par Thus, we combine the inequalities above and recall that the right-hand side of the equation above converges to zero as $n \rightarrow \infty$ to establish that 
\begin{equation*}
    \limsup\limits_{n\, \rightarrow\, \infty}\,\sup\limits_{t\,\in\,[0,T]}\,\mathds{1}_{\left\{t\,\leq\,\hat{\tau}_{k}\right\}}\left\lvert\,\hat{Y}^{i}_{t} - Y^{i,\,n}_{t}\,\right\rvert_{\,\mathds{L}^2}\ = \ 0
\end{equation*}

\par It is immediate from the above and the uniqueness of limits that we have \[\mathds{P}\Big(\hat{Y}^{i}_{t}\, =\, Y^{i}_{t},\ t \in \left[0, \hat{\tau}_{k}\right]\Big)\, =\, 1\]

\par \textsc{Uniqueness} - Let $Y^{i}$ and $\tilde{Y}^{i}$ be two distinct stochastic processes which are solutions to the system of stochastic differential equations defined by (\ref{stated}). We again employ an argument identical to the one used in deriving (\ref{ind2}) above, in order to assert the existence the of a positive constant $C$ (dependent on $k$) such that we have
\begin{equation*}
    \mathds{E}\left[\,\sup\limits_{t\,\in\, [0,\,T]}\mathds{1}_{\left\{t\,\leq\,\hat{\tau}_{k}\right\}}\left\lvert\,Y^{i}_{t} - \tilde{Y}^{i}_{t}\,\right\rvert^{2}\right]\, \leq \ C\!\!\int\limits^{T}_{0}\!\!\mathds{E}\left[\,\sup_{\hat{u}\,\in\,[0,\,u]} \mathds{1}_{\left\{\hat{u}\ \leq\ \hat{\tau}_{k}\right\}}\left\lvert\,Y^{i}_{\hat{u}} - \tilde{Y}^{i}_{\hat{u}}\,\right\rvert^{2}\right]du
\end{equation*}

\par Given the equation above, it is then immediate from Gronwall's lemma that we have
\begin{equation*}
    \mathds{E}\left[\,\sup\limits_{t\,\in\, [0,\,T]}\mathds{1}_{\left\{t\,\leq\,\hat{\tau}_{k}\right\}}\left\lvert\,Y^{i}_{t} - \tilde{Y}^{i}_{t}\,\right\rvert^{2}\right]\, = \ 0
\end{equation*}

\par In view of the above, it is then immediate from Chebyshev's inequality that we have \[\mathds{P}\Big(Y^{i}_{t}\, =\, \tilde{Y}^{i}_{t},\ t \in \left[0, \hat{\tau}_{k}\right]\Big)\, =\, 1\]

\par \textsc{Non-Explosion} - Finally, we establish that the solution is non-explosive. To this end, we let $\zeta$ denote the lifetime of $\{Y^{i}\}$ defined as $\zeta = \liminf_{n\,\rightarrow \,\infty}\{\,t > 0: \lvert Y^{i}_{t}\rvert \geq  n \}$, where $n \in \mathds{N}$. It is immediate in view of the above that
\begin{equation*}
\zeta\, =\, \liminf_{k\, \rightarrow\, \infty}\,\hat{\tau}_{k}    
\end{equation*}

\par It then remains to show that $\zeta > T$, $\mathds{P}$-almost surely. To this end, we follow the approach employed in \cite{lan2014new}, whereby we consider the following test function defined on the positive real line
\begin{equation*}
    g(x)\, =\, \int\limits^{x}_{0}\!\!\!\frac{d\epsilon}{\epsilon+1}
\end{equation*}

\par The function $g(x)$ defined above is strictly increasing, and strictly concave on the positive real line. Moreover, since $\lim_{\,x \rightarrow 0}\, D_{1}g(x) = 1$, we can find a concave extension $\tilde{g}$ of $g$ on the entire real line, such that $\tilde{g}(x) = g(x)$, for all $x \geq 0$. Also, the second-order derivative $D^{2}_{1}\,\tilde{g}(x)$ of the extended function $\tilde{g}$, is a non-positive Radon measure in the generalized sense \cite[Appendix III]{revuz2013continuous}. Given the above, we let $\vartheta_{t} = \left\lvert Y^{i}_{t}\right\rvert^{2}$, and apply the It\^{o}\textendash Meyer\textendash Tanaka formula \cite[Theorem VI.1.5]{revuz2013continuous} to the function $\tilde{g}\!\left(\vartheta_{t\,\wedge\,\zeta}\right)$ so as to obtain the following
\begin{align*}
\tilde{g}\!\left(\vartheta_{t\,\wedge\,\zeta}\right)\, =\, &\ \tilde{g}(\vartheta_{0})\,+\, 2\!\!\int\limits_{0}^{t\,\wedge\,\zeta}\!\!\!D_{1}\tilde{g}(\vartheta_{u})\left\langle Y^{i}_{u}\,,\,a^{i}\!\left(Y^{i}_{u}\right)x^{-i}_{u}\right\rangle du\,
+\,2\!\!\int\limits_{0}^{t\,\wedge\,\zeta}\!\!\!D_{1}\tilde{g}(\vartheta_{u})\left\langle Y^{i}_{u}\,,\,b^{i}\!\left(Y^{i}_{u}\right)x^{i}_{u}\right\rangle du\\
& +\, \int\limits_{0}^{t\,\wedge\,\zeta}\!\!\!D_{1}\tilde{g}(\vartheta_{u})\left\lvert\mathrm{v}^{i}\left(Y^{i}_{u}\right)\right\rvert^{2}\!du\,
+\, 2\!\!\int\limits_{0}^{t\,\wedge\,\zeta}\!\!\!D_{1}\tilde{g}(\vartheta_{u})\left\langle Y^{i}_{u}\,,\mathrm{v}^{i}\left(Y^{i}_{u}\right)\right\rangle dB_{u}\,
+\,\frac{1}{2}\int_{\mathds{R}}\mathrm{L}^{q}_{t\,\wedge\,\zeta}(\vartheta)\,D^{2}_{1}\,\tilde{g}\left(dq\right)
\end{align*}

\par Note that the stochastic process $\mathrm{L}^{q}$ above denotes the local time process of $\vartheta$. Since, the local time is always non-negative by definition, and the Radon measure $D^{2}_{1}\,\tilde{g}$ is non-positive given that $\tilde{g}$ is concave, it follows that upon taking expectations we have
\begin{multline*}
\mathds{E}\left[g\!\left(\vartheta_{t\,\wedge\,\zeta}\right)\right]\leq g(\vartheta_{0})+ 2\mathds{E}\left[\int\limits_{0}^{t\,\wedge\,\zeta}\!\!\!D_{1}g(\vartheta_{u})\left\langle Y^{i}_{u}\,,\,a^{i}\!\left(Y^{i}_{u}\right)x^{-i}_{u}\right\rangle du\right]
 + 2\mathds{E}\left[\int\limits_{0}^{t\,\wedge\,\zeta}\!\!\!D_{1}g(\vartheta_{u})\left\langle Y^{i}_{u}\,,\,b^{i}\!\left(Y^{i}_{u}\right)x^{i}_{u}\right\rangle du\right]\\
+\mathds{E}\left[\mathlarger{\int}\limits^{t\,\wedge\,\zeta}_{0}\!D_{1}g(\vartheta_{u})\left\lvert\mathrm{v}^{i}\!\left(Y^{i}_{u}\right)\right\rvert^{2}\!du\right]
 +2\mathds{E}\left[\,\sup\limits_{0\,\leq\, \hat{t}\, \leq\, t}\ \left\lvert\, \int\limits_{0}^{\hat{t}}\!\mathds{1}_{\left\{\,u\,\leq\,\zeta\,\right\}}D_{1}{g}(\vartheta_{u})\left\langle Y^{i}_{u}\,,\mathrm{v}^{i}\left(Y^{i}_{u}\right)\right\rangle dB_{u}\,\right\rvert\ \right]
\end{multline*}

\par Given the equation above, we invoke Burkholder\textendash Davis\textendash Gundy inequality, by way of which we can find a positive constant $C$ (not dependent on $t$) such that the equation above leads us to
\begin{align*}
\mathds{E}\left[g\!\left(\vartheta_{t\,\wedge\,\zeta}\right)\right]\, \leq\,&\ C\,+\,  C\mathds{E}\left[\int\limits_{0}^{t\,\wedge\,\zeta}\!\!\!D_{1}\tilde{g}(\vartheta_{u})\left\langle Y^{i}_{u}\,,\,a^{i}\!\left(Y^{i}_{u}\right)x^{-i}_{u}\right\rangle du\right]\,
+\, C\mathds{E}\left[\int\limits_{0}^{t\,\wedge\,\zeta}\!\!\!D_{1}\tilde{g}(\vartheta_{u})\left\langle Y^{i}_{u}\,,\,b^{i}\!\left(Y^{i}_{u}\right)x^{i}_{u}\right\rangle du\right]\\
& +\, C\mathds{E}\left[\int\limits^{t\,\wedge\,\zeta}_{0}\!\!\!D_{1}g(\vartheta_{u})\left\lvert\mathrm{v}^{i}\!\left(Y^{i}_{u}\right)\right\rvert^{2}\!\! du\right]\,
+ \,C\mathds{E}\left[\,\int\limits_{0}^{t\,\wedge\,\zeta}\!\!\big\lvert D_{1}{g}(\vartheta_{u})\big\rvert^{2}\left\lvert Y^{i}_{u}\right\rvert^{2}\,\left\lvert\mathrm{v}^{i}\!\left(Y^{i}_{u}\right)\right\rvert^{2}\!\! du\,\right]^{1/2}
\end{align*}

\par Further, recalling the definition of the test function $g$ and in view of Cauchy\textendash Schwarz inequality, we can find a positive constant $C$ (not dependent on $t$) such that the equation above leads us to
\begin{equation*}
\mathds{E}\left[g\!\left(\vartheta_{t\,\wedge\,\zeta}\right)\right]\! \leq  C\,+\, C\mathds{E}\left[\int\limits_{0}^{t\,\wedge\,\zeta}\!\!\!\left\lvert\, a^{i}\!\left(Y^{i}_{u}\right)\right\rvert \left\lvert x^{-i}_{u}\right\rvert du\right]\,
+\, C\mathds{E}\left[\int\limits_{0}^{t\,\wedge\,\zeta}\!\!\!\left\lvert\, b^{i}\!\left(Y^{i}_{u}\right)\right\rvert \left\lvert x^{i}_{u}\right\rvert du\right]\,
+\, C\mathds{E}\left[\int\limits^{t\,\wedge\,\zeta}_{0}\!\!\!\left\lvert\,\mathrm{v}^{i}\!\left(Y^{i}_{u}\right)\right\rvert^{2}\!\! du\right]
\end{equation*}

\par Next, we recall the definition of the vector-valued functions $a^{i}$, $b^{i}$ and $\mathrm{v}^{i}$ in conjunction with the fact that the local volatility function satisfies {\color{azul-pesc}\Cref{siglip}} to ascertain that there exists a positive constant $C$ (not dependent on $t$) such that for $0 \leq t \leq T$ we obtain the following upper bounds
\begin{align*}
\left\lvert\, a^{i}\!\left(Y^{i}_{t}\right)\,\right\rvert \left\lvert\, x^{-i}_{t}\,\right\rvert\, \leq\,&\ C\Big(\left\lvert\, x^{-i}_{t}\, \right\rvert +\left\lvert\, x^{-i}_{t}\, \right\rvert\left\lvert\,\pi^{i}_{t}\,\right\rvert + \left\lvert\, x^{-i}_{t}\, \right\rvert\left\lvert\,\pi^{-i}_{t}\,\right\rvert\Big)\\
\left\lvert\, b^{i}\!\left(Y^{i}_{t}\right)\,\right\rvert \left\lvert\, x^{i}_{t}\,\right\rvert\, \leq\,&\ C\Big(\left\lvert\, x^{i}_{t}\, \right\rvert + \left\lvert\, x^{i}_{t}\, \right\rvert\left\lvert\,\pi^{i}_{t}\,\right\rvert + \left\lvert\, x^{i}_{t}\, \right\rvert\left\lvert\,\pi^{-i}_{t}\,\right\rvert\Big)\\
\left\lvert\,\mathrm{v}^{i}\!\left(Y^{i}_{t}\right)\,\right\rvert^{2}\, \leq\,&\ C\!\left(1 + \left\lvert\,\pi^{i}_{t}\,\right\rvert^{2} + \left\lvert\,\pi^{-i}_{t}\,\right\rvert^{2}\right)
\end{align*}

\par In view of the above, we invoke H\"{o}lder's inequality which in conjunction with the fact that $X^{i},\, X^{-i} \in \mathcal{A}_{0}$, for $i, -i \in \mathcal{I}$ then implies that we have
\begin{equation}
\mathds{E}\left[\int\limits^{t\,\wedge\,\zeta}_{0}\!\!\left\lvert x^{-i}_{u}\right\rvert\,\left\lvert \pi^{i}_{u}\right\rvert du\right]\ \leq\ \mathds{E}\left[\,\left(\,\int\limits^{T}_{0}\!\!\left\lvert x^{-i}_{u}\right\rvert du\,\right)\,\left(\,\int\limits^{T}_{0}\!\!\left\lvert x^{i}_{l}\right\rvert dl\,\right)\,\right]\ <\ \infty
\end{equation}

\par Similarly, given the definition of $\pi^{i}_{t}$, we can appeal to H\"{o}lder's inequality again to assert that for $i, -i \in \mathcal{I}$, such that $X^{i},\, X^{-i} \in \mathcal{A}_{0}$ we have
\begin{equation}
\mathds{E}\left[\int\limits^{t\,\wedge\,\zeta}_{0}\!\!\left\lvert\, \pi^{-i}_{u}\,\right\rvert^{2} du\right]\, \leq\ \mathds{E}\left[\int\limits^{T}_{0}\!\!\left(\int^{u}_{0}\!\left\lvert x^{-i}_{l}\right\rvert dl\,\right)^{2}\!\!\!du\right]\, \leq\ \mathds{E}\left[T\, \left(\,\int\limits^{T}_{0}\!\left\lvert x^{-i}_{u}\right\rvert du\right)^{2}\right]\ <\ \infty
\end{equation}

\par Therefore, given that the processes $X^{i},\, X^{-i} \in \mathcal{A}_{0}$, and in view of the bounds derived above, it then follows that for $T \in \mathds{R}_{+}$, we must have
\begin{equation*}
\mathds{E}\left[g\!\left(\vartheta_{t\,\wedge\,\zeta}\right)\right]\ =\ \mathds{E}\left[\int\limits^{\ \vartheta_{t\,\wedge\, \zeta}}_{0}\!\!\!\!\frac{d\epsilon}{\epsilon + 1}\,\right]\ \leq\ C_{T}\, < \infty
\end{equation*}

\par Notice that here the bound $C_{T}$ is a function of $T$ alone. Now, suppose $\mathds{P}\left(\zeta<T\right) > 0$. Then, there must exist $\hat{t} \leq T \in \mathds{R}_{+}$ such that $\mathds{P}\left(\zeta<\hat{t}\,\right) > 0$. From the above, it then follows that with non-zero probability, we must have
\begin{equation*}
\mathds{1}_{\left\{\zeta\,\leq\,\hat{t}\right\}}\ \mathds{E}\left[\,g\!\left(\vartheta_{\,\hat{t}\,\wedge\,\zeta}\right)\,\right]\ =\ \mathds{E}\left[\,\int\limits^{\vartheta_{\zeta}}_{0}\!\!\!\!\frac{d\epsilon}{\epsilon + 1}\,\right]\ \leq\ C_{T}\ <\ \infty
\end{equation*}

\par However, we have $g(\vartheta_{\zeta}) = \infty$ by definition, and hence we arrive at a contradiction. Thus, it follows that we have $\zeta = \infty$, $\mathds{P}$-almost surely and hence the solution is non-explosive. In particular, this implies that the system of stochastic differential equation has a well-defined strong solution for $0 < T < \infty$.
\end{proof}

\subsection{Proof of \textcolor{azul-pesc}{\Cref{reg}}}

\begin{proof}
\par(i) Recall that $b^{i}: \left(S,\pi^{i},\pi^{j},W^{i},W^{j}\right)^{\mathrm{T}} \rightarrow \left(-\theta^{i},-1,0,-\theta^{i}\pi^{i},-\theta^{i}\pi^{-i}\right)^{\mathrm{T}}$, which implies that the Jacobian matrix $J_{b^{i}}$ associated with the vector-valued function $b^{i}$ is well-defined and bounded with respect to the Hilbert\textendash Schmidt norm. This in turn implies that $b^{i}$ is Lipschitz continuous on its domain. The existence and uniqueness of $\phi^{i}$ then follows as an immediate consequence of the Picard\textendash Lindel\"of Theorem \cite[Theorem 6.1.3]{applebaum2009levy}.

\par (ii) We shall first prove that $\varepsilon^{i}_{q}$ is continuously differentiable. To this end, we recall from (i) above that $b^{i}$ is Lipschitz continuous on its domain. Let $K$ denote a Lipschitz constant for $b^{i}$ and consider vectors $y,\hat{y}$ in the domain of $b^{i}$. Lipschitz continuity of $b^{i}$ together with (\ref{flow}) then implies the following
\begin{equation*}
\left\lvert\varepsilon^{i}_{q}\!\left(y\right)-\varepsilon^{i}_{q}\!\left(\hat{y}\right)\right\rvert\leq\left\lvert y-\hat{y}\right\rvert+\!\mathlarger{\int}_{0}^{q}\!\!\left\lvert b^{i}\!\left(\varepsilon^{i}_{s}\!\left(y\right)\right)-b^{i}\!\left(\varepsilon^{i}_{s}\!\left(\hat{y}\right)\right)\right\rvert ds\leq \left\lvert y-\hat{y}\right\rvert + K\!\mathlarger{\int}_{0}^{q}\!\!\left\lvert\varepsilon^{i}_{s}\!\left(y\right)-\varepsilon^{i}_{s}\!\left(\hat{y}\right)\right\rvert ds
\end{equation*}

\par From the above, we obtain the following as a corollary of Gronwall's lemma \cite[Proposition 6.1.4]{applebaum2009levy}  
\begin{equation}\label{lipphi}
\left\lvert\,\varepsilon^{i}_{q}\left(y\right)-\varepsilon^{i}_{q}\left(\hat{y}\right)\,\right\rvert \leq \left\lvert y-\hat{y}\right\rvert e^{K\,\lvert q\rvert}
\end{equation}

\par Further, note from (i) above that the Jacobian $J_{b^{i}}$ is a matrix of scalar functions which are constant and hence independent of the state vector. It is then immediate that for these functions, partial derivatives of all orders exist and are bounded. Consequently, it follows from the Picard\textendash Lindel\"of theorem that for each vector $y$ in the domain of $b^{i}$, there exists a unique solution $e(q,y)$, to the following matrix-valued differential equation
\begin{equation}\label{dphi}
\frac{\partial}{\partial q}\,e(q,y)\,=\,J_{b^{i}}\!\left(e(q,y)\right),\quad e(0,y)\, =\, \mathds{I}_{5}
\end{equation}

\par The fact that $\varepsilon^{i}_{q}$ is continuously differentiable at a given point $y$ in its domain and that $\partial\varepsilon^{i}_{q}(y)/\partial y$ equals $e(q,y)$ follows from \cite[Theorem 6.1.7]{applebaum2009levy}. Likewise, the fact that $\varepsilon^{i}_{q}$ is twice continuously differentiable follows from applying the preceding arguments to the first derivative.
\end{proof}

\subsection{Proof of \textcolor{azul-pesc}{\Cref{ext}}}

\begin{proof}
\par \textsc{Existence} - In order to prove existence, we extend the Euler approximation method employed in \cite{lan2014new}. First, we construct a sequence of stochastic differential equations through discretization of (\ref{astate}) on successively finer grids. Subsequently, we show that the constructed sequence has at least one uniformly convergent subsequence. Finally, we establish that the limit process of this uniformly convergent subsequence is a solution to (\ref{astate}). In what follows, we simplify notation by assuming that $s=0$, without loss of generality. 

\par We define a sequence of stochastic processes $\{Z^{i,\,n},\,n\in \mathds{N}\cup 0\}$, by setting $Z^{i,\,0}_{t} = Z^{i}_{0}$ for $t \in [0,T]$. Further, for $n \geq 1$ we let $Z^{i,\,n}_{0} = Z^{i}_{0}$, and introduce $k(n,t) = \lfloor t n\rfloor/n$, $k^{+}(n,t) = \left(\lfloor t n\rfloor/n + 1\right)\wedge T$. Then, given $X^{-i} \in \mathcal{A}_{0}$, $\pi^{i} \in \mathcal{A}^{a}_{0}$ we define $Z^{i,\,n}_{t}$ for $t \in \left(k(n,t),\, k^{+}(n,t)\right]$, recursively as follows
\[Z^{i,\,n}_{t}\, =\ Z^{i,\,n}_{k(n,\,t)}\ +\int\limits^{t}_{k(n,\,t)}\!\!\!\!\beta^{i}\!\left(\pi^{i}_{u},Z^{i,\,n}_{k(n,\,u)}\right)x^{-i}_{u}du\ + \int\limits^{t}_{k(n,\,t)}\!\!\!\!\mathrm{\nu}^{i}\!\left(\pi^{i}_{u},Y^{i,\,n}_{k(n,\,u)}\right)dB_{u}\]

\par Note that the recursive form of the definition above enables us to rewrite the expression defining $Z^{i,\,n}_{t}$ equivalently as follows
\begin{equation*}
Z^{i,\,n}_{t}\, =\ Z^{i,\,n}_{0}\ +\int\limits^{t}_{0}\!\!\beta^{i}\!\left(\pi^{i}_{u},Z^{i,\,n}_{k(n,\,u)}\right)x^{-i}_{u}du\ + \int\limits^{t}_{0}\!\!\mathrm{\nu}^{i}\!\left(\pi^{i}_{u},Y^{i,\,n}_{k(n,\,u)}\right)dB_{u}
\end{equation*}

\par Next, we prove existence and uniqueness of a strong solution to the system of stochastic differential equations defined by (\ref{astate}) using a localization argument. To this end, we define $\mathds{F}$-stopping time $\hat{\tau}_{k}$ as follows
\begin{equation*}
\hat{\tau}_{k}\, =\, \inf\left\{t > s:\, \left\lvert x^{-i}_{t}\right\rvert\, >\, k\right\}
\end{equation*}

\par Further, given $X^{-i} \in \mathcal{A}_{0}$, $\pi^{i} \in \mathcal{A}^{a}_{0}$ and $Z^{i}_{0} \in \mathds{R}^{\left\lvert\, Z^{i}\,\right\rvert}$, it follows by definition of $Z^{i,\,n}$ that for $t \in [0,T]$ we have
\begin{equation*}
    \mathds{1}_{\left\{t\,\leq\,\hat{\tau}_{k}\right\}}\left\lvert\,Z^{i,\,1}_{t} - Z^{i,\,0}_{t}\,\right\rvert\ \leq\,\int\limits^{t}_{0}\!\! \mathds{1}_{\left\{u\,\leq\,\hat{\tau}_{k}\right\}}\left\lvert\, \beta^{i}\!\left(\pi^{i}_{u},Z^{i,\,1}_{k(n,\,u)}\right)\,\right\rvert\left\lvert\,x^{-i}_{u}\,\right\rvert du\, +\, \left\lvert\,\int\limits^{t}_{0}\!\!\mathds{1}_{\left\{u\,\leq\,\hat{\tau}_{k}\right\}}\,\mathrm{\nu}^{i}\!\left(\pi^{i}_{u},Z^{i,\,1}_{k(n,\,u)}\right)dB_{u}\,\right\rvert
\end{equation*}

\par Given the above, it then follows in view of Cauchy\textendash Schwarz inequality and the definition of the $\mathds{F}$-stopping time $\hat{\tau}_{k}$ that we can find a positive constant $C$ (dependent on $k$) such that
\begin{align*}
    \mathds{E}\left[\,\sup\limits_{t\,\in\, [0,\,T]}\mathds{1}_{\left\{t\,\leq\,\hat{\tau}_{k}\right\}}\left\lvert\,Z^{i,\,1}_{t} - Z^{i,\,0}_{t}\,\right\rvert^{2}\right]\, \leq\ &\,  C\mathds{E}\left[\,\int\limits^{T}_{0}\!\! \mathds{1}_{\left\{u\,\leq\,\hat{\tau}_{k}\right\}}\left\lvert\, \beta^{i}\!\left(\pi^{i}_{u},Z^{i,\,1}_{k(n,\,u)}\right)\,\right\rvert^{2}\!\!du\,\right]\\ 
    & +\, C\mathds{E}\left[\,\sup\limits_{t\,\in\,[0,\,T]}\,\left\lvert\,\int\limits^{t}_{0}\!\!\mathds{1}_{\left\{u\,\leq\,\hat{\tau}_{k}\right\}}\,\mathrm{\nu}^{i}\!\left(\pi^{i}_{u},Z^{i,\,1}_{k(n,\,u)}\right)dB_{u}\right\rvert^{2}\right]
\end{align*}

\par Further, we invoke Burkholder\textendash Davis\textendash Gundy inequality in view of the third term on the right-hand side of the equation above which implies the existence of a positive constant $C$ (dependent on $k$) such that the equation above leads us to
\begin{align*}
    \mathds{E}\left[\,\sup\limits_{t\,\in\, [0,\,T]}\mathds{1}_{\left\{t\,\leq\,\hat{\tau}_{k}\right\}}\left\lvert\,Z^{i,\,1}_{t} - Z^{i,\,0}_{t}\,\right\rvert^{2}\right]\, \leq\ &\,  C\mathds{E}\left[\,\int\limits^{T}_{0}\!\! \mathds{1}_{\left\{u\,\leq\,\hat{\tau}_{k}\right\}}\left\lvert\, \beta^{i}\!\left(\pi^{i}_{u},Z^{i,\,1}_{k(n,\,u)}\right)\,\right\rvert^{2}\!\!du\,\right]\\ 
    & +\, C\mathds{E}\left[\,\int\limits^{T}_{0}\!\! \mathds{1}_{\left\{u\,\leq\,\hat{\tau}_{k}\right\}}\left\lvert\, \nu^{i}\!\left(\pi^{i}_{u},Z^{i,\,1}_{k(n,\,u)}\right)\,\right\rvert^{2}\!\!du\,\right]
\end{align*}

\par Next, by definition of the $\mathds{F}$-stopping time $\hat{\tau}_{k}$ it follows that $\mathds{1}_{\left\{t\,\leq\,\hat{\tau}_{k}\right\}}\lvert\,\pi^{-i,\,1}_{k(n,\,t)}\,\rvert$ is bounded above by $kT$, for $t \in [0,T]$. Thus, it follows in view of the definition of the vector-valued functions $\beta^{i}$, $\nu^{i}$ in conjunction with the fact that the local volatility function $\sigma$ satisfies \textcolor{azul-pesc}{\Cref{siglip}} and $\pi^{i} \in \mathcal{A}^{a}_{0}$, that we can find a positive constant $C$ (dependent on $k$) such that the equation above leads us to
\begin{equation}\label{ind5}
    \mathds{E}\left[\,\sup\limits_{t\,\in\, [0,\,T]}\mathds{1}_{\left\{t\,\leq\,\hat{\tau}_{k}\right\}}\left\lvert\,Z^{i,\,1}_{t} - Z^{i,\,0}_{t}\,\right\rvert^{2}\right]\, \leq \,  CT\left(1 + k^{2}T^{2}\right)
\end{equation}

\par Similarly, for $n \in \mathds{N}$, with $n > 1$, we again invoke the definition of the $\mathds{F}$-stopping time $\hat{\tau}_{k}$ along with Cauchy\textendash Schwarz inequality to assert that we can find a positive constant $C$ (dependent on $k$) such that
\begin{multline*}
    \mathds{E}\left[\,\sup\limits_{t\,\in\, [0,\,T]}\mathds{1}_{\left\{t\,\leq\,\hat{\tau}_{k}\right\}}\left\lvert\,Z^{i,\,n}_{t} - Z^{i,\,n-1}_{t}\,\right\rvert^{2}\right] \leq  C\mathds{E}\left[\,\int\limits^{T}_{0}\!\! \mathds{1}_{\left\{u\,\leq\,\hat{\tau}_{k}\right\}}\left\lvert\, \beta^{i}\!\left(\pi^{i}_{u},Z^{i,\,n}_{k(n,\,u)}\right) - \beta^{i}\!\left(\pi^{i}_{u},Z^{i,\,n-1}_{k(n,\,u)}\right)\,\right\rvert^{2}\!du\,\right]\\ 
    + C\mathds{E}\left[\,\sup_{t\,\in\,[0,\,T]}\,\left\lvert\,\int\limits^{t}_{0}\!\!\mathds{1}_{\left\{u\,\leq\,\hat{\tau}_{k}\right\}}\,\left(\mathrm{\nu}^{i}\!\left(\pi^{i}_{u},Z^{i,\,n}_{k(n,\,u)}\right)-\mathrm{\nu}^{i}\!\left(\pi^{i}_{u},Z^{i,\,n-1}_{k(n,\,u)}\right)\right)dB_{u}\right\rvert^{2}\right]
\end{multline*}

\par Next, we recall Burkholder\textendash Davis\textendash Gundy inequality in conjunction with the definition of the vector-valued functions $\beta^{i}$, which implies that the equation above leads us to
\begin{align*}
    \mathds{E}\left[\,\sup\limits_{t\,\in\, [0,\,T]}\mathds{1}_{\left\{t\,\leq\,\hat{\tau}_{k}\right\}}\left\lvert\,Z^{i,\,n}_{t} - Z^{i,\,n-1}_{t}\,\right\rvert^{2}\right] \leq\ &\, C\mathds{E}\left[\,\int\limits^{T}_{0}\!\! \mathds{1}_{\left\{u\,\leq\,\hat{\tau}_{k}\right\}}\left\lvert\,Z^{i,\,n}_{k(n,\,u)} - Z^{i,\,n-1}_{k(n,\,u)}\,\right\rvert^{2}\!du\,\right]\\ 
    & + C\mathds{E}\left[\,\int\limits^{T}_{0}\!\!\mathds{1}_{\left\{u\,\leq\,\hat{\tau}_{k}\right\}}\,\left\lvert\mathrm{\nu}^{i}\!\left(\pi^{i}_{u},Z^{i,\,n}_{k(n,\,u)}\right)-\mathrm{\nu}^{i}\!\left(\pi^{i}_{u},Z^{i,\,n-1}_{k(n,\,u)}\right)\right\rvert^{2}\!dB_{u}\,\right]
\end{align*}

\par Further, given that the local volatility function $\sigma$ satisfies {\textcolor{azul-pesc}{\Cref{siglip}}} and $\pi^{i} \in \mathcal{A}^{a}_{0}$ it then follows from the definition of $\mathrm{\nu}^{i}$ and the $\mathds{F}$-stopping time $\hat{\tau}_{k}$ that the function $\mathrm{\nu}^{i}$ satisfies Lipschitz continuity on $[0,\hat{\tau}_{k}]$. This fact in conjunction with Tonelli's theorem implies that we can then find a positive constant $C$ (dependent on $k$) such that the equation above leads us to
\begin{equation}\label{ind6}
    \mathds{E}\left[\,\sup\limits_{t\,\in\, [0,\,T]}\mathds{1}_{\left\{t\,\leq\,\hat{\tau}_{k}\right\}}\left\lvert\,Z^{i,\,n}_{t} - Z^{i,\,n-1}_{t}\,\right\rvert^{2}\right]\, \leq \ C\!\!\int\limits^{T}_{0}\!\!\mathds{E}\left[\,\sup_{\hat{u}\,\in\,[0,\,u]} \mathds{1}_{\left\{\hat{u}\ \leq\ \hat{\tau}_{k}\right\}}\left\lvert\,Z^{i,\,n}_{\hat{u}} - Z^{i,\,n-1}_{\hat{u}}\,\right\rvert^{2}\right]du
\end{equation}

\par In view of (\ref{ind5}) and (\ref{ind6}), we then proceed by induction to deduce the following bound
\begin{equation}\label{ind7}
    \mathds{E}\left[\,\sup\limits_{t\,\in\, [0,\,T]}\mathds{1}_{\left\{t\,\leq\,\hat{\tau}_{k}\right\}}\left\lvert\,Z^{i,\,n}_{t} - Z^{i,\,n-1}_{t}\,\right\rvert^{2}\right]\, \leq \ \frac{C\left(1+k^{2}T^{2}\right)T^{n}}{n!}
\end{equation}

\par Note that the positive constant $C$ in the equation above is dependent on $k$. Having derived the bound above, we next establish that $\left\{Z^{i,\,n}_{t}\right\}_{n\,\in\,\mathds{N}}$ is convergent in $\mathds{L}^{2}$ for $t \in [0, \hat{\tau}_{k}]$. To this end, consider $m,n \in \mathds{N}$ and note that given the equation above it follows that we have
\begin{equation}\label{ind8}
    \mathds{1}_{\left\{t\,\leq\,\hat{\tau}_{k}\right\}}\left\lvert\,Z^{i,\,n}_{t} - Z^{i,\,m}_{t}\,\right\rvert_{\,\mathds{L}^{2}}\ \leq\ \mathlarger{\sum}\limits^{n}_{j\,=\ m\,+\,1}\mathds{1}_{\left\{t\,\leq\,\hat{\tau}_{k}\right\}}\left\lvert\,Z^{i,\,j}_{t} - Z^{i,\ j-1}_{t}\,\right\rvert_{\,\mathds{L}^{2}}\ \leq\ \mathlarger{\sum}\limits^{n}_{j\, =\ m\,+\,1}\frac{\left[\,C\left(1+k^{2}T^{2}\right)T^{j}\right]^{1/2}}{\left(j!\right)^{1/2}}
\end{equation}

\par It is immediate that the sum on the right-hand side of the equation above converges, which in turn implies that the sequence $\left\{Z^{i,\,n}_{t}\right\}_{n\,\in\,\mathds{N}} \in \mathds{L}^{2}$ is Cauchy for $t \in [0, \hat{\tau}_{k}]$. Further, since $\mathds{L}^{2}$ is a Banach space, it is complete and hence the sequence $\left\{Z^{i,\,n}_{t}\right\}_{n\,\in\,\mathds{N}}$ converges to some $Z^{i}_{t} \in \mathds{L}^{2}$ for $t \in [0, \hat{\tau}_{k}]$. To establish $\mathds{P}$-almost sure uniform convergence of the sequence to the limit process $Z^{i}$ obtained above, we note that given (\ref{ind7}), it follows in view of Chebyshev's inequality that we have
\begin{equation*}
    \mathds{P}\left(\,\sup\limits_{t\,\in\, [0,\,T]}\mathds{1}_{\left\{t\,\leq\,\hat{\tau}_{k}\right\}}\left\lvert\,Z^{i,\,n}_{t} - Z^{i,\,n-1}_{t}\,\right\rvert^{2}\,\geq\,\frac{1}{2^{n}}\,\right)\ \leq \ \frac{C\left(1+k^{2}T^{2}\right)(4T)^{n}}{n!}
\end{equation*}

\par Since the term on the right-hand side of the equation above converges to zero as $n \rightarrow \infty$, we invoke Borel\textendash Cantelli lemma \cite[Theorem 8.3.4]{dudley2018real} which then implies that the equation above leads us to 
\begin{equation*}
    \mathds{P}\left(\,\limsup_{n\, \rightarrow\, \infty}\,\sup\limits_{t\,\in\, [0,\,T]}\mathds{1}_{\left\{t\,\leq\,\hat{\tau}_{k}\right\}}\left\lvert\,Z^{i,\,n}_{t} - Z^{i,\,n-1}_{t}\,\right\rvert^{2}\,\geq\,\frac{1}{2^{n}}\,\right)\ = \ 0
\end{equation*}

\par Note that we can re-write the above equivalently as
\begin{equation*}
    \mathds{P}\left(\,\liminf_{n\, \rightarrow\, \infty}\,\sup\limits_{t\,\in\, [0,\,T]}\mathds{1}_{\left\{t\,\leq\,\hat{\tau}_{k}\right\}}\left\lvert\,Z^{i,\,n}_{t} - Z^{i,\,n-1}_{t}\,\right\rvert^{2}\,<\,\frac{1}{2^{n}}\,\right)\ = \ 1
\end{equation*}

\par It is then immediate from the above that the sequence $\{Z^{i,\,n}\}_{n\,\in\,\mathds{N}}$ is $\mathds{P}$-almost surely uniformly convergent on the interval $[0,\hat{\tau}_{k}]$ to the limit process $Z^{i} \in \mathds{L}^{2}$ obtained above.

\par Next, we show that the stochastic process $\{Z^{i}\}$ that is obtained as a uniform limit of the Euler approximation is indeed a solution to the stochastic differential equation (\ref{astate}) on the interval $[0,\hat{\tau}_{k}]$. To this end, we define a stochastic process $\hat{Z}^{i}$ for $t \in [0,\hat{\tau}_{k}]$ as follows
\begin{equation*}
\hat{Z}^{i}_{t}\, =\, Z^{i}_{0}\, +\, \int\limits^{t}_{0}\!\!\beta^{i}\!\left(\pi^{i}_{u},Z^{i}_{u}\right)x^{-i}_{u}du\,+\, \int\limits^{t}_{0}\!\!\mathrm{\nu}^{i}\!\left(\pi^{i}_{u},Z^{i}_{u}\right)dB_{u}
\end{equation*}

\par Given the definition above, we appeal to an argument identical to the one used in deriving (\ref{ind6}) above, in order to assert the existence the of a positive constant $C$ (dependent on $k$) such that we have
\begin{equation*}
    \mathds{E}\left[\mathds{1}_{\left\{t\,\leq\,\hat{\tau}_{k}\right\}}\left\lvert\,\hat{Z}^{i}_{t} - Z^{i,\,n}_{t}\,\right\rvert^{2}\right] \leq C\!\!\int\limits^{T}_{0}\!\!\mathds{E}\left[ \mathds{1}_{\left\{u\ \leq\ \hat{\tau}_{k}\right\}}\left\lvert\,Z^{i}_{u} - Z^{i,\,n}_{u}\,\right\rvert^{2}\right]du \leq CT \!\! \sup\limits_{t\, \in\, [0,\,T]}\mathds{E}\left[\mathds{1}_{\left\{t\,\leq\,\hat{\tau}_{k}\right\}}\left\lvert\,Z^{i}_{t} - Z^{i,\,n}_{t}\,\right\rvert^{2}\right]
\end{equation*}

\par Further, it follows in view of the bound derived in (\ref{ind8}) that we have
\begin{equation*}
    \sup\limits_{t\, \in\, [0,\,T]}\,\mathds{E}\left[\,\mathds{1}_{\left\{t\,\leq\,\hat{\tau}_{k}\right\}}\left\lvert\,Z^{i}_{t} - Z^{i,\,n}_{t}\,\right\rvert^{2}\right]\ \ \leq\ \ \left(\,\mathlarger{\sum}\limits^{\infty}_{j\, =\ n\,+\,1}\frac{\left[\ C\left(1+k^{2}T^{2}\right)T^{j}\right]^{1/2}}{\left(j!\right)^{1/2}}\right)^{2}
\end{equation*}

\par Thus, we combine the inequalities above and recall that the right-hand side of the equation above converges to zero as $n \rightarrow \infty$ to establish that 
\begin{equation*}
    \limsup\limits_{n\, \rightarrow\, \infty}\,\sup\limits_{t\,\in\,[0,T]}\,\mathds{1}_{\left\{t\,\leq\,\hat{\tau}_{k}\right\}}\left\lvert\,\hat{Z}^{i}_{t} - Z^{i,\,n}_{t}\,\right\rvert_{\,\mathds{L}^2}\ = \ 0
\end{equation*}

\par It is immediate from the above and the uniqueness of limits that we have \[\mathds{P}\Big(\hat{Z}^{i}_{t}\, =\, Z^{i}_{t},\ t \in \left[0, \hat{\tau}_{k}\right]\Big)\, =\, 1\]

\par \textsc{Uniqueness} - Let $Z^{i}$ and $\tilde{Z}^{i}$ be two distinct stochastic processes which are solutions to the system of stochastic differential equations defined by (\ref{astate}). We again employ an argument identical to the one used in deriving (\ref{ind6}) above, in order to assert the existence the of a positive constant $C$ (dependent on $k$) such that we have
\begin{equation*}
    \mathds{E}\left[\,\sup\limits_{t\,\in\, [0,\,T]}\mathds{1}_{\left\{t\,\leq\,\hat{\tau}_{k}\right\}}\left\lvert\,Z^{i}_{t} - \tilde{Z}^{i}_{t}\,\right\rvert^{2}\right]\, \leq \ C\!\!\int\limits^{T}_{0}\!\!\mathds{E}\left[\,\sup_{\hat{u}\,\in\,[0,\,u]} \mathds{1}_{\left\{\hat{u}\ \leq\ \hat{\tau}_{k}\right\}}\left\lvert\,Z^{i}_{\hat{u}} - \tilde{Z}^{i}_{\hat{u}}\,\right\rvert^{2}\right]du
\end{equation*}

\par Given the equation above, it is then immediate from Gronwall's lemma that we have
\begin{equation*}
    \mathds{E}\left[\,\sup\limits_{t\,\in\, [0,\,T]}\mathds{1}_{\left\{t\,\leq\,\hat{\tau}_{k}\right\}}\left\lvert\,Z^{i}_{t} - \tilde{Z}^{i}_{t}\,\right\rvert^{2}\right]\, = \ 0
\end{equation*}

\par In view of the above, it is then immediate from Chebyshev's inequality that we have \[\mathds{P}\Big(Z^{i}_{t}\, =\, \tilde{Z}^{i}_{t},\ t \in \left[0, \hat{\tau}_{k}\right]\Big)\, =\, 1\]

\par \textsc{Non-Explosion} - Finally, we establish that the solution is non-explosive. To this end, we let $\zeta$ denote the lifetime of $\{Z^{i}\}$ defined as $\zeta = \liminf_{n\,\rightarrow \,\infty}\{\,t > 0: \lvert Z^{i}_{t}\rvert \geq  n \}$, where $n \in \mathds{N}$. It is immediate in view of the above that
\begin{equation*}
\zeta\, =\, \liminf_{k\, \rightarrow\, \infty}\,\hat{\tau}_{k}    
\end{equation*}

\par It then remains to show that $\zeta > T$, $\mathds{P}$-almost surely. To this end, we follow the approach employed in \cite{lan2014new}, whereby we consider the following test function defined on the positive real line
\begin{equation*}
    g(x)\, =\, \int\limits^{x}_{0}\!\!\!\frac{d\epsilon}{\epsilon+1}
\end{equation*}

\par The function $g(x)$ defined above is strictly increasing, and strictly concave on the positive real line. Moreover, since $\lim_{\,x \rightarrow 0}\, D_{1}g(x) = 1$, we can find a concave extension $\tilde{g}$ of $g$ on the entire real line, such that $\tilde{g}(x) = g(x)$, for all $x \geq 0$. Also, the second-order derivative $D^{2}_{1}\,\tilde{g}(x)$ of the extended function $\tilde{g}$, is a non-positive Radon measure in the generalized sense \cite[Appendix III]{revuz2013continuous}. Given the above, we let $\vartheta_{t} = \left\lvert Z^{i}_{t}\right\rvert^{2}$, and apply the It\^{o}\textendash Meyer\textendash Tanaka formula \cite[Theorem VI.1.5]{revuz2013continuous} to the function $\tilde{g}\!\left(\vartheta_{t\,\wedge\,\zeta}\right)$ so as to obtain the following
\begin{align*}
\tilde{g}\!\left(\vartheta_{t\,\wedge\,\zeta}\right)\, =\ &\ \tilde{g}(\vartheta_{0})\,+\, 2\!\!\int\limits_{0}^{t\,\wedge\,\zeta}\!\!\!D_{1}\tilde{g}(\vartheta_{u})\left\langle Z^{i}_{u}\,,\,\beta^{i}\!\left(\pi^{i}_{u},Z^{i}_{u}\right)x^{-i}_{u}\right\rangle du\,
+\, \int\limits_{0}^{t\,\wedge\,\zeta}\!\!\!D_{1}\tilde{g}(\vartheta_{u})\left\lvert\mathrm{\nu}^{i}\!\left(\pi^{i}_{u},Z^{i}_{u}\right)\right\rvert^{2}\!du\\
& +\, 2\!\!\int\limits_{0}^{t\,\wedge\,\zeta}\!\!\!D_{1}\tilde{g}(\vartheta_{u})\left\langle Z^{i}_{u}\,,\mathrm{\nu}^{i}\!\left(\pi^{i}_{u},Z^{i}_{u}\right)\right\rangle dB_{u}\,
+\,\frac{1}{2}\int_{\mathds{R}}\mathrm{L}^{q}_{t\,\wedge\,\zeta}(\vartheta)\,D^{2}_{1}\,\tilde{g}\left(dq\right)
\end{align*}

\par Note that the stochastic process $\mathrm{L}^{q}$ above denotes the local time process of $\vartheta$. Since, the local time is always non-negative by definition, and the Radon measure $D^{2}_{1}\,\tilde{g}$ is non-positive given that $\tilde{g}$ is concave, it follows that upon taking expectations we have
\begin{align*}
\mathds{E}\left[g\!\left(\vartheta_{t\,\wedge\,\zeta}\right)\right]\leq\, &\, g(\vartheta_{0})+ 2\mathds{E}\left[\int\limits_{0}^{t\,\wedge\,\zeta}\!\!\!D_{1}g(\vartheta_{u})\left\langle Z^{i}_{u}\,,\,\beta^{i}\!\left(\pi^{i}_{u},Z^{i}_{u}\right)x^{-i}_{u}\right\rangle du\right]
+\mathds{E}\left[\mathlarger{\int}\limits^{t\,\wedge\,\zeta}_{0}\!D_{1}g(\vartheta_{u})\left\lvert\nu^{i}\!\left(\pi^{i}_{u},Z^{i}_{u}\right)\right\rvert^{2}\!du\right]\\
 &+2\mathds{E}\left[\,\sup\limits_{0\,\leq\, \hat{t}\, \leq\, t}\ \left\lvert\, \int\limits_{0}^{\hat{t}}\!\mathds{1}_{\left\{\,u\,\leq\,\zeta\,\right\}}D_{1}{g}(\vartheta_{u})\left\langle Z^{i}_{u}\,,\mathrm{\nu}^{i}\!\left(\pi^{i}_{u},Z^{i}_{u}\right)\right\rangle dB_{u}\,\right\rvert\ \right]
\end{align*}

\par Given the equation above, we invoke Burkholder\textendash Davis\textendash Gundy inequality, by way of which we can find a positive constant $C$ (not dependent on $t$) such that the equation above leads us to
\begin{align*}
\mathds{E}\left[g\!\left(\vartheta_{t\,\wedge\,\zeta}\right)\right]\, \leq\,&\, C + C\mathds{E}\left[\int\limits_{0}^{t\,\wedge\,\zeta}\!\!\!D_{1}g(\vartheta_{u})\left\langle Z^{i}_{u}\,,\,\beta^{i}\!\left(\pi^{i}_{u},Z^{i}_{u}\right)x^{-i}_{u}\right\rangle du\right]
+C\mathds{E}\left[\mathlarger{\int}\limits^{t\,\wedge\,\zeta}_{0}\!D_{1}g(\vartheta_{u})\left\lvert\nu^{i}\!\left(\pi^{i}_{u},Z^{i}_{u}\right)\right\rvert^{2}\!du\right]\\
& + \, C\mathds{E}\left[\,\int\limits_{0}^{t\,\wedge\,\zeta}\!\!\big\lvert D_{1}{g}(\vartheta_{u})\big\rvert^{2}\left\lvert Z^{i}_{u}\right\rvert^{2}\,\left\lvert\mathrm{\nu}^{i}\!\left(\pi^{i}_{u},Z^{i}_{u}\right)\right\rvert^{2}\!\! du\,\right]^{1/2}
\end{align*}

\par Further, recalling the definition of the test function $g$ and in view of Cauchy\textendash Schwarz inequality, we can find a positive constant $C$ (not dependent on $t$) such that the equation above leads us to
\begin{equation*}
\mathds{E}\left[g\!\left(\vartheta_{t\,\wedge\,\zeta}\right)\right]\, \leq\,  C\,+\, C\mathds{E}\left[\int\limits_{0}^{t\,\wedge\,\zeta}\!\!\left\lvert\, \beta^{i}\!\left(\pi^{i}_{u},Z^{i}_{u}\right)\right\rvert \left\lvert x^{-i}_{u}\right\rvert du\right]\,
+\, C\mathds{E}\left[\int\limits^{t\,\wedge\,\zeta}_{0}\!\!\left\lvert\,\mathrm{\nu}^{i}\!\left(\pi^{i}_{u},Z^{i}_{u}\right)\right\rvert^{2}\!du\right]
\end{equation*}

\par Next, we recall the definition of the vector-valued functions $\beta^{i}$, $\nu^{i}$ in conjunction with the fact that the local volatility function satisfies {\color{azul-pesc}\Cref{siglip}} and $\pi^{i} \in \mathcal{A}^{a}_{0}$ to ascertain that there exists a positive constant $C$ (not dependent on $t$) such that for $0 \leq t \leq T$ we obtain the following upper bounds
\begin{align*}
\left\lvert\, \beta^{i}\!\left(\pi^{i}_{t},Z^{i}_{t}\right)\,\right\rvert\, \left\lvert\, x^{-i}_{t}\,\right\rvert\, \leq\,&\, C\,\Big(\left\lvert\, x^{-i}_{t}\, \right\rvert + \left\lvert\, x^{-i}_{t}\, \right\rvert\,\left\lvert\,\pi^{-i}_{t}\,\right\rvert\Big)\\
\left\lvert\,\mathrm{\nu}^{i}\!\left(\pi^{i}_{t},Z^{i}_{t}\right)\,\right\rvert^{2}\, \leq\,&\, C\left(1 + \left\lvert\,\pi^{-i}_{t}\,\right\rvert^{2}\right)
\end{align*}

\par In view of the above, we invoke H\"{o}lder's inequality which in conjunction with the fact that $X^{-i} \in \mathcal{A}_{0}$, then implies that we have
\begin{equation}
\mathds{E}\left[\int\limits^{t\,\wedge\,\zeta}_{0}\!\!\left\lvert x^{-i}_{u}\right\rvert\,\left\lvert \pi^{-i}_{u}\right\rvert du\right]\, \leq\, \mathds{E}\left[\left(\,\int\limits^{T}_{0}\!\!\left\lvert x^{-i}_{u}\right\rvert du\,\right)\,\left(\,\int\limits^{T}_{0}\!\!\left\lvert x^{-i}_{l}\right\rvert dl\,\right)\right]\, <\, \infty
\end{equation}

\par Similarly, given the definition of $\pi^{-i}_{t}$, we can appeal to H\"{o}lder's inequality again to assert that given $X^{-i} \in \mathcal{A}_{0}$ we have
\begin{equation}
\mathds{E}\left[\int\limits^{t\,\wedge\,\zeta}_{0}\!\!\left\lvert\, \pi^{-i}_{u}\,\right\rvert^{2}\! du\right]\, \leq\, \mathds{E}\left[\int\limits^{T}_{0}\!\!\left(\int^{u}_{0}\!\left\lvert x^{-i}_{l}\right\rvert dl\,\right)^{2}\!\!\!du\right]\, \leq\, \mathds{E}\left[T \left(\,\int\limits^{T}_{0}\!\left\lvert x^{-i}_{u}\right\rvert du\right)^{2}\right]\, <\, \infty
\end{equation}

\par Therefore, given that the process $X^{-i} \in \mathcal{A}_{0}$, and in view of the bounds derived above, it then follows that for $T \in \mathds{R}_{+}$, we must have
\begin{equation*}
\mathds{E}\left[g\!\left(\vartheta_{t\,\wedge\,\zeta}\right)\right]\, =\, \mathds{E}\left[\!\!\int\limits^{\ \vartheta_{t\,\wedge\, \zeta}}_{0}\!\!\!\!\frac{d\epsilon}{\epsilon + 1}\right]\, \leq\, C_{T}\, < \infty
\end{equation*}

\par Notice that here the bound $C_{T}$ is a function of $T$ alone. Now, suppose $\mathds{P}\left(\zeta<T\right) > 0$. Then, there must exist $\hat{t} \leq T \in \mathds{R}_{+}$ such that $\mathds{P}\left(\zeta<\hat{t}\,\right) > 0$. From the above, it then follows that with non-zero probability, we must have
\begin{equation*}
\mathds{1}_{\left\{\zeta\,\leq\,\hat{t}\right\}}\ \mathds{E}\left[g\!\left(\vartheta_{\,\hat{t}\,\wedge\,\zeta}\right)\right]\, =\, \mathds{E}\left[\int\limits^{\vartheta_{\zeta}}_{0}\!\!\!\!\frac{d\epsilon}{\epsilon + 1}\right]\, \leq\, C_{T}\, <\, \infty
\end{equation*}

\par However, we have $g(\vartheta_{\zeta}) = \infty$ by definition, and hence we arrive at a contradiction. Thus, it follows that we have $\zeta = \infty$, $\mathds{P}$-almost surely and hence the solution is non-explosive. In particular, this implies that the system of stochastic differential equation has a well-defined strong solution for $0 < T < \infty$.
\end{proof}

\subsection{Regularity of State Process $Y^{i}$}

\par We state and prove the following ancillary lemma which collects certain regularity properties of the original state process $Y^{i}$, which are frequently called upon in the proof of \textcolor{azul-pesc}{\Cref{jconv}} below.

\begin{lemma}\label{momin} 
\par Consider an initial time $s \in [0,T]$, and assume that $k \in \mathds{N}$, $X^{i},\, X^{-i}  \in \mathcal{A}_{s}$ are given. Suppose the local volatility function $\sigma$ satisfies {\textcolor{azul-pesc}{\Cref{siglip}}}, and let $\tau_{k,\,x^{i}}$, $\tau_{k,\,x^{-i}}$, $\tau_{k}$ denote $\mathds{F}$-stopping times defined as follows
\begin{equation*}
\tau_{k,\,x^{i}}\, =\, \inf\left\{t > s:\, \int\limits_{s}^{t}\!\!\left\lvert x^{i}_{u}\right\rvert du\, >\, k\right\},\ \ \tau_{k,\,x^{-i}}\, =\, \inf\left\{t > s:\, \int\limits_{s}^{t}\!\!\left\lvert x^{-i}_{u}\right\rvert du\, >\, k\right\},\ \ \tau_{k}\, =\, \tau_{k,\,x^{i}}\,\wedge\,\tau_{k,\,x^{-i}}
\end{equation*}

\begin{enumerate}[(i)]
\item Suppose that the dynamics of $Y^{i}$ satisfy (\ref{stated}), with $Y^{i}_{s} = y^{i}$ given, then there exists a constant $C > 0$ (dependent on $k$ and $T$) such that the following holds
\begin{equation*}
\mathds{E}\left[\sup\limits_{t\,\in\,[s,T]}\mathds{1}_{\left\{t\,\leq\,\tau_{k}\right\}}\left\lvert\, Y^{i}_{t}\,\right\rvert^{4}\right]\,\leq\,C \left(1 + \left\lvert y^{i}\right\rvert^{4}\right)   
\end{equation*}

\item Suppose that the dynamics of $Y^{i}$, $\widehat{Y}^{i}$ satisfy (\ref{stated}), with $Y^{i}_{s} = y^{i}$, and $\widehat{Y}^{i}_{s} = \hat{y}^{i}$ respectively, then there exists a constant $C > 0$ (dependent on $k$ and $T$) such that the following holds
\begin{equation*}
\mathds{E}\left[\sup\limits_{t\,\in\,[s,T]}\left\lvert\, Y^{i}_{t}-\widehat{Y}^{i}_{t}\,\right\rvert^{2}\right]\,\leq\,C \left\lvert\, y^{i} - \hat{y}^{i}\,\right\rvert^{2}   
\end{equation*}

\item Suppose the dynamics of $Y^{i,\,s}$ satisfy (\ref{stated}), for an initial time $s \in [0,T]$, with $Y^{i}_{s} = y^{i}$, and the dynamics of $Y^{i,\,\hat{s}}$ satisfy (\ref{stated}), for an initial time $\hat{s} \in [0,s)$, with $Y^{i,\,\hat{s}}_{\hat{s}} = y^{i}$, we then have
\begin{equation*}
\lim\limits_{\hat{s}\,\uparrow\,s}\,\mathds{E}\left[\;\left\lvert\, Y^{i,\,s}_{T}-Y^{i,\,\hat{s}}_{T}\,\right\rvert^{2}\right]\,=\,0   
\end{equation*}
\end{enumerate}
\end{lemma}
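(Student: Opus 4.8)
The plan is to handle the three estimates in turn, in each case localizing on the stochastic interval $[s,\tau_k]$, on which $\int_s^{\tau_k}\big(|x^i_u|+|x^{-i}_u|\big)\,du\le 2k$ by definition of the stopping times. The key structural observation is that $a^i(Y^i_u)$ and $b^i(Y^i_u)$ depend on $Y^i_u$ only through the holdings $(\pi^i_u,\pi^{-i}_u)$, while $\mathrm{v}^i(Y^i_u)$ enters only through the bounded factor $\sigma(\cdot)$ and those same holdings; since $d\pi^i_t=-x^i_t\,dt$ and $d\pi^{-i}_t=-x^{-i}_t\,dt$, the processes $\pi^i$ and $\pi^{-i}$ are bounded on $[s,\tau_k]$, $\mathds{P}$-almost surely, by a constant depending only on $k$ and the fixed initial holdings. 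Hence, by \Cref{siglip}, the paths $u\mapsto a^i(Y^i_u)$, $b^i(Y^i_u)$, $\mathrm{v}^i(Y^i_u)$ are all bounded by some $C_k$ on $[s,\tau_k]$. For part (i) this already does the job with no Gronwall step: writing $Y^i$ in integral form, stopping at $\tau_k$, taking the supremum and fourth moments, bounding the two drift integrals by $C_k\!\int_s^{\tau_k}(|x^i_u|+|x^{-i}_u|)\,du\le 2C_kk$, and applying the Burkholder\textendash Davis\textendash Gundy inequality to the martingale term (its bracket being $\int_s^{\tau_k}|\mathrm{v}^i(Y^i_u)|^2\,du\le C_k^2T$), one obtains
\begin{equation*}
\mathds{E}\Big[\sup_{t\in[s,T]}\mathds{1}_{\{t\le\tau_k\}}\,\big|Y^i_t\big|^4\Big]\ \le\ C(k,T)\big(1+|y^i|^4\big).
\end{equation*}

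For part (ii), put $\Delta_t:=Y^i_t-\widehat Y^i_t$ and $\phi(t):=\mathds{E}\big[\sup_{r\le t}\mathds{1}_{\{r\le\tau_k\}}|\Delta_r|^2\big]$. On $[s,\tau_k]$ the affine maps $a^i,b^i$ are globally Lipschitz (their Jacobian matrices are constant), and $\mathrm{v}^i$ is Lipschitz with constant $C_k$ once the holdings attached to both $Y^i$ and $\widehat Y^i$ are bounded there, so the coefficient differences are pointwise $\le C_k|\Delta_u|$. Applying Burkholder\textendash Davis\textendash Gundy to the martingale part and Cauchy\textendash Schwarz to each drift integral, together with $\int_s^{\tau_k}(|x^i_u|+|x^{-i}_u|)\,du\le 2k$, yields an inequality of the form $\phi(t)\le C|y^i-\hat y^i|^2+C_k\!\int_s^t\phi(u)\,du+C_k\,\mathds{E}\big[\int_s^{t\wedge\tau_k}|\Delta_u|^2(|x^i_u|+|x^{-i}_u|)\,du\big]$, which a Gronwall-type argument then closes; the surviving constant is finite because \Cref{admcon}(iii) supplies finiteness of the exponential moments of $\int_s^T(|x^i_u|+|x^{-i}_u|)\,du$. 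In the form in which the lemma is actually invoked --- through the bounded controls underlying $J^{i,n}$ in \Cref{jconv} --- the control integral is dominated by a multiple of Lebesgue measure and the Gronwall step is the classical one.

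For part (iii), observe that as $\hat s\uparrow s$ the control processes vanish on $[\hat s,s)$ (they lie in $\mathcal{A}_s$), so on that interval $Y^{i,\hat s}$ solves $dY=\mathrm{v}^i(Y)\,dB$ from $y^i$, a well-posed equation by \Cref{exist1}; since $\mathrm{v}^i$ has global linear growth (boundedness of $\sigma$), the standard moment estimate gives $\mathds{E}\big[\sup_{u\in[\hat s,s]}|Y^{i,\hat s}_u|^2\big]\le C(1+|y^i|^2)$ uniformly in $\hat s$, whence $\mathds{E}\big|Y^{i,\hat s}_s-y^i\big|^2=\mathds{E}\!\int_{\hat s}^s|\mathrm{v}^i(Y^{i,\hat s}_u)|^2\,du\le C(s-\hat s)(1+|y^i|^2)\to 0$. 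On $[s,T]$ the processes $Y^{i,s}$ and $Y^{i,\hat s}$ satisfy the same equation (\ref{stated}) with the same controls but from the initial data $y^i$ and $Y^{i,\hat s}_s$ respectively, so applying part (ii) conditionally on $\mathcal{F}_s$ (its proof runs verbatim for $\mathcal{F}_s$-measurable initial data) and taking expectations gives $\mathds{E}\big|Y^{i,s}_T-Y^{i,\hat s}_T\big|^2\le C\,\mathds{E}\big|y^i-Y^{i,\hat s}_s\big|^2\to 0$. The one genuinely delicate point is the Gronwall step in part (ii): the natural clock there is the random measure $(|x^i_u|+|x^{-i}_u|)\,du$ rather than Lebesgue measure, so one must either restrict to the bounded-control regime in which the lemma is used or invoke a stochastic Gronwall inequality backed by the exponential integrability built into \Cref{admcon}.
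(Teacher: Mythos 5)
Your parts (i) and (iii) are essentially fine. In (i) you avoid the paper's pathwise Gronwall step altogether by noting that $a^i,b^i,\mathrm{v}^i$ depend on the state only through $\sigma$ and the holdings, and that on $[s,\tau_k]$ the holdings are bounded by $|y^i|+k$; just keep the bookkeeping honest — the coefficient bound on the stopped interval is $C(k)\,(1+|y^i|)$ rather than an absolute constant, which is exactly what produces the stated $C(k,T)(1+|y^i|^4)$. Your (iii) (controls in $\mathcal{A}_s$ vanish on $[\hat s,s)$, so $Y^{i,\hat s}$ only diffuses there, then propagate with the stability estimate (ii) applied conditionally on $\mathcal{F}_s$) is a clean alternative to the paper's argument, which instead bounds the difference directly by the drift and martingale contributions accumulated over an interval of length $s-\hat s$ and sends it to zero by dominated convergence. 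The genuine gap is in part (ii), and it is the point you yourself flag: by passing to $\phi(t)=\mathds{E}\big[\sup_{r\le t}\mathds{1}_{\{r\le\tau_k\}}|\Delta_r|^2\big]$ \emph{before} doing Gronwall you are stuck with $\mathds{E}\big[\int_s^{t\wedge\tau_k}|\Delta_u|^2(|x^i_u|+|x^{-i}_u|)\,du\big]$, which classical Gronwall cannot absorb, and neither escape route you offer repairs this cleanly: a stochastic Gronwall via the exponential moments of \Cref{admcon} would not give a constant of the stated form, and the ``bounded-control regime'' is not the regime in which the lemma is used — in \Cref{jconv} only $X^i$ is truncated to $\mathcal{A}^n_s$, while the opponent's control $X^{-i}\in\mathcal{A}_s$ is a general admissible control, so the clock $|x^{-i}_u|\,du$ is not dominated by a multiple of Lebesgue measure.

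The missing idea, which is how the paper closes (ii), is to run Gronwall pathwise, before taking any expectation, treating the stochastic integral as an exogenous forcing. Set $f_t=\big\lvert\int_s^t\mathds{1}_{\{u\le\tau_k\}}\big(\mathrm{v}^i(Y^i_u)-\mathrm{v}^i(\widehat Y^i_u)\big)\,dB_u\big\rvert$; then $\omega$-by-$\omega$,
\begin{equation*}
\mathds{1}_{\{t\le\tau_k\}}\big\lvert Y^i_t-\widehat Y^i_t\big\rvert\ \le\ \big(|y^i-\hat y^i|+f_t\big)\ +\ C\int_s^t\mathds{1}_{\{u\le\tau_k\}}\big(|x^i_u|+|x^{-i}_u|\big)\big\lvert Y^i_u-\widehat Y^i_u\big\rvert\,du ,
\end{equation*}
and since the kernel has total mass at most $2k$ on $\{u\le\tau_k\}$ — a deterministic bound, which is precisely what the localization buys — pathwise Gronwall yields $\mathds{1}_{\{t\le\tau_k\}}|Y^i_t-\widehat Y^i_t|\le C_k\big(|y^i-\hat y^i|+\sup_{u\in[s,t]}f_u\big)$ with a deterministic $C_k$. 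Only then square, take expectations, apply the Burkholder\textendash Davis\textendash Gundy inequality together with the Lipschitz property of $\mathrm{v}^i$ on the stopped region to bound $\mathds{E}[\sup_t f_t^2]$ by $C_k\int_s^T\mathds{E}\big[\sup_{r\le u}\mathds{1}_{\{r\le\tau_k\}}|Y^i_r-\widehat Y^i_r|^2\big]du$, and finish with an ordinary Gronwall in the time variable. With this reordering, (ii) holds for arbitrary admissible controls with a constant depending on $k$ and $T$ as stated (passing from the stopped to the unstopped supremum is then the same step the paper takes), and your part (iii) goes through as written.
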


\begin{proof}
\par $\mathbf{(i)}$ Recall that by definition we have 
\begin{align*}
    & a^{i}\!\left(Y^{i}_{t}\right)\ =\ \left(-\theta^{-i},\,0,\,-1,\,-\theta^{-i}\pi^{i}_{t},\,-\theta^{-i}\pi^{-i}_{t}\right)^{\mathrm{T}}\\
    & b^{i}\!\left(Y^{i}_{t}\right)\ =\ \left(-\theta^{i},\,-1,\,0,\,-\theta^{i}\pi^{i}_{t},\,-\theta^{i}\pi^{-i}_{t}\right)^{\mathrm{T}}
\end{align*} 

\par  It is then immediate that the functions $a^{i}$, $b^{i}$ are Lipschitz continuous, and hence satisfy a linear growth condition. This fact along with (\ref{stated}) then implies that there exists $C > 0$, such that for $t \in [s,T]$ we have
\begin{align}\label{momlo}
\begin{split}
\mathds{1}_{\left\{t\,\leq\,\tau_{k}\right\}}\left\lvert\, Y^{i}_{t}\,\right\rvert\,\leq\,&\,\left\lvert\, y^{i}\,\right\rvert\,+\,C\!\int\limits^{t}_{s}\!\!\mathds{1}_{\left\{u\leq \tau_{k}\right\}}\Big(\left\lvert x^{i}_{u}\right\rvert+\left\lvert x^{-i}_{u}\right\rvert\Big)du\,+\,C\!\int\limits^{t}_{s}\!\!\mathds{1}_{\left\{u\leq \tau_{k}\right\}}\Big(\left\lvert x^{i}_{u}\right\rvert+\left\lvert x^{-i}_{u}\right\rvert\Big)\left\lvert\,Y^{i}_{u}\,\right\rvert du\\
&+\,\left\lvert\,\int\limits^{t}_{s}\!\!\mathds{1}_{\left\{u\leq \tau_{k}\right\}}\mathrm{v}^{i}\!\left(Y^{i}_{u} \right)dB_{u}\,\right\rvert
\end{split}
\end{align}

\par For notational convenience, we let $f^{i}_{t}$ denote the random variable defined as follows
\begin{equation*}
f^{i}_{t}\ =\ \left\lvert\, y^{i}\,\right\rvert\,+\,C\!\int\limits^{t}_{s}\!\!\mathds{1}_{\left\{u\leq \tau_{k}\right\}}\Big(\left\lvert x^{i}_{u}\right\rvert+\left\lvert x^{-i}_{u}\right\rvert\Big)du\,+\,\left\lvert\,\int\limits^{t}_{s}\!\!\mathds{1}_{\left\{u\leq \tau_{k}\right\}}\mathrm{v}^{i}\!\left(Y^{i}_{u} \right)dB_{u}\,\right\rvert    
\end{equation*}

\par With the help of the notation introduced above, we can rewrite (\ref{momlo}) succinctly as follows
\begin{equation*}
\mathds{1}_{\left\{t\,\leq\,\tau_{k}\right\}}\left\lvert\, Y^{i}_{t}\,\right\rvert\,\leq\,f^{i}_{t}\,+\,C\!\int\limits^{t}_{s}\!\!\mathds{1}_{\left\{u\leq \tau_{k}\right\}}\Big(\left\lvert x^{i}_{u}\right\rvert+\left\lvert x^{-i}_{u}\right\rvert\Big)\left\lvert\,Y^{i}_{u}\,\right\rvert du
\end{equation*}

\par In view of the equation above we invoke Gronwall's lemma in conjunction with the definition of $\tau_{k}$, in order to obtain a positive constant $C$ (dependent on $k$ and $T$), such that the following holds
\begin{equation*}
\mathds{1}_{\left\{t\,\leq\,\tau_{k}\right\}}\left\lvert\, Y^{i}_{t}\,\right\rvert\,\leq\,f^{i}_{t}\,+\,C\!\int\limits^{t}_{s}\!\!\mathds{1}_{\left\{u\leq \tau_{k}\right\}}f^{i}_{u}\,\Big(\left\lvert x^{i}_{u}\right\rvert+\left\lvert x^{-i}_{u}\right\rvert\Big)\,du\,\leq\sup\limits_{u\,\in\,[s,\,t]}f^{i}_{u}
\end{equation*}

\par From the equation above it then follows that we have
\begin{equation*}
\sup\limits_{t\,\in\,[s,\,T]}\mathds{1}_{\left\{t\,\leq\,\tau_{k}\right\}}\left\lvert\, Y^{i}_{t}\,\right\rvert^{4}\,\leq\sup\limits_{t\,\in\,[s,\,T]}\left(\,\sup\limits_{u\,\in\,[s,\,t]}f^{i}_{u}\,\right)^{\!4}\!\leq\,\left(\,\sup\limits_{t\,\in\,[s,\,T]}f^{i}_{t}\,\right)^{\!4}
\end{equation*}

\par Also, it is immediate from the definition of $f^{i}_{t}$ above that there exists $C > 0$ such that we have
\begin{equation*}
\left(\,\sup\limits_{t\,\in\,[s,\,T]}f^{i}_{t}\,\right)^{\!4}\! \leq\ \left\lvert\, y^{i}\,\right\rvert^{4}+\,C\,\left(\int\limits^{T}_{s}\!\!\mathds{1}_{\left\{u\leq \tau_{k}\right\}}\Big(\left\lvert x^{i}_{u}\right\rvert+\left\lvert x^{-i}_{u}\right\rvert\Big)\,du\right)^{\!4}\!+\,\left(\,\sup\limits_{t\,\in\,[s,\,T]}\left\lvert\ \int\limits^{t}_{s}\!\!\mathds{1}_{\left\{u\leq \tau_{k}\right\}}\,\mathrm{v}^{i}\!\left(Y^{i}_{u} \right)dB_{u}\,\right\rvert\,\right)^{\!4}    
\end{equation*}

\par Next, in view of the above, we invoke the Burkholder\textendash Davis\textendash Gundy inequality \cite[Theorem IV.4.1]{revuz2013continuous} along with the definition of $\tau_{k}$ in order to arrive at the following
\begin{equation*}
\mathds{E}\left[\,\sup\limits_{t\,\in\,[s,\,T]}\mathds{1}_{\left\{t\,\leq\,\tau_{k}\right\}}\left\lvert\, Y^{i}_{t}\,\right\rvert^{4}\,\right]\,\leq\,C\left(1\,+\left\lvert\, y^{i}\,\right\rvert^{4}\,\right)\,+\,\mathds{E}\left[\ \left(\,\int\limits^{T}_{s}\!\!\mathds{1}_{\left\{u\leq \tau_{k}\right\}}\left\lvert\,\mathrm{v}^{i}\!\left(Y^{i}_{u} \right)\,\right\rvert^{2}\!\!du\,\right)^{\!2}\,\right]
\end{equation*}

\par Further, recall that by definition we have
\begin{equation*}
   \mathrm{v}^{i}\!\left(Y^{i}_{t}\right)\ =\ \sigma\left(S_{t}\right)\left(-\theta^{-i}\!,0,\,0,\,\pi^{i}_{t},\,\pi^{-i}_{t}\right)^{\mathrm{T}}\\
\end{equation*}

\par Given that the local volatility function $\sigma$ satisfies {\textcolor{azul-pesc}{\Cref{siglip}}}, we know that it is bounded above. Further, given the definition of $\tau_{k}$ it follows that $\mathds{1}_{\left\{t\leq\tau_{k}\right\}}\left\lvert\pi^{i}_{t}\right\rvert$ and $\mathds{1}_{\left\{t\leq\tau_{k}\right\}}\left\lvert\pi^{-i}_{t}\right\rvert$ are bounded above by $k$. The claim is then immediate in view of the above.

\par \vspace{0.5em}

\par \textbf{(ii)} Given that the functions $a^{i}$, $b^{i}$ are Lipschitz continuous, it follows in view of (\ref{stated}) that there exists a positive constant $C$ such that for $t \in [s,T]$ we have
\begin{multline}\label{momini}
      \mathds{1}_{\left\{ t\,\leq\,\tau_{k}\right\}}\left\lvert\,Y^{i}_{t} - \widehat{Y}^{i}_{t}\,\right\rvert\ \leq  \left\lvert\,y^{i} - \hat{y}^{i}\,\right\rvert\ +\ C\int\limits_{s}^{t}\!\!\mathds{1}_{\left\{ u\,\leq\,\tau_{k}\right\}} \left\lvert x^{-i}_{u}\right\rvert\,\left\lvert\,Y^{i}_{u} - \widehat{Y}^{i}_{u}\,\right\rvert du
      + C\int\limits_{s}^{t}\!\!\mathds{1}_{\left\{ u\,\leq\,\tau_{k}\right\}} \left\lvert x^{i}_{u}\right\rvert\,\left\lvert\,Y^{i}_{u} - \widehat{Y}^{i}_{u}\,\right\rvert du\\
      +\ \left\lvert\,\int\limits_{s}^{t}\!\!\mathds{1}_{\left\{u\,\leq\,\tau_{k}\right\}}\left(\mathrm{v}^{i}\!\left(Y^{i}_{u}\right) - \mathrm{v}^{i}\!\left(\widehat{Y}^{i}_{u}\right)\right)dB_{u}\,\right\rvert
\end{multline}

\par For the sake of notational convenience we let $f_{t}$ denote the random variable which is defined as
\begin{equation*}
     f_{t}\ =\ \left\lvert\,\int\limits_{s}^{t}\!\!\mathds{1}_{\left\{u\,\leq\,\tau_{k}\right\}}\left(\mathrm{v}^{i}\!\left(Y^{i}_{u}\right) - \mathrm{v}^{i}\!\left(\widehat{Y}^{i}_{u}\right)\right)dB_{u}\,\right\rvert
\end{equation*}

\par Further, we let $\bigtriangledown y = \left\lvert\, y^{i}-\hat{y}^{i}\,\right\rvert$. With the help of the notation introduced above, we can thus rewrite (\ref{momini}) succinctly as
\begin{equation*}
    \mathds{1}_{\left\{ t\,\leq\,\tau_{k}\right\}}\left\lvert\,Y^{i}_{t} - \widehat{Y}^{i}_{t}\,\right\rvert\ \leq\, \left(\bigtriangledown y+f_{t}\right)\,
     + \int\limits_{s}^{t}\!\!\mathds{1}_{\left\{ u\,\leq\,\tau_{k}\right\}}\,\Big(\left\lvert x^{i}_{u}\right\rvert + \left\lvert x^{-i}_{u}\right\rvert\Big)\left\lvert\,Y^{i}_{u} - \widehat{Y}^{i}_{u}\,\right\rvert du
\end{equation*}

\par Next, we invoke Gronwall's lemma in conjunction with the definition of $\tau_{k}$ which implies that there exists a positive constant $C$ (dependent on $k$ and $T$) such that we have
\begin{equation*}
    \mathds{1}_{\left\{ t\,\leq\,\tau_{k}\right\}}\left\lvert\,Y^{i}_{t} - \widehat{Y}^{i}_{t}\,\right\rvert\,\leq\, \left(\bigtriangledown y+f_{t}\right)\,
     +\,C\!\mathlarger{\int}\limits_{s}^{t}\!\!\mathds{1}_{\left\{ u\,\leq\,\tau_{k}\right\}}\Big(\left\lvert x^{i}_{u}\right\rvert + \left\lvert x^{-i}_{u}\right\rvert\Big)\left(\bigtriangledown y+f_{u}\right)du\,\leq\,C\left(\bigtriangledown y+\sup\limits_{u\,\in\,[s,\,t]}\!f_{u}\right)
\end{equation*}

\par From the equation above it then follows that there exists a positive constant $C$ such that
\begin{equation*}
    \sup\limits_{t\,\in\,[s,\,T]}\mathds{1}_{\left\{ t\,\leq\,\tau_{k}\right\}}\left\lvert\,Y^{i}_{t} - \widehat{Y}^{i}_{t}\,\right\rvert\ \leq\, C\left(\bigtriangledown y+\sup\limits_{t\,\in\,[s,\,T]}\!f_{t}\right)
\end{equation*}

\par Further, in view of the equation above, and given the definitions of the random variables $\bigtriangledown y$, $f_{t}$, and the $\mathds{F}$-stopping time $\tau_{k}$, we can find a positive constant $C$ such that we have
\begin{equation*}
    \mathds{E}\left[\,\sup\limits_{t\,\in\,[s,\,T]}\mathds{1}_{\left\{ t\,\leq\,\tau_{k}\right\}}\left\lvert\,Y^{i}_{t} - \widehat{Y}^{i}_{t}\,\right\rvert^{2}\right]\,\leq \ C\left\lvert\, y^{i} - \hat{y}^{i}\,\right\rvert^{2}\, 
    +\,C\mathds{E}\left[\,\sup\limits_{t\,\in\,[s,\,T]} \left\lvert\,\int\limits_{s}^{t}\!\!\mathds{1}_{\left\{u\,\leq\,\tau_{k}\right\}}\left(\mathrm{v}^{i}\!\left(Y^{i}_{u}\right) - \mathrm{v}^{i}\!\left(\widehat{Y}^{i}_{u}\right)\right)dB_{u}\,\right\rvert\,\right]^{2}
\end{equation*}

\par Moreover, since the local volatility function $\sigma$ is Lipschitz continuous in view of {\textcolor{azul-pesc}{\Cref{siglip}}}, it follows in view of the definition of $\mathrm{v}^{i}$ and the $\mathds{F}$-stopping time $\tau_{k}$ that the function $\mathrm{v}^{i}$ satisfies Lipschitz continuity on $[s,\tau_{k}]$. By way of Burkholder\textendash Davis\textendash Gundy inequality we can then find a positive constant $C$ such that
\begin{equation*}
    \mathds{E}\left[\,\sup\limits_{t\,\in\,[s,\,T]}\mathds{1}_{\left\{ t\,\leq\,\tau_{k}\right\}}\left\lvert\,Y^{i}_{t} - \widehat{Y}^{i}_{t}\,\right\rvert^{2}\right]\,\leq \ C\left\lvert\, y^{i} - \hat{y}^{i}\,\right\rvert^{2}\, 
    +\,C\mathds{E}\left[\,\int\limits_{s}^{T}\!\!\mathds{1}_{\left\{ u\,\leq\,\tau_{k}\right\}}\left\lvert\,Y^{i}_{u} - \widehat{Y}^{i}_{u}\,\right\rvert^{2}\!\!du\,\right]
\end{equation*}

\par In view of Tonelli's theorem it is then immediate from the equation above that we have
\begin{equation*}
    \mathds{E}\left[\,\sup\limits_{t\,\in\,[s,\,T]}\mathds{1}_{\left\{ t\,\leq\,\tau_{k}\right\}}\left\lvert\,Y^{i}_{t} - \widehat{Y}^{i}_{t}\,\right\rvert^{2}\right]\,\leq\ C\left\lvert\, y^{i} - \hat{y}^{i}\,\right\rvert^{2}\, 
    +\, C\!\mathlarger{\int}\limits_{s}^{T}\!\mathds{E}\left[\,\sup\limits_{u\,\in\,[s,\,T]}\mathds{1}_{\left\{ u\,\leq\,\tau_{k}\right\}}\left\lvert\,Y^{i}_{u} - \widehat{Y}^{i}_{u}\,\right\rvert^{2}\right]\!du
\end{equation*}

\par Given the equation above, we invoke Gronwall's lemma again to find a positive constant $C>0$, such that we have
\begin{equation*}
    \mathds{E}\left[\sup\limits_{t\,\in\,[s,\,T]}\mathds{1}_{\left\{ t\,\leq\,\tau_{k}\right\}}\left\lvert\,Y^{i}_{t} - \widehat{Y}^{i}_{t}\,\right\rvert^{2}\right]\,\leq\ C\left\lvert\, y^{i} - \hat{y}^{i}\,\right\rvert^{2}\, 
\end{equation*}

\par Note that the choice of $k$ above was arbitrary which in conjunction with the fact that $X^{-i}\in\mathcal{A}_{s}$, $X^{i}\in\mathcal{A}_{s}$ implies that we can find $k\in\mathds{N}$ such that $T\leq\tau_{k}$, whereupon the claim follows immediately.

\par \vspace{0.5em}

\par \textbf{(iii)} Since $0\leq \hat{s}< s \leq T$, it follows in view of (\ref{stated}) that we have
\begin{align*}
    \mathds{1}_{\left\{ t\,\leq\,\tau_{k}\right\}}\left\lvert\;Y^{i,\,s}_{T} - Y^{i,\,\hat{s}}_{T}\;\right\rvert \leq &\int\limits_{\hat{s}\,+\,(T-s)}^{T}\!\!\!\!\mathds{1}_{\left\{ u\,\leq\,\tau_{k}\right\}}\left\lvert\,a^{i}\!\left(Y^{i,\,s}_{u}\right)\,\right\rvert\, \left\lvert x^{-i}_{u}\right\rvert du\,
     + \int\limits_{\hat{s}\,+\,(T-s)}^{T}\!\!\!\!\mathds{1}_{\left\{ u\,\leq\,\tau_{k}\right\}} \left\lvert\,b^{i}\!\left(Y^{i,\,s}_{u}\right)\,\right\rvert\,\left\lvert x^{i}_{u}\right\rvert du\\
    & + \left\lvert\ \ \int\limits_{\,\hat{s}\,+\,(T-s)}^{T}\!\!\!\!\mathds{1}_{\left\{u\,\leq\,\tau_{k}\right\}}\mathrm{v}^{i}\!\left(Y^{i,\,s}_{u}\right)dB_{u}\ \right\rvert
\end{align*}

\par Given the equation above it follows that upon rearrangement and taking expectations we have
\begin{align*}
  \begin{split}
    \mathds{E}\left[\mathds{1}_{\left\{ t\,\leq\,\tau_{k}\right\}}\left\lvert\;Y^{i,\,s}_{T} - Y^{i,\,\hat{s}}_{T}\;\right\rvert^{2}\right]\!\leq &\, \mathds{E}\left[\int\limits_{0}^{s-\hat{s}}\!\!\mathds{1}_{\left\{ u\,\leq\,\tau_{k}\right\}}\left\lvert\,a^{i}\!\left(Y^{i,\,s}_{u}\right)\,\right\rvert\, \left\lvert x^{-i}_{u}\right\rvert du\right]^{2}\!\!
     +\mathds{E}\left[\int\limits_{0}^{s-\hat{s}}\!\!\mathds{1}_{\left\{ u\,\leq\,\tau_{k}\right\}} \left\lvert\,b^{i}\!\left(Y^{i,\,s}_{u}\right)\,\right\rvert\,\left\lvert x^{i}_{u}\right\rvert du\right]^{2}\\
    & + \mathds{E}\left[\sup\limits_{\,t\, \in \,[\,0,\,s-\hat{s}\,]\,}\left\lvert\ \, \int\limits_{0}^{t}\!\!\mathds{1}_{\left\{u\,\leq\,\tau_{k}\right\}}\mathrm{v}^{i}\!\left(Y^{i,\,s}_{u}\right)dB_{u}\ \right\rvert\ \right]^{2}
  \end{split}
\end{align*}

\par Next, we recall the definition of the vector-valued functions $a^{i}$, $b^{i}$ and the $\mathds{F}$-stopping time $\tau_{k}$, whereby it follows that $\mathds{1}_{\left\{ u\,\leq\,\tau_{k}\right\}} \left\lvert\,a^{i}(Y^{i,\,s}_{u})\,\right\rvert$ and $\mathds{1}_{\left\{ u\,\leq\,\tau_{k}\right\}} \left\lvert\,b^{i}(Y^{i,\,s}_{u})\,\right\rvert$ are bounded above. This fact in conjunction with Burkholder\textendash Davis\textendash Gundy inequality then implies that there exists a positive constant $C$ (dependent on $k$ and $T$) such that the equation above can be re-written as 
\begin{align*}
  \begin{split}
    \mathds{E}\left[\mathds{1}_{\left\{ t\,\leq\,\tau_{k}\right\}}\left\lvert\;Y^{i,\,s}_{T} - Y^{i,\,\hat{s}}_{T}\;\right\rvert^{2}\right]\, \leq\, & \, C\mathds{E}\left[\,\int\limits_{0}^{s-\hat{s}}\!\!\mathds{1}_{\left\{ u\,\leq\,\tau_{k}\right\}}\left\lvert x^{-i}_{u}\right\rvert du\,\right]^{2}\!+\,C\mathds{E}\left[\,\int\limits_{0}^{s-\hat{s}}\!\!\mathds{1}_{\left\{ u\,\leq\,\tau_{k}\right\}} \left\lvert x^{i}_{u}\right\rvert du\,\right]^{2}\\
    & +\, C\mathds{E}\left[\,\int\limits_{0}^{s-\hat{s}}\!\!\mathds{1}_{\left\{u\,\leq\,\tau_{k}\right\}}\left\lvert\,\mathrm{v}^{i}\!\left(Y^{i,\,s}_{u}\right)\,\right\rvert^{2}\!du\,\right]
  \end{split}
\end{align*}

\par Given the equation above, we consider the limit as $\hat{s}\uparrow s$ which then leads us to
\begin{align}\label{timlim}
\begin{split}
    \lim\limits_{\hat{s}\,\uparrow\,s}\,\mathds{E}\left[\mathds{1}_{\left\{ t\,\leq\,\tau_{k}\right\}}\left\lvert\;Y^{i,\,s}_{T} - Y^{i,\,\hat{s}}_{T}\;\right\rvert^{2}\right]\, \leq\,& \,  \lim\limits_{\hat{s}\,\uparrow\,s}\,C\mathds{E}\left[\,\int\limits_{0}^{s-\hat{s}}\!\!\mathds{1}_{\left\{ u\,\leq\,\tau_{k}\right\}}\left\lvert x^{-i}_{u}\right\rvert du\,\right]^{2}\\
    & +\ \lim\limits_{\hat{s}\,\uparrow\,s}\, C\mathds{E}\left[\,\int\limits_{0}^{s-\hat{s}}\!\!\mathds{1}_{\left\{ u\,\leq\,\tau_{k}\right\}} \left\lvert x^{i}_{u}\right\rvert du\,\right]^{2}\\
     & +\, \lim\limits_{\hat{s}\,\uparrow\,s}\,C\mathds{E}\left[\,\int\limits_{0}^{s-\hat{s}}\!\!\mathds{1}_{\left\{u\,\leq\,\tau_{k}\right\}}\left\lvert\,\mathrm{v}^{i}\!\left(Y^{i,\,s}_{u}\right)\,\right\rvert^{2}\!du\,\right]
\end{split}
\end{align}

\par Further, in view of the definition of the vector-valued function $\mathrm{v}^{i}$, and the $\mathds{F}$-stopping time $\tau_{k}$ it follows that $\mathds{1}_{\left\{ u\,\leq\,\tau_{k}\right\}} \left\lvert\,\mathrm{v}^{i}(Y^{i,\,s}_{u})\,\right\rvert$ is bounded above, which in conjunction with the fact that $X^{i}, X^{-i} \in \mathcal{A}_{s}$ then implies that there exists a positive constant $C$ (dependent on $k$ and $T$) such that we have
\begin{equation*}
   \left[\,\int\limits_{0}^{s-\hat{s}}\!\!\mathds{1}_{\left\{ u\,\leq\,\tau_{k}\right\}}\left\lvert x^{-i}_{u}\right\rvert du\,\right]^{2} \text{and}\ \left[\,\int\limits_{0}^{s-\hat{s}}\!\!\mathds{1}_{\left\{ u\,\leq\,\tau_{k}\right\}} \left\lvert x^{i}_{u}\right\rvert du\,\right]^{2} \leq\ k^{2},\ \left[\,\int\limits_{0}^{s-\hat{s}}\!\!\mathds{1}_{\left\{u\,\leq\,\tau_{k}\right\}}\left\lvert\,\mathrm{v}^{i}\!\left(Y^{i,\,s}_{u}\right)\,\right\rvert^{2} du\,\right]\
     \leq\ C
\end{equation*}

\par Thus, we then invoke Dominated Convergence Theorem whereupon (\ref{timlim}) implies that we have
\begin{align*}
    \lim\limits_{\hat{s}\,\uparrow\,s}\mathds{E}\left[\mathds{1}_{\left\{ t\,\leq\,\tau_{k}\right\}}\left\lvert\;Y^{i,\,s}_{T} - Y^{i,\,\hat{s}}_{T}\;\right\rvert^{2}\right]\! \leq &\, C\mathds{E}\left[\,\lim\limits_{\hat{s}\,\uparrow\,s}\left(\,\int\limits_{0}^{s-\hat{s}}\!\!\mathds{1}_{\left\{ u\,\leq\,\tau_{k}\right\}}\left\lvert x^{-i}_{u}\right\rvert du\,\right)^{2}\right]+ C\mathds{E}\left[\lim\limits_{\hat{s}\,\uparrow\,s}\left(\,\int\limits_{0}^{s-\hat{s}}\!\!\mathds{1}_{\left\{ u\,\leq\,\tau_{k}\right\}} \left\lvert x^{i}_{u}\right\rvert du\,\right)^{2}\right]\\
    & + C\mathds{E}\left[\lim\limits_{\hat{s}\,\uparrow\,s}\int\limits_{0}^{s-\hat{s}}\!\!\mathds{1}_{\left\{u\,\leq\,\tau_{k}\right\}}\left\lvert\,\mathrm{v}^{i}\!\left(Y^{i,\,s}_{u}\right)\,\right\rvert^{2}\!du\,\right]
\end{align*}

\par Since $X^{i}, X^{-i} \in \mathcal{A}_{s}$, it follows immediately from the continuity of the Lebesgue integral that we have
\begin{equation*}
   \lim\limits_{\hat{s}\,\uparrow\,s}\left[\,\int\limits_{0}^{s-\hat{s}}\!\!\mathds{1}_{\left\{ u\,\leq\,\tau_{k}\right\}}\left\lvert x^{-i}_{u}\right\rvert du\,\right]\ =\ 0\quad \text{and}\quad 
   \lim\limits_{\hat{s}\,\uparrow\,s}\left[\,\int\limits_{0}^{s-\hat{s}}\!\!\mathds{1}_{\left\{ u\,\leq\,\tau_{k}\right\}} \left\lvert x^{i}_{u}\right\rvert du\,\right]\ =\ 0
\end{equation*}

\par Also, as established earlier, we know that $\mathds{1}_{\left\{ u\,\leq\,\tau_{k}\right\}} \left\lvert\,\mathrm{v}^{i}(Y^{i,\,s}_{u})\,\right\rvert$ is bounded above which implies the existence of a positive constant $C$ (dependent on $k$ and $T$) such that we have
\begin{equation*}
   \lim\limits_{\hat{s}\,\uparrow\,s}\,\int\limits_{0}^{s-\hat{s}}\!\!\mathds{1}_{\left\{u\,\leq\,\tau_{k}\right\}}\left\lvert\,\mathrm{v}^{i}\!\left(Y^{i,\,s}_{u}\right)\,\right\rvert^{2}\!du\ \leq\ \lim\limits_{\hat{s}\,\uparrow\,s}\,C\left(s-\hat{s}\right)\  =\ 0
\end{equation*}

\par The claim follows immediately in view of the above.
\end{proof}

\subsection{Proof of \textcolor{azul-pesc}{\Cref{jconv}}}

\begin{proof}
\par $\mathbf{(i)}$ Given the assumption on the functional form of the utility function, it follows that we have $-\infty\leq J^{i}(\cdot)\leq 0$. Thus, in order to establish non-degeneracy of the best-response value function of the $i$th investor, given $X^{-i}\in \mathcal{A}_{s}$, it suffices to show that $J^{i}(\cdot) > -\infty$. To this end, given a fixed initial time $s\in[0,T]$, we define the deterministic strategy $\widehat{X}^{i} = \left\{\hat{x}^{i}_{t}\right\}_{t\,\in\,[s,T]}$ for the $i$th investor, where $\hat{x}^{i}_{t}\, =\, 0,$ for $t\,\in\,[s,T]$.

\par It readily follows from {\textcolor{azul-pesc}{\Cref{admcon}}} that $\widehat{X}^{i}\in\mathcal{A}_{s}$. Additionally, let $\widehat{W}^{i}_{T}$ denote the terminal portfolio value of the $i$th investor corresponding to the strategy $\widehat{X}^{i}.$ Moreover, recall that $\pi^{i}_{s}\in\mathds{R}$ represents the initial stock holdings of the $i$th investor, which is assumed to be given. It then follows in view of the definition of $\widehat{W}^{i}_{T}$ that we have
\begin{equation*}
    \mathds{E}\left[u^{i}\!\left(\widehat{W}^{i}_{T}\right)\right] =\, \mathds{E}\left[-\frac{1}{\delta^{i}}\exp{\left(-\delta^{i}W^{i}_{s} \ -\delta^{i}\pi^{i}_{s}\!\!\int_{s}^{T}\!\!\!\!dS_{t}\right)}\right]
\end{equation*}

\par Further, recalling (\ref{price}) in conjunction with the fact that $\hat{x}^{i}_{t} = 0$ for $t\in[s,T]$, we can re-write the above as follows, where $C$ denotes a non-positive constant
\begin{equation*}
    \mathds{E}\left[u^{i}\!\left(\widehat{W}^{i}_{T}\right)\right] =\, C\mathds{E}\left[\exp{\left(\delta^{i}\pi^{i}_{s}\theta^{-i}\!\!\int_{s}^{T}\!\!\!\!x^{-i}_{t}dt -\delta^{i}\pi^{i}_{s}\!\!\int_{s}^{T}\!\!\!\!\sigma\!\left(S_{t}\right)dB_{t}\right)}\right]
\end{equation*}

\par From the above, it is immediate that in order to establish that the expected utility of the $i$th investor corresponding to the deterministic strategy $\hat{X}^{i}$ is finite, it suffices to show that
\begin{equation*}
 \mathds{E}\left[\exp{\left(\delta^{i}\pi^{i}_{s}\theta^{-i}\!\!\int_{s}^{T}\!\!\!\!x^{-i}_{t}dt -\delta^{i}\pi^{i}_{s}\!\!\int_{s}^{T}\!\!\!\!\sigma\!\left(S_{t}\right)dB_{t}\right)}\right]< \infty  
\end{equation*}

\par To this end, first note that by way of Cauchy\textendash Schwarz inequality we have
\begin{align*}
\begin{split}
     \mathds{E}\left[\exp{\left(\delta^{i}\pi^{i}_{s}\theta^{-i}\!\!\int_{s}^{T}\!\!\!\!x^{-i}_{t}dt -\delta^{i}\pi^{i}_{s}\!\!\int_{s}^{T}\!\!\!\!\sigma\!\left(S_{t}\right)dB_{t}\right)}\right] \leq &\, \mathds{E}\left[\exp{\left(2\delta^{i}\pi^{i}_{s}\theta^{-i}\!\!\int_{s}^{T}\!\!\!\!x^{-i}_{t}dt\right)}\right]\\
     &\times\mathds{E}\left[\exp{\left(-2\delta^{i}\pi^{i}_{s}\!\!\int_{s}^{T}\!\!\!\!\sigma\!\left(S_{t}\right)dB_{t}\right)}\right] 
\end{split}
\end{align*}

\par Moreover, in view of the fact that $X^{-i} \in \mathcal{A}_{s}$ it follows immediately that we have
\begin{equation*}
    \mathds{E}\left[\exp{\left(2\delta^{i}\pi^{i}_{s}\theta^{-i}\!\!\int_{s}^{T}\!\!\!\!x^{-i}_{t}dt \right)}\right]<\infty
\end{equation*}

\par Further, we invoke \cite[Theorem 43, Page 140]{protter2005stochastic} in conjunction with the fact that the local volatility function $\sigma$ satisfies \textcolor{azul-pesc}{\Cref{siglip}} so as to obtain
\begin{equation*}    \mathds{E}\left[\exp{\left(-2\delta^{i}\pi^{i}_{s}\!\!\int_{s}^{T}\!\!\!\!\sigma\!\left(S_{t}\right)dB_{t}\right)}\right]\,\leq\, \mathds{E}\left[\exp{\left(\frac{1}{2}\left(4\delta^{i}\pi^{i}_{s}\right)^{2}\!\!\int_{s}^{T}\!\!\!\!\sigma^{2}\!\left(S_{t}\right)dt\right)}\right] < \infty
\end{equation*}

\par From the above it is then immediate that we have $J^{i}(\cdot)\geq\mathds{E}\left[u^{i}\!\left(\widehat{W}^{i}_{T}\right)\right]>-\infty$.

\par \vspace{0.5em}

\par $\mathbf{(ii)}$ We note that the sequence of functions $\left\{{J}^{i,n}\!\left(s,y^{i};X^{-i}\right)\right\}_{n\,\in\,\mathds{N}}$ is non-decreasing, and is bounded above by ${J}^{i}\!\left(s,y^{i};X^{-i}\right)$, which then implies that we have
\begin{equation}\label{jsup}
\limsup\limits_{n\,\rightarrow\,\infty}\,{J}^{i,n}\!\left(s,y^{i};X^{-i}\right)\,\leq\, J^{i}\!\left(s,y^{i};X^{-i}\right)
\end{equation}

\par Next, for a given $\rho > 0$, we consider a $\rho$-optimal control for the original control problem. That is, we consider $X^{i,\,\rho}\, \in\, \mathcal{A}_{s}$ such that $X^{i,\,\rho}$ satisfies the following inequality 
\begin{equation}\label{jrho}
J^{i}\!\left(s,y^{i};X^{-i}\right)\,-\,\rho\, \leq\, U^{i}\!\left(X^{i,\,\rho},\,X^{-i}\right)
\end{equation}

\par The existence of a non-trivial $X^{i,\,\rho}$ follows from the definition of $J^{i}$, and the fact that $J^{i}(\cdot)\, >\, -\infty$. Then, given $X^{i,\,\rho}$ we define a sequence of controls $\{X^{i,\,\rho_n}=\{x^{i,\,\rho_n}_{t}\}_{\,t\, \in\, [s,T]}\}_{n\,\in\,\mathds{N}}$ as follows
\begin{equation*}
    x^{i,\,\rho_n}_{t}\,=\,\begin{cases} \,x^{i,\,\rho}_{t},\, & \text{if}\,\left\lvert x^{i,\,\rho}_{t}\right\rvert\,\leq\, n\\ \,n,\,& \text{if}\,\left\lvert x^{i,\,\rho}_{t}\right\rvert \,>\, n
    \end{cases}
\end{equation*}

\par It is immediate from the definition above that $X^{i,\,\rho_n} \in \mathcal{A}^{n}_{s}$. Also, it follows that the sequence $\lvert x^{i,\,\rho_n}_{t}\rvert$ is dominated by $\lvert x^{i,\,\rho}_{t}\rvert$. Moreover, we have that $X^{i,\,\rho_n}\uparrow X^{i,\,\rho}$, as $n\rightarrow\infty$, which in conjunction with the fact that $X^{i,\,\rho}\, \in\, \mathcal{A}_{s}$, and Dominated Convergence Theorem then implies that we have $\mathds{P}$-almost surely
\begin{equation}\label{xconv}
\lim\limits_{n\,\rightarrow\,\infty}\,\mathds{E}\left[\int\limits^{T}_{s}\!\!\big\lvert\, x^{i,\,\rho_n}_{t}\,-\, x^{i,\,\rho}_{t}\,\big\rvert\, dt\right]\,=\,0 
\end{equation}

\par Next, for a given $X^{-i} \in \mathcal{A}_{s}$ we consider the sequence of stochastic processes $\{Y^{i,\,\rho_n}\}_{n\,\in\,\mathds{N}}$, where $Y^{i,\,\rho_n}$ corresponds to the strong solution of the system of stochastic differential equations (\ref{stated}), with $x^{i}_{t}\, =\, x^{i,\,\rho_n}_{t}$ for all $t\, \in\, [s,T]$. Further, given $k\,\in\,\mathds{N}$, we define the $\mathds{F}$-stopping time $\tau_{k,\,x^{i}}$, $\tau_{k,\,x^{-i}}$ and $\tau_{k}$ as follows
\begin{equation}\label{stop}
\tau_{k,\,x^{i}} = \inf\left\{t > s:\, \int\limits_{s}^{t}\!\!\big\lvert\, x^{i,\,\rho}_{u}\,\big\rvert\, du\, >\, k\right\}, \tau_{k,\,x^{-i}} = \inf\left\{t > s:\, \int\limits_{s}^{t}\!\!\left\lvert\, x^{-i}_{u}\,\right\rvert du\, >\, k\right\}, \tau_{k} = \tau_{k,\,x^{i}}\,\wedge\,\tau_{k,\,x^{-i}}
\end{equation}

\par Note that since $\lvert x^{i,\,\rho_n}_{t}\rvert$ is dominated by $\lvert x^{i,\,\rho}_{t}\rvert$, it then follows in view of the definition above that for $t<\tau_{k}$ we have the following  
\begin{equation*}
\int\limits^{t}_{s}\!\!\big\lvert\,x^{i,\,\rho_n}_{u}\,\big\rvert\, du\, \leq \int\limits^{t}_{s}\!\!\big\lvert\,x^{i,\,\rho}_{u}\,\big\rvert\, du\, \leq  k
\end{equation*}

\par We next prove that the sequence $\{Y^{i,\,\rho_n}\}_{n\,\in\,\mathds{N}}$ converges uniformly in expectation to $Y^{i,\,\rho}$ where the latter corresponds to the strong solution of the system of stochastic differential equations (\ref{stated}), with $x^{i}_{t}\,=\,x^{i,\,\rho}_{t}$, for all $t\, \in\, [s,T]$. To this end, note that in view of the definition of $Y^{i,\,\rho_n}$ and $Y^{i,\,\rho}$, and given the fact that the functions $a^{i}(Y^{i})$, $b^{i}(Y^{i})$ are Lipschitz continuous, we obtain the following
\begin{align}\label{momlon}
  \begin{split}
    \mathds{1}_{\left\{ t\,\leq\,\tau_{k}\right\}}\left\lvert\,Y^{i,\,\rho}_{t} - Y^{i,\,\rho_n}_{t}\,\right\rvert\ \leq\,&\, \int\limits_{s}^{t}\!\!\mathds{1}_{\left\{ u\,\leq\,\tau_{k}\right\}} \left\lvert x^{-i}_{u}\right\rvert\,\left\lvert\,Y^{i,\,\rho}_{u} - Y^{i,\,\rho_n}_{u}\right\rvert du\\
    & + \left\lvert\,\int\limits_{s}^{t}\!\!\mathds{1}_{\left\{u\,\leq\,\tau_{k}\right\}}\left(\mathrm{v}^{i}\!\left(Y^{i,\,\rho}_{u}\right) - \mathrm{v}^{i}\!\left(Y^{i,\,\rho_n}_{u}\right)\right)dB_{u}\,\right\rvert \\
    & + C\!\int\limits_{s}^{t}\!\!\mathds{1}_{\left\{ u\,\leq\,\tau_{k}\right\}}\left(1 + \left\lvert Y^{i,\,\rho}_{u}\right\rvert\,\right)\,\left\lvert\, x^{i,\,\rho}_{u} - x^{i,\,\rho_n}_{u}\right\rvert du\\
    & + C\!\int\limits_{s}^{t}\!\!\mathds{1}_{\left\{ u\,\leq\,\tau_{k}\right\}}\left\lvert x^{i,\,\rho_n}_{u}\right\rvert\,\left\lvert Y^{i,\,\rho}_{u} - Y^{i,\,\rho_n}_{u}\right\rvert du
  \end{split}
\end{align}

\par For the sake of notational convenience, we let $f^{n}_{t}$ and $l^{n}_{t}$ denote random variables which are defined as follows
\begin{align*}
     & f^{n}_{t}\ =\ \left\lvert\,\int\limits_{s}^{t}\!\!\mathds{1}_{\left\{u\,\leq\,\tau_{k}\right\}}\left(\mathrm{v}^{i}\!\left(Y^{i,\,\rho}_{u}\right) - \mathrm{v}^{i}\!\left(Y^{i,\,\rho_n}_{u}\right)\right)dB_{u}\,\right\rvert \\
     & l^{n}_{t}\ =\ C\!\int\limits_{s}^{t}\!\!\mathds{1}_{\left\{ u\,\leq\,\tau_{k}\right\}}\left(1 + \left\lvert Y^{i,\,\rho}_{u}\right\rvert\,\right)\,\left\lvert\, x^{i,\,\rho}_{u} - x^{i,\,\rho_n}_{u}\right\rvert du
\end{align*}

\par With the help of the notation introduced above, we can rewrite (\ref{momlon}) succinctly as follows
\begin{equation*}
    \mathds{1}_{\left\{ t\,\leq\,\tau_{k}\right\}}\left\lvert\,Y^{i,\,\rho}_{t} - Y^{i,\,\rho_n}_{t}\,\right\rvert\ \leq\, \left(f^{n}_{t}\,+\,l^{n}_{t}\right)\,
     +\,C\!\int\limits_{s}^{t}\!\!\mathds{1}_{\left\{ u\,\leq\,\tau_{k}\right\}}\Big(\left\lvert x^{i,\,\rho_n}_{u}\right\rvert + \left\lvert x^{-i}_{u}\right\rvert\Big)\left\lvert\, Y^{i,\,\rho}_{u} - Y^{i,\,\rho_n}_{u}\right\rvert du
\end{equation*}

\par Next, we invoke Gronwall's lemma in conjunction with the definition of $\tau_{k}$ in order to obtain a positive constant $C$ (dependent on $k$ and $T$) such that we have
\begin{equation*}
    \mathds{1}_{\left\{ t\,\leq\,\tau_{k}\right\}}\left\lvert\,Y^{i,\,\rho}_{t} - Y^{i,\,\rho_n}_{t}\,\right\rvert\leq \left(f^{n}_{t}\,+\,l^{n}_{t}\right)
     +\,C\!\mathlarger{\int}\limits_{s}^{t}\!\!\mathds{1}_{\left\{ u\,\leq\,\tau_{k}\right\}}\Big(\left\lvert x^{i,\,\rho_n}_{u}\right\rvert + \left\lvert x^{-i}_{u}\right\rvert\Big)\left(f^{n}_{u}\,+\,l^{n}_{u}\right)du \leq C\!\!\sup\limits_{u\,\in\,[s,\,t]}\!\!\left(f^{n}_{u}\,+\,l^{n}_{u}\right)
\end{equation*}

\par From the equation above, it then follows that there exists a positive constant $C$ such that we have
\begin{equation*}
    \sup\limits_{t\,\in\,[s,\,T]}\mathds{1}_{\left\{ t\,\leq\,\tau_{k}\right\}}\left\lvert\,Y^{i,\,\rho}_{t} - Y^{i,\,\rho_n}_{t}\,\right\rvert\ \leq\, C\!\!\sup\limits_{t\,\in\,[s,\,T]}\!\!\left(f^{n}_{t}\,+\,l^{n}_{t}\right)\\
\end{equation*}

\par In view of the equation above, and given the definitions of the random variables $f^{n}_{t}$, $l^{n}_{t}$, and the $\mathds{F}$-stopping time $\tau_{k}$, we can invoke Cauchy\textendash Schwarz inequality to assert the existence of a positive constant $C$ such that the following holds
\begin{align*}
\begin{split}
    \mathds{E}\left[\sup\limits_{t\,\in\,[s,\,T]}\mathds{1}_{\left\{ t\,\leq\,\tau_{k}\right\}}\left\lvert\,Y^{i,\,\rho}_{t} - Y^{i,\,\rho_n}_{t}\,\right\rvert^{2}\right]\,\leq\, &\, C\mathds{E}\left[\int\limits_{s}^{T}\!\!\mathds{1}_{\left\{ u\,\leq\,\tau_{k}\right\}}\left(1 + \left\lvert Y^{i,\,\rho}_{u}\right\rvert\,\right)\left\lvert\, x^{i,\,\rho}_{u} - x^{i,\,\rho_n}_{u}\right\rvert du\right]^{2}\\ 
    & +\, C\mathds{E}\left[\,\sup\limits_{t\,\in\,[s,\,T]} \left\lvert\,\int\limits_{s}^{t}\!\!\mathds{1}_{\left\{u\,\leq\,\tau_{k}\right\}}\left(\mathrm{v}^{i}\!\left(Y^{i,\,\rho}_{u}\right) - \mathrm{v}^{i}\!\left(Y^{i,\,\rho_n}_{u}\right)\right)dB_{u}\,\right\rvert\,\right]^{2}
\end{split}
\end{align*}

\par Further, since the local volatility function $\sigma$ satisfies {\textcolor{azul-pesc}{\Cref{siglip}}}, it is Lipschitz continuous. It then follows in view of the definition of $\mathrm{v}^{i}$ and the $\mathds{F}$-stopping time $\tau_{k}$ that the function $\mathrm{v}^{i}$ satisfies Lipschitz continuity on $[s,\tau_{k}]$. Thus, by way of H\"{o}lder's inequality and Burkholder\textendash Davis\textendash Gundy inequality we can find a positive constant $C$ such that the following holds
\begin{align*}
    \mathds{E}\left[\sup\limits_{t\,\in\,[s,\,T]}\mathds{1}_{\left\{ t\,\leq\,\tau_{k}\right\}}\left\lvert\,Y^{i,\,\rho}_{t} - Y^{i,\,\rho_n}_{t}\,\right\rvert^{2}\right]\, \leq\,&\,  C\mathds{E}\left[\int\limits_{s}^{T}\!\!\mathds{1}_{\left\{ u\,\leq\,\tau_{k}\right\}}\left\lvert\,Y^{i,\,\rho}_{u} - Y^{i,\,\rho_n}_{u}\right\rvert^{2}\!\!du\right]\\
    +\,& \, C\,\sqrt{\mathds{E}\left[1 + \sup\limits_{t\,\in\,[s,\,T]}\mathds{1}_{\left\{ t\,\leq\,\tau_{k}\right\}}\left\lvert Y^{i,\,\rho}_{t}\right\rvert^{4}\,\right]}\\
    & \times\,\sqrt{\mathds{E}\left[\int\limits_{s}^{T}\!\!\mathds{1}_{\left\{ u\,\leq\,\tau_{k}\right\}}\left\lvert\, x^{i,\,\rho}_{u} - x^{i,\,\rho_n}_{u}\right\rvert du\right]}
\end{align*}

\par In view of Tonelli's theorem and {\textcolor{azul-pesc}{\Cref{momin}}}, it is then immediate from the above that we have
\begin{align*}
    \mathds{E}\left[\sup\limits_{t\,\in\,[s,\,T]}\mathds{1}_{\left\{ t\,\leq\,\tau_{k}\right\}}\left\lvert\,Y^{i,\,\rho}_{t} - Y^{i,\,\rho_n}_{t}\,\right\rvert^{2}\right]\,\leq\ &\ C\!\!\mathlarger{\int}\limits_{s}^{T}\!\!\mathds{E}\left[\sup\limits_{u\,\in\,[s,\,T]}\mathds{1}_{\left\{ u\,\leq\,\tau_{k}\right\}}\left\lvert\,Y^{i,\,\rho}_{u} - Y^{i,\,\rho_n}_{u}\right\rvert^{2}\right]du\\ 
    & +\,C\,\sqrt{1 + \left\lvert Y^{i,\,\rho}_{s}\right\rvert^{4}}\,\sqrt{\mathds{E}\left[\int\limits_{s}^{T}\!\!\mathds{1}_{\left\{ u\,\leq\,\tau_{k}\right\}}\left\lvert\, x^{i,\,\rho}_{u} - x^{i,\,\rho_n}_{u}\right\rvert du\right]}
\end{align*}

\par Given the equation above, we invoke Gronwall's lemma and consider the limit as $n \rightarrow \infty$, where the desired convergence of the sequence $\{Y^{i,\,n}\}_{n\,\in\,\mathds{N}} $ to $Y^{i,\,\rho}$ for $t \in [s,\tau_{k}]$ follows immediately in view of (\ref{xconv}). Moreover, since the choice of $k$ above was arbitrary, and given that $X^{-i}\in\mathcal{A}_{s}$, $X^{i,\,\rho_n}\in\mathcal{A}^{n}_{s}\subseteq\mathcal{A}_{s}$, we can find $k\in\mathds{N}$ such that $T\leq\tau_{k}$. Hence, it follows that the convergence result above holds $\mathds{P}$-almost surely for $t \in [s,T]$.

\par With the ancillary results above at our disposal, we establish the convergence result in the claim. To this end, note that since $-\infty<\mathds{E}\,[u^{i}(W^{i,\,\rho}_{T})]\leq0$ as shown earlier, we exploit the fact that the Bernoulli utility function $u^{i}$ is strictly concave in conjunction with Cauchy\textendash Schwarz inequality so as to obtain
\begin{align*}
0\leq &\,\limsup\limits_{n\,\rightarrow\,\infty}\,\mathds{E}\,\Big[\left\lvert\, u^{i}\!\left(W^{i,\,\rho}_{T}\right)\,-\,u^{i}\!\left(W^{i,\,\rho_n}_{T}\right)\,\right\rvert\Big]\!\leq \limsup\limits_{n\,\rightarrow\,\infty}\,\mathds{E}\left[\,\left\lvert\, D_{1} u^{i}\!\left(W^{i,\,\rho}_{T}\right)\right\rvert \left\lvert\,W^{i,\,\rho}_{T}-\,W^{i,\,\rho_n}_{T}\,\right\rvert\,\right]\\
& \leq\limsup\limits_{n\,\rightarrow\,\infty}\,\mathds{E}\left[\,\left\lvert\ D_{1} u^{i}\!\left(W^{i,\,\rho}_{T}\right)\,\right\rvert^{2}\right]\mathds{E}\left[\, \left\lvert\,W^{i,\,\rho}_{T}-\,W^{i,\,\rho_n}_{T}\,\right\rvert^{2}\right]
\end{align*}

\par In view of the assumption on the functional form of the utility function, and by Cauchy\textendash Schwarz inequality, it follows that we can find a positive constant $C$ such that we obtain
\begin{multline*}
    \mathds{E}\left[\left\lvert\, D_{1} u^{i}\!\left(W^{i,\,\rho}_{T}\right)\,\right\rvert^{2}\!\right]\!\leq C\mathds{E}\left[\exp{\left(2\delta^{i}\theta^{i}\!\!\int_{s}^{T}\!\!\!\pi^{i,\,\rho}_{t}\,x^{i,\,\rho}_{t}dt+2\delta^{i}\theta^{-i}\!\!\int_{s}^{T}\!\!\!\pi^{i,\,\rho}_{t}\,x^{-i}_{t}dt -2\delta^{i}\!\!\int_{s}^{T}\!\!\!\pi^{i,\,\rho}_{t}\,\sigma\!\left(S_{t}\right)dB_{t}\right)}\right]\\ \leq C\mathds{E}\left[\exp{\left(4\delta^{i}\theta^{i}\!\!\int_{s}^{T}\!\!\!\pi^{i,\,\rho}_{t}\,x^{i,\,\rho}_{t}dt \right)}\right]\mathds{E}\left[\exp{\left(4\delta^{i}\theta^{-i}\!\!\int_{s}^{T}\!\!\!\pi^{i,\,\rho}_{t}\,x^{-i}_{t}dt \right)}\right]\mathds{E}\left[\exp{\left(-4\delta^{i}\!\!\int_{s}^{T}\!\!\!\pi^{i,\,\rho}_{t}\,\sigma\!\left(S_{t}\right)dB_{t}\right)}\right]
\end{multline*}

\par In view of the convergence result for the sequence of processes $\left\{X^{\,i,\,\rho_{n}}\right\}_{n \,\in\,\mathds{N}}$ established earlier, we only need to show that the three terms on the right-hand side of the equation above are finite. To this end, we consider the first term on the right-hand side of the equation above, and note that by way of H\"{o}lder's inequality, and the definition of $\pi^{i,\,\rho}_{t}$ we obtain
\begin{align*}
\begin{split}
    \mathds{E}\left[\exp{\left(4\delta^{i}\theta^{i}\!\!\int_{s}^{T}\!\!\!\!\pi^{i,\,\rho}_{t}\,x^{i,\,\rho}_{t}dt \right)}\right] & \leq \mathds{E}\left[\exp{\left(4\delta^{i}\theta^{i}\left(\,\sup\limits_{t\,\in\,\left[s,\,T\right]}\big\lvert\,\pi^{i,\,\rho}_{t}\,\big\rvert\,\right)\int_{s}^{T}\!\!\!\!\big\lvert\, x^{i,\,\rho}_{t}\,\big\rvert\,dt \right)}\right]\\
    & \leq\mathds{E}\left[\exp{\left(4\delta^{i}\theta^{i}\!\!\int_{s}^{T}\!\!\!\!\big\lvert\, x^{i,\,\rho}_{t}\,\big\rvert\,dt\int_{s}^{T}\!\!\!\!\big\lvert\, x^{i,\,\rho}_{t}\,\big\rvert\, dt \right)}\right]
\end{split}
\end{align*}

\par Further, since $X^{\,i,\,\rho} \in \mathcal{A}_{s}$ by definition, the term on the right-hand side of the inequality above is finite. In a similar vein, we obtain the following as an immediate consequence of H\"{o}lder's inequality and the definition of $\pi^{i,\,\rho}_{t}$
\begin{align*}
\begin{split}
    \mathds{E}\left[\exp{\left(4\delta^{i}\theta^{i}\!\!\int_{s}^{T}\!\!\!\!\pi^{i,\,\rho}_{t}\,x^{-i}_{t}dt \right)}\right] &\leq \mathds{E}\left[\exp{\left(4\delta^{i}\theta^{i}\left(\,\sup\limits_{t\,\in\,\left[s,\,T\right]}\big\lvert\,\pi^{i,\,\rho}_{t}\,\big\rvert\,\right)\int_{s}^{T}\!\!\!\!\big\lvert\, x^{-i}_{t}\,\big\rvert\, dt \right)}\right]\\ 
    & \leq \mathds{E}\left[\exp{\left(4\delta^{i}\theta^{i}\!\!\int_{s}^{T}\!\!\!\!\big\lvert\, x^{i,\,\rho}_{t}\,\big\rvert\,dt \int_{s}^{T}\!\!\!\!\big\lvert\,x^{-i}_{t}\,\big\rvert\,dt \right)}\right]
\end{split}
\end{align*}

\par Note that the term on the right-hand side of the inequality above is finite in view of the fact that $X^{i,\,\rho}$, $X^{-i} \in \mathcal{A}_{s}$. Next, we appeal to \cite[Theorem 43, Page 140]{protter2005stochastic} in conjunction with the fact that the local volatility function $\sigma$ satisfies \textcolor{azul-pesc}{\Cref{siglip}}, along with the definition of $\pi^{i,\,\rho}_{t}$ to ascertain the existence of a constant $C > 0$ such that we have
\begin{align*}
\begin{split}
    \mathds{E}\left[\exp{\left(-4\delta^{i}\!\!\int_{s}^{T}\!\!\!\!\pi^{i,\,\rho}_{t}\,\sigma\!\left(S_{t}\right)dB_{t}\right)}\right]&\leq\, \mathds{E}\left[\exp{\left(C\!\int_{s}^{T}\!\!\!\left(\pi^{i,\,\rho}_{t}\right)^{2}\!\! dt\right)}\right]\\
    &\leq\,\mathds{E}\left[\exp{\left(C\!\int_{s}^{T}\!\!\!\!\big\lvert\,x^{i,\,\rho}_{t}\,\big\rvert\, dt\int_{s}^{T}\!\!\!\!\big\lvert\, x^{i,\,\rho}_{t}\,\big\rvert\,dt \right)}\right]
\end{split}
\end{align*}

\par Given that $X^{\,i,\,\rho} \in \mathcal{A}_{s}$, the term on the right-hand side of the inequality above is finite. Thus, as an immediate consequence of the above and the convergence result established earlier we obtain the following
\begin{equation*}
    \lim\limits_{n\,\rightarrow\,\infty}\,U^{i}\!\left(X^{i,\,\rho_n},X^{-i}\right)\, =\, \lim\limits_{n\,\rightarrow\,\infty}\,\mathds{E}\left[u^{i}\!\left(W^{i,\,\rho_n}_{T}\right)\right]\, =\, \mathds{E}\left[u^{i}\!\left(W^{i,\,\rho}_{T}\right)\right]\, =\, U^{i}\!\left(X^{\rho},X^{-i}\right)
\end{equation*}

\par Moreover, upon combining the equation above with (\ref{jrho}), it then follows that we have
\begin{equation*}
J^{i}\!\left(s, y^{i}; X^{-i}\right) - \rho\, \leq\, U^{i}\!\left(X^{i,\,\rho},X^{-i}\right)\, = \, \lim\limits_{n\,\rightarrow\,\infty}\,U^{i}\!\left(X^{i,\,\rho_n},X^{-i}\right)
\end{equation*}
 
\par Further, recall that for $n,m \in \mathds{N}$, such that $n\leq m$, we have $\mathcal{A}^{n}_{s}\subseteq\mathcal{A}^{m}_{s}$, which in turn implies that $J^{i,\,n}\!\left(s, y^{i}; X^{-i}\right)\leq J^{i,\,m}\!\left(s, y^{i}; X^{-i}\right)$, and therefore we have
\begin{equation*}
U^{i}\!\left(X^{i,\,\rho_n}, X^{-i}\right)\, \leq\, \sup\limits_{m\,\geq\, n}\,J^{i,\,m}\!\left(s, y^{i}; X^{-i}\right)
\end{equation*}
 
\par Next, given the equation above, we consider the limit as $n\rightarrow\infty$, and in view of (\ref{jsup}) arrive at the following
\begin{equation*}
 J^{i}\!\left(s, y^{i}; X^{-i}\right) - \rho\, \leq\, \limsup\limits_{n\,\rightarrow\,\infty}\,U^{i}\!\left(X^{i,\,\rho_n},X^{-i}\right)\, \leq\, \limsup\limits_{n\,\rightarrow\,\infty}\,J^{i,\,n}\!\left(s,y^{i};X^{-i}\right)\, \leq\, J^{i}\!\left(s, y^{i}; X^{-i}\right)
\end{equation*}
 
\par Given that the choice of $\rho$ above was arbitrary the claim then follows from above.

\par \vspace{0.5em}

\par $\mathbf{(iii)}$ For a fixed $s\in [0,\,T)$, we show the continuity of $J^{i,n}\!\left(s,\,\cdot\,;\,X^{-i}\right)$. To this end, given $y^{i}$, we consider a sequence $\{y^{i,\,n}\}_{n\, \in\, \mathds{N}}$, where $y^{i},y^{i,\,n} \in \mathds{R}^{\,\lvert\, Y^{i}\,\rvert}$ for each $n \in \mathds{N}$, such that we have $\lim\limits_{n\,\rightarrow\,\infty}y^{i,\,n} = y^{i}$. It then follows in view of the definition of $J^{i,\,n}(\cdot)$ that we have
\begin{equation*}
\limsup\limits_{n\,\rightarrow\,\infty}\left\lvert\,J^{i,\,n}\!\left(s, y^{i}; X^{-i}\right) - J^{i,\,n}\!\left(s, y^{i,\,n}; X^{-i}\right)\,\right\rvert\ \leq\ \limsup\limits_{n\,\rightarrow\,\infty}\sup\limits_{X^{i}\,\in\,\mathcal{A}^{n}_{s}}\,\mathds{E}\,\Big[\left\lvert\, u^{i}\!\left(W^{\,i,\,y^{i}}_{T}\right)\,-\,u^{i}\!\left(W^{\,i,\,y^{i,\,n}}_{T}\right)\,\right\rvert\Big] 
\end{equation*}

\par In view of the above, it then follows that for $n$ large enough we have
\begin{equation*}
\limsup\limits_{n\,\rightarrow\,\infty}\left\lvert\,J^{i,\,n}\!\left(s, y^{i}; X^{-i}\right) - J^{i,\,n}\!\left(s, y^{i,\,n}; X^{-i}\right)\,\right\rvert\, \leq\, \sup\limits_{X^{i}\,\in\,\mathcal{A}^{n}_{s}}\limsup\limits_{n\,\rightarrow\,\infty}\,\mathds{E}\,\Big[\left\lvert\, u^{i}\!\left(W^{\,i,\,y^{i}}_{T}\right)\,-\,u^{i}\!\left(W^{\,i,\,y^{i,\,n}}_{T}\right)\,\right\rvert\Big]
\end{equation*}

\par Next, we use the fact that the Bernoulli utility function $u^{i}$ is strictly concave in conjunction with Cauchy\textendash Schwarz inequality in order to obtain
\begin{multline*}
\limsup\limits_{n\,\rightarrow\,\infty}\,\mathds{E}\left[\,\left\lvert u^{i}\!\left(W^{\,i,\,y^{i}}_{T}\right)\,-\,u^{i}\!\left(W^{\,i,\,y^{i,\,n}}_{T}\right)\right\rvert\,\right]\ \leq\ \limsup\limits_{n\,\rightarrow\,\infty}\,\mathds{E}\left[\,\left\lvert D_{1} u^{i}\!\left(W^{\,i,\,y^{i}}_{T}\right) \right\rvert\left\lvert W^{\,i,\,y^{i}}_{T}-\,W^{\,i,\,y^{i,\,n}}_{T}\right\rvert\,\right]\\ \leq \,\limsup\limits_{n\,\rightarrow\,\infty}\,\mathds{E}\left[\,\left\lvert D_{1} u^{i}\!\left(W^{\,i,\,y^{i}}_{T}\right)\right\rvert^{2}\right]\mathds{E}\left[\,\left\lvert W^{\,i,\,y^{i}}_{T}-\,W^{\,i,\,y^{i,\,n}}_{T}\right\rvert^{2}\right]
\end{multline*}

\par Note that given the equation above, we can show as in $\mathbf{(ii)}$ above that given $X^{-i}\,\in\,\mathcal{A}_{s}$, it follows that for every $X^{i}\,\in\,\mathcal{A}_{s}$ we have
\begin{equation*}
    0\, \leq\, \mathds{E}\left[\,\left\lvert D_{1} u^{i}\!\left(W^{\,i,\,y^{i}}_{T}\right)\right\rvert^{2}\right]\ <\ \infty
\end{equation*}

\par Moreover, it follows as an immediate consequence of \textcolor{azul-pesc}{\Cref{momin}(ii)} that we have
\begin{equation*}
    \mathds{E}\left[\,\left\lvert W^{\,i,\,y^{i}}_{T}-\,W^{i,\,y^{\,i,\,n}}_{T}\right\rvert^{2}\right] \,\leq\, \left\vert\, y^{i,\,n}-y^{i}\,\right\rvert^{2}
\end{equation*}

\par The continuity of $J^{i,\,n}\!\left(s,\, \cdot\,;\, X^{-i}\right)$ is then immediate from the above.

\par \vspace{0.5em}

\par $\mathbf{(iv)}$ Next, we prove the lower semi-continuity of $J^{i}\!\left(s,\, \cdot\,;\, X^{-i}\right)$. To this end, let $s\in [0,T)$ be given and note from $\mathbf{(ii)}$ above, that for every $\epsilon > 0$, we can find $m \in \mathds{N}$, such that for $n \geq m$ we have \begin{equation*}
J^{i}\!\left(s, y^{i}; X^{-i}\right)\,-\,J^{i,\,n}\!\left(s, y^{i}; X^{-i}\right)\,\leq\, \frac{\epsilon}{2}
\end{equation*}

\par Moreover, in view of $\mathbf{(iii)}$ above, we can find $\delta > 0$, such that for $\tilde{y}^{i}\,\in\, \mathcal{B}_{\delta}(y^{i})$ we have
\begin{equation*}
 J^{i,\,n}\!\left(s, y^{i}; X^{-i}\right)\,-\,\frac{\epsilon}{2}\,\leq\, J^{i,\,n}\!\left(s, \tilde{y}^{i}; X^{-i}\right)
\end{equation*}
 
\par Next, we combine the above with the fact that the function $J^{i,\,n}$ is bounded above by $J^{i}$ in order to obtain,
\begin{equation*}
J^{i}\!\left(s, \tilde{y}^{i}; X^{-i}\right)\,\geq\, J^{i,\,n}\!\left(s, \tilde{y}^{i}; X^{-i}\right)\,\geq\, J^{i,\,n}\!\left(s, y^{i}; X^{-i}\right)-\frac{\epsilon}{2}\,\geq\, J^{i}\!\left(s, y^{i}; X^{-i}\right)-\epsilon
\end{equation*}

\par The claim is then immediate from above.

\par \vspace{0.5em}

\par $\mathbf{(v)}$ First, we show that given $y^{i}\in \mathds{R}^{\lvert\,Y^{i} \rvert}$, $J^{i,\,n}\!\left(\,\cdot\, ,\,y^{i}\,;\,X^{-i}\right)$ is continuous at $T$. To this end, we consider an increasing sequence of deterministic times $\{s^{m}\}_{m\, \in\, \mathds{N}} \subseteq [s,T)$, such that we have $s^{m} \uparrow T$, as $m\,\rightarrow\,\infty$. It then follows in view of the definition of $J^{i}(\cdot)$ that we have
\begin{equation*}
\limsup\limits_{s^{m}\,\uparrow\ T}\,\left\lvert\,J^{i,\,n}\!\left(T, y^{i}; X^{-i}\right) - J^{i,\,n}\!\left(s^{n}, y^{i}; X^{-i}\right)\,\right\rvert\ \leq\ \limsup\limits_{s^{m}\,\uparrow\ T}\sup\limits_{X^{i}\,\in\,\mathcal{A}^{n}_{s}}\mathds{E}\left[\left\lvert\, u^{i}\!\left(W^{\,i,\,T}_{T}\right)\,-\,u^{i}\!\left(W^{\,i,\,s^{n}}_{T}\right)\right\rvert\right] 
\end{equation*}

\par In view of the above, it then follows that for $m$ large enough we have
\begin{equation*}
\limsup\limits_{s^{m}\,\uparrow\ T}\,\left\lvert\,J^{i,\,n}\!\left(T, y^{i}; X^{-i}\right) - J^{i,\,n}\!\left(s^{m}, y^{i}; X^{-i}\right)\,\right\rvert\, \leq\, \sup\limits_{X^{i}\,\in\,\mathcal{A}^{n}_{s}}\limsup\limits_{s^{m}\,\uparrow\ T}\,\mathds{E}\left[\left\lvert\, u^{i}\!\left(W^{\,i,\,T}_{T}\right)\,-\,u^{i}\!\left(W^{\,i,\,s^{m}}_{T}\right)\right\rvert\right]
\end{equation*}

\par Next, we use the fact that the Bernoulli utility function $u^{i}$ is strictly concave in conjunction with Cauchy\textendash Schwarz inequality in order to obtain
\begin{multline*}
\limsup\limits_{s^{m}\,\uparrow\ T}\,\mathds{E}\left[\left\lvert\, u^{i}\!\left(W^{\,i,\,T}_{T}\right)\,-\,u^{i}\!\left(W^{\,i,\,s^{m}}_{T}\right)\right\rvert\right]\ \leq\ \limsup\limits_{s^{m}\,\uparrow\ T}\,\mathds{E}\left[\left\lvert\ D_{1} u^{i}\!\left(W^{\,i,\,T}_{T}\right)\,\right\rvert\left\lvert\,W^{\,i,\,T}_{T}-\,W^{\,i,\,s^{m}}_{T}\,\right\rvert\right]\\ \leq \,\limsup\limits_{s^{m}\,\uparrow\ T}\,\mathds{E}\left[\,\left\lvert\, D_{1} u^{i}\!\left(W^{\,i,\,T}_{T}\right)\right\rvert^{2}\right]\mathds{E}\left[\,\left\lvert\,W^{\,i,\,T}_{T}-\,W^{\,i,\,s^{m}}_{T}\right\rvert^{2}\right]
\end{multline*}

\par Given the equation above, we exploit the fact that $W^{\,i,\,T}_{T}$ is deterministic, along with the fact that $u^{i} \in \mathds{C}^{2}$ to ascertain that
\begin{equation*}
    0\, \leq\, \mathds{E}\left[\,\left\lvert\, D_{1} u^{i}\!\left(W^{\,i,\,s}_{T}\right)\right\rvert^{2}\right]\ =\ \left\lvert\, D_{1} u^{i}\!\left(W^{\,i,\,s}_{T}\right)\right\rvert^{2} \ <\ \infty
\end{equation*}

\par Moreover, as an immediate consequence of \textcolor{azul-pesc}{\Cref{momin}(iii)} it follows that we have
\begin{equation*}
    \limsup\limits_{s^{m}\,\uparrow\ T}\,\mathds{E}\left[\,\left\lvert W^{\,i,\,T}_{T}-\,W^{\,i,\,s^{m}}_{T}\right\rvert^{2}\right] \,=\,0
\end{equation*}

\par The continuity of $J^{i,\,n}\!\left(\,\cdot,\, y^{i}\,;\, X^{-i}\right)$ at $T$ is immediate from the above. The lower semi-continuity of $J^{i}\!\left(\,\cdot,\, y^{i}\,;\, X^{-i}\right)$ at $T$ then follows from an argument identical to $\mathbf{(iv)}$ above.

\end{proof}

\subsection{Proof of \textcolor{azul-pesc}{\Cref{pcapp}}}

\begin{proof}
\par Note that we can assume without any loss of generality that given $\Pi_{1},\Pi_{2} \in\mathcal{A}^{a}_{s}$, we have $\mathds{P}$-almost surely $\left\lvert\pi_{1,\,t} - \pi_{2,\,t}\right\rvert<1$, for all $t\in\left[s,T\right]$. This statement follows in view of the fact that given the Euclidean metric $\lvert\,\cdot\,\rvert$ on $\mathds{R}$, we can define an equivalent metric $\rho$, such that $\rho\!\left(r\right)<1$, for all $r \in \mathds{R}$, as follows 
\begin{equation*}
    \rho\!\left(\pi_{1,\,t},\pi_{2,\,t}\right)\,=\,\frac{2}{\pi}\arctan\Big(\left\lvert\pi_{1,\,t}-\pi_{2,\,t}\right\rvert\Big)
\end{equation*}

\par To verify that $\rho$ is a metric, it suffices to establish the triangle inequality, since the identity of indiscernibles and the symmetry property for $\rho$ follow immediately from the definition. To this end, note that from the triangle inequality for the Euclidean metric and the strict monotonicity of the $\arctan$ function it follows that given $\Pi_{3} \in \mathcal{A}^{a}_{s}$  we have
\begin{equation*}
    \rho\!\left(\pi_{1,\,t},\,\pi_{2,\,t}\right)\,=\,\frac{2}{\pi}\arctan\Big(\left\lvert\pi_{1,\,t}-\pi_{3,\,t}+\pi_{3,\,t}-\pi_{2,\,t}\right\rvert\Big)\,\leq\,\frac{2}{\pi}\arctan\Big(\left\lvert\pi_{1,\,t}-\pi_{3,\,t}\right\rvert+\left\lvert\pi_{3,\,t}-\pi_{2,\,t}\right\rvert\Big)
\end{equation*}

\par Also, recall that the $\arctan$ function is strictly concave on the positive real line with $\arctan(0) = 0$, and hence it follows that the $\arctan$ function is subadditive on the positive real line, which then implies that the equation above leads us to
\begin{align*}
    \rho\!\left(\pi_{1,\,t},\,\pi_{2,\,t}\right)\,&\leq\,\frac{2}{\pi}\arctan\Big(\left\lvert\,\pi_{1,\,t}-\pi_{3,\,t}\right\rvert+\left\lvert\pi_{3,\,t}-\pi_{2,\,t}\right\rvert\Big)\\ &\leq\,\frac{2}{\pi}\arctan\Big(\left\lvert\,\pi_{1,\,t}-\pi_{3,\,t}\right\rvert\Big)+\frac{2}{\pi}\arctan\Big(\left\lvert\,\pi_{3,\,t}-\pi_{2,\,t}\right\rvert\Big)
\end{align*}

\par The triangle inequality for $\rho$ is immediate from the above. The proof of the claim in the statement of the lemma then follows \textit{mutatis mutandis} from the proof of \cite[Lemma 3.6]{krylov2008controlled}.
\end{proof}

\subsection{Proof of \textcolor{azul-pesc}{\Cref{pcsapp}}}

\begin{proof}
\par Given $X^{-i}\in\mathcal{A}_{s}$, and in view of the definition of the controlled state processes $Z^{i,\,\Pi_{n}}$ and $Z^{i,\,\Pi}$ it follows that for $t \in [0,T]$, we have
\begin{equation*}
    \left\lvert Z^{i,\,\Pi_{n}}_{t} - Z^{i,\,\Pi}_{t}\right\rvert \leq  \mathlarger{\int}_{s}^{t}\!\left\lvert\,\beta^{i}\!\left(\pi_{n,\,u},Z^{i,\,\Pi_{n}}_{u}\right) - \beta^{i}\!\left(\pi_{u},Z^{i,\,\Pi}_{u}\right)\right\rvert du\,
    +\mathlarger{\int}_{s}^{t}\!\left\lvert\,\nu^{i}\!\left(\pi_{n,\,u},Z^{i,\,\Pi_{n}}_{u}\right) - \nu^{i}\!\left(\pi_{u},Z^{i,\,\Pi}_{u}\right)\right\rvert dB_{u}
\end{equation*}

\par Further, recall that by definition we have
\begin{align*}
    &\beta^{i}\!\left(\pi^{i}_{t},Z^{i}_{t}\right) = \,x^{-i}_{t}\left[-\theta^{-i},-1,-\theta^{-i}\pi^{i}_{t},\theta^{i}\pi^{i}_{t}-\theta^{-i}\pi^{-i}_{t}\right]^{\mathrm{T}} \\
    &\nu^{i}\!\left(\pi^{i}_{t},Z^{i}_{t}\right) = \,\sigma\!\left(P^{i}_{t}+\theta^{i}\pi^{i}_{t}\right)\left[1,\,0,\,\pi^{i}_{t},\,\pi^{-i}_{t}\right]^{\mathrm{T}}
\end{align*}

\par In view of the above and given that the local volatility function $\sigma$ satisfies {\textcolor{azul-pesc}{\Cref{siglip}}}, there exists a positive constant $C$ such that we have
\begin{equation*}
    \left\lvert Z^{i,\,\Pi_{n}}_{t} - Z^{i,\,\Pi}_{t}\right\rvert\,\leq\, C\!\mathlarger{\int}_{s}^{t}\!\!\left\lvert\,x^{-i}_{u}\,\right\rvert\left\lvert\,\pi_{n,\,u} - \pi_{u}\,\right\rvert du\, +\, C\,\left\lvert\,\mathlarger{\int}_{s}^{t}\!\!\left(\pi_{n,\,u} - \pi_{u}\right) dB_{u}\,\right\rvert
    +\,C\,\left\lvert\,\mathlarger{\int}_{s}^{t}\!\!\left(Z^{i,\,\Pi_{n}}_{u} - Z^{i,\,\Pi}_{u}\right) dB_{u}\,\right\rvert
\end{equation*}

\par Next, given the equation above we can find a positive constant $C$ such that the following holds
\begin{align*}
    \mathds{E}\left[\sup\limits_{t\,\in\,[s,\,T]}\left\lvert Z^{i,\,\Pi_{n}}_{t} - Z^{i,\,\Pi}_{t}\right\rvert^{2}\right]\leq &\, C\mathds{E}\left[\mathlarger{\int}_{s}^{T}\!\!\!\!\left\lvert\,x^{-i}_{u}\,\right\rvert\left\lvert\,\pi_{n,\,u}-\pi_{u}\,\right\rvert du\right]^{2}\!+C\mathds{E}\left[\sup\limits_{t\,\in\,[s,\,T]}\left\lvert\,\mathlarger{\int}_{s}^{t}\!\!\!\!\left(\pi_{n,\,u} - \pi_{u}\right) dB_{u}\,\right\rvert\,\right]^{2}\\
    & +C\mathds{E}\left[\sup\limits_{t\,\in\,[s,\,T]} \left\lvert\,\mathlarger{\int}_{s}^{t}\!\!\left(Z^{i,\,\Pi_{n}}_{u} - Z^{i,\,\Pi}_{u}\right)dB_{u}\,\right\rvert\,\right]^{2}
\end{align*}

\par Note that since $\Pi \in \mathcal{A}^{a}_{s}$, it follows by definition that $\left\lvert\pi^{-i}_{t}\right\rvert$ is bounded above. Also, for an element of the sequence $\left\{\Pi_{n}\,\in\,\mathcal{A}^{a,\,pc}_{s}\!\left(I_{n}\right)\right\}_{n\,\in\,\mathds{N}}$, we have by definition
\begin{equation*}
   \left\lvert\pi_{n,\,t}\right\rvert \leq\max\Big\{\left\lvert\,\pi_{t_{0}}\,\right\rvert,\,\left\lvert\,\pi_{t_{1}}\,\right\rvert,...\,,\,\left\lvert\,\pi_{t_{n}}\,\right\rvert\Big\}
\end{equation*} 

\par Given the above, it then follows by way of Burkholder\textendash Davis\textendash Gundy inequality that there exists a positive constant $C$, such that we have
\begin{align*}
    \mathds{E}\left[\sup\limits_{t\,\in\,[s,\,T]}\,\left\lvert Z^{i,\,\Pi_{n}}_{t} - Z^{i,\,\Pi}_{t}\right\rvert^{2}\right]\leq\, &\, C\mathds{E}\left[\mathlarger{\int}_{s}^{T}\!\!\left\lvert\,x^{-i}_{u}\,\right\rvert\left\lvert\,\pi_{n,\,u}-\pi_{u}\,\right\rvert du\right]^{2}\!+ C\mathds{E}\left[\mathlarger{\int}_{s}^{T}\!\!\left\lvert\,\pi_{n,\,u}-\pi_{u}\,\right\rvert du\right]\\
    & +C\mathds{E}\left[\mathlarger{\int}_{s}^{T}\!\!\left\lvert Z^{i,\,\Pi_{n}}_{u} - Z^{i,\,\Pi}_{u}\right\rvert^{2} du\right]
\end{align*}

\par Additionally, in view of Tonelli's theorem, it is immediate from the above that we have
\begin{align}\label{auxsc}
    \begin{split}
    \mathds{E}\left[\sup\limits_{t\,\in\,[s,\,T]}\,\left\lvert Z^{i,\,\Pi_{n}}_{t} - Z^{i,\,\Pi}_{t} \right\rvert^{2}\right]\leq\, &\, C\mathds{E}\left[\mathlarger{\int}_{s}^{T}\!\!\left\lvert\,x^{-i}_{u}\,\right\rvert\left\lvert\,\pi_{n,\,u}-\pi_{u}\,\right\rvert du\right]^{2}\!+ C\mathds{E}\left[\mathlarger{\int}_{s}^{T}\!\!\left\lvert\,\pi_{n,\,u}-\pi_{u}\,\right\rvert du\right]\\
    & +\,C\!\mathlarger{\int}_{s}^{T}\!\!\mathds{E}\left[\sup\limits_{u\,\in\,[s,\,T]}\,\left\lvert Z^{i,\,\Pi_{n}}_{u} - Z^{i,\,\Pi}_{u}\right\rvert^{2}\right]du
    \end{split}
\end{align}

\par Further, as established earlier, both $\left\lvert\pi^{-i}_{t}\right\rvert$, $\left\lvert\pi_{n,\,t}\right\rvert$ are bounded above. Moreover, since $X^{-i}\in \mathcal{A}_{s}$, it follows from Dominated Convergence Theorem that we have
\begin{equation*}
   \lim\limits_{n\,\rightarrow\, \infty}\mathds{E}\left[\mathlarger{\int}_{s}^{T}\!\!\left\lvert\,x^{-i}_{u}\,\right\rvert\,\left\lvert\,\pi_{n,\,u}-\pi_{u}\,\right\rvert du\right]\, =\, \mathds{E}\left[\mathlarger{\int}_{s}^{T}\!\!\lim\limits_{n\,\rightarrow\, \infty}\,\left\lvert\,x^{-i}_{u}\,\right\rvert\left\lvert\,\pi_{n,\,u}-\pi_{u}\,\right\rvert du\right]
\end{equation*}

\par Given that $X^{-i}\in \mathcal{A}_{s}$, it follows that $\left\lvert x^{-i}_{t}\right\rvert < \infty$, $\mathds{P} \otimes \lambda_{[s,\,T]}$ almost everywhere, where $\lambda_{[s,\,T]}$ denotes the restriction of the Lebesgue measure defined over the real line $\mathds{R}$, on the interval $[s,T]$. Also, since $d(\Pi_{n}, \Pi)\rightarrow 0$, as $n\rightarrow \infty$, we can find a subsequence $\{n_{m}\}_{m \in \mathds{N}}$ such that $\left\lvert\pi_{n_{m},\,t}-\pi_{t}\right\rvert \rightarrow 0$, $\mathds{P}\otimes \lambda_{[s,\,T]}$ almost everywhere, as $m \rightarrow \infty$. This, in conjunction with the equation above, then implies that
\begin{equation*}
   \lim\limits_{m\,\rightarrow\, \infty}\mathds{E}\left[\mathlarger{\int}_{s}^{T}\!\!\left\lvert\,x^{-i}_{u}\,\right\rvert\left\lvert\,\pi_{n_{m},\,u}-\pi_{u}\,\right\rvert du\right]\, =\, \mathds{E}\left[\mathlarger{\int}_{s}^{T}\!\!\lim\limits_{m\,\rightarrow\, \infty}\,\left\lvert\,x^{-i}_{u}\,\right\rvert\left\lvert\,\pi_{n_{m},\,u}-\pi_{u}\,\right\rvert du\right]\ =\ 0
\end{equation*}

\par Moreover, note that by virtue of the fact that $X^{-i} \in \mathcal{A}_{s}$, we have
\[  \mathds{E}\left[\int_{s}^{T}\!\!\left\lvert\,x^{-i}_{u}\,\right\rvert du\,\right]^{2} < \infty\]

\par We employ the above in conjunction with the fact that both $\left\lvert\pi^{-i}_{t}\right\rvert$, $\left\lvert\pi_{n,\,t}\right\rvert$ are bounded above to invoke Dominated Convergence Theorem, which then leads us to
\begin{equation*}
   \lim\limits_{m\,\rightarrow\, \infty}\mathds{E}\left[\mathlarger{\int}_{s}^{T}\!\!\left\lvert\,x^{-i}_{u}\,\right\rvert\left\lvert\,\pi_{n_{m},\,u}-\pi_{u}\,\right\rvert du\right]^{2}\! =\, \mathds{E}\left[\,\lim\limits_{m\,\rightarrow\, \infty}\,\left(\mathlarger{\int}_{s}^{T}\!\!\left\lvert\,x^{-i}_{u}\,\right\rvert\left\lvert\,\pi_{n_{m},\,u}-\pi_{u}\,\right\rvert du\right)^{2}\,\right] =\, 0
\end{equation*}

\par In view of the equation above, it is immediate that the first term on the right-hand side of (\ref{auxsc}) converges to zero in the limit as $m \rightarrow \infty$. Also, note that by hypothesis the second term on the right-hand side of (\ref{auxsc}) again converges to zero in the limit as $m \rightarrow \infty$. Thus, given (\ref{auxsc}), we invoke Gronwall's lemma and consider the limit as $m \rightarrow \infty$, whereupon the claim follows immediately from above.
\end{proof}

\subsection{Proof of \textcolor{azul-pesc}{\Cref{pcval}}}

\begin{proof}
\par Given a fixed initial time $s \in [0,T]$, consider the deterministic strategy $\Pi^{0} = \{\pi^{0}_{t}\}_{t\in[s,T]}$, where $\pi^{0}_{t} = 0$, for $t \in [s,T]$. Note that $\Pi^{0} \in \mathcal{A}^{a,\,pc}_{s}\!(I_{n})$, for all $n \in \mathds{N}$. Given $X^{-i} \in \mathcal{A}_{s}$, we let $Z^{i,\,0}_{T}$ denote the terminal value of the auxiliary state process corresponding to strategy $\Pi^{0}$ for the $i$th investor. It then follows from the definition of $H^{i}$ that we have
\begin{equation*}
     H^{i}\!\left(\Pi^{0};X^{-i}\right)\, =\, \mathds{E}\left[u^{i}\!\left(\mathrm{w}^{i,\,0}_{T}\right)\right]\, =\, u^{i}\!\left(\mathrm{w}^{i,\,0}_{0}\right)\, >\, -\infty 
\end{equation*}

\par In view of the equation above, it then follows that for each $n \in \mathds{N}$, we have
\begin{equation*}
 \sup\limits_{\Pi_{n}\, \in\, \mathcal{A}^{a,\,pc}_{s}\!\left(I_{n}\right)} H^{i}\!\left(\Pi_{n};X^{-i}\right)\, >\, -\infty
\end{equation*}

\par Given that the sequence of partitions is nested, it follows that for $n,m \in \mathds{N}$ such that $n \leq m$ we have $\mathcal{A}^{a,\,pc}_{s}\!(I_{n}) \subseteq \mathcal{A}^{a,\,pc}_{s}\!(I_{m})$, which in turn implies that we have 
\begin{equation*}
 \sup\limits_{\Pi_{m}\, \in\, \mathcal{A}^{a,\,pc}_{s}\!\left(I_{m}\right)} H^{i}\!\left(\Pi_{m};X^{-i}\right)\ \geq\ \sup\limits_{\Pi_{n}\, \in\, \mathcal{A}^{a,\,pc}_{s}\!\left(I_{n}\right)} H^{i}\!\left(\Pi_{n};X^{-i}\right)
\end{equation*}

\par Further, note that $\mathcal{A}^{a,\,pc}_{s}\!(I_{n}) \subseteq \mathcal{A}^{a}_{s}$, which in conjunction with the definition of $V^{i}$ then implies that we have
\begin{equation}\label{vgpc}
V^{i}\!\left(s,z^{i};X^{-i}\right)\, \geq\, \limsup\limits_{n\,\rightarrow\,\infty} \left\{\,\sup\limits_{\Pi_{n}\, \in\, \mathcal{A}^{a,\,pc}_{s}\!\left(I_{n}\right)} H^{i}\!\left(\Pi_{n};X^{-i}\right)\,\right\}\, =\, \lim\limits_{n\,\rightarrow\,\infty} \left\{\,\sup\limits_{\Pi_{n}\, \in\, \mathcal{A}^{a,\,pc}_{s}\!\left(I_{n}\right)} H^{i}\!\left(\Pi_{n};X^{-i}\right)\,\right\}
\end{equation}

\par It then remains to show that the inequality above holds with equality. To this end, for a given $\rho > 0$, we consider a $\rho$-optimal control for the auxiliary control problem. That is, we consider $\Pi^{i,\,\rho} \in \mathcal{A}^{a}_{s}$ such that $\Pi^{i,\,\rho}$ satisfies the following inequality 
\begin{equation}\label{vrho}
V^{i}\!\left(s,z^{i};X^{-i}\right)-\rho\, \leq\, H^{i}\!\left(\Pi^{i,\,\rho},\,X^{-i}\right)
\end{equation}

\par The existence of $\Pi^{\,i,\,\rho} \in \mathcal{A}^{a}_{s}$ follows from the definition of $V^{i}$ and the fact that $V^{i}\!\left(s,z^{i};X^{-i}\right)\geq H^{i}\!\left(\Pi^{0};X^{-i}\right) > -\infty$. Moreover, it follows in view of \textcolor{azul-pesc}{\Cref{pcapp}} that we can find a sequence of controls $\left\{\Pi^{\,i,\,\rho_n} \in \mathcal{A}^{a,\,pc}_{s}\!\left(I_{n}\right)\right\}_{n\,\in\,\mathds{N}}$ such that $\mathrm{d}\!\left(\Pi^{\,i,\,\rho_n},\Pi^{\,i,\,\rho}\right) \rightarrow 0$, as $n \rightarrow \infty$.

\par Further, it follows from \textcolor{azul-pesc}{\Cref{pcsapp}} that we have $Z^{\,i,\,\rho}_{T} \rightarrow Z^{\,i,\,\rho_{n}}_{T}\, \mathds{P}$-almost surely as $n\rightarrow \infty$ (at least along a subsequence), where $Z^{\,i,\,\rho}$ and $Z^{\,i,\,\rho_{n}}$ denote the controlled state process corresponding to auxiliary controls $\Pi^{\,i,\,\rho}$ and $\Pi^{\,i,\,\rho_{n}}$ respectively. Thus, in view of the preceding arguments, in order to prove the claim in the statement of the lemma we only need to show that
\begin{equation*}
\lim\limits_{n\,\rightarrow\,\infty} H^{i}\!\left(\Pi^{\,i,\,\rho_{n}},\,X^{-i}\right)\,=\,H^{i}\!\left(\Pi^{\,i,\,\rho},\,X^{-i}\right)
\end{equation*}

\par To this end, recall that by definition we have we have $H^{i}(\Pi^{i};X^{-i}) = \mathds{E}\left[u^{i}(\mathrm{w}^{i}_{T})\right]$, given $\Pi^{i} \in \mathcal{A}^{a}_{s}$, $X^{-i} \in \mathcal{A}_{s}$. Further, since $-\infty<\mathds{E}[u^{i}(\mathrm{w}^{i,\,\rho}_{T})]\leq0$ as shown earlier, we exploit the fact that the Bernoulli utility function $u^{i}$ is strictly concave, in conjunction with Cauchy\textendash Schwarz inequality so as to obtain
\begin{align*}
0\,\leq\, &\,\limsup\limits_{n\,\rightarrow\,\infty}\,\mathds{E}\,\Big[\!\left\lvert\, u^{i}\!\left(\mathrm{w}^{\,i,\,\rho}_{T}\right)-u^{i}\!\left(\mathrm{w}^{\,i,\,\rho_n}_{T}\right)\right\rvert\!\Big]\,\leq\, \limsup\limits_{n\,\rightarrow\,\infty}\,\mathds{E}\left[\left\lvert\, D_{1} u^{i}\!\left(\mathrm{w}^{\,i,\,\rho}_{T}\right)\right\rvert \left\lvert\,\mathrm{w}^{\,i,\,\rho}_{T}-\mathrm{w}^{\,i,\,\rho_n}_{T}\,\right\rvert\right]\\
& \leq\,\limsup\limits_{n\,\rightarrow\,\infty}\,\mathds{E}\left[\,\left\lvert\ D_{1} u^{i}\!\left(\mathrm{w}^{\,i,\,\rho}_{T}\right)\,\right\rvert^{2}\right]\mathds{E}\left[\, \left\lvert\,\mathrm{w}^{\,i,\,\rho}_{T}-\mathrm{w}^{\,i,\,\rho_n}_{T}\,\right\rvert^{2}\right]
\end{align*}

\par In view of the assumption on the functional form of the Bernoulli utility function $u^{i}$, and by Cauchy\textendash Schwarz inequality, it follows that there exists a positive constant $C$ such that we have
\begin{multline*}
    \mathds{E}\left[\,\left\lvert D_{1} u^{i}\!\left(\mathrm{w}^{\,i,\,\rho}_{T}\right)\right\rvert^{2}\right]\, =\, \mathds{E}\left[\exp{\left(-2\delta^{i}\mathrm{w}^{i}_{s}+2\delta^{i}\theta^{-i}\!\!\int_{s}^{T}\!\!\pi^{i,\,\rho}_{t}x^{-i}_{t}dt -2\delta^{i}\!\!\int_{s}^{T}\!\!\pi^{i,\,\rho}_{t}\sigma\!\left(P^{i}_{t}+\theta^{i}\pi^{i,\,\rho}_{t}\right)dB_{t}\right)}\right]\\ \leq\,C\mathds{E}\left[\exp{\left(4\delta^{i}\theta^{-i}\!\!\int_{s}^{T}\!\!\pi^{i,\,\rho}_{t}x^{-i}_{t}dt \right)}\!\right]\mathds{E}\left[\exp{\left(-4\delta^{i}\!\!\int_{s}^{T}\!\!\pi^{i,\,\rho}_{t}\sigma\!\left(P^{i}_{t}+\theta^{i}\pi^{i,\,\rho}_{t}\right)dB_{t}\right)}\right]
\end{multline*}

\par Next, we consider the first term on the right-hand side of the inequality above, and note that by virtue of the fact that $\Pi^{\,i,\,\rho} \in \mathcal{A}^{a}_{s}$ and $X^{-i} \in \mathcal{A}_{s}$, we can find a positive constant $C$ (dependent on $T$) such that we have
\begin{equation*}
\mathds{E}\left[\exp{\left(4\delta^{i}\theta^{-i}\!\!\int_{s}^{T}\!\!\pi^{i,\,\rho}_{t}x^{-i}_{t}dt \right)}\right]\, \leq \, C\,<\,\infty
\end{equation*}

\par Moreover, we appeal to \cite[Theorem 43, Page 140]{protter2005stochastic} in conjunction with the fact that the local volatility function $\sigma$ satisfies \textcolor{azul-pesc}{\Cref{siglip}}, along with the fact that $\Pi^{\,i,\,\rho} \in \mathcal{A}^{a}_{s}$ to ascertain the existence of a constant $C > 0$ (dependent on $T$) such that we have
\begin{equation*}
\mathds{E}\left[\exp{\left(-4\delta^{i}\!\!\int_{s}^{T}\!\!\pi^{i,\,\rho}_{t}\,\sigma\!\left(P^{i}_{t}+\theta^{i}\pi^{i,\,\rho}_{t}\right)dB_{t}\right)}\right]\,\leq\,C\,<\,\infty
\end{equation*}

\par As an immediate consequence of the above and the convergence result established in \textcolor{azul-pesc}{\Cref{pcsapp}}, we thus obtain the following
\begin{equation*}
    \lim\limits_{n\,\rightarrow\,\infty}\,H^{i}\!\left(\Pi^{\,i,\,\rho_n},X^{-i}\right)\, =\, \lim\limits_{n\,\rightarrow\,\infty}\,\mathds{E}\left[u^{i}\!\left(\mathrm{w}^{i,\,\rho_n}_{T}\right)\right]\, =\, \mathds{E}\left[u^{i}\!\left(\mathrm{w}^{i,\,\rho}_{T}\right)\right]\, =\, H^{i}\!\left(\Pi^{\,\rho},X^{-i}\right)
\end{equation*}

\par Combining the equation above with (\ref{vrho}) it then follows that we have
\begin{equation*}
V^{i}\!\left(s, z^{i}; X^{-i}\right) - \rho\, \leq\, H^{i}\!\left(\Pi^{\,i,\,\rho},X^{-i}\right)\, = \, \lim\limits_{n\,\rightarrow\,\infty}\,H^{i}\!\left(\Pi^{\,i,\,\rho_n},X^{-i}\right)
\end{equation*}
 
\par Given the equation above, it is immediate that we have
\begin{equation*}
 V^{i}\!\left(s, z^{i}; X^{-i}\right) - \rho\, \leq\, \limsup\limits_{n\,\rightarrow\,\infty}\,H^{i}\!\left(\Pi^{\,i,\,\rho_n},X^{-i}\right)\, \leq\, \limsup\limits_{n\,\rightarrow\,\infty} \left\{\,\sup\limits_{\Pi_{n}\, \in\, \mathcal{A}^{a,\,pc}_{s}\!\left(I_{n}\right)} H^{i}\!\left(\Pi_{n};X^{-i}\right)\,\right\}
\end{equation*}
 
\par Note that since the choice of $\rho$ above was arbitrary and in view of (\ref{vgpc}), the claim in the statement of the lemma follows immediately from the above.
\end{proof}

\end{appendices}
\end{document}